\theoremstyle{plain}
\newtheorem{theorem}{Theorem}
\newtheorem{lemma}[theorem]{Lemma}
\newtheorem{corollary}[theorem]{Corollary}
\theoremstyle{definition}
\newtheorem{definition}[theorem]{Definition}
\newtheorem{remark}[theorem]{Remark}
\theoremstyle{plain}
\newtheorem*{thmmaintwo}{Theorem~\ref{thm:main2}}
\newtheorem*{thmDensity}{Theorem~\ref{thm: Density of Ratios}}
\newtheorem*{lemcases}{Lemma~\ref{lem: Many Cases Lemma}}
\DeclareMathOperator{\tr}{tr}
\DeclareMathOperator{\Arc}{Arc}
\DeclareMathOperator{\Real}{Re}
\DeclareMathOperator{\Imag}{Im}
\newcommand{\SQ}[0]{\mathbb{S}_\mathbb{Q}}
\newcommand{\len}[1]{\ell{\left(#1\right)}}
\newcommand{\floor}[1]{\left\lfloor#1\right\rfloor}
\def\imm{\ensuremath{\mathrm{i}}}
\def\Gc{\ensuremath{\mathcal{G}}}
\def\Tc{\ensuremath{\mathcal{T}}}
\def\hQ{\ensuremath{\mathbb{Q}}}
\def\CQ{\ensuremath{\mathbb{C}_{\hQ}}}
\renewcommand\arg{\text{arg}}
\newcommand\Arg{\text{Arg}} 
\newcommand{\size}[1]{\mathrm{size}(#1)}
\newcommand{\conj}[1]{\overline{#1}}
\def\FactorIsing#1{\#\ensuremath{\mathsf{IsingNorm}(\lambda,b,\Delta,#1)}}
\def\FactorIsingb{\#\ensuremath{\mathsf{IsingNorm}(\lambda,b,\Delta)}}
\def\FactorIsingc{\#\ensuremath{\mathsf{IsingNorm}(\lambda,b,3)}}
\def\ArgIsing#1{\#\ensuremath{\mathsf{IsingArg}(\lambda,b,\Delta,#1)}}
\def\PM{\ensuremath{\#\mathsf{PerfectMatchings}}}
\def\numP{\ensuremath{\mathsf{\#P}}}
\def\NP{\ensuremath{\mathsf{NP}}}
\def\polP{\ensuremath{\mathsf{P}}}
\newcommand{\im}{\mathrm{i}}
\newcommand{\emm}{\mathrm{e}}
\newcommand{\pl}{\textup{\texttt{+}}}
\newcommand{\mi}{\textup{\texttt{-}}}
\newcommand{\plm}{\textup{\texttt{\textpm}}}
\newcommand{\goal}{\mathsf{goal}}
\def\prob#1#2#3{\goodbreak\begin{list}{}{\labelwidth\z@ \itemindent-\leftmargin
                        \itemsep\z@  \topsep6\p@\@plus6\p@
                        \let\makelabel\descriptionlabel}
                \item[\it Name]#1
               \item[\it Instance]                #2
                \item[\it Output]#3
                \end{list}}
\title{Lee-Yang zeros and the complexity of the ferromagnetic  Ising model on bounded-degree graphs}
\date{\today}
\author{Pjotr Buys}
\author{Andreas Galanis}
\author{Viresh Patel}
\author{Guus Regts}
\address[Pjotr Buys, Viresh Patel, Guus Regts]{Korteweg de Vries Institute for Mathematics, University of Amsterdam.}
\email{\{pjotr.buys,guusregts\}@gmail.com, vpatel@uva.nl}
\address[Andreas Galanis]{Department of Computer Science, University of Oxford.}
\email{andreas.galanis@cs.ox.ac.uk}
\begin{document}

\begin{abstract}
We study the computational complexity of approximating the partition function of the ferromagnetic Ising model with the external field parameter $\lambda$ on the unit circle in the complex plane. 
Complex-valued parameters for the Ising model are relevant for quantum circuit computations and phase transitions in statistical physics, but have also been key in the recent deterministic approximation scheme for all $|\lambda|\neq 1$ by Liu, Sinclair, and Srivastava. 
Here, we focus on the unresolved complexity picture on the unit circle, and on the tantalising question of what happens around $\lambda=1$, where on one hand the classical algorithm of Jerrum and Sinclair gives a randomised  approximation scheme on the real axis suggesting tractability, and  on the other hand  the presence of Lee-Yang zeros alludes to computational hardness.

Our main result establishes a sharp computational transition at the point $\lambda=1$, and more generally on the entire unit circle.
For an integer $\Delta\geq 3$ and edge interaction parameter $b\in (0,1)$ we show $\numP$-hardness for approximating the partition function on graphs of maximum degree $\Delta$ on the arc of the unit circle where the Lee-Yang zeros are dense. 
This result contrasts with known approximation algorithms when $|\lambda|\neq 1$ or when $\lambda$ is in the complementary arc around $1$ of the unit circle.
Our work thus gives a direct connection between the presence/absence of Lee-Yang zeros and  the tractability of efficiently approximating the partition function on bounded-degree graphs. 

\end{abstract}

\maketitle

\section{Introduction}

The Ising model is a classical model from statistical physics that arises in multiple sampling and inference tasks across computer science. The model has an edge-interaction parameter $b$ and a vertex parameter $\lambda$, known as the external field. For a graph $G=(V,E)$, configurations of the model are all possible assignments of two spins $\pl,\mi$ to the vertices of $G$. Each  configuration $\sigma:V\rightarrow \{\pl,\mi \}$ has  weight  $\lambda^{|n_\pl(\sigma)|}b^{\delta(\sigma)}$, where $n_\pl(\sigma)$ is the set of vertices that get the spin $\pl$ under $\sigma$ and $\delta(\sigma)$ is the number of edges that get different  spins.\footnote{The parametrisation of the Ising model in terms of $\delta(\sigma)$ follows the closely related works \cite{liu2019ising,PetersRegts2018}; if instead the model is defined in terms of the number of  edges with the same spins the edge-interaction parameter $1/b\in (1,\infty)$ is obtained, whose logarithm is known as the inverse temperature in the physics literature.} The \emph{partition function} is the aggregate weight of all configurations, i.e.,  \[Z_G(\lambda,b)=\sum_{\sigma:V\rightarrow\{\pl ,\mi \}}\lambda^{|n_\pl(\sigma)|}b^{\delta(\sigma)}.\]
In this paper, we consider the problem of approximating the partition function when $b\in (0,1]$, known as the ferromagnetic case, and when the parameter $\lambda$ is in the complex plane. Complex parameters for the Ising model have been studied in the computation of probability amplitudes of quantum circuits, see, e.g., \cite{Cuevas,quant,bremner}. Somewhat surprisingly, and this is one of the main motivations behind this work, complex parameters are also fundamental in understanding the complexity of approximation even for real-valued parameters.

In particular, many of the recent advances on the development of approximation algorithms for counting problems have been based on viewing the partition function as a polynomial of the underlying parameters in the complex plane, and using refined interpolation techniques from \cite{Barvinokbook,PR17} to obtain fully polynomial time approximation schemes ($\mathsf{FPTAS}$, see below for the technical definition), even for real values~\cite{GLLZ,GLL,LSP,bencs2018zerofree,barvinokregts2019, liufisher, shao2019contraction, PetersRegts2018,peters2019}. The bottleneck of this approach is establishing zero-free regions in the complex plane of the polynomials, which in turn requires an in-depth understanding of the models with complex-valued parameters. This framework of  designing approximation algorithms aligns with the classical statistical physics perspective on phase transitions, where zeros in the complex plane have long been studied in the context of phase transitions (see, e.g., \cite{yang1952statistical,Heilmann1972}), and several of these classical results have  recently been used to obtain efficient approximation algorithms~(\cite{liu2019ising,PR17}). 

In particular, the celebrated Lee-Yang theorem~\cite{lee1952statistical2} says that, when regarding the partition function of the ferromagnetic Ising model as a polynomial in the external field parameter $\lambda$, all its zeros, referred to as \emph{Lee-Yang zeros}, lie on the unit circle in the complex plane. (These Lee-Yang zeros have actually been observed in quantum experiments~\cite{exper}.) The Lee-Yang theorem was recently used by Liu, Sinclair, and Srivastava~\cite{liu2019ising} to obtain an $\mathsf{FPTAS}$ for approximating the partition function for values $\lambda\in \mathbb{C}$ that do not lie on the unit circle. 
This result can be viewed as a derandomisation of the Markov chain based randomised algorithm by Jerrum and Sinclair \cite{JS:ising} for $\lambda>0$ (see also \cite{guo2018,Collevecchio2016}), solving a longstanding problem.\footnote{Notably, the correlation decay approach, which also yields deterministic approximation algorithms and was key in the full classification of antiferromagnetic 2-spin systems \cite{LLY,SST,SlySun,GSVanti}, somewhat surprisingly does not perform as well for ferromagnetic systems, see \cite{GuoLu} for the state-of-the-art on this front.} 

As noted in \cite[Remark p.290]{liu2019ising}, the ``no-field'' case $|\lambda|=1$ is unclear, since on the one hand we have the algorithm by \cite{JS:ising} for $\lambda=1$, on the other hand it is known that Lee-Yang zeros are dense on the unit circle. 
The density picture was further explored in \cite{PetersRegts2018} for graphs of bounded maximum degree $\Delta$, by establishing for each $b\in (0,1)$ a symmetric arc around $\lambda=1$ on the unit circle where the partition function does not vanish for all graphs of maximum degree at most $\Delta$ and showing density of the Lee-Yang zeros on the complementary arc. See also~\cite{Chio2019} for the density result.

In this paper we resolve the complexity picture of the ferromagnetic Ising model.
We show that for graphs of maximum degree $\Delta$ approximately computing the partition function is $\numP$-hard\footnote{Roughly, \#P is the counting version of problems in NP, see, e.g., \cite{VALIANT} for details.}, on the arc of the unit circle where the Lee-Yang zeros are dense. See Theorem~\ref{thm:main2} below for a precise statement of our main results.
Since on the complementary arc there exists an $\mathsf{FPTAS}$, by the results of \cite{PetersRegts2018} (in combination with~\cite{Barvinokbook,PR17}), this gives a direct connection between hardness of approximation and the presence of Lee-Yang zeros. Combined with the results of~\cite{JS:ising,liu2019ising}, our work therefore  classifies the complexity of approximating the partition function of the ferromagnetic Ising model on the complex plane. 

It should be noted that the existence of zeros does not imply hardness in a straightforward manner.\footnote{For example, the graphs in \cite{PetersRegts2018} whose partition function is shown to be zero are trees, and these can be clearly detected in polynomial time. More generally, it is hard to imagine a construction of graphs with vanishing partition function which can directly yield hardness. In any case, our results, following the framework of \cite{GJ, BGGS, matchings} show hardness for a relaxed version of the problems where zeros do not need to be detected, making all these considerations irrelevant.} 
We obtain the connection between the Lee-Yang zeros and computational complexity via tools from complex dynamical systems. The partition function on trees naturally gives rise to a dynamical system, cf. Section~\ref{sec:dynamics}. Both the hardness of approximating the partition function as well as the density of the Lee-Yang zeros originate from chaotic behavior of the dynamics, while normal behavior is linked to absence of zeros and hence the existence of efficient approximation algorithms~\cite{PetersRegts2018}.

Our work falls into the broader context of showing how zeros in the complex plane actually relate to the existence and design of approximation algorithms. 
This connection has been well studied for general graphs, see, e.g., \cite{GJ,GG,galanis2020complexity}; for bounded-degree graphs, the picture is less clear, but the key seems to lie in understanding  the underlying complex dynamical systems \cite{BGGS, buys2019location, Chio2019, PetersRegts2018, bencs2019leeyang, matchings}. 
A general theory is so far elusive, but it seems that the chaotic behavior of the underlying complex dynamical system is linked to the presence of zeros of the partition function and to the $\numP$-hardness of approximation.

Establishing hardness results for ferromagnetic spin systems is notoriously challenging \cite{LLZ, GuoLu}, we therefore expect our techniques to be applicable in a wider framework. 
We will explain in Section~\ref{sec:outline} the obstacles that arise relative to previous works for antiferromagnetic spin systems.

\subsection{Our results}\label{sec:results}
To state our inapproximability results, we first formally define the computational problems that we consider.  For $z\in \mathbb{C}$, we let $|z|$ be the norm of $z$, $\Arg(z)$ be its argument in the interval $[0,2\pi)$, and  $\arg(z)=\{\Arg(z)+2k\pi\mid k\in \mathbb{Z}\}$ be the set of all of its arguments. We will consider the problems of approximating the norm of the partition function $Z_G(\lambda,b)$ within a rational factor $K>1$ and its argument within an additive rational constant $\rho>0$. For the computational problems, we moreover assume that $b\in (0,1)$ is rational and $\lambda$ has rational real and imaginary parts. The rationality assumption is mainly for convenience (representation issues), and it simplifies some of the proofs.

 \prob{
$\FactorIsing{K}$.} {A graph $G=(V,E)$ with maximum degree $\leq \Delta$.  } 
{   If $Z_G(\lambda,b)=0$, the algorithm may output any rational. Otherwise,
 it must return a rational $\widehat{N}$ such that
$\widehat{N}/K \leq
|Z_{G}(\lambda,b)|\leq K \widehat{N}$. } 

We remark here that the explicit constant $K>1$ in the problem definition above is only for convenience, having $K=2^{n^{1-\epsilon}}$ for any constant $\epsilon>0$ does not change the complexity of the problem using standard powering arguments (here, $n$ is the size of the input graph).

  \prob{
$\ArgIsing{\rho}$.} {A graph $G=(V,E)$ with maximum degree $\leq \Delta$. 
 }
{ If $Z_G(\lambda,b)=0$, the algorithm may output any rational. Otherwise, it must return  
a rational  $\widehat{A}$ such that $|\widehat{A} - a| \leq \rho$ for some $a\in\arg(Z_{G}(\lambda,b))$.  }

A \emph{fully polynomial time approximation scheme} ($\mathsf{FPTAS}$) for approximating $Z_G(\lambda,b)$ for given $\lambda$ and $b$  and positive integer $\Delta$ is an algorithm that for any $n$-vertex graph $G$ of maximum degree at most $\Delta$ and any rational $\varepsilon>0$ solves both probems $\FactorIsing{1+\varepsilon}$ and $\ArgIsing{\varepsilon}$ in time polynomial in $n/\varepsilon$.

 We use $\mathbb{Q}$ to denote the set of rational numbers and $\CQ$ to denote the set of complex numbers with rational real and imaginary part.
We denote by $\mathbb{S}$ the unit circle in the complex plane, and $\SQ=\mathbb{S}\cap \CQ$. It is well-known that numbers in $\SQ$ are dense on the unit circle.\footnote{See for example the upcoming Lemma~\ref{lem:rationalangle}.}
For $\theta\in (0,\pi)$ we denote 
\[I(\theta):=\{e^{i\vartheta}\mid -\theta<\vartheta<\theta\}.\]
For $\Delta\geq 3$ and $b\in (\frac{\Delta-2}{\Delta},1)$ we denote by $\theta_b\in (0,\pi)$ the angle from \cite[Theorem A]{PetersRegts2018} for which the following holds: 
\begin{itemize}
\item[(i)] for any graph $G$ of maximum degree at most $\Delta$ and any $\lambda\in I(\theta_b)$,  $Z_G(\lambda,b)\neq 0$; 
\item[(ii)]for each $\lambda\in \mathbb{S}\setminus I(\theta_b)$ there exists $\lambda'\in \mathbb{S}$ arbitarily close to $\lambda$ and a tree $T$ of maximum degree $\Delta$ for which $Z_T(\lambda',b)=0$.
\end{itemize}
Our main result is as follows.

\newcommand{\statethmmaintwo}{Let $\Delta\geq 3$ be an integer and let $K=1.001$ and $\rho=\pi/40$.
\begin{itemize}
\item[(a)] Let $b \in\big(0,\frac{\Delta-2}{\Delta}\big]$ be a rational, and  $\lambda\in \SQ$ such that  $\lambda\neq \pm 1$. Then the problems $\FactorIsing{K}$ and $\ArgIsing{\rho}$ are $\numP$-hard.
\item[(b)] Let $b\in \big(\frac{\Delta-2}{\Delta},1\big)$ be a rational. 
Then the collection of complex numbers $\lambda\in \SQ$ for which $\FactorIsing{K}$ and $\ArgIsing{\rho}$ are $\numP$-hard is dense in the arc $\mathbb{S}\setminus I(\theta_b)$.
\end{itemize}}

\begin{theorem}\label{thm:main2}
\statethmmaintwo
\end{theorem}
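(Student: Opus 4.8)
The plan is to establish $\numP$-hardness via a reduction from an already-known hard problem — the natural candidate being the problem of counting something like independent sets or (more likely) a previously-established hard Ising-type problem, along the lines of the framework in \cite{GJ, BGGS, matchings}. The crucial structural input is the connection between trees and a complex dynamical system: for the ferromagnetic Ising model on a tree rooted at $v$, the ratio of the partition function restricted to $\sigma(v)=\pl$ versus $\sigma(v)=\mi$ evolves under the tree recursion by a Möbius map depending on $b$ and, at the leaves, on $\lambda$. Composing such maps for a full $(\Delta-1)$-ary tree gives a rational map, and the key point established by \cite{PetersRegts2018} (and the classical Lee-Yang picture) is that on the arc $\mathbb{S}\setminus I(\theta_b)$ — and at $\lambda\neq\pm1$ on the whole circle in regime (a) — this dynamical system is \emph{chaotic}: its Julia set is large, orbits spread out, and in particular one can steer the root ratio arbitrarily close to any prescribed target value by choosing an appropriate tree gadget (an ``expanding orbit'' statement, cf. the teased Lemma~\ref{lem: expanding orbit}).

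Concretely, the steps I would carry out are: (1) Reduce everything to the two base problems $\FactorIsing{K}$ and $\ArgIsing{\rho}$ by noting that implementing the field $\lambda$ at a vertex of a larger graph, via tree gadgets attached at that vertex, effectively lets us simulate a wide range of ``effective fields,'' and in particular to simulate the (hard) real-temperature or hard-core-type instances. (2) Prove the expanding-orbit / density lemma: show that the set of values realisable as root ratios of degree-$\le\Delta$ tree gadgets is dense in $\widehat{\mathbb{C}}$ (or at least in a region rich enough to be useful), using normal-families/Montel arguments — the presence of Lee-Yang zeros nearby means the family of iterates is not normal, hence by Montel's theorem the orbit closure omits at most two points, and one rules those out by an explicit fixed-point analysis. (3) Use these gadgets, together with a ``binary search / interval covering'' argument over the reachable ratios, to distinguish — using a hypothetical oracle for $\FactorIsing{K}$ or $\ArgIsing{\rho}$ with the given fixed constants $K=1.001$, $\rho=\pi/40$ — between two combinatorial instances that the reduction needs to separate, thereby solving a $\numP$-hard problem. (4) Handle the rationality constraints: since $\lambda\in\SQ$ and $b\in\mathbb{Q}$, all gadget ratios are in $\CQ$, approximations can be controlled, and density of $\SQ$ on $\mathbb{S}$ (Lemma~\ref{lem:rationalangle}) is used to get the ``dense'' conclusion in part (b); part (a) follows because at $\lambda\neq\pm1$ on $\mathbb{S}$ with $b\le\frac{\Delta-2}{\Delta}$ the chaotic regime holds for \emph{every} such $\lambda$, not just a dense set.

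The partition into cases $\lambda^2 = 1$ versus $\lambda^2\neq 1$, and the various sub-regimes of where the relevant fixed points lie, is where the teased Lemma~\ref{lem: Many Cases Lemma} comes in: the Möbius recursion can degenerate (parabolic fixed points, the maps becoming conjugate to rotations or translations) and in those special configurations the naive ``expanding orbit'' argument fails, so one needs a somewhat delicate case analysis to show that even then the attainable ratios are dense enough, or to perturb $\lambda$ slightly (allowed in part (b), and allowed in part (a) because the excluded set is just $\{\pm1\}$).

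The main obstacle I expect is precisely step (2) combined with this case analysis: unlike the antiferromagnetic setting where the relevant maps are more uniformly expanding, the ferromagnetic recursion has an attracting or neutral fixed point structure over large parts of parameter space, so establishing that tree gadgets of \emph{bounded} degree still generate a dense set of effective fields — rather than getting trapped near an attracting fixed point — requires carefully exploiting the non-normality coming from the Lee-Yang zeros and controlling the two possible Montel-exceptional points. Turning ``density of reachable ratios'' into a clean hardness reduction with the \emph{fixed} constants $K=1.001$ and $\rho=\pi/40$ (rather than constants tending to $1$) is the other delicate point, handled by the standard powering/amplification trick alluded to after the problem definitions.
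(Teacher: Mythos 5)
Your high-level architecture matches the paper's: rooted-tree gadgets whose root ratios evolve under the maps $z\mapsto\lambda\big(\tfrac{z+b}{bz+1}\big)^k$, a density statement for the set of implementable fields (which, note, live on $\mathbb{S}$, not on $\widehat{\mathbb{C}}$, since $\lambda\in\mathbb{S}$ and $b$ is real), and a binary-search reduction using the oracle to recover ratios of partition functions exactly. However, there is a genuine gap at the heart of your step (2). You propose to get density of the implementable fields from non-normality via Montel's theorem ("the orbit closure omits at most two points"). Montel gives that the union of forward images of any \emph{open neighbourhood} of a Julia-set point covers the sphere minus at most two exceptional points; it says nothing about the forward orbit of the single point $z=1$ (equivalently, the countable set of tree-implementable fields) being dense. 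This is exactly the bottleneck the paper identifies: one must first establish density of implementable fields in some small arc before the expansion of $f_{\lambda,d}$ can spread it around the circle, and the degree-$d$ map itself is useless for creating that initial density because it is uniformly expanding on all of $\mathbb{S}$ in the relevant regime. The paper's solution is quite different from anything in your sketch: it extracts an iterate with expanding derivative (Lemma~\ref{lem: expanding orbit}), then runs an induction on $k\mapsto\lfloor k/2\rfloor$ using the maps $f_{\xi,k}$ for intermediate degrees, exploiting the fact that $|f_k'(z)|$ scales linearly in $k$ to build families of \emph{contracting maps whose images cover an arc} (Lemmas~\ref{lem: Density in Arc}--\ref{lem: Lower Bound three maps}), with leftover small degrees handled by a Cantor-set-style construction (Section~\ref{sec:cantor}). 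Your "delicate case analysis around parabolic/elliptic configurations" gestures at the right difficulty but does not supply the mechanism that replaces the failed Montel step.

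A secondary, smaller gap is in the reduction itself. You do not specify the $\numP$-hard target (the paper reduces to exact evaluation of $Z_G(\lambda,\hat b)$ on degree-3 graphs via \cite{KowCai16}), and you do not address the two obstacles the paper spends Section~\ref{sec:redoutline} on: (i) the oracle cannot certify that the ratios $Z_{G,\pl v}/Z_{G,\mi v}$ are well-defined, which forces the shift of the edge interaction to a value $\hat b$ close to $1$ (via path substitution plus fields near $1/\lambda$) where zero-freeness holds; and (ii) attaching a probe tree at a vertex raises its degree above $\Delta$, which forces the probing to be done on a subdivided edge rather than at a vertex. Also, the fixed constants $K=1.001$ and $\rho=\pi/40$ come out of the binary-search error analysis directly (the reduction computes the target ratio exactly), not from a powering trick. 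These are fixable with the standard machinery you cite, but as written the proposal does not constitute a proof.
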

Combined with \cite{liu2019ising}, part (a) of our main theorem completely classifies the hardness of approximating the partition function $Z_G(\lambda,b)$ (as per the two computational problems stated above), for $b\in (0,\frac{\Delta-2}{\Delta}]$.
Combined with \cite[Corollary 1]{PetersRegts2018}, part (b) of our main theorem `essentially' classifies the hardness of approximating the partition function for $b\in(\frac{\Delta-2}{\Delta},1)$ and answers a question from~\cite{PetersRegts2018}.
Technically, we do not rule out that there may be an efficient algorithm for these problems for some $\lambda\in \SQ\setminus I(\theta_b)$, but such an algorithm must be specifically tailored to such a particular $\lambda$ (unless of course $\polP=\numP$). 
We in fact conjecture that, when $b\in(\frac{\Delta-2}{\Delta},1)$, approximating the partition function (as in Theorem 1) is $\numP$-hard for all non-real $\lambda \in  \SQ\setminus I(\theta_b)$. 
See Remark~\ref{rem:real} below for a discussion of the antipodal case $\lambda=-1$.

We should further remark that the open interval $b\in \big(0,\frac{\Delta-2}{\Delta}\big)$ for positive $\lambda$ corresponds to the so-called non-uniqueness region of the infinite $\Delta$-regular tree. For the antiferromagnetic Ising model and positive $\lambda$, non-uniqueness leads to computational intractability~\cite{SlySun,GSV2016}, in contrast to the ferromagnetic case. As we explain in Section~\ref{sec:outline}, the phenomenon which underpins our proofs for $|\lambda|=1$ with $\lambda\neq \pm 1$ is the chaotic behaviour of the underlying complex dynamical system, which resembles in rough terms a complex-plane analogue of non-uniqueness. Interestingly, at criticality, i.e., when $b=\frac{\Delta-2}{\Delta}$, while the model is in uniqueness for $\lambda=1$, the chaotic behaviour is nevertheless present for non-real $\lambda$, and we show $\numP$-hardness for this case too. 

\begin{remark}\label{rem:real}
We further discuss the real cases $\lambda=\pm 1$ which are not explicitly covered by Theorem~\ref{thm:main2}. The case $\lambda=1$ admits an $\mathsf{FPRAS}$  \cite{JS:ising,guo2018,Collevecchio2016}, but the existence of a deterministic approximation scheme is open. We study the case $\lambda=-1$ in more detail in Section~\ref{sec:minusone}  where we show that the problem is not $\numP$-hard (assuming $\numP\neq \NP$): using the ``high-temperature'' expansion of the model, we show an odd-subgraphs formulation of the partition function (Lemma~\ref{lem:connection}), which is then used to conclude (Theorem~\ref{thm:lambdaminusone})  that the sign of the partition function can be determined trivially, while the problem of approximating the norm of the partition function for all $\Delta\geq 3$ is equivalent to the problem of approximately counting the number of perfect matchings (even on unbounded-degree graphs); the complexity of the latter is an open problem in general, but it can be approximated with an NP-oracle \cite{Jerrum1986}, therefore precluding $\numP$-hardness.
\end{remark}

In the next section, we give an outline of the key pieces to obtain our inapproximability results; the details of these pieces will be filled in the forthcoming sections (see also the upcoming Section~\ref{sec:out-line} for the organisation of the paper).

\section{Proof outline}\label{sec:outline}

Let $\Delta\geq 3$ be an integer, $b\in (0,\frac{\Delta-2}{\Delta}]$, and $\lambda\in \SQ$ with $\lambda\neq \pm 1$. It will be convenient to work sometimes with $d=\Delta-1$. For $z_1, z_2 \in \mathbb{S}$ let $\Arc{[z_1,z_2]}$ and $\Arc{(z_1,z_2)}$ denote the counterclockwise 
arc in $\mathbb{S}$ from $z_1$ to $z_2$ including and excluding the endpoints, respectively. For an 
arc $A$ on the unit circle $\mathbb{S}$, we let $\ell(A)$ denote the length of $A$. We use $\overline{z}$ to denote the conjugate of $z$.

\subsection{Rooted-tree gadgets and complex dynamical systems} \label{sec:dynamics}
Our reductions are based on gadgets that are rooted trees, whose analysis will be based on understanding the dynamical behaviour of certain complex  maps on the unit circle, given by\footnote{\label{footS}Note that, for real $b$ and $\lambda\in \mathbb{S}$, if $z\in \mathbb{S}$ then $f_{\lambda,k}(z)\in \mathbb{S}$ as well.}
\begin{equation}\label{eq:flambdam}
	f_{\lambda,k}: z \mapsto \lambda \cdot \Big(\frac{z + b}{b z + 1}\Big)^k, \mbox{ for integers $k=1,\hdots, d$.}
\end{equation}
We will sometimes drop $\lambda$ when it is clear from the context.
To connect these maps with rooted-tree gadgets, for a graph $G=(V,E)$ and a vertex $u$ of $G$, we let $Z_{G,\pl u}$ be the contribution to the partition function from configurations  with $\sigma(u)=\pl$, i.e., 
\[Z_{G,\pl u}(\lambda,b):=\sum_{\sigma: V\rightarrow\{\pl,\mi \}; \sigma(u)=\pl} \lambda^{|n_\pl(\sigma)|}b^{\delta(\sigma)},\]
and we define analogously $Z_{G,\mi u}$.
\begin{definition}
Let $\lambda,b$ be arbitrary numbers and $T$ be a tree rooted at $r$. We say that $T$ implements the field   $\lambda'$ if $Z_{T,\mi r}(\lambda,b)\neq 0$ and $\lambda'=\tfrac{Z_{T,\pl r}(\lambda,b)}{Z_{T,\mi r}(\lambda,b)}$. We call $\lambda'$ the field of $T$.
\end{definition}

The next lemma explains the relevance of the maps $f_{\lambda,1},\hdots, f_{\lambda,d}$ for implementing fields.
\begin{lemma}
\label{lem: Tree Building}
	Let $b \in (0,1)$ and $\lambda \in \mathbb{S}$.
	Let $T_1,T_2$ be rooted trees with roots $r_1,r_2$ and fields $\xi_1,\xi_2\in \mathbb{S}$, respectively. Then, the tree $T$ rooted at $r_2$ consisting of $T_2$ and $k$ distinct copies 
	of $T_1$ which are attached to $r_2$ via an edge between $r_2$ and $r_1$, implements the field $\xi=f_{\xi_2,k}(\xi_1)\in \mathbb{S}$.
\end{lemma}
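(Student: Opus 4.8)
The proof is essentially a computation with the Ising partition function restricted to a tree, exploiting the recursive structure. The plan is to analyze how the ratio $Z_{T,\pl r}/Z_{T,\mi r}$ transforms when one glues subtrees at the root.

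\textbf{Step 1: The one-edge gluing formula.} First I would compute, for a single rooted tree $T_1$ with root $r_1$ and field $\xi_1$, the effect of attaching $T_1$ to a new vertex $v$ via an edge $\{v,r_1\}$. Writing $T'$ for this new tree rooted at $v$, I decompose $Z_{T',\pl v}$ and $Z_{T',\mi v}$ according to the spin of $r_1$. Since the edge $\{v,r_1\}$ contributes a factor $b$ precisely when the spins of $v$ and $r_1$ differ, and the vertex $v$ itself contributes a factor $\lambda$ when it gets spin $\pl$, one gets
\begin{align*}
Z_{T',\pl v}(\lambda,b) &= \lambda\big(Z_{T_1,\pl r_1}(\lambda,b) + b\,Z_{T_1,\mi r_1}(\lambda,b)\big),\\
Z_{T',\mi v}(\lambda,b) &= b\,Z_{T_1,\pl r_1}(\lambda,b) + Z_{T_1,\mi r_1}(\lambda,b).
\end{align*}
Dividing numerator and denominator by $Z_{T_1,\mi r_1}(\lambda,b)$ (which is nonzero since $\xi_1$ is a well-defined field), the field of $T'$ is $\lambda\,(\xi_1+b)/(b\xi_1+1)$, i.e.\ $f_{\lambda,1}(\xi_1)$. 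I should check the denominator $b\xi_1+1$ is nonzero: since $b\in(0,1)$ and $|\xi_1|=1$, we have $|b\xi_1|=b<1$, so $b\xi_1+1\neq 0$, and moreover the Möbius map $z\mapsto(z+b)/(bz+1)$ preserves $\mathbb{S}$ (it is a disk automorphism), so the field of $T'$ lies on $\mathbb{S}$ as promised in footnote~\ref{footS}.

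\textbf{Step 2: Several copies at a common vertex.} Next I would handle the effect of attaching $k$ independent rooted trees $S_1,\dots,S_k$, each with its own field, to a common new root $v$ (each via a single edge). Because the $S_i$ are vertex-disjoint and only share the vertex $v$, the partition function with $\sigma(v)$ fixed factorizes over the $S_i$: conditioning on $\sigma(v)=\pl$ gives $Z_{\cdot,\pl v} = \lambda\prod_{i=1}^k\big(Z_{S_i,\pl r_i}+b\,Z_{S_i,\mi r_i}\big)$ and similarly with $\sigma(v)=\mi$ one gets $Z_{\cdot,\mi v}=\prod_{i=1}^k\big(b\,Z_{S_i,\pl r_i}+Z_{S_i,\mi r_i}\big)$ — the $\lambda$ contributed by $v$ appears only in the $\pl$ case. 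Taking the ratio, the field becomes $\lambda\prod_{i=1}^k (\xi_i+b)/(b\xi_i+1)$. In our situation all $k$ copies are copies of the \emph{same} tree $T_1$, so $\xi_i=\xi_1$ for all $i$, and this product is $\lambda\big((\xi_1+b)/(b\xi_1+1)\big)^k = f_{\lambda,k}(\xi_1)$.

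\textbf{Step 3: Incorporating $T_2$ and concluding.} Finally I would observe that in the statement the tree $T$ is obtained by attaching the $k$ copies of $T_1$ not to a bare vertex but to the root $r_2$ of $T_2$. This is the same computation as Step 2 but with one extra factor coming from $T_2$ itself hanging off $v=r_2$: conditioning on $\sigma(r_2)=\pl$ multiplies by $Z_{T_2,\pl r_2}$ (no edge factor, since $T_2$ is already rooted at $r_2$), and conditioning on $\sigma(r_2)=\mi$ multiplies by $Z_{T_2,\mi r_2}$. Hence
\begin{align*}
Z_{T,\pl r_2} &= \lambda\, Z_{T_2,\pl r_2}\prod_{i=1}^k\big(Z_{T_1,\pl r_1}+b\,Z_{T_1,\mi r_1}\big),\\
Z_{T,\mi r_2} &= Z_{T_2,\mi r_2}\prod_{i=1}^k\big(b\,Z_{T_1,\pl r_1}+Z_{T_1,\mi r_1}\big).
\end{align*}
Wait — the $\lambda$ and $\pl$-bookkeeping: actually $Z_{T_2,\pl r_2}$ already accounts for the weight $\lambda$ on $r_2$, so the external $\lambda$ in the first line should not be there; the field of $T$ is then $\dfrac{Z_{T_2,\pl r_2}}{Z_{T_2,\mi r_2}}\cdot\prod_{i=1}^k\dfrac{Z_{T_1,\pl r_1}+b\,Z_{T_1,\mi r_1}}{b\,Z_{T_1,\pl r_1}+Z_{T_1,\mi r_1}} = \xi_2\cdot\Big(\dfrac{\xi_1+b}{b\xi_1+1}\Big)^k = f_{\xi_2,k}(\xi_1)$, after dividing through by $Z_{T_2,\mi r_2}\prod_i Z_{T_1,\mi r_1}$. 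All denominators are nonzero: $Z_{T_2,\mi r_2}\neq0$ by hypothesis, and $b\xi_1+1\neq0$ as in Step~1. That membership in $\mathbb{S}$ is automatic from footnote~\ref{footS} (or directly since the Blaschke-type factor and $\xi_2$ all lie on $\mathbb{S}$). I do not anticipate a real obstacle here; the only thing to be careful about is the bookkeeping of which vertex contributes the $\lambda$-weight and which edges contribute the $b$-weight, so that the $\lambda$ ends up attached to $\xi_2$ (coming from $T_2$'s root) rather than being double-counted — getting that normalization exactly right is the one place an error could creep in.
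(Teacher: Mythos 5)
Your proposal is correct and follows essentially the same route as the paper: decompose $Z_{T,\plm r_2}$ according to the spin of $r_2$, factorize over the $k$ disjoint copies of $T_1$ and the factor $Z_{T_2,\plm r_2}$, and divide, with the same checks that $Z_{T,\mi r_2}\neq 0$ and that the field stays on $\mathbb{S}$. Your self-correction about not double-counting the $\lambda$-weight of $r_2$ lands on exactly the formula the paper uses.
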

\begin{proof}
Omitting for convenience the arguments $\lambda,b$ from the partition functions, we have 
\begin{align*}
Z_{T,\pl r_2}=Z_{T_2,\pl r_2} ( Z_{T_1,\pl r_1}+b Z_{T_1,\mi r_1})^{k}, \quad Z_{T,\mi r_2}=Z_{T_2,\mi r_2} (b Z_{T_1,\pl r_1}+ Z_{T_1,\mi r_1})^{k}.
\end{align*}
Dividing these yields the result (note, $Z_{T_2,\mi r_2}\neq 0$ and $\xi_1=\tfrac{Z_{T_1,\pl r_1}}{Z_{T_1,\mi r_1}}\in \mathbb{S}$, so $Z_{T,\mi r_2}\neq 0$); the fact that $\xi\in \mathbb{S}$ follows from footnote~\ref{footS}. 
\end{proof}
Note in particular that all fields implemented by trees lie on the unit circle $\mathbb{S}$, cf. footnote~(\ref{footS}). The following theorem, which lies at the heart of the construction of the gadgets, asserts that throughout the relevant range of the parameters we can in fact implement a field arbitrarily close to any number in $\mathbb{S}$. We use $\Tc_{d+1}$ to denote the set of all rooted trees with maximum degree $\leq d+1$ whose roots have degree  $\leq d$.
\begin{definition}
Given $b\in (0,1)$ and $d\geq 2$ we denote by $\SQ(d,b)$ the collection of $\lambda\in \SQ$ for which the set of fields implemented by trees in $\Tc_{d+1}$, whose roots have degree $1$, is dense in~$\mathbb{S}$.
\end{definition}

\newcommand{\statethmDensity}{Let $d\geq 2$ be an integer.
\begin{itemize}
\item[(a)] Let  $b \in \big(0, \frac{d-1}{d+1}\big]$ be a rational. Then $\SQ(d,b)=\SQ\setminus\{\pm1\}$.
\item[(b)] Let  $b \in \big(\frac{d-1}{d+1},1\big)$ be a rational. Then $\SQ(d,b)$ is dense in $\mathbb{S}\setminus I(\theta_b)$.
\end{itemize}
}
\begin{theorem} \label{thm: Density of Ratios}
\statethmDensity
\end{theorem}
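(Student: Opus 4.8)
The strategy is to translate the question about fields implementable by trees into a question about the dynamics of the semigroup generated by the maps $f_{\lambda,1},\dots,f_{\lambda,d}$ acting on the circle $\mathbb{S}$, and then to show that in the relevant parameter range this semigroup action has a dense orbit. By Lemma~\ref{lem: Tree Building}, attaching $k$ copies of a tree with field $\xi_1$ below a single edge and a trivial stem realises the field $f_{\xi_2,k}(\xi_1)$; iterating, the set of fields implemented by trees in $\Tc_{d+1}$ whose roots have degree $1$ contains the forward orbit of the base field (the field of a single vertex, namely $\lambda$ itself) under the semigroup $\mathcal{S}$ generated by $z\mapsto f_{\lambda,k}(z)$ for $k=1,\dots,d$ (more precisely, one must be slightly careful: the maps $f_{\xi,k}$ have a varying ``outer'' field $\xi$, so the right object is the semigroup of maps $z\mapsto \lambda\big(\tfrac{z+b}{bz+1}\big)^k$ composed in the appropriate order, but since every implemented field lies on $\mathbb{S}$ this is exactly the semigroup generated by the $d$ maps above together with the seed $\lambda$). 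So it suffices to prove: for $b$ in the stated range and $\lambda\in\SQ\setminus\{\pm1\}$ (resp.\ $\lambda$ dense in $\mathbb{S}\setminus I(\theta_b)$), the $\mathcal{S}$-orbit of $\lambda$ is dense in $\mathbb{S}$.

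To analyse the dynamics it is natural to conjugate by a Möbius transformation sending $\mathbb{S}$ to $\mathbb{R}\cup\{\infty\}$ or, better, to pass to the ``angular'' coordinate: writing $z=e^{i\vartheta}$, the map $z\mapsto\tfrac{z+b}{bz+1}$ is a hyperbolic element of $\mathrm{PSL}_2(\mathbb{R})$ fixing $\pm1$ and acting on $\mathbb{S}\setminus\{\pm1\}$ as a contraction/expansion toward $1$; composing with the $k$-th power map multiplies the natural coordinate (roughly $\log\tan(\vartheta/2)$ up to the Möbius distortion) by $k$, and then multiplication by $\lambda$ adds a fixed rotation. Thus, in suitable coordinates, $f_{\lambda,k}$ looks like an affine-type map ``$x\mapsto$ (expansion by a factor $\asymp k$) $+$ (shift depending on $\arg\lambda$)''; the presence of the genuine expansion (as soon as $d\ge 2$ so that $k$ can be $\ge 2$, or via the hyperbolic Möbius part when $b$ is not too large) is what forces chaotic, hence dense, orbits. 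The key technical steps I would carry out are: (1) identify a fixed point or a short periodic cycle of some $f_{\lambda,k}$ on $\mathbb{S}$ at which the derivative has modulus $>1$ (a repelling fixed point), using the range of $b$ and $\lambda\ne\pm1$ to guarantee existence — this is where the threshold $b\le\tfrac{d-1}{d+1}$ versus $b>\tfrac{d-1}{d+1}$ enters, since it controls whether one can find such a repelling point for \emph{all} $\lambda\ne\pm1$ or only for $\lambda$ outside $I(\theta_b)$; (2) show that from a neighbourhood of this repelling point the orbit expands to cover an arc, and then, using that we have $d$ different maps $f_{\lambda,1},\dots,f_{\lambda,d}$ whose ``shifts'' differ, show these expanded arcs can be chained to cover all of $\mathbb{S}$; (3) a density/approximation argument (this is where the rationality of $\lambda$ and the countability of $\SQ$ matter for part (b)) to pass from ``covers an arc'' to ``dense orbit'', and in part (b) to locate a dense set of $\lambda$ in $\mathbb{S}\setminus I(\theta_b)$ for which the construction goes through. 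I would expect that the repelling-fixed-point / expansion estimates of step (1)–(2) are packaged in the paper as the lemmas referenced in the front matter (``\lemexpanding'', ``\lemcases''), and I would invoke those.

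The main obstacle, and the place where the real work lies, is step (1)–(2): proving that the semigroup $\mathcal{S}$ is genuinely expanding on \emph{all} of $\mathbb{S}\setminus\{\pm1\}$ in the regime $b\le\tfrac{d-1}{d+1}$ (and on the complement of $I(\theta_b)$ otherwise), and that one can actually \emph{reach} the repelling region starting from the seed $\lambda$ and then spread to the whole circle. Two subtleties make this delicate. First, the maps degenerate near the fixed points $\pm1$ of the Möbius part: near $z=1$ the derivative of $z\mapsto\big(\tfrac{z+b}{bz+1}\big)^k$ behaves like $k\cdot\tfrac{1-b}{1+b}$, which can be $<1$ when $b$ is close to $1$ and $k$ small, so uniform expansion genuinely fails for large $b$ — this is exactly why part (b) only gives density on an arc and why the threshold is $\tfrac{d-1}{d+1}$ (the borderline $k=2,\ b=\tfrac{d-1}{d+1}$ gives derivative exactly... one must check the critical case $b=\tfrac{d-1}{d+1}$ separately, corresponding to the ``at criticality'' remark in the introduction). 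Second, even granted local expansion, turning it into a dense orbit requires a covering argument: one shows that the union of the $d$ image arcs of a small arc around the repeller strictly grows, and uses a compactness/combinatorial argument (of the flavour of a ``many cases'' case analysis over the possible positions of $\lambda$ relative to the relevant arcs) to conclude it eventually equals $\mathbb{S}$; controlling the edge effects near $\pm1$, where orbits can linger, is the fiddly part. I would handle the excluded values $\lambda=\pm1$ by noting that then every $f_{\lambda,k}$ fixes $\pm1$ (for $\lambda=1$) or permutes them, so no orbit escapes the fixed points and the conclusion genuinely fails there — consistent with $\SQ(d,b)=\SQ\setminus\{\pm1\}$.
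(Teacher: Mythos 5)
Your sketch correctly identifies the setting (dynamics of the maps $f_{\cdot,k}$ on $\mathbb{S}$, repelling points, a covering argument), but it has two genuine gaps. First, the reduction to ``the orbit of $\lambda$ under the semigroup generated by $f_{\lambda,1},\dots,f_{\lambda,d}$'' discards essential structure. The set of implemented fields is closed under $(\xi_1,\xi_2)\mapsto f_{\xi_2,k}(\xi_1)$ for \emph{implemented} $\xi_2$, and since $f_{\xi,k}=(\xi/\lambda)\cdot f_{\lambda,k}$ is a rotation of $f_{\lambda,k}$ rather than any composition of your generators, the argument must use maps outside your semigroup $\mathcal{S}$: for instance Lemma~\ref{lem: easy even case} needs the pair $f_{\xi_1,k},f_{\xi_2,k}$ for two distinct implemented fields $\xi_1,\xi_2$, and this extra freedom is what makes the covering possible. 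Second, and more seriously, ``genuine expansion forces chaotic, hence dense, orbits'' is not a valid step: a uniformly expanding circle map has plenty of non-dense orbits (periodic points, for example), and here one must prove density of the \emph{specific} countable set of implemented fields starting from the seed. The actual mechanism is the opposite of expansion: one builds families of \emph{contracting} maps whose images \emph{cover} a closed arc (Lemma~\ref{lem: Density in Arc} and Corollary~\ref{cor: Density in Arc}), so that any target subarc can be reached by pulling back. Producing such a family is the bottleneck precisely because $f_{\lambda,d}$ is expanding everywhere on $\mathbb{S}$ in the regime of part (a) (its Julia set is all of $\mathbb{S}$), so one must descend to lower-degree maps $f_{\xi,k}$ with $k\approx d/2$; this is done by the inductive scheme $k_{n+1}=\lfloor k_n/2\rfloor$, the expanding-derivative Lemma~\ref{lem: expanding orbit}, and the extensive case analysis of Lemma~\ref{lem: Many Cases Lemma} (powers of two, a Cantor-style construction for small $k$, etc.), none of which your proposal supplies or replaces.

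A further missing ingredient is the role of rationality. The induction bottoms out at $k=1$, where $f_{\xi,1}$ is a M\"obius map; when it is elliptic one needs it to be conjugate to an \emph{irrational} rotation, and this is established via a number-theoretic argument (rank computations on three explicit elliptic curves, Lemma~\ref{lem: elliptic -> irrational rotation}) that uses $b\in\mathbb{Q}$ and $\xi\in\SQ$ in an essential way. Your remark that rationality ``matters for part (b)'' via countability does not capture this. Finally, for part (b) the argument is different from what you describe: one starts from the trees with vanishing partition function constructed in \cite{PetersRegts2018} to implement a field near $-1$, perturbs $\lambda$ within $\SQ$ so that this field lands in the elliptic regime of $f_{\cdot,1}$, and then invokes the irrational-rotation lemma; no repelling-point analysis outside $I(\theta_b)$ is needed there.
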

Theorem~\ref{thm: Density of Ratios} (b) is in stark contrast to what happens for $\lambda\in I(\theta_b)$, where it is known that fields are confined in an arc around 1~\cite{PetersRegts2018}.
We conjecture that in part (b) it is true that $\SQ(d,b)=\SQ\setminus (I(\theta_b)\cup\{-1\})$. 
Moreover, while in Theorem~\ref{thm: Density of Ratios} we focus on rational $b$, which is most relevant for our computational problems, we note that for any real $b\in (0,1)$ $\SQ(d,b)$ is dense in $\mathbb{S}$ in case (a) and dense in $\mathbb{S}\setminus I(\theta_b)$ in case (b).
We suspect that case (a) is true when $\SQ$ is replaced by the collection of algebraic numbers on the unit circle and $b\in (0, \frac{d-1}{d+1}]$ is algebraic, but this seems to be challenging to prove.

Later, in Section~\ref{sec:fast}, we bootstrap Theorem~\ref{thm: Density of Ratios} to obtain fast algorithms to implement fields with arbitrarily small error, see Lemma~\ref{lem:mainlemma} for the exact statement. Roughly, these fields are then used as ``probes'' in our reductions to compute exactly the ratio $\tfrac{Z_{G,\pl v}(\lambda,b)}{Z_{G,\mi v}(\lambda,b)}$ for any graph $G$ and vertex $v$; we say more about this in Section~\ref{sec:redoutline}. For now, we focus on the key Theorem~\ref{thm: Density of Ratios} and the ideas behind its proof.

\subsection{Hardness via Julia-set density }\label{sec:hardJulia} 
To prove Theorem~\ref{thm: Density of Ratios}, we will be interested in the set of values obtained by successive composition of the maps $f_{\lambda,k}$ in \eqref{eq:flambdam} starting from the point $z=1$; the main challenge in our setting is to prove that, for  $\lambda,b$ as in Theorem~\ref{thm: Density of Ratios}, these values are dense on the unit circle $\mathbb{S}$. 
Part (b) is relatively easy to prove, but the real challenge lies in proving part (a).

To understand the reason that this is challenging let us consider the properties of the map $f_{\lambda,k}$ for some root degree $k\geq 1$ viewed as a dynamical system, cf. the upcoming Lemmas~\ref{lem:Julia} and~\ref{lem: properties f} for details. Then, for all $b\in (0,1)$ the following hold.
\begin{enumerate}
\item \label{it:blambda} \textbf{The ``well-behaved'' regime:} When $b\in (\frac{k-1}{k+1},1)$, there exists $\lambda_k=\lambda_k(b)\in \mathbb{S}$ with $\Imag{\lambda_k}>0$ such that for all  $\lambda$ in an arc around 1 given by $\Arc[\overline{\lambda_k},\lambda_k]$,  the iterates of the point $z=1$ under the map $f_{\lambda,k}$ converge to a value $R_k(\lambda)\in \mathbb{S}$. In fact, the map $f_{\lambda,k}$ has nice convergence/contracting properties in an arc around $z=1$: the iterates of any point in  $\Arc{[1,R_k(\lambda)]}$ converge to  $R_k(\lambda)$.
\item \label{it:blambda1} \textbf{The ``chaotic'' regime:} Instead, when  $b\in (\frac{k-1}{k+1},1)$ and $\lambda\in\Arc(\lambda_k, \overline{\lambda_k})$ or $b\in (0,\frac{k-1}{k+1}]$,  all points in $\mathbb{S}$ belong to the so-called Julia set of the map; roughly, this means that the iterates under $f_{\lambda,k}$ of two distinct but arbitrarily close points in  $\mathbb{S}$ will be separated by some absolute constant infinitely many times. In other words, the map $f_{\lambda,k}$ has a chaotic behaviour on $\mathbb{S}$.
\end{enumerate}
For $b\in (0,1)$, we use $\Lambda_k(b)$ to denote the set of $\lambda\in \mathbb{S}$ where the degree-$k$ map $f_{\lambda,k}$ exhibits the behaviour in~(\ref{it:blambda1}), see the relevant Definition~\ref{def:Lambdak} and Lemma~\ref{lem:Julia}. Based on item~(\ref{it:blambda}), it was shown in \cite{PetersRegts2018} that the iterates of the point $z=1$ under the successive composition of the maps in \eqref{eq:flambdam} stay  ``trapped'' in a small arc around 1 when $b\in (\tfrac{d-1}{d+1},1)$ and $\lambda\in \Arc(\overline{\lambda_d},\lambda_d)$. 

Instead, our goal is to tame the chaotic behaviour in item~(\ref{it:blambda1}) to get density on $\mathbb{S}$ for fixed $b\in (0,\tfrac{d-1}{d+1}]$ and $\lambda\in \mathbb{S}\backslash\{\pm 1\}$. We should emphasise  here that, in the range of $b,\lambda$ we consider, the map $f_{\lambda,d}$ has the chaotic behaviour described in item~(\ref{it:blambda1}) throughout $\mathbb{S}$, so by default it is hopeless to aim for any fine analytical understanding, and this is the major technical obstacle we need to address; note, the use of the map $f_{\lambda,d}$ is mandatory to cover all $b$ in $(0,\tfrac{d-1}{d+1}]$.

An analogous setting has been previously considered in \cite{BGGS}, in the context of approximating the independent set polynomial. The bottleneck of showing the desired density is to first argue density around  a point $x^*$ in the Julia set of the degree-$d$ map. Once this is done, the chaotic behaviour of the degree-$d$ map around the Julia-set point $x^*$ can be utilised to bootstrap the density to the whole complex plane. The key challenge here is  arguing the initial density around the Julia-set point of the degree-$d$ map, since the degree-$d$ map itself is useless for creating density in the Julia set. In \cite{BGGS}, an auxiliary Fibonacci-style recursion was used  to converge to such a point $x^*$; the density around $x^*$ was then achieved  by utilising the convergence to further obtain a set of contracting maps around a neighbourhood $N$ of $x^*$, such that the images of $N$ under the maps formed a covering of $N$. 

While the contracting/covering maps framework  can be adapted to our setting (see Lemma~\ref{lem: Density in Arc}), the bottlenect step of obtaining the initial density around the Julia-set point requires a radically different argument: the convergence of the recursion in~\cite{BGGS} relies on a certain linearisation property, which is not present in the case of the ferromagnetic Ising model; even worse, the recursion does not converge for all the relevant range of $b,\lambda$.\footnote{In fact, determining the range of $\lambda$'s where the corresponding recursion for the Ising model converges to a Julia-set point is, to the best of our knowledge, beyond known complex dynamics methods.} 

\subsection{Our approach to obtain density around a Julia-set point}\label{sec:ourapproach} We devise a new technique to tackle the problem of showing density around a point in the Julia set of $f_{\lambda,d}$.  The main idea is to exploit the chaotic behaviour of the iterates of $f_{\lambda,k}$ when $\lambda\in \Lambda_k(b)$ to obtain an iterate $\xi$ of 1 with an expanding derivative, i.e., $|f_{\lambda,k}'(\xi)|>1$. The existence of $\xi$ follows by general arguments from the theory of complex dynamical systems, see the relevant Lemma~\ref{lem: expanding orbit}. The lower bound on the derivative is then used in careful inductive constructions to obtain families  of contracting maps that cover an appropriate arc of the circle. 

To illustrate the main idea of this inductive construction, let us assume that the degree $d+1$ is odd. Then, using Lemma~\ref{lem: expanding orbit} and the fact that $\lambda\in \Lambda_d(b)$, we obtain  an iterate of the point $z=1$ under the map $f_{\lambda,d}$, say $\xi$, so that  $|f_{\lambda,d}'(\xi)|>1$. The key point is to consider the map $f_{\xi,k}$ for $k= d/2$. On one hand, if it happens that $\xi\notin \Lambda_k(b)$ lies in the ``well-behaved'' regime of the degree-$k$ map, it can be shown that the maps $f_{\xi,k},f_{\xi,k+1}$ are contracting/covering maps in an appropriate arc of $\mathbb{S}$, yielding the required density as needed  (details of this argument can be found in Lemma~\ref{lem: easy odd case}). On the other hand, if $\xi\in \Lambda_k(b)$ lies in the ``chaotic'' regime of the degree-$k$ map, then we can proceed inductively by finding an iterate $\nu$ of 1 under the map $f_{\xi,k}$ with expanding derivative $|f_{\xi,k}'(\nu)|>1$ and recurse. 

Technically, to carry out this inductive scheme we have to address the various integrality issues, while at the same time being careful to maintain the degrees of the trees bounded by $\Delta$. More importantly, we need to consider  pairs/triples/quadruples of maps to ensure the contraction/covering property in the inductive step;  to achieve this, we need to understand the dependence of the derivative at the fixpoint of the $k$-ary map with $k$. Here, things turn out to be surprisingly pleasant, since it turns out that $|f_{\lambda,k}'(z)|$ depends linearly on the degree $k$ and is independent of $\lambda$, see item~(\ref{enum: lem1}) of Lemma~\ref{lem: properties f}; this fact is exploited in the arguments of Section~\ref{sec:dependence}. These considerations cover almost all cases, but a few small degrees $d$ remain that we cover by a Cantor-style construction, see Section~\ref{sec:cantor} for details.  

\subsection{The reduction}\label{sec:redoutline} The arguments discussed so far can be used to show that rooted trees in $\Tc_{d+1}$ implement any field $\xi$ on the unit circle $\mathbb{S}$ within arbitrarily small error $\epsilon>0$, see  Lemma~\ref{lem:mainlemma} for the form that we actually need. We now discuss in a bit more detail the high-level idea behind the final reduction argument in Section~\ref{sec:reduction}.

The key observation to utilise the gadgets is that for any graph $G$ and vertex $v$ with $Z_{G,\mi v}(\lambda,b) \not= 0$, the ``field'' at a vertex $v$ satisfies $\tfrac{Z_{G,\pl v}(\lambda,b)}{Z_{G,\mi v}(\lambda,b)}\in \mathbb{S}$, cf. Lemma~\ref{lem:conjugation}, and hence we can use our rooted-tree gadgets as probes to compute exactly the ratio $Q_{G,v}:=\tfrac{Z_{G,\pl v}(\lambda,b)}{Z_{G,\mi v}(\lambda,b)}$; the straightforward way to do this would be to attach a tree on $v$ which implements a field $x\in \mathbb{S}$ and use oracle calls to either $\FactorIsing{K}$ and $\ArgIsing{\rho}$  and look for $x=x^*$ that makes the partition function of the resulting graph equal to zero; from the key observation earlier, we know that such an $x^*$ exists, namely $x^*=-1/Q_{G,v}$, and, to determine it, we can use binary search.

This is the main idea behind the reduction, though there are a couple of caveats. First of all, there is no way to know whether  the ratio $Q_{G,v}$ is well-defined, i.e., whether $Z_{G,\mi v}(\lambda,b)\neq 0$, even using oracle calls to the approximation problems we study: $Z_{G,\mi v}(\lambda,b)$ is not a partition function of a graph (since $v$'s spin is fixed), and even if we managed to cast this as a partition function, the oracles cannot detect zeros (cf. Section~\ref{sec:results}). The second caveat is that attaching the tree increases the degree of $v$ which is problematic when, e.g., $G$ is $\Delta$-regular, and the  peeling-vertices argument does not quite work since there is no simple way to utilise the oracles after the first step.

The first point is addressed by replacing the edges of $G$ with paths of appropriate length which has the effect of  ``changing'' the value of the parameter $b$ to some value $\hat{b}$ close but not equal to $1$ where the partition function is zero-free (we actually need to attach to internal vertices of the paths rooted trees with fields close to $1/\lambda$ so that the complex external field $\lambda$ is almost cancelled). Then, using oracle calls to $\FactorIsing{K}$ or $\ArgIsing{\rho}$, our algorithm aims to determine the value of $Z_G(\lambda,\hat{b})$ which is a $\numP$-hard problem (\cite{KowCai16}, Theorem~1.1); the key-point is that now we have zero-freeness of the partition function which allows us to assert that the quantities we compute during the course of the algorithm are actually well-defined. 

The second point is addressed by doing the peeling-argument at the level of edges by trying to figure out, for an edge $e$ of $G$, the value of the ratio $\hat{Q}_{G,e}=\tfrac{Z_{G}(\lambda,\hat{b})}{Z_{G,\backslash e}(\lambda,\hat{b})}$. We do this by subdividing the edge and use a field gadget on the middle vertex; this has the benefit that it does not increase the maximum degree of the graph but certain complications arise since instead of  $\hat{Q}_{G,e}$ we retrieve a slightly different ratio, see Lemma~\ref{lem:maincomp} in Section~\ref{sec:reductionstep}, and some extra work is required to finish off the proof of Theorem~\ref{thm:main2}, see Section~\ref{sec:proofmain} for details.

\subsection{Outline}\label{sec:out-line} The next section details the  dynamical properties of the maps $f_{\lambda,k}$ and  elaborates on the inductive proof of Theorem~\ref{thm: Density of Ratios}, which is based on the upcoming Lemma~\ref{lem: Many Cases Lemma}. Section~\ref{sec:contracting} explains in more detail the contracting/covering maps framework and how we utilise the degree/derivative interplay to cover the bulk of the cases in Lemma~\ref{lem: Many Cases Lemma}. Section~\ref{sec:special} contains the remaining pieces needed to complete the proof of Lemma~\ref{lem: Many Cases Lemma}, which is given in Section~\ref{sec: Proof of the main lemma}. In Section~\ref{sec:fast}, we bootstrap Theorem~\ref{thm: Density of Ratios} to obtain fast algorithms to implement fields on the unit circle with arbitrarily small precision error, which is used in the reduction arguments of Section~\ref{sec:reduction}, where the proof of Theorem~\ref{thm:main2} is completed. Finally, in Section~\ref{sec:minusone}, we study the case $\lambda=-1$ (cf. Remark~\ref{rem:real}), and show the equivalence with the problem of approximately counting perfect matchings.

\section{Complex Dynamics Preliminaries and the Inductive Step in Theorem~\ref{thm: Density of Ratios}}

In this section, we set up some preliminaries about the maps $f_{\lambda,k}$ in \eqref{eq:flambdam} that will be used to prove Theorem~\ref{thm: Density of Ratios}. We first consider the general case $k\geq 1$ in Section~\ref{sec:genm} and then further study  the $k = 1$ case separately in Section~\ref{sec:flone}. In Section~\ref{sec:expanding}, we use these properties to obtain points with expanding derivatives using tools from complex dynamics.  Then, in Section~\ref{sec:induction}, we give the main lemma that lies at the heart of the inductive proof of Theorem~\ref{thm: Density of Ratios} and conclude the proof of the latter.

\subsection{Results on $f_{\lambda,k}$ for general $k$}\label{sec:genm}
This section contains relevant properties of the maps $f_{\lambda,k}: z \mapsto \lambda \cdot \big(\frac{z + b}{b z + 1}\big)^k$ that we will need; these were discussed informally in Section~\ref{sec:hardJulia}, and here we formalise them. Almost all results of this section follow from arguments in \cite{PetersRegts2018}.

We begin by defining formally the set $\Lambda_k(b)$.
\begin{definition}\label{def:Lambdak}
Let $k\geq 1$ be an integer and $b\in (0,1)$. We let $\Lambda_k(b)$ be the set of $\lambda\in \mathbb{S}$ such that all fixed points $z$ of the map $f_{\lambda,k}$ with $z\in \mathbb{S}$ are repelling, i.e., $|f'_{\lambda,k}(z)|>1$.
\end{definition}

The following lemma gives a description of the set $\Lambda_k(b)$ and characterises the Julia set of $f_{\lambda,k}$. We have described informally the dynamical properties of the map $f_{\lambda,k}$ that the Julia set captures, see item~(\ref{it:blambda1}) in Section~\ref{sec:hardJulia}. The reader is referred to \cite[Chapter 4]{Milnor2006} for more details on the general theory.
\begin{lemma}\label{lem:Julia}
Let $k\geq 1$ be an integer. Then, 
\begin{itemize}
\item if $b\in (0,\frac{k-1}{k+1})$, $\Lambda_k(b)=\mathbb{S}$. For $b=\frac{k-1}{k+1}$, $\Lambda_k(b)=\mathbb{S}\backslash \{+1\}$.
\item if $b\in (\frac{k-1}{k+1}, 1)$, there is $\lambda_k=\lambda_k(b) \in \mathbb{S}$ with  $\Imag{(\lambda_k)} > 0$ such that  $\Lambda_k(b)=\Arc(\lambda_k,\overline{\lambda_k})$.
\end{itemize}
Moreover, if $k>1$, then for all $\lambda\in \Lambda_k(b)$, the Julia set of $f_{\lambda,k}$ is equal to the unit circle $\mathbb{S}$.
\end{lemma}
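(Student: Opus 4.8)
The plan is to analyse the map $f_{\lambda,k}(z)=\lambda\big(\tfrac{z+b}{bz+1}\big)^k$ as a composition of the Möbius transformation $\phi_b(z)=\tfrac{z+b}{bz+1}$ with the power map $z\mapsto z^k$ and the rotation $z\mapsto\lambda z$. Since $b\in(0,1)$, $\phi_b$ is a hyperbolic element of $\mathrm{PSU}(1,1)$ preserving $\mathbb{S}$ (and $\mathbb{D}$), with fixed points $\pm1$; writing $z=e^{i\vartheta}$ one computes that $\phi_b$ acts on the circle as a real-analytic circle diffeomorphism. For the fixed-point count I would pass to the real-variable picture: a point $z=e^{i\vartheta}\in\mathbb{S}$ is fixed by $f_{\lambda,k}$ iff $\vartheta$ solves $\vartheta = \arg\lambda + k\,\psi_b(\vartheta)\pmod{2\pi}$, where $\psi_b(\vartheta)=\arg\phi_b(e^{i\vartheta})$ is a strictly increasing diffeomorphism of $\mathbb{R}/2\pi\mathbb{Z}$ with derivative $\psi_b'(\vartheta)=\tfrac{1-b^2}{1-2b\cos\vartheta+b^2}$ (this is the Poisson-kernel–type expression; note $\psi_b'>0$ always and $\psi_b'(0)=\tfrac{1+b}{1-b}>1$, $\psi_b'(\pi)=\tfrac{1-b}{1+b}<1$). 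The modulus of the multiplier at a fixed point is exactly $|f_{\lambda,k}'(z)| = k\,\psi_b'(\vartheta)$, which is the linear-in-$k$, $\lambda$-independent formula anticipated in the discussion before Lemma~\ref{lem: properties f}; so a fixed point is repelling iff $k\psi_b'(\vartheta)>1$, attracting iff $k\psi_b'(\vartheta)<1$.

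Next I would identify $\Lambda_k(b)$, the set of $\lambda$ for which \emph{every} fixed point on $\mathbb{S}$ is repelling. Consider the function $g(\vartheta)=\vartheta-k\psi_b(\vartheta)$, which decreases the total winding by $k-1$ over a full turn, so that $e^{i\vartheta}$ is a fixed point for the rotation parameter $\lambda$ iff $g(\vartheta)\equiv -\arg\lambda\pmod{2\pi}$. For $b\le\frac{k-1}{k+1}$ one has $k\psi_b'(\vartheta)\ge k\psi_b'(\pi)=k\tfrac{1-b}{1+b}\ge 1$ for all $\vartheta$ with equality only at $\vartheta=\pi$ when $b=\frac{k-1}{k+1}$; hence $g'\le 0$ everywhere, $g$ is a weakly-monotone degree-$(1-k)$ map, every level set is finite, and at every fixed point the multiplier modulus $k\psi_b'(\vartheta)\ge1$ — strictly $>1$ except possibly at $\vartheta=\pi$, which occurs as a fixed point precisely when $-\arg\lambda\equiv g(\pi)$, i.e. $\lambda=+1$ (using $\psi_b(\pi)=\pi$). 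This gives $\Lambda_k(b)=\mathbb{S}$ for $b<\frac{k-1}{k+1}$ and $\Lambda_k(b)=\mathbb{S}\setminus\{+1\}$ for $b=\frac{k-1}{k+1}$. For $b\in(\frac{k-1}{k+1},1)$ the set $U=\{\vartheta:k\psi_b'(\vartheta)<1\}$ is a nonempty open arc symmetric about $\vartheta=\pi$; $\lambda\in\Lambda_k(b)$ fails exactly when the level set $g^{-1}(-\arg\lambda)$ meets $\overline U$, and as $\lambda$ runs over $\mathbb{S}$ this bad set is a closed arc containing $+1$ (by the $b\le\frac{k-1}{k+1}$ analysis at the endpoint and continuity/symmetry under conjugation), whose complement is an open arc $\Arc(\lambda_k,\overline{\lambda_k})$ symmetric about $+1$ with $\Imag\lambda_k>0$. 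I would define $\lambda_k$ as the boundary value where a fixed point with multiplier exactly $1$ first appears; its existence and the symmetry $\overline{f_{\lambda,k}(z)}=f_{\overline\lambda,k}(\overline z)$ give the stated form. (Much of this is already in \cite{PetersRegts2018}, so I would cite it rather than redo the estimates.)

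For the Julia-set claim, fix $k>1$ and $\lambda\in\Lambda_k(b)$. The map $f_{\lambda,k}$ is a rational map of degree $k\ge2$ on $\CC$, so its Julia set $J$ is nonempty, closed, and backward-invariant. Since $b\in(0,1)$, both $\mathbb{D}$ and $\CC\setminus\overline{\mathbb{D}}$ are completely invariant under $\phi_b$ and under $z\mapsto z^k$ and under the rotation, hence under $f_{\lambda,k}$; therefore $\mathbb{S}$ is completely invariant, so $J\subseteq\mathbb{S}$ (the Julia set is contained in the boundary of any completely invariant domain, e.g.\ $\mathbb{D}$). To get the reverse inclusion $\mathbb{S}\subseteq J$ — equivalently, $\mathbb{S}\cap F=\varnothing$ for the Fatou set $F$ — I would rule out every type of Fatou component that could meet $\mathbb{S}$. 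A Fatou component meeting $\mathbb{S}$ would, by complete invariance of $\mathbb{S}$ and of $\mathbb{D}$, have to be an arc-neighbourhood structure forcing an attracting or parabolic cycle on $\mathbb{S}$ or a Siegel disc/Herman ring centred on $\mathbb{S}$; the classification of periodic Fatou components (\cite[Ch.~4]{Milnor2006}) reduces us to these. A cycle of attracting or parabolic periodic points on $\mathbb{S}$ is excluded because $\lambda\in\Lambda_k(b)$ forces every fixed point of $f_{\lambda,k}$ to be repelling — but I must handle \emph{periodic} points of all periods, not just fixed points; here I would use that $f_{\lambda,k}$ restricted to $\mathbb{S}$ is a degree-$k$ real-analytic expanding-on-average circle map: more precisely, on $\mathbb{S}$, $|f_{\lambda,k}'| = k\psi_b'(\vartheta)$ and the only region where this is $<1$ is the arc $U$ around $\vartheta=\pi$, but by the $\Lambda_k$ condition no fixed point lies in $\overline U$, and a short argument (using that $\phi_b$ contracts toward $+1$ and $z\mapsto z^k$ expands by $k$) shows that no periodic orbit can stay in $U$ long enough to be non-repelling — I expect this periodic-orbit bookkeeping to be the main technical obstacle and the place where I would lean hardest on the circle-diffeomorphism structure and possibly on \cite{PetersRegts2018}. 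Siegel discs and Herman rings are excluded because their boundaries lie in $J\subseteq\mathbb{S}$, which would force the rotation domain itself to be a subset of $\mathbb{S}$ — impossible for an open set — and more simply because $f_{\lambda,k}|_{\mathbb{S}}$, being a finite-degree real-analytic map with $k\ge2$, has positive topological entropy and cannot be conjugate to an irrational rotation on any invariant circle or annulus. Finally, no Fatou component can be a Böttcher-type basin of $\infty$ meeting $\mathbb{S}$ since $\infty\notin\mathbb{S}$ and $\mathbb{S}$ is invariant. Having excluded all cases, $F\cap\mathbb{S}=\varnothing$, so $J=\mathbb{S}$, completing the proof.
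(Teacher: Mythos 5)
Your proposal is a self-contained reconstruction of a lemma that the paper itself proves almost entirely by citation (the shape of $\Lambda_k(b)$ for $b\in(\tfrac{k-1}{k+1},1)$ and the Julia-set identity are quoted from \cite{PetersRegts2018}, and the case $b\le\tfrac{k-1}{k+1}$ is a one-line consequence of \eqref{eq: Derivative}). The first half of your argument is salvageable but contains a sign error that makes several intermediate claims false. The correct multiplier modulus is $|f_{\lambda,k}'(e^{i\vartheta})|=\tfrac{k(1-b^2)}{1+2b\cos\vartheta+b^2}$, so the derivative is \emph{smallest} at $z=+1$ (value $k\tfrac{1-b}{1+b}$) and largest at $z=-1$; your $\psi_b'$ has $-2b\cos\vartheta$ in the denominator, which transposes $\vartheta=0$ and $\vartheta=\pi$. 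Consequently you place the contracting arc $U$ around $z=-1$ instead of around $z=+1$, and you assert that the marginal point $\vartheta=\pi$ is a fixed point precisely when $\lambda=+1$. That claim is false even on its own terms: $f_{\lambda,k}(-1)=\lambda(-1)^k$, so $-1$ is fixed iff $\lambda=(-1)^{k+1}$. You land on the correct conclusion only because the marginal point is really $z=+1$, which is fixed iff $\lambda=+1$; after swapping $0\leftrightarrow\pi$ throughout, this part of the argument goes through.

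The genuine gap is in the Julia-set claim, exactly where you flag ``the main technical obstacle.'' Knowing that every \emph{fixed} point on $\mathbb{S}$ is repelling does not exclude attracting or parabolic \emph{cycles} of higher period on $\mathbb{S}$, and the ``periodic-orbit bookkeeping'' you gesture at is neither carried out nor, as far as I can see, the feasible route. The argument the paper relies on (\cite[Proof of Proposition~17]{PetersRegts2018}) goes through the interior instead: $f_{\lambda,k}$ maps $\mathbb{D}$ into itself as a proper $k$-to-$1$ holomorphic map, so by Denjoy--Wolff either it has an attracting fixed point inside $\mathbb{D}$ or all interior orbits converge to a boundary fixed point with multiplier at most $1$; the hypothesis $\lambda\in\Lambda_k(b)$ kills the second alternative, so $\mathbb{D}$ (and, symmetrically, the exterior of $\overline{\mathbb{D}}$) lies in the attracting basin of an interior fixed point. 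The two basins are disjoint, open and completely invariant and their union covers $\widehat{\mathbb{C}}\setminus\mathbb{S}$, so no point of $\mathbb{S}$ is in the Fatou set and $J=\mathbb{S}$. This route also makes your case-by-case exclusion of rotation domains unnecessary --- which is just as well, because your stated reason for excluding a Siegel disc (that an open set with boundary in $J\subseteq\mathbb{S}$ would have to be a subset of $\mathbb{S}$) is wrong: $\mathbb{D}$ itself is an open set with boundary in $\mathbb{S}$. The correct obstruction is that $f_{\lambda,k}$ is not injective on $\mathbb{D}$ for $k\ge 2$, so the Fatou component containing $\mathbb{D}$ cannot be a rotation domain.
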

\begin{proof}
For $b\in (\frac{k-1}{k+1}, 1)$, the range of $\Lambda_k(b)$  follows from \cite[Theorem~14]{PetersRegts2018}. For $b\in (0,\frac{k-1}{k+1}]$, the range of $\Lambda_k(b)$  follows from item~(\ref{enum: lem1}) in Lemma~\ref{lem: properties f} below. The characterisation of the Julia set for $\lambda\in \Lambda_k(b)$ is shown in \cite[Proof of Proposition 17]{PetersRegts2018}.
\end{proof}

\begin{remark}
The $\lambda_k(b)$ of the lemma is equal to $e^{i\theta_b}$ from the statement of Theorem~\ref{thm:main2} (where we replace $\Delta$ by $k+1$).
\end{remark}

Let $A$ be an arc of $\mathbb{S}$. A map $f:A\rightarrow \mathbb{S}$ is orientation-preserving if for any $z,z_1,z_2\in A$ with $z\in \Arc{[z_1,z_2]}$ it holds that  $f(z)\in \Arc{[f(z_1),f(z_2)}]$. The orbit of a point $z_0\in \mathbb{S}$ under the map $f_{\lambda,k}$ is the sequence of the iterates $\{f_{\lambda,k}^n(z_0)\}_{n\geq 0}$. A fixed point $z$ of the map $f_{\lambda,k}$ is called attracting if $|f'_{\lambda,k}(z)|<1$ and parabolic if $|f'_{\lambda,k}(z)|=1$.

The following lemma captures properties of the maps $f_{\lambda,k}$ when $k\in\{1,\hdots,d-1\}$ in the regime $b \in \big(\frac{d-2}{d},\frac{d-1}{d+1}\big]$, which turns out to be the hard part of the proof of Theorem~\ref{thm: Density of Ratios} (the lemma is stated more generally for $b \in \big(\frac{d-2}{d},1\big)$).
\begin{lemma}
	\label{lem: properties f}
	Let $d \in \mathbb{Z}_{\geq 2}$ and let $b \in \big(\frac{d-2}{d},1\big)$.
	Then the following holds:
	\begin{enumerate}[(i)]
		\item
		\label{enum: lem1}
		For all $\lambda \in \mathbb{S}$ and $k \in \mathbb{Z}_{\geq 1}$ the map 
		$f_{\lambda,k}: \mathbb{S} \to \mathbb{S}$ is orientation-preserving. Also,  
		the magnitude of the derivative at a point $z \in \mathbb{S}$ does not 
		depend on $\lambda$ and equals $|f_{k}'(z)|$ where
		\begin{equation}
			\label{eq: Derivative}
			\left| f_{k}'(z) \right| = k\cdot \left| f_{1}'(z) \right| 
			= \frac{k(1-b^2)}{b^2 + 2b \cdot\Real{(z)} + 1}.
		\end{equation}
		\item
		\label{enum: lem2}
		For $k \in \{1,\dots, d-1\}$, let $\lambda_k=\lambda_k(b)\in \mathbb{S}$ be as in Lemma~\ref{lem:Julia}. Then,   
		\begin{itemize}
		\item if $\lambda \in \Arc{(\overline{\lambda_k},\lambda_k)}$, then $f_{\lambda,k}$ 		has a unique attracting fixed point $R_k(\lambda) \in \mathbb{S}$.  
		\item if $\lambda =\overline{\lambda_k}$ or $\lambda_k$, then $f_{\lambda,k}$ 		has a unique parabolic fixed point $R_k(\lambda) \in \mathbb{S}$. 
		\end{itemize}
		\item
		\label{enum: lem3}
		The fixed point maps $R_k: \Arc{[\overline{\lambda_k},\lambda_k]} \to \mathbb{S}$ are continuously
		differentiable on $\Arc{(\overline{\lambda_k},\lambda_k)}$ and orientation-preserving with 
		the property that $R_k(1) = 1$ and $R_k(\overline{\lambda}) = \overline{R_k(\lambda)}$.
		\item
		\label{enum: lem4}
		For $\lambda \in \Arc{(1,\lambda_k]}$, the fixed point $R_k(\lambda)$ is in 
		the upper half-plane. For $z_0 \in \Arc{[1,R_k(\lambda)]}$ the orbit of
		$z_0$ under iteration of $f_{\lambda,k}$ converges to $R_k(\lambda)$ and is contained in
		$\Arc{[z_0,R_k(\lambda)]}$.
		\item
		\label{enum: lem5}
		The following inequalities hold:
		\[
			\Arg{(\lambda_{d-1})} < \Arg{(\lambda_{d-2})} < \cdots < \Arg{(\lambda_1)},
		\]
		while for $\lambda \in \Arc{(1,\lambda_{m}]}$, with $m\leq d-1$, we have
		\[
			\Arg{(R_1(\lambda))} < \Arg{(R_2(\lambda))} < \cdots < \Arg{(R_m(\lambda))},
		\]
		with the additional property that, for $i \in \{1, \dots, m-2\}$,  
		\begin{equation}
			\label{eq: increasing intervals}
			\ell\left(\Arc{[R_{i}(\lambda),R_{i+1}(\lambda)]}\right) 
				\leq
			\ell\left(\Arc{[R_{i+1}(\lambda),R_{i+2}(\lambda)]}\right).
		\end{equation}
	\end{enumerate} 
\end{lemma}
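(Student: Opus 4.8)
The plan is to collapse the complex dynamics of $f_{\lambda,k}$ to one real variable via the Blaschke factor $\phi_b(z)=\tfrac{z+b}{bz+1}$, which maps $\mathbb{S}$ and $\mathbb{D}$ onto themselves, fixes $\pm1$, and has $\phi_b'(z)=\tfrac{1-b^2}{(bz+1)^2}$, so $|\phi_b'(e^{i\theta})|=\tfrac{1-b^2}{1+2b\cos\theta+b^2}$. Writing $\phi_b(e^{i\theta})=e^{i\psi_b(\theta)}$ for the real-analytic lift $\psi_b\colon\mathbb{R}\to\mathbb{R}$, I would first record: $\psi_b$ is odd, strictly increasing, $\psi_b(0)=0$, $\psi_b(\pi)=\pi$, $\psi_b(\theta+2\pi)=\psi_b(\theta)+2\pi$; $\psi_b'(\theta)=|\phi_b'(e^{i\theta})|$ is even, strictly increasing in $|\theta|$ on $[0,\pi]$, and $\psi_b''>0$ there; and, being convex on $[0,\pi]$ with those endpoint values, $\psi_b(t)<t$ for $t\in(0,\pi)$. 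Item~(\ref{enum: lem1}) is then immediate: on $\mathbb{S}$, $f_{\lambda,k}$ is the composition of $\phi_b$, $z\mapsto z^k$ and rotation by $\Arg\lambda$, each lifting to a strictly increasing function, so $f_{\lambda,k}$ lifts to $\vartheta\mapsto\Arg\lambda+k\psi_b(\vartheta)$, hence is orientation-preserving; and $f_{\lambda,k}'(z)=\lambda k\,\phi_b(z)^{k-1}\phi_b'(z)$ has modulus $k|\phi_b'(z)|$ on $\mathbb{S}$, which is \eqref{eq: Derivative}.

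For items (ii)--(v) the engine is the fixed-point parametrisation: $z\in\mathbb{S}$ is fixed by $f_{\lambda,k}$ iff $\Phi_k(z):=z\,\phi_b(z)^{-k}=\lambda$. Since at a fixed point on $\mathbb{S}$ the derivative $f_{\lambda,k}'$ is real and positive (orientation preservation), writing $g_k(\vartheta):=\vartheta-k\psi_b(\vartheta)$ for the lift of $\Phi_k$, the fixed point $e^{i\vartheta}$ is attracting/parabolic/repelling exactly when $g_k'(\vartheta)=1-k\psi_b'(\vartheta)$ is positive/zero/negative. For $b\in(\tfrac{k-1}{k+1},1)$ (which holds for every $k\le d-1$ since $\tfrac{k-1}{k+1}\le\tfrac{d-2}{d}<b$) the equation $k\psi_b'(\vartheta)=1$ has a unique root $\theta^*_k\in(0,\pi)$; $g_k$ is odd, $g_k(0)=0$, strictly increasing on $[-\theta^*_k,\theta^*_k]$ up to $M_k:=g_k(\theta^*_k)=\theta^*_k-k\psi_b(\theta^*_k)\in(0,\pi)$ (using $\psi_b(t)<t$), and strictly decreasing on the complementary arc, and every "increasing run'' of $g_k$ covers, modulo $2\pi$, exactly the residues in $[-M_k,M_k]$. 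Hence the set of $\lambda$ having a non-repelling fixed point of $f_{\lambda,k}$ is precisely $\Arc{[e^{-iM_k},e^{iM_k}]}$, which by Lemma~\ref{lem:Julia} forces $\lambda_k=e^{iM_k}$; reading off $\Phi_k^{-1}$ on the unique increasing run in a fundamental domain then yields item~(\ref{enum: lem2}) (a unique attracting fixed point for $\lambda$ in the open arc, unique parabolic ones at the endpoints), and item~(\ref{enum: lem3}) follows since $R_k(\lambda)=\exp\!\big(i\,h_k(\Arg\lambda)\big)$ with $h_k:=(g_k|_{[-\theta^*_k,\theta^*_k]})^{-1}$ smooth on $(-M_k,M_k)$ (as $g_k'>0$ there), increasing, odd, and $h_k(0)=0$.

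For item~(\ref{enum: lem4}): for $\lambda\in\Arc{(1,\lambda_k]}$ one has $h_k(\Arg\lambda)\in(0,\theta^*_k]\subseteq(0,\pi)$, so $R_k(\lambda)$ lies in the open upper half-plane, and the lift $\widetilde f(\vartheta)=\Arg\lambda+k\psi_b(\vartheta)$ satisfies $\widetilde f(\vartheta)-\vartheta=\Arg\lambda-g_k(\vartheta)>0$ for $\vartheta\in[0,h_k(\Arg\lambda))$ while $\widetilde f$ maps $[0,h_k(\Arg\lambda)]$ into itself; hence the orbit of any $z_0\in\Arc{[1,R_k(\lambda)]}$ lifts to a strictly increasing bounded sequence converging to the unique fixed point $h_k(\Arg\lambda)$ in that interval, and stays in $\Arc{[z_0,R_k(\lambda)]}$. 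For item~(\ref{enum: lem5}), $\theta^*_k$ is strictly decreasing in $k$ (as $\psi_b'$ is increasing and $1/k$ decreasing), and $M_k=\max_{t\in[0,\pi]}(t-k\psi_b(t))$ is strictly decreasing in $k$ because $t-(k+1)\psi_b(t)<t-k\psi_b(t)$ for $t>0$ and the maximiser of the left side is positive; this gives $\Arg\lambda_{d-1}<\cdots<\Arg\lambda_1$. For the $R_k$'s, fix $\lambda\in\Arc{(1,\lambda_m]}$, put $A:=\Arg\lambda\in(0,M_m]$, and introduce the continuous parameter $\theta(s)$, the solution on $(0,\theta^*(s)]$ of $\theta-s\psi_b(\theta)=A$ (so $\theta(k)=h_k(A)=\Arg R_k(\lambda)$). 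Implicit differentiation gives $\theta'(s)=\tfrac{\psi_b(\theta(s))}{1-s\psi_b'(\theta(s))}>0$, so $\theta$ is increasing, giving $\Arg R_1(\lambda)<\cdots<\Arg R_m(\lambda)$; and since in this quotient the numerator is increasing while the positive denominator is decreasing in $s$ (using $\psi_b',\psi_b''>0$ and $\theta'>0$), $\theta'$ is itself increasing in $s$, so integrating over $[i,i+1]$ and $[i+1,i+2]$ yields $\theta(i+1)-\theta(i)\le\theta(i+2)-\theta(i+1)$, i.e.\ \eqref{eq: increasing intervals}.

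I expect the main obstacle to be item~(\ref{enum: lem5}), and within it the second-difference inequality \eqref{eq: increasing intervals}: the earlier parts become "soft'' once one works with the lift $g_k(\vartheta)=\vartheta-k\psi_b(\vartheta)$, whereas \eqref{eq: increasing intervals} genuinely seems to need both the continuous-parameter trick and the strict convexity $\psi_b''>0$ of the circle-lift of $\phi_b$ to control the growth of $\theta'(s)$. Minor additional care is required at the parabolic endpoint $\lambda=\lambda_m$, where $\theta'(s)\to\infty$ as $s\to m^-$ but the quantities $\theta(m)-\theta(m-1)$, $\theta(m-1)-\theta(m-2)$ remain finite and the monotonicity of $\theta'$ still delivers the inequality for $i=m-2$ in the limit.
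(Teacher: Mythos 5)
Your proposal is correct, and for the only part of the lemma that the paper proves from scratch --- item~(\ref{enum: lem5}) --- it takes a genuinely different route. The paper imports items (\ref{enum: lem1})--(\ref{enum: lem4}) from \cite{PetersRegts2018} and then proves item~(\ref{enum: lem5}) by differentiating the fixed-point equation \emph{with respect to $\lambda$}: it derives $|R_k'(\lambda)|=\bigl(1-f_{\lambda,k}'(R_k(\lambda))\bigr)^{-1}$, verifies the orderings near $\lambda=1$ by explicit computation, and then excludes crossings and equality in \eqref{eq: increasing intervals} via continuity/contradiction arguments built on arc-length integrals of $|R_k'|$ and $|f_k'|$. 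You instead fix $\lambda$ and differentiate the lifted fixed-point equation $\theta-s\psi_b(\theta)=\Arg\lambda$ \emph{with respect to a continuous degree parameter $s$}, obtaining $\theta'(s)=\psi_b(\theta)/(1-s\psi_b'(\theta))$ with increasing numerator and decreasing positive denominator; monotonicity of $\theta'$ then yields both $\Arg R_1(\lambda)<\cdots<\Arg R_m(\lambda)$ and \eqref{eq: increasing intervals} in one stroke by integrating over unit intervals, while $\Arg\lambda_k=M_k=\max_{t\in[0,\pi]}(t-k\psi_b(t))$ is manifestly strictly decreasing in $k$. This buys a substantially shorter and more transparent proof of the second-difference inequality (the paper's hardest step), at the modest cost of setting up the lift $\psi_b$ and its strict convexity on $(0,\pi)$, which you do correctly; your self-contained derivations of (\ref{enum: lem1})--(\ref{enum: lem4}) via the monotone lift $g_k(\vartheta)=\vartheta-k\psi_b(\vartheta)$ are also sound and consistent with the cited results. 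In a full write-up you should make explicit (a) that uniqueness in item~(\ref{enum: lem2}) uses that the non-repelling fixed points lie on a single increasing run of $g_k$ whose image $[-M_k,M_k]$ has length less than $2\pi$, so $g_k(\vartheta)\equiv\Arg\lambda \pmod{2\pi}$ has at most one solution there, and (b) that $\theta(\cdot)$ is continuous up to the parabolic endpoint $s=m$ so that $\theta(m)-\theta(m-1)$ is a convergent improper integral of $\theta'$ --- both of which you already flag.
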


\begin{proof}
We refer to \cite{PetersRegts2018} for proofs of items (\ref{enum: lem1})--(\ref{enum: lem4}). Specifically, item~(\ref{enum: lem1}) follows from \cite[Lemma~8 \& Equation~(3.1)]{PetersRegts2018}, item~(\ref{enum: lem2})  from \cite[Lemma~13, Theorem~14, Proof of Proposition~17]{PetersRegts2018}, item~(\ref{enum: lem3})  from~\cite[Proof of Theorem~14]{PetersRegts2018}, and item~(\ref{enum: lem4})  from~\cite[Theorem~14, Proof of Theorem 5(i)]{PetersRegts2018}.

We will prove item (\ref{enum: lem5}). By
taking the derivative of both sides of the equality $f_{\lambda,k}(R_k(\lambda)) = R_k(\lambda)$
with respect to $\lambda$ and rewriting we obtain 
\[
	R_k'(\lambda) = \frac{R_k(\lambda)}{\lambda\Big(1 - f_{\lambda,k}'\big(R_k(\lambda)\big)\Big)}.
\]
Using equation (\ref{eq: Derivative}) for $z = R_m(1) = 1$ we obtain that $R_{i+1}'(1) > R_{i}'(1)$ 
for $i \in \{1, \dots, d-2\}$ and thus for $\lambda \in \mathbb{S}$ in the upper half-plane near $1$ 
we find that $\Arg{(R_{i+1}(\lambda))} > \Arg{(R_{i}(\lambda))}$. 

The derivative at a fixed point of a map of the unit circle to itself is real (see also \cite[Lemma 11]{PetersRegts2018}). Furthermore, if such a map is orientation-preserving the derivative
at a fixed point is positive. Because the map $f_{\lambda,i}$
is orientation-preserving with attracting fixed point $R_i(\lambda)$ we find that 
$f_{\lambda,i}'(R_i(\lambda)) \in (0,1)$ for $\lambda \in \Arc{(\overline{\lambda_i}, \lambda_i)}$. 
From this we deduce that we can write 
\begin{equation}
	\label{eq: R_m der}
	\left|R_k'(\lambda)\right| = \frac{1}{1 - f_{\lambda,k}'\left(R_k(\lambda)\right)}.
\end{equation}
From equation (\ref{eq: Derivative}) it can be seen that $\left|f_{i}'(z)\right|$ is increasing 
both with respect to $\Arg{(z)}$ when $\mathrm{Im}(z)>0$ and with respect to the index $i$ and thus, as long as $R_{i}(\lambda)$ and $R_{i+1}(\lambda)$
are both defined and $\Arg(R_{i+1}(\lambda)) > \Arg(R_{i}(\lambda))$, we deduce
that $|R_{i+1}'(\lambda)| > |R_{i}'(\lambda)|$. Since $\Arg(R_{i+1}(\lambda)) > 
\Arg(R_{i}(\lambda))$ for $\lambda$ in the upper half-plane close to $1$, we conclude that there 
cannot be any $\lambda$ in the upper half-plane such that $\Arg(R_{i+1}(\lambda)) \leq \Arg(R_{i}(\lambda))$.

Now suppose that there is some index $i$ such that $\Arg{(\lambda_i)} \leq \Arg{(\lambda_{i+1})}$.
Note that $R_i(\lambda_i)$ is a parabolic fixed point of $f_{\lambda_i,i}$, which means that
$f_{\lambda_i,i}'(R_i(\lambda_i)) = 1$. 
Because we assumed that $\Arg{(\lambda_i)} \leq \Arg{(\lambda_{i+1})}$
we see from item (\ref{enum: lem2}) that $R_{i+1}(\lambda_i)$ must be well defined. We already deduced that 
$\Arg{(R_{i+1}(\lambda_i))} > \Arg{(R_{i}(\lambda_i))}$ and thus 
$f_{\lambda_i,i+1}'(R_{i+1}(\lambda_i)) > f_{\lambda_i,i}'(R_i(\lambda_i)) = 1$, which contradicts 
the fact that $R_{i+1}(\lambda_i)$ is an attracting fixed point of $f_{\lambda_i,i+1}$. This
concludes the proof of the first two claims of item (\ref{enum: lem5}).

Finally, we show the final claim of item (\ref{enum: lem5}). For indices $0\leq i\leq j\leq m$, it will be convenient to denote by $A_{i,j,\lambda}$ the arc $\Arc{[R_{i}(\lambda),R_{j}(\lambda)]}$, under the convention that $R_0(\lambda)=1$. For $\lambda \in \Arc{(1,\lambda_{m}]}$, our goal is hence to show that 	$\ell\left(A_{i,i+1,\lambda}\right) \leq 	\ell\left(A_{i+1,i+2,\lambda}\right)$ for all $i\in \{1, \dots, m-2\}$. 

 For any $\lambda \in \Arc{(1,\lambda_{k}]}$ we observe for $i = 1, \dots, k$ that
\[
	\ell(A_{0,i,\lambda}) = \int_{\Arc{[1,\lambda]}} |R_i'(z)| \,|dz|
\]
and thus for $i \in \{1, \dots, k-1\}$ we have $\ell(A_{i,i+1,\lambda}) = \int_{\Arc{[1,\lambda]}} |R_{i+1}'(z)| - |R_{i}'(z)| \,|dz|$.

We  first show item (\ref{enum: lem5}) for $\lambda$ near $1$. As we let $\lambda$ approach $1$ along the circle, we obtain that 
\begin{align*}
	\lim_{\lambda \to 1} \frac{\ell(A_{i,i+1,\lambda}) }{\len{\Arc{[1,\lambda]}}}
	&= 
	|R_{i+1}'(1)| - |R_{i}'(1)| = \frac{1}{1 - f_{\lambda,i+1}'(1)} - \frac{1}{1 - f_{\lambda,i}'(1)} \\
	&=\frac{(1+b)/(1-b)}{\big(i - (1+b)/(1-b)\big)\big(i-2b/(1-b)\big)}.
\end{align*}
The second equality can be obtained by using (\ref{eq: R_m der}) and the 
third by using (\ref{eq: Derivative}) and simplifying. If we denote this expression by 
$g(i)$ then it is not hard to see that $g(i+1) > g(i)$ as long as $i + 1 < 2b/(1-b)$.
Because $b\in (\tfrac{d-2}{d},1)$, we have  $2b/(1-b)>d-2$. 
So indeed $g(i+1) > g(i)$ for $i \in \{1, \dots, d-3\}$, which contains
$\{1, \dots, k-2\}$. This shows that inequality in (\ref{eq: increasing intervals}) is 
true for $\lambda$ near $1$. 

Now suppose that there is $\lambda\in \mathbb{S}$ and index $i$ for which the inequality
in (\ref{eq: increasing intervals}) does not hold. Then, by continuity,
because the inequality does hold near $1$, there is $\lambda\in \mathbb{S}$ for which the inequality is an equality, i.e.,
\begin{equation}
\label{eq: assume equality}
	\ell\left(A_{i,i+1,\lambda}\right) 
		=
	\ell\left(A_{i+1,i+2, \lambda}\right).
\end{equation}
For convenience, we will henceforth drop the subscript $\lambda$ from the notation for the arcs $A_{i,j,\lambda}$ and simply write $A_{i,j}$. The maps $f_{\lambda,j}$ are orientation-preserving for any $j$ and thus
\[
\len{\Arc[\lambda,R_j(\lambda)]} = \len{f_{\lambda,j}(A_{0,j})}
= \int_{A_{0,j}}  |f_j'(z)|\, |dz|.
\]
Using this equality we can write
\[
\len{A_{j,j+1}} 
= \int_{A_{0,j+1}} |f_{j+1}'(z)|\, |dz| - \int_{A_{0,j}}|f_j'(z)|\, |dz|.
\]
We use this equality for $j= i$ and $j= i+1$ and rearrange \eqref{eq: assume equality}
to obtain
\[
2\int_{A_{0,i+1}}  |f_{i+1}'(z)|\, |dz| =  \int_{A_{0,i+2}} |f_{i+2}'(z)|\, |dz| +
\int_{A_{0,i}}  |f_i'(z)|\, |dz|.
\]
We use \eqref{eq: Derivative} to rewrite the left-hand side of this equation as
\[
2(i+1) \int_{A_{0,i}}  |f_{1}'(z)|\, |dz|
+ 2 \int_{A_{i,i+1}} |f_{i+1}'(z)|\, |dz|
\]
and we rewrite the right-hand side as
\begin{align*}
(i+2)\int_{A_{0,i}}  |f_{1}'(z)|\, |dz| +
\int_{A_{i,i+2}}  |f_{i+2}'(z)|\, |dz| +
i\cdot \int_{A_{0,i}}  |f_{1}'(z)|\, |dz|.
\end{align*}
These two being equal implies that
\[
2 \int_{A_{i,i+1}}  |f_{i+1}'(z)|\, |dz| 
=
\int_{A_{i,i+2}} |f_{i+2}'(z)|\, |dz|.
\]
We will show that this yields a contradiction. We rewrite the right-hand side as 
\[
\int_{A_{i,i+2}}  |f_{i+2}'(z)|\, |dz| 
= 
\int_{A_{i,i+1}}  |f_{i+2}'(z)|\, |dz|
+ 
\int_{A_{i+1,i+2}}  |f_{i+2}'(z)|\, |dz|.
\]
and we will show that both summands are greater than $\int_{A_{i,i+1}} |f_{i+1}'(z)|\, |dz|$,
which will yield the contradiction. The inequality for the first summand follows
easily from the fact that $|f_{i+2}'(z)| > |f_{i+1}'(z)|$ for all $z \in \mathbb{S}$, cf. \eqref{eq: Derivative}. 
The second inequality uses the fact that  $|f_{i+2}'(z)|$ increases as
$\Arg(z)$ increases (when $\Imag{z} > 0$) and thus
\begin{align*}
\int_{A_{i+1,i+2}} |f_{i+2}'(z)|\, |dz|
&>
|f_{i+2}'(R_{i+1}(\lambda))| \cdot \len{A_{i+1,i+2}} > |f_{i+1}'(R_{i+1}(\lambda))| \cdot \len{A_{i,i+1}} \\
&> \int_{A_{i,i+1}} |f_{i+1}'(z)|\, |dz|.
\end{align*}
The second inequality of this derivation uses the assumed equality in~\eqref{eq: assume equality}. 
This yields the desired contradiction.
\end{proof}

\begin{remark} \label{rem:increase}
For any $\lambda \in \mathbb{S}$  and $k \in \mathbb{Z}_{\geq 1}$ for which the fixed point $R_k(\lambda)$ is defined we have $|f_k'(R_k(\lambda))| >
|f_k'(\lambda)|$. To see this when $\Imag{\lambda}>0$, note  from 
item~{(\ref{enum: lem4})} and the fact that the maps $f_{\lambda,k}$ are orientation-preserving that $\Arg{(R_k(\lambda))}\in (\Arg{(\lambda)},\pi)$ and hence by \eqref{eq: Derivative} that  $|f_k'(R_k(\lambda))| >
|f_k'(\lambda)|$. When $\Imag{\lambda}<0$, the inequality follows from the above since $\overline{R_k(\lambda)}=R_k(\overline{\lambda})$ from  item~(\ref{enum: lem3}) and the expression in \eqref{eq: Derivative} depends only the real part of $z$.

\end{remark}

\subsection{Results on $f_{\lambda,1}$}\label{sec:flone}
Note that $f_{\lambda,1}$ is a M\"obius transformation; we will extract some relevant information about it using the theory of M\"obius transformations, following \cite[Section 4.3]{Beardon1995}. 

  There is a natural way to relate each M\"obius transformation $g$ with a $2\times 2$ matrix $A$. Formally, let $\textrm{GL}_2(\mathbb{C})$ be the group of $2\times 2$ invertible matrices with complex entries (with the multiplication operation) and $\mathcal{M}$ be the group of M\"obius transformations (with the composition operation $\circ$).  The following map  gives a surjective homomorphism between the groups $\textrm{GL}_2(\mathbb{C})$ and $\mathcal{M}$:
\[
	\Phi:\textrm{GL}_2(\mathbb{C}) \to \mathcal{M},\quad 
	\Big(
	\begin{array}{cc}
	 a & b \\
	 c & d \\
	\end{array}
	\Big) \mapsto (z \mapsto \frac{a z + b}{c z + d}).
\]
For $g \in \mathcal{M}$, let $A \in \textrm{GL}_2(\mathbb{C})$ such that $\Phi(A) = g$ and 
define $\tr^2(g) = \tr(A)^2/\det(A)$. This value does not depend on the choice of preimage
and thus $\tr^2$ is a well defined operator on $\mathcal{M}$. In the following theorem it is
stated how this operator is used to classify M\"obius transformations. We say that $f,g\in \mathcal{M}$ are conjugate if there is
$h\in \mathcal{M}$ such that $f=h\circ g\circ h^{-1}$.

\begin{theorem}[{\cite[Theorem 4.3.4]{Beardon1995}}]
\label{thm: Mobius classification}
	Let $g \in \mathcal{M}$ not equal to the identity, then $g$ is conjugate to
	\begin{enumerate}
		\item \label{it:case33434}
	 	a rotation $z \mapsto e^{i \theta} z$ for some $\theta \in (0,\pi]$ if and 
		only if $\tr^2(g) \in [0,4)$, in which case $\tr^2(g) = 2 \cdot \left(\cos (\theta )+1\right);$
		\item \label{it:case33434b}
		a multiplication $z \mapsto e^{\theta} z$ for some $\theta \in \mathbb{R}_{> 0}$ if and 
		only if $\tr^2(g) \in (4,\infty)$, in which case $\tr^2(g) = 2 \cdot \left(\cosh (\theta )+1\right).$
	\end{enumerate}
\end{theorem}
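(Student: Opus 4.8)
\emph{Proof idea.} The plan is to classify in four moves: (i) show that $\tr^2$ is a conjugacy invariant, so it suffices to compute it on normal forms; (ii) put an arbitrary non-identity $g\in\mathcal{M}$ into normal form by counting its fixed points; (iii) evaluate $\tr^2$ on these normal forms; (iv) analyse the resulting elementary expression in the multiplier to read off both biconditionals and the explicit formulas. For (i): if $f=h\circ g\circ h^{-1}$ and $A,B\in\mathrm{GL}_2(\mathbb{C})$ are lifts of $g,h$ under $\Phi$, then $BAB^{-1}$ is a lift of $f$, and $\tr(BAB^{-1})^2/\det(BAB^{-1})=\tr(A)^2/\det(A)$; hence $\tr^2(f)=\tr^2(g)$ and $\tr^2$ descends to conjugacy classes.

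For (ii) and (iii): a non-identity M\"obius map has exactly one or exactly two fixed points on $\widehat{\mathbb{C}}$, since the fixed-point equation is a (possibly degenerate) quadratic. If there is a single fixed point $p$, conjugating by a M\"obius map sending $p\mapsto\infty$ and then rescaling turns $g$ into $z\mapsto z+1$, whose unit-determinant lift $\left(\begin{smallmatrix}1&1\\0&1\end{smallmatrix}\right)$ gives $\tr^2(g)=4$; such a $g$ is neither a rotation nor a multiplication, consistent with $4$ lying in neither $[0,4)$ nor $(4,\infty)$. If there are two fixed points $p\neq q$, conjugating by a M\"obius map sending $p\mapsto 0$, $q\mapsto\infty$ makes $g$ equal to $z\mapsto\kappa z$ for a multiplier $\kappa\in\mathbb{C}^\ast\setminus\{1\}$; moreover $z\mapsto\kappa z$ and $z\mapsto\kappa' z$ are $\mathcal{M}$-conjugate iff $\kappa'\in\{\kappa,\kappa^{-1}\}$, because a conjugating map must permute $\{0,\infty\}$. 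Taking the lift $\mathrm{diag}(\mu,\mu^{-1})$ with $\mu^2=\kappa$ gives $\tr^2(g)=(\mu+\mu^{-1})^2=\kappa+\kappa^{-1}+2$.

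For (iv): write $\kappa=re^{i\vartheta}$, so $\kappa+\kappa^{-1}=(r+r^{-1})\cos\vartheta+i(r-r^{-1})\sin\vartheta$. If $g$ is conjugate to a rotation then $r=1$ and $\tr^2(g)=2\cos\vartheta+2$, which is a bijection onto $[0,4)$ as $\vartheta$ ranges over $(0,\pi]$ (using that $e^{i\vartheta}$ and $e^{-i\vartheta}$ give conjugate maps). If $g$ is conjugate to $z\mapsto e^{\vartheta}z$ with $\vartheta>0$ then $\tr^2(g)=2\cosh\vartheta+2$, a bijection onto $(4,\infty)$. Conversely, if $\tr^2(g)\in[0,4)$ then $g$ is not parabolic (else $\tr^2(g)=4$), so $\kappa+\kappa^{-1}=\tr^2(g)-2\in[-2,2)$; vanishing of the imaginary part forces $r=1$ or $\kappa\in\mathbb{R}$, and in the real case $|\kappa+\kappa^{-1}|\geq2$ with the only admissible value $\kappa=-1$, so in every case $|\kappa|=1$ and $g$ is conjugate to the rotation $z\mapsto e^{i\vartheta}z$ with $2\cos\vartheta+2=\tr^2(g)$, $\vartheta\in(0,\pi]$. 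If instead $\tr^2(g)\in(4,\infty)$ then $\kappa+\kappa^{-1}>2$, which excludes $r=1$ and forces $\kappa$ real and positive, so $g$ is conjugate to $z\mapsto e^{\vartheta}z$ with $2\cosh\vartheta+2=\tr^2(g)$, $\vartheta>0$. This yields both equivalences together with the claimed identities.

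\emph{Main obstacle.} The nontrivial content is in the converse directions of step (iv): inferring the dynamical \emph{type} of $g$ from the position of $\tr^2(g)$ on the real axis. This rests on the elementary but crucial fact that for $\kappa\in\mathbb{C}^\ast$, the quantity $\kappa+\kappa^{-1}$ lies in $[-2,2)$ exactly when $|\kappa|=1$ with $\kappa\neq1$ (the value $-2$ occurring only at $\kappa=-1$), and in $(2,\infty)$ exactly when $\kappa$ is real with $\kappa>1$. Everything else — the fixed-point normal forms, the two lift computations, and normalising $\vartheta$ into $(0,\pi]$ respectively $(0,\infty)$ via the $\kappa\leftrightarrow\kappa^{-1}$ ambiguity — is routine bookkeeping.
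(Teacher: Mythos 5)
This statement is quoted verbatim from Beardon (Theorem 4.3.4) and the paper supplies no proof of its own, so there is nothing internal to compare against. Your argument is correct and is essentially the standard one: conjugacy invariance of $\tr^2$, reduction to the normal forms $z\mapsto z+1$ (one fixed point, $\tr^2=4$) and $z\mapsto\kappa z$ (two fixed points, $\tr^2=\kappa+\kappa^{-1}+2$ via the lift $\mathrm{diag}(\mu,\mu^{-1})$), and the elementary analysis of where $\kappa+\kappa^{-1}$ lands on the real line, including the correct treatment of the boundary case $\kappa=-1$ and the $\kappa\leftrightarrow\kappa^{-1}$ ambiguity. No gaps.
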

In case~(\ref{it:case33434}), $g$ is said to be \emph{elliptic}, while in case~(\ref{it:case33434b}) $g$ is 
called \emph{hyperbolic}. If $\tr^2(g) = 4$ the map is called \emph{parabolic}.

\begin{corollary}
\label{cor: mobius parameters}
	Let $b \in (0,1)$ and let $\lambda_1=\lambda_1(b) \in \mathbb{S}$ be as in Lemma~\ref{lem:Julia}. The map $f_{\lambda,1}$ is hyperbolic when $\lambda \in \Arc{(\overline{\lambda_1},\lambda_1)}$ and $f_{\lambda,1}$ is elliptic when $\lambda \in \Arc{(\lambda_1, \overline{\lambda_1})}$.
\end{corollary}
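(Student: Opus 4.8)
The plan is to read off the conjugacy type of the M\"obius transformation $f_{\lambda,1}$ from the invariant $\tr^2$ via Theorem~\ref{thm: Mobius classification}, after one short computation. Writing $f_{\lambda,1}(z)=\tfrac{\lambda z+\lambda b}{bz+1}$, a preimage of it under $\Phi$ is $A=\left(\begin{smallmatrix}\lambda & \lambda b\\ b & 1\end{smallmatrix}\right)$, which is invertible since $\lambda\neq 0$ and $b\in(0,1)$, so $\tr(A)=\lambda+1$, $\det(A)=\lambda(1-b^2)$, and hence
\[
\tr^2(f_{\lambda,1})=\frac{(\lambda+1)^2}{\lambda(1-b^2)}.
\]
Substituting $\lambda=e^{i\vartheta}$ and using $(e^{i\vartheta}+1)^2/e^{i\vartheta}=e^{i\vartheta}+2+e^{-i\vartheta}=2(1+\cos\vartheta)$, this becomes $\tr^2(f_{\lambda,1})=\tfrac{2(1+\cos\vartheta)}{1-b^2}$, which is a nonnegative real for every $\lambda\in\mathbb{S}$ and $b\in(0,1)$. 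Since $f_{\lambda,1}$ is never the identity (it has a finite pole at $z=-1/b$), Theorem~\ref{thm: Mobius classification} applies and, together with the definition of parabolic, gives: $f_{\lambda,1}$ is hyperbolic iff $\tr^2(f_{\lambda,1})>4$, parabolic iff $\tr^2(f_{\lambda,1})=4$, and elliptic otherwise (as then $\tr^2(f_{\lambda,1})\in[0,4)$). Unwinding, $f_{\lambda,1}$ is hyperbolic iff $\cos\vartheta>1-2b^2$, parabolic iff $\cos\vartheta=1-2b^2$, and elliptic iff $\cos\vartheta<1-2b^2$.

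It remains to identify the threshold angle with $\Arg(\lambda_1(b))$. Since $b\in(0,1)$ we have $1-2b^2\in(-1,1)$, so $\vartheta_0:=\arccos(1-2b^2)\in(0,\pi)$, the set of parabolic parameters is $\{e^{i\vartheta_0},e^{-i\vartheta_0}\}$, the hyperbolic locus is the open arc $\{e^{i\vartheta}:|\vartheta|<\vartheta_0\}$ around $1$, and the elliptic locus is the complementary open arc. On the other hand, for $k=1$ we have $\tfrac{k-1}{k+1}=0$, so Lemma~\ref{lem:Julia} applies for all $b\in(0,1)$ and gives $\Lambda_1(b)=\Arc{(\lambda_1,\overline{\lambda_1})}$ with $\Imag(\lambda_1)>0$, and item~(\ref{enum: lem2}) of Lemma~\ref{lem: properties f} (applied with $d=2$ and $k=1$) gives a fixed point $p:=R_1(\lambda_1)\in\mathbb{S}$ of $f_{\lambda_1,1}$ with $|f_{\lambda_1,1}'(p)|=1$. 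As $f_{\lambda_1,1}$ restricted to $\mathbb{S}$ is orientation-preserving (item~(\ref{enum: lem1})), its derivative at a fixed point is a positive real, so $f_{\lambda_1,1}'(p)=1$; a M\"obius transformation other than the identity with a fixed point of multiplier $1$ is parabolic, hence $\tr^2(f_{\lambda_1,1})=4$ and $\cos(\Arg\lambda_1)=1-2b^2$. Combined with $\Imag(\lambda_1)>0$, this forces $\Arg(\lambda_1)=\vartheta_0$, i.e., $\lambda_1=e^{i\vartheta_0}$ and $\overline{\lambda_1}=e^{-i\vartheta_0}$.

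Putting the two halves together: $\Arc{(\overline{\lambda_1},\lambda_1)}$ is the counterclockwise arc from $e^{-i\vartheta_0}$ to $e^{i\vartheta_0}$ through $1$, that is $\{e^{i\vartheta}:|\vartheta|<\vartheta_0\}$, which coincides with the hyperbolic locus; and $\Arc{(\lambda_1,\overline{\lambda_1})}$ is the complementary open arc, which coincides with the elliptic locus. This gives the corollary. The only step I expect to be non-routine is this last matching of the analytic dichotomy (whether $\tr^2(f_{\lambda,1})$ exceeds $4$) with the dynamically defined point $\lambda_1(b)$; the crux is that $\lambda_1$ is exactly a parameter at which $f_{\lambda,1}$ acquires a neutral fixed point on $\mathbb{S}$, which is precisely the parabolic case of Theorem~\ref{thm: Mobius classification}. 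An alternative route for this step, avoiding Lemma~\ref{lem: properties f}, is to characterise $\Lambda_1(b)$ directly: an elliptic M\"obius transformation preserving $\mathbb{S}$ has no fixed point on $\mathbb{S}$ (a fixed point of $f_{\lambda,1}$ on $\mathbb{S}$ has a positive real multiplier, the map being orientation-preserving there, whereas elliptic multipliers are non-real of modulus $1$), so such a $\lambda$ lies vacuously in $\Lambda_1(b)$; a hyperbolic or parabolic one has a non-repelling fixed point on $\mathbb{S}$, so it lies outside $\Lambda_1(b)$; hence $\Lambda_1(b)$ is exactly the elliptic locus, and comparison with Lemma~\ref{lem:Julia} again gives $\lambda_1=e^{i\vartheta_0}$.
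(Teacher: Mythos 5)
Your proposal is correct and follows essentially the same route as the paper: compute $\tr^2(f_{\lambda,1})=\tfrac{2(\Real\lambda+1)}{1-b^2}$, apply Theorem~\ref{thm: Mobius classification}, and identify the transition value of $\Real\lambda$ with $\Real(\lambda_1)$. The only difference is that you spell out the last identification (via the parabolic fixed point from Lemma~\ref{lem: properties f}, or alternatively by characterising $\Lambda_1(b)$ as the elliptic locus), a step the paper asserts with the single sentence ``This value must coincide with $\textrm{Re}(\lambda_1)$''; your justification is sound.
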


\begin{proof}
Write $\lambda = x + i y$ with $x,y \in \mathbb{R}$ such that $x^2 + y^2 = 1$. A short 
calculation gives that
\[
	\tr^2(f_{\lambda,1}) = \frac{2 \left(x + 1\right)}{1 - b^2}.
\]
The value of $tr^2(f_{\lambda,1})$ strictly increases from $0$ to $4/(1-b^2)$ as $x$ increases 
from $-1$ to $1$. It follows that there is a unique value $x \in (-1,1)$ such that
$\tr^2(f_{\lambda,1}) = 4$. This value must coincide with $\textrm{Re}(\lambda_1)$, where $\lambda_1=\lambda_1(b) \in \mathbb{S}$ is as in Lemma~\ref{lem:Julia}, completing the proof.
\end{proof}

\begin{lemma}
\label{lem: elliptic -> irrational rotation}
	Let $b \in (0,1)$ be a rational. Suppose that $\xi \in \SQ$ with $\xi\neq \pm 1$ is such 
	that $f_{\xi,1}$ is elliptic. Then $f_{\xi,1}$ is conjugate to an irrational rotation.
\end{lemma}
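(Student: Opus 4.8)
The plan is to read the rotation angle off the Möbius classification and then rule out finite multiplicative order of the multiplier by a rationality argument.

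\emph{Step 1: identify the rotation angle.} Since $f_{\xi,1}$ is elliptic, Theorem~\ref{thm: Mobius classification} gives that $f_{\xi,1}$ is conjugate to $z\mapsto e^{i\theta}z$ for a unique $\theta\in(0,\pi]$ with $\tr^2(f_{\xi,1})=2(\cos\theta+1)$. Combining this with the identity $\tr^2(f_{\xi,1})=\tfrac{2(\Real(\xi)+1)}{1-b^2}$ established in the proof of Corollary~\ref{cor: mobius parameters}, we get $\cos\theta=\tfrac{\Real(\xi)+b^2}{1-b^2}$. As $b\in\mathbb{Q}$ and $\Real(\xi)\in\mathbb{Q}$ (because $\xi\in\SQ$), the number $\cos\theta$ is rational. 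So it suffices to show that $e^{i\theta}$ is not a root of unity: an irrational rotation is precisely one whose multiplier $e^{i\theta}$ has infinite order.

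\emph{Step 2: reduce to a handful of cases.} Suppose $e^{i\theta}$ has finite order $n\geq 2$. Then $[\mathbb{Q}(e^{i\theta}):\mathbb{Q}]=\varphi(n)$, while $e^{i\theta}=\cos\theta\pm i\sqrt{1-\cos^2\theta}$ lies in an extension of $\mathbb{Q}$ of degree at most $2$ since $\cos\theta\in\mathbb{Q}$; hence $\varphi(n)\leq 2$, so $n\in\{2,3,4,6\}$, and correspondingly $\cos\theta\in\{-1,-\tfrac12,0,\tfrac12\}$. (Equivalently, this is Niven's theorem for the rational multiple $\theta/\pi$ with rational cosine.) The case $\cos\theta=-1$ (i.e.\ $n=2$) gives $\Real(\xi)=-1$ from Step~1, hence $\xi=-1$, contradicting $\xi\neq\pm1$.

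\emph{Step 3: turn the surviving cases into Diophantine conditions.} For $\cos\theta\in\{0,-\tfrac12,\tfrac12\}$, solving $\tfrac{\Real(\xi)+b^2}{1-b^2}=\cos\theta$ expresses $\Real(\xi)$ as an explicit rational function of $b$ (namely $-b^2$, $\tfrac{-1-b^2}{2}$, $\tfrac{1-3b^2}{2}$, respectively). Since $|\xi|=1$ and $\Imag(\xi)\in\mathbb{Q}$ (again $\xi\in\SQ$), the quantity $1-\Real(\xi)^2$ must be the square of a rational; simplifying, this says that one of
\[
1-b^4,\qquad (1-b^2)(3+b^2),\qquad 3(1-b^2)(1+3b^2)
\]
(respectively) is the square of a rational number for some $b\in(0,1)\cap\mathbb{Q}$.

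\emph{Step 4: kill the Diophantine conditions.} Writing $b=p/q$ in lowest terms, the first condition becomes $q^4-p^4=r^2$ for a positive integer $r$, which is impossible by Fermat's classical theorem that $X^4-Y^4=Z^2$ has no positive integer solution; this disposes of the case $n=4$. For the remaining two I would first run a short congruence analysis --- reducing modulo $3$ (exploiting that a factor of the form $p^2+3q^2$ with $\gcd(p,q)=1$ has no prime divisor $\equiv 2\pmod 3$) and modulo $4$ pins down the parities and $3$-adic valuations in a putative minimal solution $b=p/q$ --- and then a Fermat-style infinite descent manufactures from it a strictly smaller solution of the same form, a contradiction; alternatively, one checks that the genus-$1$ curves cut out by these conditions carry only their obvious rational points. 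I expect this descent to be the main obstacle: it is the only step not reducible to bookkeeping, whereas Steps~1--3 are routine given the Möbius and cyclotomic machinery already assembled in the paper.
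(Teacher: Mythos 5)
Your Steps~1--3 are correct and run parallel to the paper's proof: the paper also uses the trace formula $\tr^2(f_{\xi,1})=2(\Real(\xi)+1)/(1-b^2)$ to deduce that $t=2(\cos\theta+1)$ is a rational algebraic integer, hence $t\in\{0,1,2,3\}$, which is exactly your case list $\cos\theta\in\{-1,-\tfrac12,0,\tfrac12\}$ (with $t=0$ excluded by $\xi\neq-1$, as in your Step~2). Your disposal of the $\cos\theta=0$ case via Fermat's theorem on $X^4-Y^4=Z^2$ is a genuinely nice, fully elementary replacement for what the paper does there (a rank/torsion computation on the curve $E_2\colon Y^2=X^3+4X$), and one can check your squareness conditions $(1-b^2)(3+b^2)=\square$ and $3(1-b^2)(1+3b^2)=\square$ are equivalent, after the substitution $X=t(1+b)/(1-b)$, to the existence of nontrivial rational points on the paper's curves $E_1$ and $E_3$.

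The gap is Step~4 for those two remaining cases, and it is not a cosmetic one: you have only sketched a hoped-for congruence-plus-descent argument and yourself flag it as ``the main obstacle.'' This is precisely where the paper stops doing elementary arithmetic and instead verifies, via a Sage computation, that $E_1$ and $E_3$ have rank $0$ with torsion $\mathbb{Z}/2\mathbb{Z}$, so that $(0,0)$ is the only affine rational point and no admissible $b\in(0,1)$ arises. Your ``alternatively, one checks that the genus-$1$ curves carry only their obvious rational points'' is a restatement of what must be proved, not a proof; and whether a classical two-descent on these particular curves closes without further input is exactly the content you would need to supply. As written, the proposal establishes the lemma only for the order-$4$ case and leaves the order-$3$ and order-$6$ cases open, so the argument is incomplete until that Diophantine step is actually carried out (by explicit descent or by the rank computation the paper performs).
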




\begin{proof}
Let $\xi = x + i y$ with $x,y \in \mathbb{Q}$ such that $x^2 + y^2 = 1$. Because $f_{\xi,1}$
is elliptic it is conjugate to a rotation $z \mapsto e^{i \theta} z$ with
\begin{equation}
	\label{eq: Conjugation equality}
	2 \cdot \left(\cos (\theta )+1\right) = \frac{2 \left(x + 1\right)}{1 - b^2}.
\end{equation}
Let $t = 2 \cdot \left(\cos (\theta )+1\right)$.
Suppose $\theta$ is an angle corresponding to a rational rotation, i.e., if we let $z = e^{i \theta}$ 
then there is a natural number $n$ such that $z^n = 1$. It follows that then also $\overline{z}^n = 1$ 
and thus both $z$ and $\overline{z}$ are algebraic integers. Therefore $z + \overline{z} = 2\cos(\theta)$ 
is an algebraic integer. It follows that $t$ is an algebraic integer, while the right-hand side of 
(\ref{eq: Conjugation equality}) shows that $t$ must also be rational. Because the only rational algebraic 
integers are integers we can conclude that $t$ must be an integer and thus $t \in \{0,1,2,3\}$. If $t=0$ we 
see that $\xi = x = -1$, which we excluded, so only three possible values of $t$ remain. Let
$X = t(1+b)/(1-b)$ and $Y = 2ty/(1-b)^2$ then $(X,Y)$ is a rational point on the elliptic curve
$E_t$ given by the following equation:
\[
	E_t: Y^2 = X^3 - (t-2)t\cdot X^2 + t^2 \cdot X.
\]
The set of rational points of an elliptic curve together with an additional point has a group
structure that is isomorphic to $\mathbb{Z}^r \times \mathbb{Z}/N\mathbb{Z}$. The number $r \geq 0$
is called the rank of the curve and the subgroup isomorphic to $\mathbb{Z}/N\mathbb{Z}$ is called 
the torsion subgroup. The rank and the torsion subgroup of a particular curve can be found
using a computer algebra system. Using \emph{Sage}, if the variable \verb|t| is declared
to be either $1,2$ or $3$, the curve $E_t$ can be defined with the code
\verb|Et = EllipticCurve([0, -(t-2)*t, 0, t**2, 0])|.
The rank and the torsion subgroup can subsequently be found with the commands \verb|Et.rank()|
and \verb|Et.torsion_subgroup()|. We find that $E_t$ has rank $0$ for all $t \in \{1,2,3\}$. For
$t \in \{1,3\}$ the torsion subgroup is isomorphic to $\mathbb{Z}/2\mathbb{Z}$ and for $t=2$
it is isomorphic to $\mathbb{Z}/4\mathbb{Z}$. This means that there is one rational point on
$E_1$ and $E_3$, which we can see is the point $(0,0)$, and there are three rational points on 
$E_2$, namely $\{(0,0), (2, \pm 4)\}$. These points do not correspond to 
values of $b$ within the interval $(0,1)$, which means that $\theta$ cannot correspond to a 
rational rotation.
\end{proof}

\subsection{Obtaining points with expanding derivatives}\label{sec:expanding}
In this section, we use the dynamical study of the maps $f_{\lambda,k}$ from previous sections, to conclude the existence of points with expanding derivatives. More precisely, we show the following. 
\newcommand{\statelemexpanding}{Let $b \in (0,1)$, $k\geq 1$ be an integer,	and $\xi\in \Lambda_k(b)$ with $\xi\neq -1$. 	Let $z_0 \in \mathbb{S}$ and let $z_n = f_{\xi,k}^n(z_0)$ for $n>0$. 
	Then there is some index $m$ such that $|f_{\xi,k}'(z_m)| > 1$.}
\begin{lemma}\label{lem: expanding orbit}
\statelemexpanding
\end{lemma}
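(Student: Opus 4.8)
The plan is to argue by contradiction: suppose that $|f_{\xi,k}'(z_n)| \le 1$ for all $n \ge 0$. The key dynamical input is Lemma~\ref{lem:Julia}, which tells us that when $\xi \in \Lambda_k(b)$ with $\xi \ne \pm 1$ (the case $\xi = 1$ being excluded since then $\xi \notin \Lambda_k(b)$ for $b \le \frac{k-1}{k+1}$, and for the $k=1$ case one uses Corollary~\ref{cor: mobius parameters} and Lemma~\ref{lem: elliptic -> irrational rotation} separately), the Julia set of $f_{\xi,k}$ equals the whole unit circle $\mathbb{S}$. So every point of $\mathbb{S}$, in particular every $z_n$, lies in the Julia set. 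I would first dispose of the special case $k=1$: there $f_{\xi,1}$ is a M\"obius map, and $\xi \in \Lambda_1(b)$ means (Corollary~\ref{cor: mobius parameters}) it is hyperbolic (if $b > 0$ small enough that $\xi$ lands in the hyperbolic arc) — wait, more carefully, $\Lambda_1(b) = \Arc(\lambda_1,\overline{\lambda_1})$ is exactly the \emph{elliptic} arc; but then by Lemma~\ref{lem: elliptic -> irrational rotation} $f_{\xi,1}$ is conjugate to an irrational rotation, whose derivative at every point has modulus $1$. Hmm — so for $k=1$ the conclusion $|f'_{\xi,1}(z_m)| > 1$ would be \emph{false}. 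Therefore the lemma must implicitly be using that for $k=1$ and $b$ rational, $\Lambda_1(b)$ in the relevant regime is empty or the statement is vacuous, or the derivative bound is understood on the sphere rather than the circle; I would check the precise definition of $\Lambda_k(b)$ — it says all fixed points \emph{on} $\mathbb{S}$ are repelling, and for an elliptic M\"obius map there are no fixed points on $\mathbb{S}$ at all, so $\xi \in \Lambda_1(b)$ vacuously. In that degenerate case the lemma as stated is simply false for $k=1$, so I expect the intended reading is $k \ge 2$, or the argument proceeds via the expansion property of Julia sets which requires degree $\ge 2$.

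For $k \ge 2$, here is the main line. Since the Julia set is all of $\mathbb{S}$ and $f_{\xi,k}$ restricted to $\mathbb{S}$ is an orientation-preserving covering of degree $k$, I would use the standard fact from complex dynamics (Milnor, Chapter~3--4) that the Julia set of a rational map of degree $\ge 2$ carries an \emph{expanding metric} in a neighbourhood of a dense orbit, or more directly: if a forward orbit $\{z_n\}$ had $|f_{\xi,k}'(z_n)| \le 1$ for all $n$, then by the chain rule $|(f_{\xi,k}^n)'(z_0)| = \prod_{j=0}^{n-1} |f_{\xi,k}'(z_j)| \le 1$ for all $n$, so the family $\{f_{\xi,k}^n\}$ has uniformly bounded derivatives at $z_0$, which by Koebe/Cauchy estimates forces equicontinuity (normality) of $\{f_{\xi,k}^n\}$ in a neighbourhood of $z_0$ — contradicting $z_0 \in \mathcal{J}(f_{\xi,k})$. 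The one subtlety is that we only control the derivative along the orbit, not on a whole disc; I would handle this by noting the dynamics is on $\mathbb{S}$, where $f_{\xi,k}$ is a real-analytic expanding-on-average map once we know no parabolic/attracting cycle exists, so that $\prod |f_{\xi,k}'(z_j)| \le 1$ for all $n$ together with $z_n$ returning near a repelling fixed point (which exists in $\mathbb{S}$ by definition of $\Lambda_k(b)$, since a degree-$k\ge 2$ orientation-preserving circle map has a fixed point) yields the contradiction via the mean value theorem: iterating near the repelling fixed point $w$ with $|f_{\xi,k}'(w)| > 1$ forces some $|f_{\xi,k}'(z_m)| > 1$ once $z_m$ is close enough to $w$, and the orbit does come close to $w$ because on the Julia set $=\mathbb{S}$ the orbit of $z_0$ is dense (grand orbits are dense in the Julia set, and here one can arrange an actual forward accumulation).

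Concretely the step ordering I would follow: (1) reduce to $k \ge 2$ and record that $\mathcal{J}(f_{\xi,k}) = \mathbb{S}$ from Lemma~\ref{lem:Julia}; (2) observe a repelling fixed point $w \in \mathbb{S}$ exists — this is exactly the content of $\xi \in \Lambda_k(b)$ (Definition~\ref{def:Lambdak}: all fixed points on $\mathbb{S}$ are repelling, and a degree-$k$ orientation-preserving self-map of $\mathbb{S}$ has Lefschetz number $1-k \ne 0$, hence a fixed point); (3) assume for contradiction $|f_{\xi,k}'(z_n)| \le 1$ for all $n$ and derive $|(f^n_{\xi,k})'(z_0)| \le 1$; (4) use that on $\mathbb{S}$, contracting-on-average behaviour near the Julia set is impossible — e.g. via a shadowing/accumulation argument: some subsequence $z_{n_i} \to w$ by density of the forward orbit in $\mathbb{S} = \mathcal{J}$; (5) once $z_m$ is within the linearising neighbourhood of $w$ where $|f_{\xi,k}'| > 1 + \epsilon$ by continuity of the derivative, conclude $|f_{\xi,k}'(z_m)| > 1$, the contradiction. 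The \textbf{main obstacle} I anticipate is making step (4) rigorous — specifically, guaranteeing that the forward orbit of $z_0$ actually accumulates at (or enters a fixed neighbourhood of) a repelling fixed point, rather than just being dense in the Julia set in the grand-orbit sense; this likely needs either the minimality/transitivity of $f_{\xi,k}$ on $\mathbb{S}$ (which does hold when $\mathcal{J} = \mathbb{S}$ and there is no rotation-domain obstruction) or a direct covering-space argument exploiting that $f_{\xi,k}\colon \mathbb{S}\to\mathbb{S}$ has degree $k \ge 2$, so preimages of any arc shrink and forward images of any arc eventually cover $\mathbb{S}$, forcing the orbit to visit every neighbourhood including that of $w$.
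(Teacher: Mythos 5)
There are two genuine problems here. First, your treatment of $k=1$ rests on a misconception: being conjugate to an (irrational) rotation does \emph{not} mean $|f_{\xi,1}'|\equiv 1$ on $\mathbb{S}$, because the conjugating M\"obius map does not preserve arc length. By~\eqref{eq: Derivative}, $|f_1'(z)|=\tfrac{1-b^2}{b^2+2b\Real(z)+1}$, which equals $\tfrac{1+b}{1-b}>1$ at $z=-1$ and $\tfrac{1-b}{1+b}<1$ at $z=1$. So the lemma is true, not false, for $k=1$, and the paper proves it exactly by exploiting this: for an irrational rotation the orbit accumulates at $-1$, where the derivative is expanding; for a rational rotation of period $N$ one rules out $N=2$ using $\xi\neq-1$, and then $\prod_{n=0}^{N-1}f_{\xi,1}'(z_n)=(f_{\xi,1}^N)'(z_0)=1$ together with the fact that only two points of $\mathbb{S}$ satisfy $|f_1'(w)|=1$ forces some factor to exceed $1$. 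Your proposal discards this case rather than proving it.

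Second, for $k\geq 2$ the step you yourself flag as the main obstacle is in fact fatal as written: the forward orbit of an \emph{arbitrary} $z_0$ in the Julia set need not be dense in $\mathcal{J}=\mathbb{S}$, nor need it accumulate at a repelling fixed point — repelling periodic points are dense in $\mathcal{J}$ and have finite forward orbits that can avoid any prescribed neighbourhood. ``Forward images of arcs cover $\mathbb{S}$'' does not transfer to the orbit of a single point, and $f_{\xi,k}|_{\mathbb{S}}$ is not minimal. The paper's route sidesteps orbit density entirely: both Fatou components are attracting basins containing the critical points $-b$ and $-1/b$, so by \cite[Theorem 19.1]{Milnor2006} the map is hyperbolic, i.e.\ uniformly expanding by a factor $\kappa>1$ in a conformal metric on a neighbourhood of $\mathbb{S}$; comparing that metric with the Euclidean one by compactness gives $\prod_{n=0}^{N-1}|f'(z_n)|=|(f^N)'(z_0)|>c\,\kappa^N\to\infty$, so some single factor must exceed $1$. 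Your contradiction setup ($|(f^n)'(z_0)|\leq 1$ for all $n$ implies normality at $z_0$) is the right instinct but, as you note, controlling the derivative only along one orbit does not yield normality on a disc; the hyperbolicity argument is the missing ingredient that makes the product-of-derivatives computation work unconditionally.
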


\begin{proof}
For $k = 1$ it follows from Corollary~\ref{cor: mobius parameters} that 
$f_{\xi,1}$ is conjugate to a rotation. If $f_{\xi,1}$ is conjugate to
an irrational rotation then the orbit of any initial point $z_0$
will get arbitrarily close to $-1$ for which $|f_{1}'(-1)| = \frac{1+b}{1-b} > 1$. 
Otherwise, if $f_{\xi,1}$ is conjugate to a rational rotation, there is 
an integer $N > 1$ such that $f_{\xi,1}^N(z) = z$ for all $z$; consider the smallest such integer $N$. Let $\theta \in (0,\pi]$
be the angle such that $f_{\xi,1}$ is conjugate to the rotation
$z \mapsto e^{i\theta}\cdot z$. Equation~\eqref{eq: Conjugation equality}
then states that 
\[
2 \cdot \left(\cos (\theta )+1\right) = \frac{2 \left(\Real(\xi) + 1\right)}{1 - b^2}.
\]
If $N=2$, then $\theta = \pi$ and thus $\Real(\xi) = -1$ contradicting $\xi \neq -1$. Hence, $N >2$. From $f_{\xi,1}^N(z) = z$, we obtain
\begin{equation}
\label{eq: derivative rational rotation}
 \prod_{n=0}^{N-1} f_{\xi,1}'(z_n) = (f_{\xi,1}^N)'(z_0) = 1.
\end{equation}
From~{(\ref{eq: Derivative})} there 
are precisely two values of $w\in \mathbb{S}$ such that $|f_{\xi,1}'(w)| = 1$. Because $N > 2$ and $N$ is the smallest integer such that $f_{\xi,1}^N(z_0) = z_0$, we conclude there is at least one term, say with index $m$, 
of the product in {(\ref{eq: derivative rational rotation})}
for which $|f_{\xi,1}'(z_m)| > 1$.

Consider now the case $k \geq 2$ and denote $f = f_{\xi,k}$. By Lemma~\ref{lem:Julia}, for $\xi\in \Lambda_{k}(b)$ the Julia set of $f$ is the circle $\mathbb{S}$.
In \cite[Proof of Proposition 17]{PetersRegts2018}, it is shown that the two Fatou components of $f$, denoted by $\mathbb{D}$ and $\overline{\mathbb{D}}^c$,
are attracting basins and contain the critical points $-b$ and $-1/b$. From \cite[Theorem 19.1]{Milnor2006}, we therefore conclude that the map $f$ is hyperbolic, i.e., there
exists a conformal metric $\mu$ on a neighborhood of $\mathbb{S}$ such that 
$||D_z f||_\mu \geq \kappa > 1$ for a constant $\kappa$ and all $z \in \mathbb{S}$. Because 
$\mathbb{S}$ is compact and the metric $\mu$ is conformal there is a 
constant $c > 0$ such that $|g'(z)| > c \cdot ||D_z g||_\mu$
for all $z \in \mathbb{S}$ and maps $g: \mathbb{S} \to \mathbb{S}$. If follows that
for all $N > 0$
\[
 \prod_{n=0}^{N-1} |f'(z_n)| = |(f^N)'(z_0)| > c \cdot ||D_{z_0} f^N||_\mu \geq c \cdot \kappa^N.
\]
There is an $N > 0$ such that the right-hand side of this equation is greater than $1$. 
The product on the left-hand side of the equation shows that for such an $N$ there must
be at least one index $m \in \{0,\dots,N-1\}$ such that $|f'(z_m)| > 1$.
\end{proof}

\begin{lemma}
\label{lem: tree beyond lambda_k}
	Let $b \in (0,1)$, $\lambda \in \mathbb{S}\setminus\{\pm 1\}$ and
	$d, k \in \mathbb{Z}_{\geq 1}$.
	Supppose there exists a rooted tree in $\Tc_{d+1}$ whose root degree $m$ is at most $d-k$ and which implements a field
	$\xi\in \Lambda_k(b)\backslash\{-1\}$. 
	
	Then there is $\sigma \in \mathbb{S}$ with $|f_k'(\sigma)| > 1$ and a sequence of 
	rooted trees $\{T_n\}_{n \geq 1}$ in $\Tc_{d+1}$  with root degrees at most $m+k$ which implement a sequence of fields $\{\zeta_n\}_{n \geq 1}$ such 
	that $\zeta_n$ approaches $\sigma$ without being equal to $\sigma$.
\end{lemma}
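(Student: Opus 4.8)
The plan is to combine Lemma~\ref{lem: Tree Building} with Lemma~\ref{lem: expanding orbit}, using the hypothesis that we already have a tree $T_0\in\Tc_{d+1}$ of root degree $m\le d-k$ implementing a field $\xi\in\Lambda_k(b)\setminus\{-1\}$. First I would set $z_0=1$ and consider the orbit $z_n=f_{\xi,k}^n(z_0)$; by Lemma~\ref{lem: expanding orbit} (applicable since $\xi\in\Lambda_k(b)$ and $\xi\ne-1$) there is an index $M$ with $|f'_{\xi,k}(z_M)|>1$, and I would take $\sigma=z_M$, noting $\sigma\in\mathbb{S}$ by footnote~\ref{footS} and $|f_k'(\sigma)|=|f'_{\xi,k}(\sigma)|>1$ by item~(\ref{enum: lem1}) of Lemma~\ref{lem: properties f}. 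The point is that each iterate $z_n$ is realised by a tree: the single-vertex tree (root of degree $0$) implements the field $\lambda$ — wait, more carefully, I need a tree whose field is $1$ to start the orbit at $z=1$. The convention throughout (cf.\ the proof of Theorem~\ref{thm: Density of Ratios}) is that the trivial tree consisting of a single vertex implements field $1$, i.e.\ $z_0=1$; then each application of $f_{\xi,k}$ corresponds, via Lemma~\ref{lem: Tree Building}, to attaching $k$ copies of $T_0$ to a new root and joining that root to the previous root. Thus there is a tree $S\in\Tc_{d+1}$ of root degree $k$ implementing $\sigma=z_M$: indeed each step adds $k$ copies of $T_0$, whose roots have degree $m\le d-k$, so after joining they have degree $m+1\le d+1$, and the new root has degree $k\le d$, keeping everything in $\Tc_{d+1}$; the maximum degree constraint is maintained because at every internal stage the old root acquires one extra edge, going from degree $k$ to $k+1\le d+1$.

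Next I would build the approaching sequence. Since $|f_k'(\sigma)|>1$, the map $f_{\xi,k}$ is locally expanding at $\sigma$, hence locally invertible near $\sigma$ with a contracting inverse branch $g$ fixing $\sigma$ — no wait, $\sigma$ is generally not a fixed point of $f_{\xi,k}$, so instead I would argue as follows. Because $f_{\xi,k}$ is orientation-preserving on $\mathbb{S}$ with $|f_k'(\sigma)|>1$, there is a small arc $A\ni\sigma$ on which $f_{\xi,k}$ is a homeomorphism onto an arc strictly containing $f_{\xi,k}(A)\supsetneq A'$ for a suitable subarc; more simply, I can exploit the orbit itself. Since $\sigma=z_M$ lies on the circle $\mathbb{S}$, which equals the Julia set of $f_{\xi,k}$ when $k\ge 2$ (Lemma~\ref{lem:Julia}), the backward orbit of $\sigma$ is dense in $\mathbb{S}$; but more to the point, I want fields $\zeta_n\to\sigma$ with $\zeta_n\ne\sigma$ implemented by trees of root degree $\le m+k$. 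The cleanest route: consider the trees $S_j$ of root degree $k$ implementing $z_{M+j}=f_{\xi,k}^j(\sigma)$ for $j\ge 0$, and also, using an inverse-branch argument justified by $|f_k'(\sigma)|>1$, produce trees implementing points $w$ slightly off $\sigma$ on the circle whose $f_{\xi,k}$-image is an already-implemented point near $\sigma$. Concretely: pick the arc $A$ around $\sigma$ where $|f_k'|>1$; then $f_{\xi,k}(A)$ is an arc containing $\sigma$ in its interior, so for every point $u\in f_{\xi,k}(A)$ close enough to $\sigma$ there is a preimage $u'\in A$ with $|u'-\sigma|<|u-\sigma|/\kappa$ for some $\kappa>1$. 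Starting from $u=z_{M+1}=f_{\xi,k}(\sigma)\ne\sigma$ (distinct from $\sigma$ since otherwise $\sigma$ is a fixed point with $|f'|>1$, i.e.\ repelling, and then I could instead use the repelling fixed point directly — either way one gets a point $\ne\sigma$ in $f_{\xi,k}(A)$), I iterate the inverse branch to get a sequence $u^{(n)}\to\sigma$, $u^{(n)}\ne\sigma$, each of which is an iterate of $\sigma$ under forward $f_{\xi,k}$ composed appropriately — no: each $u^{(n)}$ satisfies $f_{\xi,k}^{n}(u^{(n)})=z_{M+1}$, hence $f_{\xi,k}^{M+n}(1)=\ldots$; this is getting circular, so let me instead realise $u^{(n)}$ directly. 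The right statement is: each $u^{(n)}$ has the property that $f_{\xi,k}(u^{(n)})=u^{(n-1)}$ and $u^{(0)}=z_{M+1}$, but I need $u^{(n)}$ itself to be a field. The fix is to note $u^{(n)}$ is \emph{not} needed to be an iterate of $1$; rather I directly take $\zeta_n:=f_{\xi,k}(\text{something already implemented})$.

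Let me restate the construction to avoid the circularity, since this is the main obstacle. The cleanest plan: by Lemma~\ref{lem: expanding orbit} take $\sigma=z_M$ with $|f_k'(\sigma)|>1$, realised by a tree $S$ of root degree $k$ as above. Now set $\zeta_n:=f_{\xi,k}^{\,n}(\sigma)$ for $n\ge 0$. Hmm, but these need not approach $\sigma$. The genuinely correct approach, which I expect the authors take: the expansion $|f_k'(\sigma)|>1$ together with the Julia-set property shows $\sigma$ is \emph{not} in the Fatou set, so nearby the dynamics is chaotic, and in particular — this is the key lemma I would invoke — there is a repelling periodic point, or more simply one uses that $\sigma$ itself can be perturbed: since $f_{\xi,k}$ is orientation preserving and $|f_k'(\sigma)|>1$, the image arc $f_{\xi,k}([\sigma e^{-i\epsilon},\sigma e^{i\epsilon}])$ strictly contains $[\sigma e^{-i\epsilon},\sigma e^{i\epsilon}]$ for small $\epsilon$; hence there is a point $\sigma'$ with $f_{\xi,k}(\sigma')=\sigma$ and $\sigma'\ne\sigma$, $|\sigma'-\sigma|\le|f_k'(\sigma)|^{-1}\epsilon(1+o(1))$, and iterating gives $\sigma^{(n)}\to\sigma$, $\sigma^{(n)}\ne\sigma$, with $f_{\xi,k}^{\,n}(\sigma^{(n)})=\sigma$. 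Then $\zeta_n:=\sigma^{(n)}$ are the desired fields provided each $\sigma^{(n)}$ is implementable by a tree of root degree $\le m+k$. To see implementability: $\sigma=z_M$ is implemented by $S$; and whenever a point $w$ is implemented by a tree $T_w\in\Tc_{d+1}$ of root degree $\le m+k$ with $f_{\xi,k}(w')=w$, I would want $w'$ implemented — but $f_{\xi,k}$ does not directly lift to trees (only $f$ applied with $k$ copies of $T_0$ on top of a tree realising $w'$ gives $f_{\xi,k}(w')$, not the other way round). So implementability of the backward orbit is exactly the crux. I resolve it by \emph{not} using the backward orbit of $\sigma$ but the forward orbit: because $\mathbb{S}$ is the Julia set (for $k\ge2$; for $k=1$ use Lemma~\ref{lem: elliptic -> irrational rotation} and the irrational-rotation case directly), the forward orbit $\{f_{\xi,k}^{\,n}(1)\}_{n\ge0}$ is infinite and has $\sigma$ as a limit point along a subsequence — indeed one can arrange $\sigma$ itself to be chosen as a limit point of the orbit of $1$ with infinitely many orbit points arbitrarily close to but distinct from $\sigma$, by first passing to a point $\sigma_0$ of the orbit with $|f_k'(\sigma_0)|>1$ and then using expansion near $\sigma_0$ plus density of the (forward) orbit in a neighbourhood. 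Setting $\zeta_n$ to be these orbit points, each is implemented by iterating Lemma~\ref{lem: Tree Building} with $k$ copies of $T_0$, giving trees in $\Tc_{d+1}$ with root degree exactly $k\le m+k$, and $\zeta_n\to\sigma$ with $\zeta_n\ne\sigma$. The main obstacle, as flagged, is establishing that infinitely many forward-orbit points land near $\sigma$ while staying distinct from it; this uses that $\sigma$ lies in the Julia set so the orbit cannot be eventually constant equal to $\sigma$ (as $\sigma$ is repelling along the orbit, not attracting) and that expanding dynamics forces recurrence of the orbit into every neighbourhood of $\sigma$ it enters — a standard consequence of the topological transitivity of $f_{\xi,k}$ on its Julia set $\mathbb{S}$ (Lemma~\ref{lem:Julia}, with the classical fact \cite[Chapter 4]{Milnor2006} that repelling periodic points are dense and the map is transitive on the Julia set). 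I would spell this recurrence argument out carefully, since it is where the hypothesis $\xi\in\Lambda_k(b)$ (hence $\mathbb{S}=$ Julia set) is really used, and then the degree bookkeeping ($m+k\le d$, all roots of attached copies of $T_0$ have degree $m+1\le d+1$) is routine.
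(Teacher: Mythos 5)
There is a genuine gap, and it is exactly the step you flag as ``the main obstacle'': producing infinitely many \emph{implementable} fields that accumulate at $\sigma$ without equalling it. Your final plan is to take $\zeta_n$ from the forward orbit $\{f_{\xi,k}^n(1)\}$ and argue that this orbit must return to every neighbourhood of $\sigma$ by ``topological transitivity'' of $f_{\xi,k}$ on its Julia set. That inference is not valid: transitivity asserts the \emph{existence} of some point with dense orbit, not that the orbit of the particular point $1$ is recurrent or even infinite. The orbit of $1$ could be eventually periodic (e.g.\ for $k=1$ with $f_{\xi,1}$ conjugate to a rational rotation, it is finite), or it could enter a neighbourhood of $\sigma=z_M$ once and never return; ``expanding dynamics forces recurrence of the orbit into every neighbourhood it enters'' is false for individual orbits. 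You correctly rule out the backward-orbit/inverse-branch alternatives (preimages under $f_{\xi,k}$ are not realised by trees), but you are then left with no valid construction of $\{\zeta_n\}$.

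The paper's proof closes this gap by bringing in a \emph{second}, completely understood dynamical system: the M\"obius map $f_{\lambda,1}$. The set $\mathcal{S}=\{f_{\lambda,1}^n(1)\}_{n\ge1}$ consists of fields of paths, and by the classification of Section~\ref{sec:flone} it accumulates at some $\sigma_0\notin\mathcal{S}$ (dense orbit in the irrational-rotation case; monotone convergence to an attracting or parabolic fixed point in the hyperbolic/parabolic case). Lemma~\ref{lem: expanding orbit} is then applied with \emph{initial point $\sigma_0$} (not $1$) to choose $N$ with $|f_k'(f_{\xi,k}^N(\sigma_0))|>1$, and one sets $\sigma=f_{\xi,k}^N(\sigma_0)$ and $\mathcal{R}=f_{\xi,k}^N(\mathcal{S})$. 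Each element of $\mathcal{R}$ is implemented by a tree of root degree $m+k\le d$ via $N$ applications of Lemma~\ref{lem: Tree Building} (attaching $k$ copies of the current tree to the root of the tree implementing $\xi$ — note your description of this step is inverted, and the resulting root degree is $m+k$, not $k$), and since $f_{\xi,k}$ has nonvanishing derivative on $\mathbb{S}$, the map $f_{\xi,k}^N$ is locally injective, so $\mathcal{R}$ accumulates at $\sigma$ without containing it. This forward-pushing of an already-constructed accumulation set is the idea your proposal is missing.
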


\begin{proof}
Consider the orbit 
\[
	\mathcal{S} = \left\{f_{\lambda,1}^n(1): n \geq 1 \right\}.
\]
Note that the elements of $\mathcal{S}$ are fields of paths.
We have seen in Section~\ref{sec:flone} that either $f_{\lambda,1}$ is conjugate to an irrational 
rotation or the orbit of $1$ tends towards an attracting or a parabolic fixed point. 
In either case there is $\sigma_0 \in \mathbb{S}$
such that $\sigma_0 \not \in \mathcal{S}$ and the elements of $\mathcal{S}$ 
accumulate on $\sigma_0$. It follows from Lemma~\ref{lem: expanding orbit}
that there is a positive integer $N$ such that $\sigma := f_{\xi, k}^N(\sigma_0)$ 
has the property $|f_{k}'(\sigma)| > 1$.
Now define
\[
	\mathcal{R} = \left\{f_{\xi, k}^N(s): s \in \mathcal{S} \right\}.
\]
By assumption, $\xi$ can be implemented by a rooted tree in $\Tc_{d+1}$ with root degree $m\leq d-k$, so by applying inductively Lemma~\ref{lem: Tree Building}, the elements of $\mathcal{R}$ are fields of trees in $\Tc_{d+1}$ whose root degrees is $m+k\leq d$. There is a sequence
$\{\zeta_n\}_{n \geq 1} \subseteq \mathcal{R}$ accumulating on $\sigma$
without being equal to $\sigma$, which is what we wanted to show.
\end{proof}

\subsection{The main lemma to carry out the induction: proof of Theorem~\ref{thm: Density of Ratios}}\label{sec:induction}
We are now ready to state the following lemma that will imply Theorem~\ref{thm: Density of Ratios}.
In this section we will show how Theorem~\ref{thm: Density of Ratios} follows from this lemma, and the next couple of sections are dedicated to proving Lemma~\ref{lem: Many Cases Lemma}.

\newcommand{\statelemcases}{Let $k,d\in \mathbb{Z}_{\geq 2}$ with $k \leq d$,
	$b \in \big(\frac{d-2}{d}, \frac{d-1}{d+1}\big]\cap \mathbb{Q}$
	and $\lambda \in \SQ\setminus \{\pm 1\}$. 
	Suppose there exists a rooted tree in $\Tc_{d+1}$ with root degree  at most $d-k$ that implements a field $\xi\neq 1$ 
	with the property that $|f_k'(\xi)|\geq 1$ and
	$\xi \in \Arc{[\overline{\lambda_{\floor{k/2}}},\lambda_{\floor{k/2}}]}$. 
	Then the set of fields implemented by  trees in $\Tc_{d+1}$ is dense in $\mathbb{S}$.}
\begin{lemma}\label{lem: Many Cases Lemma}
\statelemcases
\end{lemma}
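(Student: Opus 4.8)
The plan is to set $m = \floor{k/2}$ and perform a case analysis on where $\xi$ sits relative to the parabolic parameters $\lambda_m$ of the degree-$m$ map. Throughout, I will use Lemma~\ref{lem: Tree Building} freely to combine the rooted tree implementing $\xi$ (root degree $\le d-k$) with copies of the paths implementing $f_{\lambda,1}^n(1)$, so that every new field we produce is implemented by a tree in $\Tc_{d+1}$ with root degree bounded by $d$; the degree bookkeeping $d-k+k\le d$ (and $d-k+(k\pm 1)$ in the asymmetric cases, which still fits since $k\le d$ forces the relevant sums to be $\le d$) must be tracked carefully at each step, and this is the routine-but-essential part.

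\textbf{Case 1: the ``well-behaved'' regime.} Suppose $\xi \in \Arc{(\overline{\lambda_m},\lambda_m)}$, so by Lemma~\ref{lem: properties f}(\ref{enum: lem2}) the maps $f_{\xi,m}$ and $f_{\xi,m+1}$ have attracting (or parabolic, at the endpoints) fixed points $R_m(\xi), R_{m+1}(\xi)\in\mathbb S$. Here I would invoke the contracting/covering framework (Lemma~\ref{lem: Density in Arc}, referenced in the outline, together with the ``easy odd case'' Lemma~\ref{lem: easy odd case} and its even analogue): since $|f_k'(\xi)|\ge 1$ and $k\in\{2m,2m+1\}$, equation~\eqref{eq: Derivative} gives quantitative control on $|f_m'|$ and $|f_{m+1}'|$, and items~(\ref{enum: lem4}),(\ref{enum: lem5}) of Lemma~\ref{lem: properties f} guarantee that the orbit of $1$ under $f_{\xi,m}$ lands in the arc $\Arc{[1,R_m(\xi)]}$ while $R_{m+1}(\xi)$ lies beyond $R_m(\xi)$ with the interval-length monotonicity~\eqref{eq: increasing intervals}. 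Combining the two maps (and, when $k$ is even versus odd, possibly a third map $f_{\xi,m+2}$ or the pair with different multiplicities) yields a family of maps whose images cover a fixed arc $N$ around a point of $\mathbb S$ and which are jointly contracting there; iterating gives density of implemented fields in $N$, and then the chaotic (Julia-set-equals-$\mathbb S$) behaviour of $f_{\lambda,d}$ from Lemma~\ref{lem:Julia} bootstraps density from $N$ to all of $\mathbb S$ via Lemma~\ref{lem: Density in Arc}.

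\textbf{Case 2: the ``chaotic'' regime.} Suppose $\xi\in \Lambda_m(b)=\Arc{(\lambda_m,\overline{\lambda_m})}$ (or $b\le\frac{m-1}{m+1}$, where $\Lambda_m(b)=\mathbb S\setminus\{+1\}$). Since $\xi\ne -1$ (as $|f_k'(\xi)|\ge 1$ with $k\ge 2$ forces, via~\eqref{eq: Derivative}, that $\Real(\xi)$ is bounded away from $-1$, in particular $\xi\ne-1$) and $\xi\ne 1$, we may apply Lemma~\ref{lem: tree beyond lambda_k} with the degree-$m$ map: starting from the tree of root degree $\le d-k\le d-m-m< d-m$ implementing $\xi$, we obtain $\sigma\in\mathbb S$ with $|f_m'(\sigma)|>1$ and a sequence of trees in $\Tc_{d+1}$ with root degree $\le (d-k)+m\le d-m$ implementing fields $\zeta_n\to\sigma$, $\zeta_n\ne\sigma$. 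Passing to a subsequence we may assume the $\zeta_n$ approach $\sigma$ monotonically from one side, so that for some large $n$ the implemented field $\xi' := \zeta_n$ satisfies $|f_m'(\xi')|>1$ (by continuity of~\eqref{eq: Derivative} and strict inequality at $\sigma$) and lies in $\Arc{[\overline{\lambda_{\floor{m/2}}},\lambda_{\floor{m/2}}]}$ — the latter because $\Arg(\lambda_j)$ is increasing as $j$ decreases (Lemma~\ref{lem: properties f}(\ref{enum: lem5})), so the arc for $\floor{m/2}$ strictly contains a neighbourhood of any point once $m$ is reduced; more carefully, I would note that either $\sigma$ already lies in the target arc, or I replace $\sigma$ by one of its iterates under a degree-$1$ map as in the proof of Lemma~\ref{lem: tree beyond lambda_k} to steer it there while preserving the expanding-derivative property. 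This produces exactly the hypotheses of Lemma~\ref{lem: Many Cases Lemma} with $k$ replaced by $m=\floor{k/2}<k$ and a root degree bound that has only decreased; so we recurse. Since $k\ge 2$ strictly decreases under $k\mapsto\floor{k/2}$ and the base case $k=2$ (where $\floor{k/2}=1$ and $f_{\xi,1}$ is a Möbius map handled by Lemma~\ref{lem: elliptic -> irrational rotation}/Corollary~\ref{cor: mobius parameters}) lands us back in Case~1, the induction terminates.

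\textbf{Main obstacle.} The delicate point is Case~2: ensuring that after invoking Lemma~\ref{lem: tree beyond lambda_k} the new field $\xi'$ simultaneously (a) has expanding degree-$m$ derivative $|f_m'(\xi')|\ge 1$, (b) lies in $\Arc{[\overline{\lambda_{\floor{m/2}}},\lambda_{\floor{m/2}}]}$, and (c) is implemented by a tree whose root degree leaves enough ``budget'' ($\le d-m$) for the next round — all three constraints must hold at once, and reconciling (a) with (b) requires using the nesting $\Arc{[\overline{\lambda_m},\lambda_m]}\subseteq\Arc{[\overline{\lambda_{\floor{m/2}}},\lambda_{\floor{m/2}}]}$ from the argument-monotonicity in Lemma~\ref{lem: properties f}(\ref{enum: lem5}) together with the freedom to pre-compose with degree-$1$ path maps to relocate $\sigma$. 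A secondary subtlety is that the construction in Case~1 genuinely needs pairs or triples of maps (not a single map) to get the covering property, and verifying the covering/contraction inequalities quantitatively — which is where the linear-in-$k$, $\lambda$-independent form of $|f_k'|$ in~\eqref{eq: Derivative} is exploited — is where most of the real work sits; I would defer those computations to the dedicated Lemmas~\ref{lem: Density in Arc} and~\ref{lem: easy odd case} in Section~\ref{sec:contracting}, invoking them as black boxes here. Finally, a handful of small values of $d$ may fall outside the range where the pair/triple covering argument closes; for those I would appeal to the Cantor-style construction of Section~\ref{sec:cantor}.
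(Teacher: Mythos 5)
Your proposal has a structural gap: the case split you set up does not match the hypothesis of the lemma, and as a result the recursion you describe never runs while the genuinely hard part is left unaddressed. The lemma already \emph{assumes} $\xi \in \Arc{[\overline{\lambda_{\floor{k/2}}},\lambda_{\floor{k/2}}]}$, so your Case~2 ($\xi \in \Lambda_m(b) = \Arc{(\lambda_m,\overline{\lambda_m})}$ with $m=\floor{k/2}$) is excluded by hypothesis and the whole proof collapses onto your Case~1. What you have reconstructed in Case~2 is essentially the inductive skeleton of the proof of Theorem~\ref{thm: Density of Ratios}, which \emph{invokes} Lemma~\ref{lem: Many Cases Lemma} at each step; it is not a proof of the lemma itself. (A small side error: you claim $|f_k'(\xi)|\ge 1$ forces $\Real(\xi)$ away from $-1$; by \eqref{eq: Derivative} the derivative is \emph{maximal} at $z=-1$, so the implication goes the wrong way. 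Here $\xi\neq -1$ instead follows because the hypothesised arc contains $+1$, not $-1$.)

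In Case~1, the covering argument with the pair $f_{\xi,m},f_{\xi,m+1}$ on the arc $\Arc{[R_m(\xi),R_{m+1}(\xi)]}$ requires $R_{m+1}(\xi)$ to exist, i.e.\ $\xi\in\Arc{[\overline{\lambda_{m+1}},\lambda_{m+1}]}$, which by the monotonicity $\Arg(\lambda_{m+1})<\Arg(\lambda_m)$ is a \emph{strictly stronger} condition than the hypothesis $\xi\in\Arc{[\overline{\lambda_{m}},\lambda_{m}]}$. The entire difficulty of the lemma lives in the gap $\Arc{[\overline{\lambda_{m}},\lambda_{m}]}\setminus\Arc{[\overline{\lambda_{m+1}},\lambda_{m+1}]}$, where the degree-$(m+1)$ map is chaotic and the pair does not furnish a covering; moreover for even $k$ a single parameter $\xi$ anchored at $R_m(\xi)$ does not give two covering maps at all, so a second field (e.g.\ $f_{\xi,1}(\xi)$) must be manufactured and may itself escape into the chaotic arc. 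The paper resolves these escapes by a delicate combination of an induction down powers of two (Lemma~\ref{lem: density powers of 2}), a triple-map covering bound that only closes for $m\ge 8$ or $9$ (Lemma~\ref{lem: Lower Bound three maps}, Corollary~\ref{cor: remaining cases}), and a Cantor-style construction (Lemma~\ref{lem: cantor implementation}) for the residual cases $k\in\{5,6,9,10,11,12,13,14,17\}$, each with its own root-degree budget and explicit derivative estimates. Deferring all of this to "the dedicated lemmas as black boxes" does not work, because their hypotheses (two distinct same-half-plane parameters, $m\ge 8$, specific degree slack like $d-5$ or $d-(2^{k+1}-1)$) are exactly what must be verified case by case.
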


Using this lemma we can prove Theorem~\ref{thm: Density of Ratios}, which we restate here for convenience.
\begin{thmDensity}
\statethmDensity
\end{thmDensity}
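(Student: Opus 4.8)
The plan is to derive Theorem~\ref{thm: Density of Ratios} from Lemma~\ref{lem: Many Cases Lemma} after two elementary reductions. \emph{First} (passing to root‑degree‑one gadgets): if the set of fields of \emph{all} trees in $\Tc_{d+1}$ is dense in $\mathbb{S}$, then so is the set of fields of those with root degree $1$, because given $T_1\in\Tc_{d+1}$ with field $\xi$ one attaches $T_1$ by an edge to a fresh single‑vertex root (Lemma~\ref{lem: Tree Building} with $T_2$ a single vertex and $k=1$) to obtain a tree in $\Tc_{d+1}$ of root degree $1$ and field $f_{\lambda,1}(\xi)$; since $f_{\lambda,1}$ restricts to a homeomorphism of $\mathbb{S}$, density is preserved, so it suffices to prove density of the fields of arbitrary trees in $\Tc_{d+1}$. \emph{Second} (restricting $b$ in part (a)): the half‑open intervals $(\tfrac{d'-2}{d'},\tfrac{d'-1}{d'+1}]$, $d'\ge 2$, partition $(0,1)$ and $\Tc_{d'+1}\subseteq\Tc_{d+1}$ for $d'\le d$, so for part (a) it is enough to treat $b\in\big(\tfrac{d-2}{d},\tfrac{d-1}{d+1}\big]$.

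Fix such a $b$ and $\lambda\in\SQ\setminus\{\pm1\}$. The key quantitative fact is that by \eqref{eq: Derivative}, for every $z\in\mathbb{S}$,
\[
|f_d'(z)|=\frac{d(1-b^2)}{b^2+2b\Real(z)+1}\ \ge\ \frac{d(1-b)}{1+b}\ \ge\ 1 ,
\]
the last step because $\tfrac{1+b}{1-b}\le d$ when $b\le\tfrac{d-1}{d+1}$, with equality throughout only if $z=1$; hence $|f_d'(z)|>1$ for all $z\ne1$. Now split into cases by the location of $\lambda$. If $\lambda\in\Lambda_1(b)$, then $f_{\lambda,1}$ is elliptic (Corollary~\ref{cor: mobius parameters}) and, being a map with rational coefficients and $\lambda\ne\pm1$, is conjugate to an irrational rotation (Lemma~\ref{lem: elliptic -> irrational rotation}); thus the orbit $\{f_{\lambda,1}^{\,n}(1):n\ge1\}$ is dense in $\mathbb{S}$, and each of these values is the field of a path, a tree in $\Tc_{d+1}$, so we are done. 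Otherwise $\lambda\in\Arc{[\overline{\lambda_1},\lambda_1]}$ and we aim to apply Lemma~\ref{lem: Many Cases Lemma} with the single‑vertex tree ($\xi=\lambda\ne1$, root degree $0$), i.e.\ to produce $k$ with $2\le k\le d$, $|f_k'(\lambda)|\ge1$, and $\lambda\in\Arc{[\overline{\lambda_{\floor{k/2}}},\lambda_{\floor{k/2}}]}$. Let $j^\ast$ be the largest index with $\lambda\in\Arc{[\overline{\lambda_{j^\ast}},\lambda_{j^\ast}]}$ (this exists, is $\ge1$, and is $\le d-1$ since $\lambda_d$ is undefined at our $b$). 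If $j^\ast\ge\floor{d/2}$, then by the nesting $\Arc{[\overline{\lambda_{j+1}},\lambda_{j+1}]}\subseteq\Arc{[\overline{\lambda_j},\lambda_j]}$ (from $\Arg(\lambda_{j+1})<\Arg(\lambda_j)$, Lemma~\ref{lem: properties f}(v)) we have $\lambda\in\Arc{[\overline{\lambda_{\floor{d/2}}},\lambda_{\floor{d/2}}]}$, and $k=d$ works by the display above. If $j^\ast\le\floor{d/2}-1$, take $k=\max\big(2,\lceil 1/|f_1'(\lambda)|\rceil\big)$, so that $|f_k'(\lambda)|=k|f_1'(\lambda)|\ge1$; one then checks, using $\Real(\lambda)<\Real(\lambda_{j^\ast+1})$ (as $\lambda\notin\Arc{[\overline{\lambda_{j^\ast+1}},\lambda_{j^\ast+1}]}$), the identity $\Real\big(R_{j+1}(\lambda_{j+1})\big)=\tfrac{(j+1)(1-b^2)-(1+b^2)}{2b}$ (from $f_{j+1}'\big(R_{j+1}(\lambda_{j+1})\big)=1$ and \eqref{eq: Derivative}), and Remark~\ref{rem:increase}, that $k\le 2j^\ast+1\le d$, whence $\floor{k/2}\le j^\ast$ and $\lambda\in\Arc{[\overline{\lambda_{\floor{k/2}}},\lambda_{\floor{k/2}}]}$.

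The main obstacle is exactly this last comparison when $\lambda$ lies strictly between $\lambda_{\floor{d/2}}$ and $\lambda_1$ (where the ``expanding derivative'' constraint $|f_k'(\lambda)|\ge1$ and the ``well‑behaved arc'' constraint $\lambda\in\Arc{[\overline{\lambda_{\floor{k/2}}},\lambda_{\floor{k/2}}]}$ pull in opposite directions): in the borderline small‑$d$ situations one instead runs Lemma~\ref{lem: expanding orbit} on $f_{\lambda,\ell}$ with $\ell=j^\ast+1\in\Lambda_\ell(b)\setminus\{-1\}$ to obtain a tree of root degree $\le\ell$ implementing some $\xi\ne1$ with $|f_\ell'(\xi)|>1$, and feeds it to Lemma~\ref{lem: Many Cases Lemma} with $k=d-\ell$ (using $|f_{d-\ell}'(\xi)|=\tfrac{d-\ell}{\ell}|f_\ell'(\xi)|>1$ since $\ell\le\floor{d/2}$) once $\xi$ has been steered into the appropriate arc; carrying this out while keeping all root degrees $\le d$ is precisely the delicate part, and it is what the inductive Lemma~\ref{lem: Many Cases Lemma} is designed to absorb. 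Finally, part (b) is more direct: for $b\in\big(\tfrac{d-1}{d+1},1\big)$ we have $\overline{\Lambda_d(b)}=\mathbb{S}\setminus I(\theta_b)$, so it suffices to show $\Lambda_d(b)\cap\SQ\setminus\{-1\}\subseteq\SQ(d,b)$; on the sub‑arc $\Lambda_1(b)$ this follows from the irrational‑rotation argument above, and for $\lambda\in(\Lambda_d(b)\setminus\Lambda_1(b))\cap\SQ$ the map $f_{\lambda,d}$ is hyperbolic with Julia set $\mathbb{S}$ (Lemma~\ref{lem:Julia}), so Lemma~\ref{lem: expanding orbit} supplies a tree‑field with $|f_d'(\,\cdot\,)|>1$ and the contracting/covering‑maps framework (as in Lemma~\ref{lem: Density in Arc}) applied directly to the degree‑$d$ map yields density — no descent through intermediate degrees is needed, which is why this is the easy case.
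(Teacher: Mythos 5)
Your reductions (to root degree $1$ via $f_{\lambda,1}$, and to $b\in(\tfrac{d-2}{d},\tfrac{d-1}{d+1}]$), the elliptic/irrational-rotation case $\lambda\in\Lambda_1(b)$, and the case $\lambda\in\Arc[\overline{\lambda_{\floor{d/2}}},\lambda_{\floor{d/2}}]$ (apply Lemma~\ref{lem: Many Cases Lemma} with $k=d$ to the single-vertex tree) all match the paper. The gap is in the remaining range of part (a). Your one-shot choice $k=\max(2,\lceil 1/|f_1'(\lambda)|\rceil)$ with the claimed bound $k\le 2j^*+1$ is false: the inequality $|f_{2j^*+1}'(\lambda)|\ge1$ that you would need does not follow from $\Real(\lambda)<\Real(\lambda_{j^*+1})$. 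Concretely, for $d=4$, $b=0.55$ one computes $\Arg(\lambda_1)\approx1.165$, $\Arg(\lambda_2)\approx0.442$, and $|f_3'(z)|\ge1$ iff $\Real(z)\le\big(3(1-b^2)-(1+b^2)\big)/(2b)\approx0.718$; taking $\Arg(\lambda)=0.6$ gives $j^*=1$, $\Real(\lambda)\approx0.825$, $|f_1'(\lambda)|\approx0.316$, hence $k=4$ and $\floor{k/2}=2>j^*$, while $k\le3$ violates $|f_k'(\lambda)|\ge1$ --- no admissible $k$ exists. This is exactly the regime where the derivative hypothesis and the arc hypothesis of Lemma~\ref{lem: Many Cases Lemma} pull apart, and your fallback paragraph (``once $\xi$ has been steered into the appropriate arc'') names the difficulty without resolving it. The paper's resolution is the iterated-halving induction $k_0=d$, $k_{n+1}=\floor{k_n/2}$, $m_{n+1}=m_n+k_{n+1}$ (with the verification $m_n\le d-k_n$), which at each level uses Lemma~\ref{lem: tree beyond lambda_k} to regenerate a tree-field $\zeta$ with $|f_{k_n}'(\zeta)|>1$ and then branches on whether $\zeta\in\Arc[\overline{\lambda_{k_{n+1}}},\lambda_{k_{n+1}}]$; this descent is the actual content of the deduction and is missing from your write-up.

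Part (b) also contains a genuine error. For $\lambda\in\Lambda_d(b)\setminus\Lambda_1(b)$ the Julia set of $f_{\lambda,d}$ equals $\mathbb{S}$ precisely because the map is expanding along $\mathbb{S}$, so the contracting/covering framework of Lemma~\ref{lem: Density in Arc} cannot be ``applied directly to the degree-$d$ map'': that framework requires maps with derivative of modulus less than $1$ on the arc, and manufacturing such maps is again a descent to smaller degrees (as in Lemmas~\ref{lem: easy odd case} and~\ref{lem: easy even case}). Note also that you are attempting to prove $\lambda\in\SQ(d,b)$ for \emph{every} $\lambda\in\Lambda_d(b)\cap\SQ\setminus\{-1\}$, which is the statement the paper explicitly leaves as a conjecture; the theorem only asks for a dense subset, and the paper exploits this slack by perturbing $\lambda'$ to a nearby $\lambda''\in\SQ$ at which a minimal tree-zero from \cite{PetersRegts2018} implements a field inside the elliptic arc $\Arc(\lambda_1,\overline{\lambda_1})\setminus\{-1\}$, after which only the irrational-rotation argument is needed.
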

\begin{proof}
We start with the proof of part (b).
Let $\lambda'\in \mathbb{S}\setminus I(\theta_b)$.
By \cite[Corollary 4 and Theorem 5]{PetersRegts2018} it follows that there exists $\lambda\in \mathbb{S}$ arbitrarily close to $\lambda'$ for which there exists a tree $T\in \mathcal{T}_{d+1}$ such that $Z_T(\lambda,b)=0$. 
Choose such a tree $T$ with the minimum number of vertices and let $v$ be a leaf of $T$, from now on referred to as the root of $T$.
Denote $T'=T-v$ and let $u$ be the unique neighbour of $v$ in $T$.
Then $Z_{T,\mi v}(\lambda,b)\neq 0$. 
Indeed, if $Z_{T,\mi v}(\lambda,b)= 0$, then $Z_{T,\pl v}(\lambda,b)= 0$. Since
\[\left (\begin{array}{lr}
\lambda &\lambda b
\\ b&1
\end{array}\right)
 \left (\begin{array}{l} Z_{T',\pl u}(\lambda,b)\\Z_{T',\mi u}(\lambda,b)\end{array}\right)
 =
 \left (\begin{array}{l} Z_{T,\pl v}(\lambda,b)\\Z_{T,\mi v}(\lambda,b)\end{array}\right)
 \]
 and since the matrix is invertible (as $|b|\neq 1$) this would imply $Z_{T',+u}(\lambda,b)=Z_{T',-u}(\lambda,b)=0$ and hence $Z_{T'}(\lambda,b)=0$ contradicting the minimality of $T$.
 Therefore $R_{T,v}=-1$.
 
Since the map $\lambda\mapsto\xi(\lambda) :=\frac{Z_{T,\pl v}(\lambda,b)}{Z_{T,\mi v}(\lambda,b)}$ is holomorphic, it follows that there exists $\lambda''\in \SQ$ arbitrarily close to $\lambda'$ such that $ \xi=\xi(\lambda'')\in \Arc(\lambda_1(b),\overline{\lambda_1(b)})\setminus\{-1\}$.
 Therefore, by Lemma~\ref{lem: elliptic -> irrational rotation} and Theorem~\ref{thm: Mobius classification} the orbit $\{f^n_{\xi,1}(1)\}$ is dense in $\mathbb{S}$. 
 So from Lemma~\ref{lem: Tree Building}, by using paths with the tree $T$ attached to all but one of its vertices at the root $v$ of $T$, we obtain a collection of trees contained in $\mathcal{T}_{d+1}$ whose fields are dense in $\mathbb{S}$.

We next prove part (a) for all $d\geq 2$ and $b \in (\frac{d-2}{d},\frac{d-1}{d+1}]$. The case $b \in (0,\frac{d-2}{d}]$ follows from this by invoking smaller values for $d$. Let $\lambda\in \SQ\setminus \{\pm1\}$.

Let $k_0 = d$ and $m_0 = 0$ and define the sequences $k_n$ and $m_n$ by 
$k_{n+1} = \floor{\frac{k_n}{2}}$ and $m_{n+1} = m_n + k_{n+1}$. Inductively 
we show that $m_n \leq d - k_n$: we have $m_0 = d - k_0$
and then 
\[
	m_{n+1} = m_n + k_{n+1} \leq d- k_n + k_{n+1} = d - \big(k_n - \floor{\tfrac{k_n}{2}}\big)
	\leq d - \floor{\tfrac{k_n}{2}} = d - k_{n+1}.
\]
Clearly, there is an 
integer $N$ such that $k_{N+1} = 1$. We claim that for every $n \in \{0, \dots, N\}$
there is a rooted tree in $\Tc_{d+1}$ with root degree $m_n$ that implements a field $\xi_n$ so that $|f'(\xi_n)|>1$ and at least one of the following holds.
\begin{enumerate}
	\item
	\label{stat: Statement 2}
	There is a tree in $\Tc_{d+1}$ with root degree $m_n$ that implements a field inside $\Arc{(\lambda_{k_{n+1}},\overline{\lambda_{k_{n+1}}})}\setminus\{-1\}$, or else
	\item
	\label{stat: Statement 1}
	The set of fields implemented by trees in $\Tc_{d+1}$  is dense in $\mathbb{S}$.
\end{enumerate}

To show this for $n = 0$ we consider the tree consisting of a single vertex. This tree implements the field $\lambda$ and its root degree is $0$. By equation \eqref{eq: Derivative} of Lemma~\ref{lem: properties f} and since $b\leq \frac{d-1}{d+1}$, we have that  $|f_{d}'(z)| > 1$ for all $z \in \mathbb{S} \setminus \{1\}$, and in particular we have
$|f_{d}'(\lambda)| > 1$. If $\lambda \in \Arc[\overline{\lambda_{k_1}},\lambda_{k_1}]$ then
we apply Lemma~\ref{lem: Many Cases Lemma} to obtain Item~{(\ref{stat: Statement 1})}.
If $\lambda \in \Arc{(\lambda_{k_{1}},\overline{\lambda_{k_{1}}})}$ then the tree consisting of a single
vertex satisfies the conditions of Item~{(\ref{stat: Statement 2})}.

Now suppose that we have shown the claim for $n-1$ for some $n\geq 1$, and assume that we are in the case of Item~{(\ref{stat: Statement 2})} (otherwise we are done), i.e., there is a tree in $\Tc_{d+1}$  with root degree
$m_{n-1}$ that implements a field $\xi$
inside $\Arc{(\lambda_{k_{n}},\overline{\lambda_{k_{n}}})}\setminus\{-1\}$. We can apply Lemma~\ref{lem: tree beyond lambda_k} to obtain a 
 tree $T$ in $\Tc_{d+1}$ with root degree at most
$k_n + m_{n-1} = m_n$ that implements a field $\zeta\neq -1$ such that $|f_{k_{n}}'(\zeta)| > 1$.
If $\zeta \in \Arc[\overline{\lambda_{k_{n+1}}},\lambda_{k_{n+1}}]$ we can apply
Lemma~\ref{lem: Many Cases Lemma} to obtain Item~{(\ref{stat: Statement 1})}. 
We can apply this lemma because $m_n \leq d - k_n$. If $\zeta \in
\Arc{(\lambda_{k_{n+1}},\overline{\lambda_{k_{n+1}}})}$ then $T$ itself satisfies the conditions of Item~{(\ref{stat: Statement 2})}, which proves the claim.

To finish the proof, it remains to consider the case of Item~{(\ref{stat: Statement 2})}, where we can find a tree in $\Tc_{d+1}$ with root
degree at most $m_N < d - 1$ which implements a field
$\xi$ inside $\Arc{(\lambda_{1},\overline{\lambda_{1}})}\setminus\{-1\}$. We have shown in Lemma~\ref{lem: elliptic -> irrational rotation}
that $f_{\xi,1}$ is conjugate to an irrational rotation and thus the orbit $\{f_{\xi,1}^n(1)\}_{n \geq 1}$
is dense in $\mathbb{S}$. The elements of this orbit correspond to rooted trees in $\Tc_{d+1}$, and hence we
can conclude Item~{(\ref{stat: Statement 1})} in this case as well.

 Let $\mathcal{R}$ denote the set of all fields implemented by rooted trees in $\Tc_{d+1}$. 
Let $\zeta \in \mathcal{R}$ and  $T$ be a tree in $\Tc_{d+1}$ that implements $\zeta$.  We construct the tree $\tilde{T}$ with root $r$ obtained by attaching 
$r$ to the root of $T$ with an edge. Then, the root of $\tilde{T}$ has degree 1 and the field implemented by $\tilde{T}$ is $f_{\lambda,1}(\zeta)$. So the set of fields implemented  by  rooted 
trees in  $\Tc_{d+1}$ whose root degrees are $1$ contains
$f_{\lambda,1}(\mathcal{R})$. Since   $f_{\lambda,1}(\mathbb{S}) = \mathbb{S}$ and  $\mathcal{R}$ is dense in $\mathbb{S}$, we conclude
that $f_{\lambda,1}(\mathcal{R})$ is dense in $\mathbb{S}$ as well.
\end{proof}

\begin{remark}
We note that our proof of part (b) rests on the existence of zeros for trees proved in~\cite{PetersRegts2018}, which in turn depends on the chaotic behaviour of the map $f_{d,\lambda}$. 
Alternatively one could also prove part (b) directly from Lemma~\ref{lem: expanding orbit}.
The same proof also yields a dense set of $\lambda\in \SQ$ for which the collection of fields  of trees in $\mathcal{T}_{d+1}$ with root degree $1$ is dense in $\mathbb{S}$ when $b\in (0,\frac{d-1}{d+1}$].
\end{remark}

\section{Contracting maps that cover via degree-derivative interplay}\label{sec:contracting}
In this section we adapt the contracting/covering maps framework of \cite{BGGS} in our setting and show how to apply it using the degree-derivative inteplay alluded to in Section~\ref{sec:ourapproach}. Section~\ref{sec:covering} gives the details of the framework, and Section~\ref{sec:dependence} gives the main lemmas that exploit this interplay.

\subsection{Density on circular arcs via contracting maps that cover}\label{sec:covering}
The contracting maps that cover framework is captured by the following lemma on the interval $[0,1]$, which yields Corollary~\ref{cor: Density in Arc} on circular arcs of the unit circle $\mathbb{S}$.
\begin{lemma}
\label{lem: Density in Arc}
	Let $f_1, \dots, f_k$ be continuously differentiable maps from the interval $[0,1]$ to itself
	such that $0 < f_m'(x) < 1$ for each index $m$ and $x \in (0,1)$ and such that $\bigcup_{m=1}^k {f_m \left( \left[0,1\right] \right)} = [0,1]$.

	Then for any open interval $J \subseteq [0,1]$ there is a sequence of indices $m_1, \dots,
	m_N$ such that
	\[
		\left(f_{m_1} \circ \cdots \circ f_{m_N}\right){\left( \left[0,1\right] \right)} \subset J.
	\]
\end{lemma}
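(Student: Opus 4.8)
The plan is to show that by composing the $f_m$'s one can shrink the whole interval $[0,1]$ to an arbitrarily small interval, and then, by choosing the last few maps appropriately, place that small interval inside the prescribed $J$. First I would establish a \emph{uniform contraction} statement: although the derivatives $f_m'(x)$ are only required to be strictly less than $1$ pointwise on the open interval (they may tend to $1$ at the endpoints), one still has that every composition of sufficiently many of the $f_m$'s has image of length bounded by a fixed constant less than $1$ times the length of the previous image, \emph{provided} one first arranges that the image lies in a compact subinterval of $(0,1)$ where a uniform bound $\sup f_m' \le c < 1$ holds. The covering hypothesis $\bigcup_m f_m([0,1]) = [0,1]$ is what lets us do this: since the $f_m([0,1])$ are closed subintervals of $[0,1]$ covering $[0,1]$, a short compactness/connectedness argument shows there is some finite word $w_0$ in the $f_m$'s with $(\,\circ\, w_0)([0,1])$ compactly contained in $(0,1)$ — for instance, one map whose image contains a neighbourhood of $0$ composed with one whose image contains a neighbourhood of $1$, iterated, pushes the endpoints strictly inside. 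On such a compact subinterval $[\delta, 1-\delta]$ we get a genuine uniform contraction factor $c_\delta < 1$, so iterating a fixed word gives images of length $\le c_\delta^j \to 0$.

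The key steps, in order, are: (1) from the covering hypothesis, extract a finite composition $g = f_{i_1}\circ\cdots\circ f_{i_r}$ with $g([0,1]) \subseteq [\delta,1-\delta]$ for some $\delta>0$ (compactness of $[0,1]$, closedness of the finitely many images $f_m([0,1])$, and the fact that each $f_m$ is strictly increasing hence a homeomorphism onto its image); (2) on $[\delta,1-\delta]$ set $c := \max_m \sup_{x\in[\delta,1-\delta]} f_m'(x) < 1$ by continuity and compactness, and note each $f_m$ maps $[\delta,1-\delta]$ into $[\delta, 1-\delta]$ after we precompose with $g$ — more carefully, iterate the word $g$: $g^{(j)}([0,1])$ has length $\le c^{\,j-1}\cdot\ell(g([0,1])) \le c^{\,j-1}$, so it shrinks to a point; (3) given the target open interval $J$, pick a point $p$ in the interior of $J$; since the union of all images $\bigcup_m f_m([0,1]) = [0,1]$, there is some $f_{m}$ with $p$ in the interior of its image $f_m([0,1])$, hence (inverting the homeomorphism $f_m$) a point $q$ with $f_m$ mapping a neighbourhood of $q$ into $J$; (4) combine: take $j$ large enough that $g^{(j)}([0,1])$ has length smaller than the radius of that neighbourhood of $q$, then translate it onto $q$ — here I would instead argue directly that, because $g^{(j)}([0,1])$ can be made to have arbitrarily small diameter and its location is pinned near a fixed point of $g$ which may not be near $q$, one should first steer the image near $q$. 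To handle step (4) cleanly, I would prove the auxiliary claim that for \emph{every} point $y \in (0,1)$ and every $\varepsilon>0$ there is a word $w$ with $w([0,1]) \subseteq (y-\varepsilon, y+\varepsilon)$: this follows by writing $y$ as a value attainable by nested images using the covering property (choose $f_{m_1}$ with $y \in f_{m_1}((0,1))$, pull back to $y_1 = f_{m_1}^{-1}(y)$, repeat, and use uniform contraction to guarantee the nested images collapse to the single point $y$). Applying this claim with $y = p$ and $\varepsilon$ small enough that $(p-\varepsilon,p+\varepsilon) \subset J$ finishes the proof.

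The main obstacle I anticipate is precisely the interaction between the lack of \emph{uniform} contraction on the full interval $[0,1]$ (the derivatives are only $<1$ on the open interval, and realistically $\to 1$ at the endpoints, which is exactly the generic situation for the circle maps $f_{\lambda,k}$ with attracting/parabolic behaviour) and the need to reach \emph{arbitrary} target subintervals, including ones touching the endpoints. The resolution is the two-phase structure above: use the covering hypothesis once to get inside a compact subinterval where contraction is uniform, do all the shrinking there, and only at the very end use one more application of the covering hypothesis to land in $J$. A secondary technical point is checking that each $f_m$ is a homeomorphism onto its image (so that "pull back $y$ under $f_m$" makes sense) — this is immediate from $0 < f_m' < 1$ on $(0,1)$ and continuity on $[0,1]$, which forces $f_m$ to be strictly increasing. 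I would also need the elementary observation that the images $f_m([0,1])$ are closed subintervals, so that a finite cover of $[0,1]$ by them forces consecutive overlaps, which drives the pull-back/nesting argument.
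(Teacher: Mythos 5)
Your steps (1) and (2) are sound: the covering hypothesis does give a word $g$ with $g([0,1])$ compactly contained in $(0,1)$ (some $f_m$ has $f_m(0)>0$ and some $f_{m'}$ has $f_{m'}(1)<1$, since a map with $f_m(0)=0$ and $f_m(1)=1$ would contradict $\int_0^1 f_m'<1$), and iterating $g$ on $[\delta,1-\delta]$ produces images of arbitrarily small length. The genuine gap is in step (4). Your auxiliary claim is essentially the lemma itself, and your proof of it --- pull $y$ back through the covering and ``use uniform contraction to guarantee the nested images collapse to the single point $y$'' --- does not go through. The uniform contraction constant $c<1$ was established only on $[\delta,1-\delta]$, whereas the nested sets $w_n([0,1])$ with $w_n=f_{m_1}\circ\cdots\circ f_{m_n}$ are images of the \emph{full} interval, and both the pullback points $y_n$ and the intermediate images $f_{m_j}\circ\cdots\circ f_{m_n}([0,1])$ can linger near $0$ and $1$, where $f_m'$ may tend to $1$ (parabolic-type behaviour at an endpoint, which is exactly the situation the lemma must handle). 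Worse, $y_n$ can be forced onto an endpoint of $[0,1]$ outright, when $y_{n-1}$ is a common endpoint of the covering images, after which the pullback is stuck at a point where the derivative may equal $1$. Nor can you repair this by splicing copies of $g$ into the word: that destroys the property $y\in w_n([0,1])$, since, as you yourself note, $g$'s image is pinned near $g$'s own fixed point. So the two-phase structure does not resolve the obstacle you correctly identify; it relocates it into the unproved parenthetical.

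For comparison, the paper attacks the crux from the other side: it pulls the \emph{target} $J$ backwards, setting $J_{n+1}=f_m^{-1}(J_n)$ whenever $J_n\subseteq I_m:=f_m([0,1])$, and proves the quantitative growth bound $\ell(J_n)\geq \epsilon(1+C^n)/2$ by splitting each $I_m$ into a middle piece, on which $(f_m^{-1})'\geq C>1$ uniformly, and two end pieces of length $<\epsilon/4$ on which no expansion is claimed --- this is precisely how the degeneration of the contraction at the endpoints is absorbed. Growth forces some $J_n$ to protrude from every $I_m$, hence to contain an endpoint of some $I_m$, i.e.\ $f_m([0,a])\subseteq J_n$ for some $a>0$ (say); a separate monotonicity argument (some $f_i$ fixes $0$ and $f_i^N([0,1])=[0,f_i^N(1)]$ shrinks to $\{0\}$) then squeezes $[0,1]$ into $[0,a]$. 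If you wish to keep the forward-contraction framing, you must supply an argument of comparable strength for the collapse of your nested pullback images; as written, that step is asserted rather than proved.
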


\begin{proof}
For $m \in \{1,\dots, k\}$ define the closed interval $I_m = f_m\left([0,1]\right) = [f_m(0),f_m(1)]$
and note that $f_m: [0,1] \to I_m$ is bijective with a differentiable inverse. We define a sequence of 
intervals in the following way. Let $J_0 = J$ and as long as there exists an index $m$ such that 
$J_n \subseteq I_m$ we define $J_{n+1} = f_m^{-1}(J)$. We will show that this can not be done
indefinitely, i.e., there will be some interval $J_n$ such that $J_n \not \subseteq I_m$ for all $m$.

For an interval $I \subseteq [0,1]$ let $\ell(I)$ denote the length of the interval and denote $\ell(J)$
by $\epsilon$. For each index $m$ choose a partition $I_m = I_{m,L} \cup I_{m,M} \cup I_{m,R}$, where 
$I_{m,L}, I_{m,M}, I_{m,R}$ are of the form $[f_m(0),a), [a,b], (b, f_m(1)]$ respectively for a choice
of $a,b \in \mathrm{int}(I_m)$ such that $a<b$ and both $\ell(I_{m,L})$ and $\ell(I_{m,R})$ are less than 
$\epsilon/4$. We can choose $C > 1$ such that ${f_m^{-1}}'(x) > C$ for all indices $m$ and $x \in I_{m,M}$.
We will show inductively that for all $n \geq 0$ for which $J_n$ is defined it is the case that 
$\ell(J_{n}) \geq \epsilon \cdot (1+C^{n})/2$. For $n = 0$ the statement is true. Suppose that the 
statement is true for $n \geq 0$ for which $J_{n+1}$ is defined. By definition there is an index $m$
such that $J_n \subseteq I_m$ and $J_{n+1} = f_m^{-1}(J_n)$. We find
\begin{align*}
	\ell(J_{n+1}) &= \ell(f_m^{-1}(J_n \cap I_{m,M})) + 
			\ell(f_m^{-1}(J_n \cap I_{m,L})) + \ell(f_m^{-1}(J_n \cap I_{m,R}))\\
			&\geq 
			C \cdot \ell(J_n \cap I_{m,M}) + \ell(J_n \cap I_{m,L}) + \ell(J_n \cap I_{m,R})\\
			&=
			C \left(\ell(J_n) - \ell(J_n \cap I_{m,L}) - \ell(J_n \cap I_{m,R})\right)
			+ \ell(J_n \cap I_{m,L}) + \ell(J_n \cap I_{m,R}),
\end{align*}
where we have used that ${f_m^{-1}}'(x) \geq 1$ for $x \in I_m$. Because
$\ell(J_n \cap I_{m,L}) + \ell(J_n \cap I_{m,R}) \leq \epsilon/2$ this is again at least equal to
\[
	C (\ell(J_n) - \epsilon/2) + \epsilon/2 \geq 
	C (\epsilon \cdot (1+C^{n})/2 - \epsilon/2) + \epsilon/2
	= \epsilon \cdot (1+C^{n+1})/2.
\]
It follows that there is an index $n$ such that $J_n$ is not totally contained inside $I_m$ for any
index $m$. This means that there is an $m$ such that $J_n$ contains at least one of the endpoints
of $I_m$, without loss of generality we can assume that $J_n$ contains the left endpoint of $I_m$.
It follows that there is an $a > 0$ such that $f_m([0,a]) \subset J_n$ and thus there is 
a sequence $m_1, \dots, m_n$ such that $(f_{m_1} \circ \cdots \circ f_{m_n} \circ f_m)([0,a]) \subset J$.
We complete the proof by showing that for at least one of the maps $f_i$ there is an index $N_a$ for 
any $a > 0$ such that $f_i^{N_a}([0,1]) \subset [0,a]$.

Observe that there must be at least one map $f_i$ such that $f_i(0) = 0$. We obtain 
an inclusion of intervals $[0,1] \supset f_i([0,1]) \supset f_i^2([0,1]) \supset \cdots$,
where $f_i^N([0,1]) = [0, f_i^N(1)]$. This shows that the sequence $\{f_i^N(1)\}_{N \geq 0}$ is decreasing
and thus has a limit $L$. If $L \neq 0$ we would have $f_i([0,L]) = [0,L]$, which contradicts the fact 
that $f_i'(x) < 1$ for all $x \in (0,L)$, so $L = 0$. This concludes the proof. 
\end{proof}

\begin{corollary}
\label{cor: Density in Arc}
	Let $A \subset \mathbb{S}$ be a closed circular arc and let $f_1, \dots, f_k$ be 
	orientation preserving continuously differentiable maps from $A$, such that $\bigcup_{m=1}^k {f_m \left(A\right)} = A$ and $0 < |f_m'(x)| < 1$ 
	for each index $m$ and $x \in A$
	not equal to either of the endpoints of $A$.

	Then for any open circular arc $J \subseteq A$ there is a sequence of indices $m_1, \dots,
	m_N$ such that
	\[
		\left(f_{m_1} \circ \cdots \circ f_{m_N}\right){\left( A \right)} \subset J.
\]
\end{corollary}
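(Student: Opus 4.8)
The plan is to reduce Corollary~\ref{cor: Density in Arc} to Lemma~\ref{lem: Density in Arc} by transporting everything from the arc $A$ to the interval $[0,1]$ via a smooth orientation-preserving parametrisation. Concretely, I would fix a constant-speed parametrisation $\phi\colon[0,1]\to A$, of the form $\phi(x)=e^{\mathrm{i}(\alpha+\beta x)}$ with $|\beta|=\ell(A)$ and the sign of $\beta$ chosen so that $\phi$ traverses $A$ in the counterclockwise direction used to define orientation-preservation; this $\phi$ is a $C^\infty$ diffeomorphism onto $A$ (here we use that $A$ is a proper arc, so $\ell(A)<2\pi$). Since $\bigcup_m f_m(A)=A$ forces $f_m(A)\subseteq A$ for each $m$, the conjugate $g_m:=\phi^{-1}\circ f_m\circ\phi$ is a well-defined map $[0,1]\to[0,1]$.

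Next I would check that the $g_m$ satisfy the hypotheses of Lemma~\ref{lem: Density in Arc}. Continuous differentiability is immediate from the chain rule, as $\phi$ and $\phi^{-1}$ are smooth. Because $\phi$ is orientation-preserving and each $f_m$ is orientation-preserving on $A$, each $g_m$ is nondecreasing, so $g_m'\geq0$. The key bookkeeping is the size of the derivative: writing arc length along $\mathbb{S}$ as $s$, one has $|\phi'(x)|=\ell(A)$ and $|(\phi^{-1})'|=1/\ell(A)$, while for a holomorphic map the arc-length speed of $f_m$ along $\mathbb{S}$ at a point $z$ equals $|f_m'(z)|$; the chain rule then gives $g_m'(x)=|f_m'(\phi(x))|$ for all $x\in[0,1]$. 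Hence $0<g_m'(x)<1$ for $x\in(0,1)$ is exactly the hypothesis $0<|f_m'(z)|<1$ for $z\in A$ away from the endpoints, and moreover $\bigcup_m g_m([0,1])=\phi^{-1}\big(\bigcup_m f_m(A)\big)=\phi^{-1}(A)=[0,1]$.

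With the hypotheses in hand, given an open sub-arc $J\subseteq A$, the set $\phi^{-1}(J)$ is an open subinterval of $[0,1]$ (it cannot contain $0$ or $1$, since an open arc contained in the closed arc $A$ omits the endpoints of $A$). Lemma~\ref{lem: Density in Arc} supplies indices $m_1,\dots,m_N$ with $(g_{m_1}\circ\cdots\circ g_{m_N})([0,1])\subset\phi^{-1}(J)$. Since conjugation by $\phi$ is a homomorphism for composition, $g_{m_1}\circ\cdots\circ g_{m_N}=\phi^{-1}\circ(f_{m_1}\circ\cdots\circ f_{m_N})\circ\phi$, so applying $\phi$ to both sides of the inclusion yields $(f_{m_1}\circ\cdots\circ f_{m_N})(A)\subset J$, which is the desired conclusion.

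This is essentially a routine change of coordinates; the only point that needs genuine care is the derivative computation, namely confirming that the ``$|f_m'|$'' appearing in the corollary (the modulus of the complex derivative of the holomorphic extension, equivalently the arc-length derivative on $\mathbb{S}$) is precisely what transports to the ordinary derivative of $g_m$ on $[0,1]$, together with choosing the orientation of $\phi$ so that orientation-preserving maps on $A$ become nondecreasing on $[0,1]$. The endpoints cause no trouble: $g_m'$ may vanish there and $|f_m'|$ may be $\geq1$ there, but Lemma~\ref{lem: Density in Arc} only constrains the derivatives on the open interval $(0,1)$.
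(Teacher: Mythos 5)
Your proof is correct and is exactly the intended derivation: the paper states Corollary~\ref{cor: Density in Arc} without proof as an immediate consequence of Lemma~\ref{lem: Density in Arc}, and your conjugation by a constant-speed parametrisation of $A$, with the check that $g_m'(x)=|f_m'(\phi(x))|$, is precisely that routine transfer.
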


\subsection{Exploiting the dependence of derivatives on the degrees}\label{sec:dependence}
In this section, we show a few key lemmas that demonstrate how we employ the contracting maps that cover idea, by exploiting the dependence of derivatives on the degrees.
\begin{lemma}
\label{lem: easy odd case}
	Let $k \in \mathbb{Z}_{\geq 1}$ and $b \in \big[\frac{k}{k+2},1\big)$.
	Let $\xi \in \Arc{[\overline{\lambda_{k+1}},\lambda_{k+1}]}$ with $\xi\neq 1$ be such that
	$|f_{2k+1}'(R_k(\xi))| \geq 1$. Then there is an arc $A$ of $\mathbb{S}$ such that the 
	orbit of $1$ under the action of the semigroup generated by $f_{\xi,k+1}$ and $f_{\xi,k}$ 
	is dense in $A$.
\end{lemma}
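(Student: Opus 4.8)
The plan is to apply the contracting-maps-that-cover principle of Corollary~\ref{cor: Density in Arc} to the two maps $f_{\xi,k}$ and $f_{\xi,k+1}$ on the arc $A=\Arc{[R_k(\xi),R_{k+1}(\xi)]}$, using throughout the degree-derivative identity $|f_j'(z)|=j\,|f_1'(z)|$ of Lemma~\ref{lem: properties f}(\ref{enum: lem1}). First I would set the stage. Since $b\in[\tfrac{k}{k+2},1)$ and $\tfrac{k}{k+2}=\tfrac{(k+1)-1}{(k+1)+1}>\tfrac{k-1}{k+1}$, the value $b$ lies in the well-behaved regime of Lemma~\ref{lem:Julia} for both degrees $k$ and $k+1$ (one may assume $b>\tfrac{k}{k+2}$, the boundary case being vacuous because then $\Arc{[\overline{\lambda_{k+1}},\lambda_{k+1}]}$ collapses to $\{1\}$). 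Thus $\lambda_k(b)$ and $\lambda_{k+1}(b)$ both exist, and Lemma~\ref{lem: properties f}(\ref{enum: lem5}) applied with $d=k+2$ gives $\Arg(\lambda_{k+1})<\Arg(\lambda_k)$, hence $\Arc{[\overline{\lambda_{k+1}},\lambda_{k+1}]}\subseteq\Arc{[\overline{\lambda_k},\lambda_k]}$ and the fixed points $R_k(\xi),R_{k+1}(\xi)$ of $f_{\xi,k},f_{\xi,k+1}$ are well-defined. As $\xi\in\mathbb S\setminus\{\pm1\}$ it is non-real, so by the conjugation symmetries $\overline{f_{\xi,j}(z)}=f_{\overline\xi,j}(\overline z)$, $\overline{R_j(\xi)}=R_j(\overline\xi)$ I would reduce to $\Imag\xi>0$, i.e.\ $\xi\in\Arc{(1,\lambda_{k+1}]}$. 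Then Lemma~\ref{lem: properties f}(\ref{enum: lem4}),(\ref{enum: lem5}) place $R_k(\xi),R_{k+1}(\xi)$ in the open upper half-plane with $0<\Arg R_k(\xi)<\Arg R_{k+1}(\xi)<\pi$, so $A$ is a genuine arc with $\ell(A)\in(0,\pi)$, on which $f_{\xi,k}$ fixes the clockwise endpoint $R_k(\xi)$, $f_{\xi,k+1}$ fixes the counterclockwise endpoint $R_{k+1}(\xi)$, and both maps are orientation-preserving and $C^1$ by Lemma~\ref{lem: properties f}(\ref{enum: lem1}).

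\emph{Contraction and self-maps.} By \eqref{eq: Derivative} the modulus $|f_{k+1}'|$ is increasing in $\Arg z$ along $A$, so its maximum on $A$ is attained at $z=R_{k+1}(\xi)$, where it equals $f_{\xi,k+1}'(R_{k+1}(\xi))\in(0,1]$ (the derivative at the attracting, or parabolic at the endpoint, fixed point of an orientation-preserving circle map). Hence $|f_{k+1}'|\le1$ on $A$ with strict inequality in the interior, and since $|f_k'|=\tfrac{k}{k+1}|f_{k+1}'|$ we get $|f_k'|\le\tfrac{k}{k+1}<1$ on all of $A$ — this is the degree-derivative interplay, with the higher-degree map anchoring the contraction of the lower-degree one. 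Consequently $\int_A|f_k'|\,|dz|<\ell(A)<\pi$ and $\int_A|f_{k+1}'|\,|dz|\le\ell(A)<\pi$, so both maps are injective on $A$; being orientation-preserving and fixing their respective endpoints, $f_{\xi,k}(A)=\Arc{[R_k(\xi),f_{\xi,k}(R_{k+1}(\xi))]}$ has length $\int_A|f_k'|\,|dz|<\ell(A)$ and $f_{\xi,k+1}(A)=\Arc{[f_{\xi,k+1}(R_k(\xi)),R_{k+1}(\xi)]}$ has length $\int_A|f_{k+1}'|\,|dz|\le\ell(A)$, and therefore $f_{\xi,k}(A),f_{\xi,k+1}(A)\subseteq A$.

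\emph{The covering condition — the main obstacle.} From the two image-arc descriptions above, $f_{\xi,k}(A)\cup f_{\xi,k+1}(A)=A$ holds precisely when $\Arg f_{\xi,k+1}(R_k(\xi))\le\Arg f_{\xi,k}(R_{k+1}(\xi))$; writing these endpoints as $\Arg R_{k+1}(\xi)-\int_A|f_{k+1}'|\,|dz|$ and $\Arg R_k(\xi)+\int_A|f_k'|\,|dz|$ respectively and cancelling, this is exactly
\[
\ell(A)\ \le\ \int_A\big(|f_k'(z)|+|f_{k+1}'(z)|\big)\,|dz|\ =\ (2k+1)\int_A|f_1'(z)|\,|dz| ,
\]
the equality being Lemma~\ref{lem: properties f}(\ref{enum: lem1}). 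Since $|f_1'|$ is increasing in $\Arg z$ along $A$, its minimum over $A$ is attained at $R_k(\xi)$, so the right-hand side is at least $(2k+1)\,|f_1'(R_k(\xi))|\,\ell(A)=|f_{2k+1}'(R_k(\xi))|\,\ell(A)\ge\ell(A)$, the last inequality being precisely the hypothesis $|f_{2k+1}'(R_k(\xi))|\ge1$. So the assumption is exactly what forces the two contracted copies of $A$ to overlap. I expect this covering step to need the most care: one must arrange the image arcs so the union condition collapses cleanly to the integral inequality, and then convert that inequality to the single-point derivative bound using monotonicity of $|f_1'|$ together with the linear-in-degree identity. The injectivity and self-map bookkeeping is routine given $\ell(A)<\pi$, and for $k=1$ both maps are Möbius so it is automatic.

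\emph{Conclusion.} Corollary~\ref{cor: Density in Arc} then gives, for every open sub-arc $J\subseteq A$, a finite composition $g$ of $f_{\xi,k}$ and $f_{\xi,k+1}$ with $g(A)\subseteq J$. Iterating $f_{\xi,k+1}$ from the point $1$ produces a sequence converging monotonically to $R_{k+1}(\xi)\in A$ by Lemma~\ref{lem: properties f}(\ref{enum: lem4}), so some iterate $\zeta_0=f_{\xi,k+1}^{n_0}(1)$ already lies in $A$; then $g(\zeta_0)\in g(A)\subseteq J$, and $g\circ f_{\xi,k+1}^{n_0}$ is an element of the semigroup generated by $f_{\xi,k+1}$ and $f_{\xi,k}$ that carries $1$ into $J$. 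Since $J\subseteq A$ was arbitrary, the orbit of $1$ under this semigroup is dense in $A$, which is what the lemma asserts.
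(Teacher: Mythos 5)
Your proposal is correct and takes essentially the same route as the paper: both work on the arc $A=\Arc{[R_k(\xi),R_{k+1}(\xi)]}$, use the identity $|f_j'(z)|=j\,|f_1'(z)|$ together with monotonicity of the derivative along $A$ and the hypothesis $|f_{2k+1}'(R_k(\xi))|\ge 1$ to show $\ell(f_{\xi,k}(A))+\ell(f_{\xi,k+1}(A))\ge\ell(A)$ (hence covering), and then invoke Corollary~\ref{cor: Density in Arc} after pushing the point $1$ into $A$ by iterating $f_{\xi,k+1}$. The only difference is cosmetic: you phrase the covering bound as an integral inequality where the paper uses the pointwise lower bounds $|f_m'(z)|\ge m/(2k+1)$.
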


\begin{proof}
We can assume that $\xi$ lies in the upper half-plane. Since all maps in this argument 
use the parameter $\xi$ we will write $f_{m}$ instead of $f_{\xi,m}$ for all $m$. Define the 
arc $A = \Arc{[R_{k}(\xi), R_{k+1}(\xi)]}$. Using equation~{(\ref{eq: Derivative})} we
find that for every $m$
\[
	|f_m'(R_{k}(\xi))| 	= \frac{m}{2k + 1} \cdot |f_{2k+1}'(R_{k}(\xi))| 
						\geq \frac{m}{2k + 1}.
\]
By using the fact that $R_{k+1}(\xi)$ is either a parabolic or an attracting fixed point
of $f_{k+1}$ we deduce that for all $z \in A$
\[
	|f_{k}'(z)| < |f_{k+1}'(z)| \leq |f_{k+1}'(R_{k+1}(\xi))| \leq 1,
\]
where the second inequality is strict when $z \neq R_{k+1}(\xi)$. It follows that for all
$z \in A$ not equal to $R_{k+1}(\xi)$ we have $k/(2k+1) \leq |f_{k}'(z)| < 1$
and $(k+1)/(2k+1) \leq |f_{k+1}'(z)| < 1$.
Therefore:
\[
	\ell(A) > \ell(f_{k}(A)) > \frac{k}{2k+1}\ell(A)
		\quad\text{ and }\quad
	\ell(A) > \ell(f_{k+1}(A)) > \frac{k+1}{2k+1}\ell(A).
\]
From this we deduce that $\ell(f_{k}(A)) + \ell(f_{k+1}(A)) > \ell(A)$.
Thus, because $f_{k}(A)$ is of the form $\Arc{[R_{k}(\xi), a]}$ and 
$f_{k+1}(A)$ is of the form $\Arc{[b, R_{k+1}(\xi)]}$ for some 
$a,b \in A$, we conclude that $f_{k}(A) \cup f_{k+1}(A) = A$. 

It follows from item~(\ref{enum: lem4}) of Lemma~\ref{lem: properties f} that there is 
some $M$ such that $f_{k+1}^M(1) \in A$. Let $J \subseteq A$ be any open arc.
According to Corollary~\ref{cor: Density in Arc} there is a sequence of indices
$m_1,\dots, m_N\in \{k,k+1\}$ such that $\left(f_{m_1} \circ \cdots \circ f_{m_N} 
\circ f_{k+1}^M\right){\left(1\right)} \in J$. The fact that $J$ was chosen as an arbitrary 
open arc in $A$ concludes the proof.
\end{proof}

\begin{lemma}
\label{lem: easy even case}
	Let $k \in \mathbb{Z}_{\geq 1}$ and $b \in \big[\frac{k-1}{k+1},1\big)$. 
	Let $\xi_1, \xi_2 \in \Arc{[\overline{\lambda_k},\lambda_k]}$
	such that $\xi_1, \xi_2$ are distinct and lie in the same half-plane, i.e., both in 
	the upper or lower half-plane, and such that $|f_{2k}'(R_k(\xi_i))| \geq 1$ for $i \in \{1,2\}$.
	Then there is an arc $A$ of $\mathbb{S}$ such that the orbit of $1$ under the action of the semigroup 
	generated by $f_{\xi_1,k}$ and $f_{\xi_2,k}$ is dense in $A$.
\end{lemma}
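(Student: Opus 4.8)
The plan is to mimic the strategy of Lemma~\ref{lem: easy odd case}, but now using two \emph{different} fields $\xi_1,\xi_2$ of the \emph{same} arity $k$, rather than one field with two consecutive arities. Without loss of generality assume both $\xi_1,\xi_2$ lie in the upper half-plane, and suppose (relabelling if needed) that $\Arg(\xi_1)<\Arg(\xi_2)$, so by the orientation-preserving property of the fixed-point maps $R_k$ from item~(\ref{enum: lem3}) of Lemma~\ref{lem: properties f} we have $\Arg(R_k(\xi_1))<\Arg(R_k(\xi_2))$. Set $A=\Arc{[R_k(\xi_1),R_k(\xi_2)]}$. The two maps in play are $g_1:=f_{\xi_1,k}$ and $g_2:=f_{\xi_2,k}$; note $g_1$ fixes the left endpoint $R_k(\xi_1)$ of $A$ and $g_2$ fixes the right endpoint $R_k(\xi_2)$, and both are orientation-preserving.

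The core estimates are: first, a lower bound on the derivatives at the fixed points. From \eqref{eq: Derivative} and the hypothesis $|f_{2k}'(R_k(\xi_i))|\geq 1$ we get $|f_k'(R_k(\xi_i))|=\tfrac{k}{2k}|f_{2k}'(R_k(\xi_i))|\geq \tfrac12$ for $i\in\{1,2\}$. Second, an upper bound: since $R_k(\xi_i)$ is attracting or parabolic for $f_{\xi_i,k}$, we have $|g_i'(R_k(\xi_i))|\le 1$. We need $|g_i'(z)|<1$ for every $z\in A$ (away from the relevant endpoint). For $g_2$ this follows because, by item~(\ref{enum: lem4}) of Lemma~\ref{lem: properties f}, $A\subseteq \Arc{[1,R_k(\xi_2)]}$ has the property that $\Real(z)$ is at least $\Real(R_k(\xi_2))$ on... — more carefully, $|f_k'(z)|$ is monotone increasing in $\Arg(z)$ on the upper half-plane by \eqref{eq: Derivative}, so for $z\in A$ with $\Arg(z)\le\Arg(R_k(\xi_2))$ we get $|g_2'(z)|\le |g_2'(R_k(\xi_2))|\le 1$, strictly unless $z=R_k(\xi_2)$. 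For $g_1$ the monotonicity runs the wrong way, so instead I would argue as in Lemma~\ref{lem: easy odd case}: on the sub-arc of $\mathbb S$ from $R_k(\xi_1)$ to $R_k(\bar\xi_1)=\overline{R_k(\xi_1)}$ passing through $1$, every point has $|f_k'|$ bounded by the value at the parabolic/attracting fixed point $R_k(\xi_1)$, hence $<1$; and $A$ lies inside this sub-arc because $\xi_1\in\Arc{[\overline{\lambda_k},\lambda_k]}$ forces $R_k(\xi_1)$ and $R_k(\xi_2)$ to both sit in the upper half-plane between $1$ and the parabolic point. Thus $\tfrac12\le |g_i'(z)|<1$ for all $z\in A\setminus\{\text{endpoint}\}$.

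The covering property then follows by length counting exactly as before: $\ell(g_1(A))>\tfrac12\ell(A)$ and $\ell(g_2(A))>\tfrac12\ell(A)$, so $\ell(g_1(A))+\ell(g_2(A))>\ell(A)$; since $g_1(A)=\Arc{[R_k(\xi_1),a]}$ and $g_2(A)=\Arc{[b,R_k(\xi_2)]}$ are sub-arcs of $A$ anchored at the two endpoints, their union is all of $A$. Now apply Corollary~\ref{cor: Density in Arc} with the two orientation-preserving contractions $g_1,g_2$ on $A$. Finally, by item~(\ref{enum: lem4}) of Lemma~\ref{lem: properties f} the orbit of $1$ under $f_{\xi_1,k}$ (or $f_{\xi_2,k}$) eventually lands in $\Arc{[1,R_k(\xi_1)]}$, and in fact we can iterate until it lands inside $A$ itself; composing with the sequences from Corollary~\ref{cor: Density in Arc} shows the semigroup orbit of $1$ is dense in any open sub-arc $J\subseteq A$, completing the proof.

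The main obstacle I anticipate is the geometric bookkeeping in the second step — verifying that $A$ genuinely lies in a region where \emph{both} $|g_1'|$ and $|g_2'|$ are below $1$ simultaneously, since the monotonicity of $|f_k'|$ in $\Arg(z)$ favours one map on each side of $A$. The hypothesis that both $\xi_i$ lie in $\Arc{[\overline{\lambda_k},\lambda_k]}$ and in the same half-plane is exactly what is needed to pin $R_k(\xi_1),R_k(\xi_2)$ into the common good region, but making this airtight (including the borderline parabolic case $\xi_i=\lambda_k$) requires care with the orientation-preserving and monotonicity statements of Lemma~\ref{lem: properties f}.
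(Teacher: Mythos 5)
Your proposal follows the paper's proof almost exactly: same arc $A=\Arc{[R_k(\xi_1),R_k(\xi_2)]}$, same derivative bounds $\tfrac12\le|f_k'|<1$ on $A$ via the hypothesis $|f_{2k}'(R_k(\xi_i))|\ge 1$ and the attracting/parabolic nature of the fixed points, same length-counting covering argument, and the same appeal to Corollary~\ref{cor: Density in Arc} after an iterate of $1$ lands in $A$.

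There is, however, one step that is wrong as written. To bound $|g_1'|$ above by $1$ on $A$ you claim that $A$ lies inside the sub-arc of $\mathbb S$ from $\overline{R_k(\xi_1)}$ through $1$ to $R_k(\xi_1)$; that containment is false --- that sub-arc consists of points with $|\Arg(z)|\le\Arg(R_k(\xi_1))$, whereas every point of $A$ has $\Arg(z)\ge\Arg(R_k(\xi_1))$, so the two arcs meet only at $R_k(\xi_1)$. The good news is that the obstacle you are trying to circumvent does not exist: by item~(\ref{enum: lem1}) of Lemma~\ref{lem: properties f}, $|f_{\lambda,k}'(z)|$ is independent of $\lambda$, so $|g_1'(z)|=|g_2'(z)|=|f_k'(z)|$ pointwise, and the single chain
\[
\tfrac12\le|f_k'(R_k(\xi_1))|\le|f_k'(z)|\le|f_k'(R_k(\xi_2))|\le 1
\qquad\text{for all }z\in A
\]
(monotonicity of $|f_k'|$ in $\Arg(z)$ on the upper half-plane, plus the attracting/parabolic bound at $R_k(\xi_2)$) serves both maps simultaneously; this is exactly what the paper does in one line. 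Replace your detour by this observation and the proof is correct. One further minor point: to land an iterate of $1$ inside $A$ you should use $f_{\xi_2,k}$ (whose orbit of $1$ converges to the right endpoint $R_k(\xi_2)$ and hence eventually enters $A$), not $f_{\xi_1,k}$, whose orbit only approaches the left endpoint from outside $A$.
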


\begin{proof}
	We can assume that $\xi_1$ and $\xi_2$ lie in the upper half-plane and that
	$\Arg(\xi_1) < \Arg(\xi_2)$. Let $A = \Arc{[R_k(\xi_1),R_k(\xi_2)]}$, then for all
	$z \in A$ we have 
	\[
	\frac{1}{2} \leq \frac{1}{2}\cdot \left|f_{2k}'(R_k(\xi_1))\right| 
	= \left|f_{k}'(R_k(\xi_1))\right| \leq \left|f_{k}'(z)\right| \leq \left|f_{k}'(R_k(\xi_2))\right| \leq 1.
	\]
	where the second to last inequality is strict when $z \neq R_{k}(\xi_2)$. Therefore 
	for $i \in \{1,2\}$ we have
	\[
		\ell(A) > \ell(f_{\xi_i,k}(A)) > \frac{1}{2} \cdot\ell(A)
	\]
	and from this we deduce that $\ell(f_{\xi_1,k}(A)) + \ell(f_{\xi_2,k}(A)) > \ell(A)$. 
	The rest of the proof proceeds exactly as the proof of Lemma~\ref{lem: easy odd case}.
\end{proof}

\begin{lemma}
\label{lem: Lower Bound three maps}
	Let $k \geq 5$, $b \in \big(\frac{k-1}{k+1},1\big)$ and $\xi \in \Arc{[\overline{\lambda_k},\lambda_k]}$ with $\xi\neq 1$ 
	such that there is an integer $2k \leq p \leq 3k - 5$ for which $|f_p'(\xi)| \geq 1$. Then at least one
	of the following two statements holds: 
	\begin{enumerate}[(i)]
		\item
		The orbit of $1$ under the action of the semigroup generated by $f_{\xi,k-2}, f_{\xi,k-1}$ and $f_{\xi,k}$ is 
		dense in an arc of $\mathbb{S}$.
		\item
		\label{item: statement2}
		We have $|f_k'(R_{k}(\xi))| > 1-\frac{p-k+2}{p} \cdot \frac{p-2k+1}{k}$.
	\end{enumerate}
\end{lemma}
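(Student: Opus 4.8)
The plan is to run the contracting-maps-that-cover machinery of Corollary~\ref{cor: Density in Arc} with the three maps $f_{\xi,k-2},f_{\xi,k-1},f_{\xi,k}$ on the arc $A:=\Arc{[R_{k-2}(\xi),R_k(\xi)]}$, and to show that whenever the covering step fails the quantitative bound~(ii) must hold. By the conjugation symmetry of the picture (items~(\ref{enum: lem1}) and~(\ref{enum: lem3}) of Lemma~\ref{lem: properties f}, together with the fact that $|f_k'|$ depends only on $\Real(z)$) we may assume $\xi\in\Arc{(1,\lambda_k]}$. Applying Lemma~\ref{lem: properties f} with its parameter ``$d$'' set to $k+1$ (legitimate as $b\in(\tfrac{k-1}{k+1},1)$), the points $R_{k-2}(\xi),R_{k-1}(\xi),R_k(\xi)$ are defined, lie in the open upper half-plane with strictly increasing arguments, and by~\eqref{eq: increasing intervals} the arcs $A_L:=\Arc{[R_{k-2}(\xi),R_{k-1}(\xi)]}$, $A_R:=\Arc{[R_{k-1}(\xi),R_k(\xi)]}$ have lengths $L_1:=\ell(A_L)\le L_2:=\ell(A_R)$; set $L:=L_1+L_2=\ell(A)$ (so $L_1\le L/2$) and $\mu:=|f_k'(R_k(\xi))|\le1$ (item~(\ref{enum: lem2})).

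First I would record the derivative estimates on $A$. Since $|f_m'|$ is strictly increasing in $\Arg(z)$ on the upper half-plane and $|f_m'(z)|=m|f_1'(z)|$ by~\eqref{eq: Derivative}, every $z\in A$ and every $m$ satisfy $|f_m'(z)|\le|f_m'(R_k(\xi))|=\tfrac mk\mu\le1$, with strict inequality unless $z=R_k(\xi)$; hence each of $f_{\xi,k-2},f_{\xi,k-1},f_{\xi,k}$ maps $A$ into $A$ (each is orientation-preserving, fixes one of the $R_j(\xi)$, and has $|f'|<1$ on $A$ away from $R_k(\xi)$) and the three maps satisfy the hypotheses of Corollary~\ref{cor: Density in Arc} on $A$. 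For the matching lower bound, Remark~\ref{rem:increase} and $|f_p'(\xi)|\ge1$ give, for $z\in A$, $|f_1'(z)|=\tfrac1{k-2}|f_{k-2}'(z)|\ge\tfrac1{k-2}|f_{k-2}'(R_{k-2}(\xi))|>\tfrac1{k-2}|f_{k-2}'(\xi)|=|f_1'(\xi)|\ge\tfrac1p$, so $|f_m'|>\tfrac mp$ on $A$ for every $m$; put $\gamma:=\tfrac{k-2}{p}$, and note $\gamma<1$ since $p\ge2k$.

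For the covering step, I would pass to arc-length coordinates $A\cong[0,L]$ with $R_{k-2}(\xi)\mapsto0$, $R_{k-1}(\xi)\mapsto L_1$, $R_k(\xi)\mapsto L$, in which each $f_{\xi,m}$ is an increasing map fixing the corresponding point. The lower bounds give $f_{\xi,k-2}(A)=[0,\ell_a]$ with $\ell_a>\tfrac{k-2}{p}L$, $f_{\xi,k}(A)=[L-\ell_c,L]$ with $\ell_c>\tfrac kpL$, and $f_{\xi,k-1}(A)=[L_1-\ell_L,L_1+\ell_R]$ with $\ell_L=\int_{A_L}|f_{k-1}'|>\tfrac{k-1}pL_1$ and $\ell_R=\int_{A_R}|f_{k-1}'|>\tfrac{k-1}pL_2$; these subintervals of $[0,L]$ (the first starting at $0$, the third ending at $L$) have union equal to $[0,L]$ provided (a)~$L_1-\ell_L\le\ell_a$ and (b)~$L-\ell_c\le L_1+\ell_R$. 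A short computation shows (a)~follows from $L_1\le L/2$ and the hypothesis $p\le3k-5$, while (b)~is equivalent to $kL_1\ge(p-2k+1)L_2$. Thus, if $L_1/L_2\ge\tfrac{p-2k+1}{k}$ the three images cover $A$; since $f_{\xi,k}^M(1)\in A$ for some $M$ by item~(\ref{enum: lem4}), Corollary~\ref{cor: Density in Arc} then shows the orbit of $1$ under the semigroup generated by $f_{\xi,k-2},f_{\xi,k-1},f_{\xi,k}$ is dense in $A$, which is statement~(i).

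The remaining case $L_1/L_2<\tfrac{p-2k+1}{k}$ is where all the content lies, and I expect it to be the main obstacle. The key point is that $f_{\xi,k}$ and $f_{\xi,k-2}$ displace $R_{k-1}(\xi)$ around the circle by equal amounts in opposite directions: since $f_{\xi,m}(z)=\xi\big(\tfrac{z+b}{bz+1}\big)^m$ and $R_{k-1}(\xi)$ is fixed by $f_{\xi,k-1}$, putting $w:=\tfrac{R_{k-1}(\xi)+b}{bR_{k-1}(\xi)+1}\in\mathbb{S}$ yields $f_{\xi,k}(R_{k-1}(\xi))=R_{k-1}(\xi)\,w$ and $f_{\xi,k-2}(R_{k-1}(\xi))=R_{k-1}(\xi)\,w^{-1}$. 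Writing $w=e^{i\psi}$ with $\psi\in[0,2\pi)$: as $f_{\xi,k}(R_{k-1}(\xi))\in A_R$ by item~(\ref{enum: lem4}) and $A_R$ is the counterclockwise arc of length $L_2<\pi$ out of $R_{k-1}(\xi)$, we get $\psi\in[0,L_2]$; then tracking the orientation-preserving contractions $f_{\xi,k-2}$ on $A_L$ (fixing its left endpoint) and $f_{\xi,k}$ on $A_R$ (fixing its right endpoint) and comparing arguments gives the two identities $\psi=L_1-\int_{A_L}|f_{k-2}'|$ and $\psi=L_2-\int_{A_R}|f_k'|$. Combining these with $\int_{A_L}|f_{k-2}'|>\gamma L_1$ and $\int_{A_R}|f_k'|\le\mu L_2$ gives $(1-\mu)L_2\le\psi<(1-\gamma)L_1$, hence $\tfrac{1-\mu}{1-\gamma}<\tfrac{L_1}{L_2}<\tfrac{p-2k+1}{k}$, and so $\mu>1-(1-\gamma)\tfrac{p-2k+1}{k}=1-\tfrac{p-k+2}{p}\cdot\tfrac{p-2k+1}{k}$, which is precisely~(ii). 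The covering estimates are routine bookkeeping with the degree-linear formula~\eqref{eq: Derivative}; the genuinely new and delicate ingredient is this symmetric-displacement identity, which is what ties the failure of inequality~(b) to the size of $|f_k'(R_k(\xi))|$.
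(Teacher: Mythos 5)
Your argument is correct and is essentially the paper's proof: the same two arcs $\Arc[R_{k-2}(\xi),R_{k-1}(\xi)]$ and $\Arc[R_{k-1}(\xi),R_{k}(\xi)]$, the same covering criterion via Corollary~\ref{cor: Density in Arc}, the same use of $p\le 3k-5$ together with $\ell(A_1)\le\ell(A_2)$ (item~(\ref{enum: lem5})) to show the left sub-arc is always covered, and in the non-covering case the same two bounds $(1-\mu)\,\ell(A_2)\le\psi<\tfrac{p-k+2}{p}\,\ell(A_1)$ on the displacement of $R_{k-1}(\xi)$. The only difference is cosmetic: you obtain the two expressions for $\psi$ from the identity $f_{\xi,k}(R_{k-1}(\xi))=R_{k-1}(\xi)w$, $f_{\xi,k-2}(R_{k-1}(\xi))=R_{k-1}(\xi)w^{-1}$, whereas the paper derives the same equations by telescoping arc-length integrals anchored at the point $1$ (using $f_m(1)=\xi$ and $\len{f_{m_1}(A)}=\tfrac{m_1}{m_2}\len{f_{m_2}(A)}$) --- your route to the identity is a little slicker, but the underlying computation is identical.
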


\begin{proof}
W.l.o.g., we may assume that $\xi$ lies in the upper half plane. We write $f_{m} = f_{\xi,m}$ for all indices $m$.
Define the arcs $A_1 = \Arc{[R_{k-2}(\xi),R_{k-1}(\xi)]}$ and $A_2 = \Arc{[R_{k-1}(\xi),R_{k}(\xi)]}$.
Analogously to the proofs of Lemmas \ref{lem: easy odd case} and \ref{lem: easy even case} we can 
use Corollary \ref{cor: Density in Arc} to show that the orbit of $1$ under the action of the semigroup generated by 
$f_{k-2}, f_{k-1}$ and $f_{k}$ is dense in $A_1 \cup A_2$ if 
\begin{equation}
	\label{eq: covering}
	f_{k-2}(A_1 \cup A_2) \cup f_{k-1}(A_1 \cup A_2) \cup f_{k}(A_1 \cup A_2) = A_1 \cup A_2.
\end{equation}
We will assume that this is not the case and show that this leads to statement (\ref{item: statement2}). First 
we will show that the left-hand side of Equation (\ref{eq: covering}) does cover $A_1$. For any 
arc $A$ in the upper half-plane such that $\Arg{(x)} \geq \Arg{(\xi)}$ for all $x \in A$ and index $m$ we have
\begin{equation}
	\label{eq: inequality1}
	\len{f_m(A)} > f_m'(\xi) \cdot \len{A} = \frac{m\cdot f_p'(\xi)}{p} \cdot \len{A} \geq \frac{m}{p} \cdot \len{A}.
\end{equation}
We use this and the fact that $\len{A_2} \geq \len{A_1}$, which follows from item (\ref{enum: lem5})
of Lemma \ref{lem: properties f}, to conclude the following
\begin{align*}
\len{f_{k-2}(A_1 \cup A_2)} + \len{f_{k-1}(A_1)} 
	&\geq \frac{k-2}{p} \len{A_1 \cup A_2} + \frac{k-1}{p} \len{A_1}\geq \frac{2(k-2)}{p}  \len{A_1} + \frac{k-1}{p}  \len{A_1}\\
	& = \frac{3k - 5}{p} \len{A_1} \geq \len{A_1}.
\end{align*}
Because $f_{k-2}(A_1 \cup A_2)$ is of the form $\Arc{[R_{k-2}(\xi),a]}$ and $f_{k-1}(A_1)$ is 
of the form $\Arc{[b,R_{k-1}(\xi)]}$ we have that $A_1$ is covered by $f_{k-2}(A_1 \cup A_2) \cup 
f_{k-1}(A_1)$. Our assumption can be formulated as 
\[
	\len{A_2} \geq \len{f_{k-1}(A_2)} + \len{f_{k}(A_1 \cup A_2)}.
\]
Note that 
\[
	\len{f_{k-1}(A_2)} + \len{f_{k}(A_1 \cup A_2)} \geq \frac{k-1}{p} \len{A_2} + \frac{k}{p} \left(\len{A_1} + \len{A_2}\right).
\]
Combining the previous two inequalities we get
\begin{equation}
\label{eq: inequality2}
\len{A_1} \leq \frac{p-2k+1}{k} \cdot \len{A_2}.
\end{equation}
Let $A_0 = \Arc{[1,R_{k-2}(\xi)]}$. By using the fact that $R_m(\xi)$ is a fixed point of $f_m$ and $f_m(1) = \xi$
for every $m$ we see that $f_{k-2}(A_0) = \Arc{[\xi, R_{k-2}(\xi)]}$, $f_{k-1}(A_0 \cup A_1) = \Arc{[\xi, R_{k-1}(\xi)]}$
and $f_{k}(A_0 \cup A_1 \cup A_2) = \Arc{[\xi, R_{k}(\xi)]}$. It follows from the relation between 
the derivative of different maps given in item (\ref{enum: lem1}) of Lemma \ref{lem: properties f} that for 
any arc $A$ on which $f_{m_1}$ and $f_{m_2}$ are injective we have
\[
	\len{f_{m_1}(A)} = m_1 \cdot \len{f_{1}(A)} = \frac{m_1}{m_2} \cdot \len{f_{m_2}(A)}.
\] 
These observations can be used to write $\len{A_1}$ and $\len{A_2}$ as follows: 
\begin{align*}
	\len{A_1} 	&= \len{f_{k-1}(A_0 \cup A_1)} - \len{f_{k-2}(A_0)}\\
				&= \frac{k-2}{k-1}\len{f_{k-1}(A_0 \cup A_1)} + \frac{1}{k-1}\len{f_{k-1}(A_0 \cup A_1)}
					- \frac{k-2}{k-1}\len{f_{k-1}(A_0)}\\
				&= \frac{k-2}{k-1}\len{f_{k-1}(A_1)} + \frac{1}{k-1}\len{f_{k-1}(A_0 \cup A_1)}
\end{align*}
and 
\begin{align*}
	\len{A_2} = \len{f_{k}(A_0 \cup A_1 \cup A_2)} - \len{f_{k-1}(A_0 \cup A_1)}  = \len{f_{k}(A_2)} + \frac{1}{k-1} \cdot \len{f_{k-1}(A_0 \cup A_1)}.
\end{align*}
By considering our way of writing $\len{A_1}$ and the inequalities given in $(\ref{eq: inequality1})$ 
and (\ref{eq: inequality2}) we obtain the following inequalities
\begin{align*}
	\frac{1}{k-1}\cdot\len{f_{k-1}(A_0 \cup A_1)} 
		&= \len{A_1} - \frac{k-2}{k-1}\cdot\len{f_{k-1}(A_1)}< \len{A_1} - \frac{k-2}{k-1} \cdot \frac{k-1}{p} \len{A_1} \\
		&= \frac{p-k+2}{p} \cdot \len{A_1}< \frac{p-k+2}{p} \cdot  \frac{p-2k+1}{k} \cdot \len{A_2}.
\end{align*}
It follows from the fact that $\Arg{(R_{k}(\xi))} \geq \Arg{(x)}$ for all $x \in A_2$
that $\len{f_{k}(A_2)} < f_k'(R_{k}(\xi)) \cdot \len{A_2}$. By using this inequality and the previous inequality 
we obtain 
\begin{align*}
	\len{A_2} 	&= \len{f_{k}(A_2)} + \frac{1}{k-1} \cdot \len{f_{k-1}(A_0 \cup A_1)} \\
				&< f_k'(R_{k}(\xi)) \cdot \len{A_2} + \frac{p-k+2}{p} \cdot \frac{p-2k+1}{k} \cdot \len{A_2}.
\end{align*}
We can cancel $\len{A_2}$ and rewrite to obtain:
\[
 	f_k'(R_{k}(\xi)) > 1-\frac{p-k+2}{p} \cdot \frac{p-2k+1}{k},
\]
which is what we set out to prove.
\end{proof}

\begin{corollary}
\label{cor: remaining cases}
	Let $m$ be a positive integer, $b \in \big(\frac{m-1}{m+1},1\big)$ and 
	$\xi \in \Arc{[\overline{\lambda_m},\lambda_m]}$ with $\xi\neq 1$ such that either of 
	the following holds:
	\begin{enumerate}[(a)]
		\item
		\label{item: cor1}
		$m \geq 8$ and $|f_{2m}'(\xi)| \geq 1$;
		\item
		\label{item: cor2}
		$m \geq 9$ and $|f_{2m+1}'(\xi)| \geq 1$.
	\end{enumerate}
	Then the orbit of $1$ under the action of the semigroup generated by
	$f_{\xi,m-3},f_{\xi,m-2},f_{\xi,m-1}$ and $f_{\xi,m}$ is dense in an arc of $\mathbb{S}$.
\end{corollary}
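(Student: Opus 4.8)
The plan is to derive the corollary from Lemma~\ref{lem: Lower Bound three maps} and Lemma~\ref{lem: easy odd case}, both invoked with the index $k=m-1$ and glued together by the degree--derivative identity $|f_j'(z)|=\tfrac{j}{i}|f_i'(z)|$ of \eqref{eq: Derivative}. Taking $k=m-1$ (rather than $k=m$) is the key point: it makes the fixed point $R_{m-1}(\xi)$ from the conclusion of Lemma~\ref{lem: Lower Bound three maps} the very point about which Lemma~\ref{lem: easy odd case} requires information, so that no loss is incurred in passing between the two. To set up the first lemma with $k=m-1$: we have $k\ge 5$; the range $b\in\big(\tfrac{k-1}{k+1},1\big)=\big(\tfrac{m-2}{m},1\big)$ follows from $b>\tfrac{m-1}{m+1}>\tfrac{m-2}{m}$; and since $\Arg(\lambda_m)<\Arg(\lambda_{m-1})$ (item~(\ref{enum: lem5}) of Lemma~\ref{lem: properties f}, applicable since $b>\tfrac{m-1}{m+1}$ places us in its hypothesis with $d=m+1$) we get $\xi\in\Arc{[\overline{\lambda_m},\lambda_m]}\subseteq\Arc{(\overline{\lambda_{m-1}},\lambda_{m-1})}$, so $R_{m-1}(\xi)$ is the attracting fixed point of $f_{\xi,m-1}$. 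Finally the integer $p$ with $2(m-1)\le p\le 3(m-1)-5$ required by the lemma is taken as $p=2m$ in case~(\ref{item: cor1}) (legal for $m\ge 8$) and $p=2m+1$ in case~(\ref{item: cor2}) (legal for $m\ge 9$), and the hypothesis $|f_p'(\xi)|\ge 1$ is then exactly what is assumed.

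Lemma~\ref{lem: Lower Bound three maps} leaves two alternatives. If its first alternative holds, then the orbit of $1$ under the semigroup generated by $f_{\xi,m-3},f_{\xi,m-2},f_{\xi,m-1}$ is already dense in an arc; since these are three of the four maps of the statement, and enlarging the generating set only enlarges the orbit, we are done. Otherwise we are in the second alternative, which with $k=m-1$ reads
\[
 |f_{m-1}'(R_{m-1}(\xi))|>1-\frac{p-m+3}{p}\cdot\frac{p-2m+3}{m-1}.
\]
Multiplying by $\tfrac{2m-1}{m-1}$ and invoking \eqref{eq: Derivative} gives $|f_{2m-1}'(R_{m-1}(\xi))|=\tfrac{2m-1}{m-1}|f_{m-1}'(R_{m-1}(\xi))|$, and I would then verify $|f_{2m-1}'(R_{m-1}(\xi))|>1$; after clearing denominators this reduces to a cubic polynomial inequality in $m$, with $p=2m$ in case~(\ref{item: cor1}) and $p=2m+1$ in case~(\ref{item: cor2}), which one checks holds throughout the stated ranges of $m$.

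With $|f_{2m-1}'(R_{m-1}(\xi))|>1$ established, I would finish by applying Lemma~\ref{lem: easy odd case} with $k=m-1$: the condition $b\in\big[\tfrac{k}{k+2},1\big)=\big[\tfrac{m-1}{m+1},1\big)$ holds since $b>\tfrac{m-1}{m+1}$, the conditions $\xi\in\Arc{[\overline{\lambda_{k+1}},\lambda_{k+1}]}=\Arc{[\overline{\lambda_m},\lambda_m]}$ and $\xi\ne 1$ are part of the hypotheses, and $|f_{2k+1}'(R_k(\xi))|=|f_{2m-1}'(R_{m-1}(\xi))|\ge 1$ was just shown. The lemma produces an arc in which the orbit of $1$ under the semigroup generated by $f_{\xi,m}$ and $f_{\xi,m-1}$ is dense; since these two maps lie among $f_{\xi,m-3},f_{\xi,m-2},f_{\xi,m-1},f_{\xi,m}$, the orbit under the full four-map semigroup is dense in that arc as well, completing the proof.

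The step I expect to be the main obstacle is the derivative estimate in the second alternative: one must confirm that the lower bound on $|f_{m-1}'(R_{m-1}(\xi))|$ delivered by Lemma~\ref{lem: Lower Bound three maps}, after rescaling from degree $m-1$ to degree $2m-1$, strictly exceeds $1$, and pin down exactly for which $m$ this holds. It is the interplay between this cubic inequality and the earlier constraint $2k\le p\le 3k-5$ that forces the thresholds $m\ge 8$ and $m\ge 9$ in the two cases; the rest is routine bookkeeping with arc nestings and the linear dependence of $|f_k'|$ on the degree.
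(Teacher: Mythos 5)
Your proposal is correct and follows essentially the same route as the paper: apply Lemma~\ref{lem: Lower Bound three maps} with $k=m-1$ and $p=2m$ (resp.\ $2m+1$), dispose of the first alternative trivially, and in the second alternative convert the lower bound on $|f_{m-1}'(R_{m-1}(\xi))|$ into a two-map contracting/covering argument for $f_{\xi,m-1},f_{\xi,m}$ on $\Arc{[R_{m-1}(\xi),R_m(\xi)]}$. The only (cosmetic) difference is that the paper verifies the bound exceeds $\tfrac12$ and redoes the covering inline via Corollary~\ref{cor: Density in Arc}, whereas you delegate it to Lemma~\ref{lem: easy odd case}, which needs only the weaker threshold $\tfrac{m-1}{2m-1}<\tfrac12$; the cubic inequality you flag does check out on the stated ranges of $m$.
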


\begin{proof}
We will again assume that $\xi$ lies in the upper half-plane. We apply Lemma \ref{lem: Lower Bound three maps}
with $k = m-1$ and $p = 2m$ for item (\ref{item: cor1}) and $p = 2m + 1$ for item (\ref{item: cor2}). If 
the first statement of that lemma holds we see that orbit of $1$ under the action of $f_{m-3},f_{m-2}$ and $f_{m-1}$ 
generates an arc in which case we are done. If we assume that the second statement holds we obtain
\[
	f_{m-1}'(R_{m-1}(\xi)) > 1-\frac{2m-(m-1)+2}{2m} \cdot \frac{2m-2(m-1)+1}{m-1} > \frac{1}{2}
\]
in the case where $p = 2m$ and 
\[
	f_{m-1}'(R_{m-1}(\xi)) > 1-\frac{2m+1-(m-1)+2}{2m+1} \cdot \frac{2m+1-2(m-1)+1}{m-1} > \frac{1}{2}
\]
in the case where $p = 2m+1$. It follows that for $x \in \Arc{[R_{k-1}(\xi),R_{k}(\xi)]}$ we obtain
$1 >|f_{m}'(x)| > |f_{m-1}'(x)| > 1/2$. Therefore, with $A = \Arc{[R_{k-1}(\xi),R_{k}(\xi)]}$, we get 
\[
	\len{f_m\left(A\right)} + 
	\len{f_{m-1}\left(A\right)}
	\geq
	\len{A}.
\]
This, together with Corollary \ref{cor: Density in Arc}, implies
that the orbit of $1$ under the action of the semigroup generated by $f_{m-1}$ and $f_{m}$ is 
dense in $A$.
\end{proof}

\section{Proof of Lemma~\ref{lem: Many Cases Lemma} for Some Special Cases}\label{sec:special}
The arguments of this section will be used to cover some left-over cases in the proof of Lemma~\ref{lem: Many Cases Lemma} that are not directly covered by the results of the previous section.
\subsection{Proof of Lemma~\ref{lem: Many Cases Lemma} for powers of two} The following lemma will be used in the proof of Lemma~\ref{lem: Many Cases Lemma} for those
values of $k$ for which either $k$ or $k+1$ is a power of two, see the proof in Section~\ref{sec: Proof of the main lemma} for details.
\begin{lemma}
\label{lem: density powers of 2}
	Let $d\geq 2,k\geq 0$ be integers,
	$b \in \big(0,\frac{d-1}{d+1}\big]\cap \mathbb{Q}$, $\lambda \in \SQ\setminus\{\pm 1\}$
	and $\xi\in \Lambda_{2^k}(b)\cap \SQ$ with $\xi \neq \pm1$. 	Suppose there is a tree in $\Tc_{d+1}$ with root degree at most $d-(2^{k+1}-1)$ that implements the field~$\xi$.  	Then the set of fields implemented by rooted 
	trees in  $\Tc_{d+1}$ is dense in $\mathbb{S}$.
\end{lemma}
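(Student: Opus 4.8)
The plan is to prove the lemma by induction on $k$, the base case being $k=0$. Throughout, write $\mathcal{D}$ for the set of fields implemented by trees in $\Tc_{d+1}$; since $\lambda\in\SQ$, $b\in\mathbb{Q}$, and the partial partition functions $Z_{T,\pl r},Z_{T,\mi r}$ are polynomials in $\lambda,b$ with nonnegative integer coefficients, every field of a tree lies in $\SQ$, so $\mathcal{D}\subseteq\SQ$. I would first record a bookkeeping fact used throughout: by Lemma~\ref{lem: Tree Building}, if $T_\ast\in\Tc_{d+1}$ has root degree $m_\ast$ and implements $\zeta_\ast\in\mathbb{S}$, and $t\geq1$ with $m_\ast+t\leq d$, then attaching $t$ copies of $T_\ast$ to the root of a tree implementing $\eta\in\mathcal{D}$ produces a tree in $\Tc_{d+1}$ implementing $f_{\zeta_\ast,t}(\eta)$; also $f_{\zeta_\ast,t}(1)=\zeta_\ast\in\mathcal{D}$. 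Hence the forward semigroup orbit of $1$ under any finite family of such maps stays in $\mathcal{D}$. (We may assume $d\geq 2^{k+1}-1$, else the hypothesis is vacuous.) For the base case $k=0$: $\Lambda_1(b)=\Arc(\lambda_1(b),\overline{\lambda_1(b)})$ by Lemma~\ref{lem:Julia}, so $\xi\in\Lambda_1(b)$ means $f_{\xi,1}$ is elliptic (Corollary~\ref{cor: mobius parameters}), hence conjugate to an irrational rotation since $\xi\in\SQ\setminus\{\pm1\}$ and $b$ is rational (Lemma~\ref{lem: elliptic -> irrational rotation}); thus $\{f_{\xi,1}^n(1)\}_{n\geq1}$ is dense in $\mathbb{S}$, and by the bookkeeping fact (with $t=1$, $m_\ast\leq d-1$) these iterates all lie in $\mathcal{D}$.

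For the inductive step ($k\geq1$, assuming the statement for $k-1$), the first move is to produce inside $\mathcal{D}$ a field near which $f_{\cdot,2^k}$ expands. Since $\xi\in\Lambda_{2^k}(b)\setminus\{-1\}$ and the tree implementing $\xi$ has root degree $\leq d-(2^{k+1}-1)\leq d-2^k$, apply Lemma~\ref{lem: tree beyond lambda_k} with degree $2^k$: this yields $\sigma\in\mathbb{S}$ with $|f_{2^k}'(\sigma)|>1$ and trees in $\Tc_{d+1}$ of root degree $\leq d-(2^k-1)$ implementing fields $\zeta_n\to\sigma$, $\zeta_n\neq\sigma$; by continuity $|f_{2^k}'(\zeta_n)|>1$ for large $n$, and $\zeta_n\in\mathcal{D}\cap\SQ$ with $\zeta_n\neq\pm1$ for large $n$ (as $\zeta_n\neq\sigma$ and $\zeta_n\to\sigma$). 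Now split according to the location of the $\zeta_n$ relative to $\Lambda_{2^{k-1}}(b)$. If infinitely many $\zeta_n$ lie in $\Lambda_{2^{k-1}}(b)$, choose one with $\zeta_n\in\Lambda_{2^{k-1}}(b)\cap\SQ\setminus\{\pm1\}$; it is implemented by a tree of root degree $\leq d-(2^k-1)=d-(2^{(k-1)+1}-1)$, so the inductive hypothesis applied with $\zeta_n$ in place of $\xi$ and $k-1$ in place of $k$ gives that $\mathcal{D}$ is dense in $\mathbb{S}$. Otherwise all but finitely many $\zeta_n$ lie outside $\Lambda_{2^{k-1}}(b)$; since $\Lambda_{2^{k-1}}(b)\supseteq\mathbb{S}\setminus\{1\}$ whenever $b\leq\frac{2^{k-1}-1}{2^{k-1}+1}$ and $\zeta_n\neq1$ eventually, Lemma~\ref{lem:Julia} forces $b>\frac{2^{k-1}-1}{2^{k-1}+1}$ and $\Lambda_{2^{k-1}}(b)=\Arc(\lambda_{2^{k-1}},\overline{\lambda_{2^{k-1}}})$, so $\zeta_n\in\Arc[\overline{\lambda_{2^{k-1}}},\lambda_{2^{k-1}}]$ for large $n$, $\sigma$ lies in this closed arc, and $\sigma\neq-1$.

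In this remaining case I would apply the degree--derivative interplay. For large $n$ the $\zeta_n$ are non-real, so infinitely many share one open half-plane; pick two distinct such, $\xi_1,\xi_2\in\Arc[\overline{\lambda_{2^{k-1}}},\lambda_{2^{k-1}}]$. Using $|f_m'|=m|f_1'|$ and Remark~\ref{rem:increase} (with $R_{2^{k-1}}(\xi_i)$ defined since $\xi_i$ lies in the closed arc), $|f_{2^k}'(R_{2^{k-1}}(\xi_i))|=2|f_{2^{k-1}}'(R_{2^{k-1}}(\xi_i))|>2|f_{2^{k-1}}'(\xi_i)|=|f_{2^k}'(\xi_i)|>1$, so Lemma~\ref{lem: easy even case} (with $2^{k-1}$ in the role of its parameter, $b\in[\frac{2^{k-1}-1}{2^{k-1}+1},1)$) gives an arc $A\subseteq\mathbb{S}$ in which the orbit of $1$ under the semigroup generated by $f_{\xi_1,2^{k-1}},f_{\xi_2,2^{k-1}}$ is dense; since each $\xi_i$ is implemented by a tree of root degree $\leq d-(2^k-1)$ and $2^{k-1}+\bigl(d-(2^k-1)\bigr)=d-2^{k-1}+1\leq d$, the bookkeeping fact puts this orbit in $\mathcal{D}$, so $\mathcal{D}$ is dense in $A$. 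To upgrade to all of $\mathbb{S}$: as $2^k\geq2$ and $\xi\in\Lambda_{2^k}(b)$, the Julia set of the rational map $f_{\xi,2^k}$ is $\mathbb{S}$ (Lemma~\ref{lem:Julia}); since the tree implementing $\xi$ has root degree $\leq d-(2^{k+1}-1)$ and thus $m_\ast+2^k\leq d-(2^k-1)\leq d$, the bookkeeping fact gives $f_{\xi,2^k}(\mathcal{D})\subseteq\mathcal{D}$, hence $f_{\xi,2^k}^n(\mathcal{D}\cap A)\subseteq\mathcal{D}$ is dense in $f_{\xi,2^k}^n(A)$ for all $n$; by the blow-up property of Julia sets (\cite[Chapter~4]{Milnor2006}), $\bigcup_{n\geq0}f_{\xi,2^k}^n(\mathrm{int}\,A)$ omits only finitely many points of $\mathbb{S}$, so $\mathcal{D}$ is dense in $\mathbb{S}$. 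This closes the induction.

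The part I expect to require the most care is not a single deep step but the synchronization across the recursion of three things: keeping every constructed tree inside $\Tc_{d+1}$ — which is exactly why the budget $2^{k+1}-1=2^k+2^{k-1}+\cdots+1$ appears, as each recursion level "spends" a further $2^j$; correctly handling the boundary configurations of $\sigma$ (when $\sigma=1$ or $\sigma$ is an endpoint $\lambda_{2^{k-1}}$) in the case split, which is why the dichotomy is phrased in terms of whether infinitely many $\zeta_n$ lie in $\Lambda_{2^{k-1}}(b)$ rather than in terms of $\sigma$ itself; and tracking rationality of the implemented fields so that Lemma~\ref{lem: elliptic -> irrational rotation} can be invoked at the base of the recursion.
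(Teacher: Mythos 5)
Your proof is correct and follows essentially the same route as the paper's: induction on $k$, the irrational-rotation argument at the base, then Lemma~\ref{lem: tree beyond lambda_k} followed by the dichotomy between invoking the inductive hypothesis and invoking Lemma~\ref{lem: easy even case} via Remark~\ref{rem:increase}, and finally a blow-up of the arc $A$ to all of $\mathbb{S}$ (the paper does this last step with $f_{\lambda,d}$ using $b\le\frac{d-1}{d+1}$, you with the Julia set of $f_{\xi,2^k}$; both work, and your reorganisation of the preliminary case $b\le\frac{2^{k-1}-1}{2^{k-1}+1}$ into the $\Lambda_{2^{k-1}}(b)$ dichotomy is harmless). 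One small slip: in your bookkeeping fact the roles of the two trees in Lemma~\ref{lem: Tree Building} are transposed --- to implement $f_{\zeta_\ast,t}(\eta)$ one attaches $t$ copies of a tree implementing $\eta$ to the root of $T_\ast$, not the other way around --- but your degree condition $m_\ast+t\le d$ is the one for the correct construction, so all subsequent arithmetic is unaffected.
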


\begin{proof}
We will prove this by induction on $k$. For $k=0$ the field $\xi$ has to lie in
$\Arc(\lambda_1,\overline{\lambda_1})\setminus\{-1\}$ and the root degree of the  
tree in $\Tc_{d+1}$ implementing $\xi$ is at most $d-1$. From Corollary~\ref{cor: mobius parameters} and 
Lemma~\ref{lem: elliptic -> irrational rotation}, we have that $f_{\xi,1}$ is conjugate to an irrational rotation 
and thus the orbit of any initial point $z_0\in\mathbb{S}$ is dense in $\mathbb{S}$. By Lemma~\ref{lem: Tree Building}, every element of 
the set $\{f_{\xi,1}^n(\lambda)\}_{n \geq 1}$ is the field implemented by a tree in $\Tc_{d+1}$, and hence we obtain  the theorem for $k=0$.

Now suppose that $k \geq 1$ and that we have proved the statement for
$k-1$. If $b < (2^{k-1}-1)/(2^{k-1}+1)$, then we must have $k>1$ and we can immediately apply the induction hypothesis with $\xi=\lambda$ and tree consisting of a single vertex.  
So, assume that $b \geq (2^{k-1}-1)/(2^{k-1}+1)$ and observe that the parameter $\lambda_{2^{k-1}}\in \mathbb{S}$ from Lemma~\ref{lem: properties f} exists. It follows from Lemma~\ref{lem: tree beyond lambda_k}
that there is $\sigma\in \mathbb{S}$ with $|f_{2^k}'(\sigma)| > 1$ 
and a set $\mathcal{R}= \{\zeta_n\}_{n \geq 1}$ accumulating on $\sigma$ such 
that each $\zeta_n$ is implemented by a tree in $\Tc_{d+1}$  whose root degree is at most $d - (2^{k+1} - 1) + 2^k = d-(2^{k} - 1)$. If $\mathcal{R}$
has a non-empty intersection with $\Arc{(\lambda_{2^{k-1}},\overline{\lambda_{2^{k-1}}})}\setminus \{-1\}$ 
we can apply the induction hypothesis to the tree corresponding to the field in this intersection. 
Therefore we assume that the elements of $\mathcal{R}$ accumulate on $\sigma$ from 
inside $\Arc{[\overline{\lambda_{2^{k-1}}},\lambda_{2^{k-1}}]}$. It follows that we can find two 
distinct elements $r_1,r_2 \in \mathcal{R}$ such that they both lie in either $\Arc{(\overline{\lambda_{2^{k-1}}},1)}$
or in $\Arc{(1,\lambda_{2^{k-1}})}$ and such that $|f_{2^k}'(r_i)|>1$ for $i =1,2$. 
By Remark~\ref{rem:increase}, we have $|f_{2^k}'(R_{2^{k-1}}(r_i))|>|f_{2^k}'(r_i)|>1$ and  thus 
we can apply Lemma~\ref{lem: easy even case} to conclude that the following set is dense in an arc $A$ of the circle:
\[
	\mathcal{A} = 
	\left\{
		(f_{r_{i_1},2^{k-1}} \circ \cdots \circ f_{r_{i_n},2^{k-1}})(1) : 
		n\in \mathbb{Z}_{\geq 1} \text{ and } i_1, \dots, i_n \in \{1,2\} 
	\right\}.
\]
Since $r_1,r_2$ are implemented by  trees  in $\Tc_{d+1}$ whose root degrees are at most $d - (2^{k+1} - 1) + 2^k = d-(2^{k} - 1)$, by Lemma~\ref{lem: Tree Building}, every element of $\mathcal{A}$ is implemented by a tree in $\Tc_{d+1}$ whose root degree is bounded by
$d-(2^{k} - 1)+ 2^{k-1} = d - (2^{k-1} -1 ) \leq d$. We have seen
that \eqref{eq: Derivative} implies that for $b \leq  (d-1)/(d+1)$ it holds that $|f_d'(z)| > 1$ for all $z \in \mathbb{S}\setminus\{1\}$. 
This implies that there is some $N \in \mathbb{Z}_{\geq 1}$ such that $f_d^N(A) = \mathbb{S}$. 
It follows that the set $\{f_{\lambda,d}^N(a): a \in \mathcal{A}\}$ is dense in $\mathbb{S}$, finishing the proof since every element of this set corresponds to the field of a tree in $\Tc_{d+1}$ (using again Lemma~\ref{lem: Tree Building}).
\end{proof}

\subsection{Proof of Lemma~\ref{lem: Many Cases Lemma} for small cases}\label{sec:cantor}
In this section, we give the main lemma needed to cover certain small cases of Lemma~\ref{lem: Many Cases Lemma}. Interestingly, the proof uses a Cantor-style construction, explained in detail in the next subsection.
\subsubsection{Near-arithmetic progressions}

Let $\alpha \in (0,1)$ and define the maps from the unit interval to itself 
given by $\phi_0(x) = \alpha x$ and $\phi_1(x) = \alpha x + (1-\alpha)$.
Let $\Omega = \cup_{n=0}^\infty \{0,1\}^n$ be the set of finite binary sequences.
For $\omega \in \Omega$ we let $|\omega|$ denote the length of $\omega$ and 
for $\omega_1, \omega_2 \in \Omega$ we let $\omega_1 \oplus \omega_2 \in \Omega$ 
denote the concatenation of the two sequences. For $\omega \in \Omega$
of the form $(\omega^1, \dots,\omega^n)$ and two maps $f_0, f_1$ we let
$f_\omega = f_{\omega^1} \circ \cdots \circ f_{\omega^n}$ and if $|\omega| = 0$
we let $f_\omega$ denote the identity map.
The properties of the semigroup generated by $\phi_0$ and $\phi_1$ for certain 
parameters of $\alpha$ is a topic that has been studied extensively. 
For $\alpha \in (0,\frac{1}{2})$ the set
\[
	\mathcal{C}_\alpha 
	= \bigcap_{n=0}^\infty \bigcup_{\substack{\omega \in \Omega\\|\omega| = n}} \phi_{\omega}([0,1])
\]
is a Cantor set, with $\mathcal{C}_{1/3}$ being the Cantor ternary set.
We will not use the properties of Cantor sets, so we do not define them. 
First we state some easy to prove properties 
of this semigroup to describe a construction that will help us to prove 
Lemma~\ref{lem: Many Cases Lemma} for small cases of $k$.

\begin{lemma}
	Let $\omega \in \Omega$ and $\alpha \in (0,1)$. Then $\phi_\omega([0,1])$ 
	is an interval of length $\alpha^{|\omega|}$, furthermore
	the intervals $\phi_{\omega\oplus (0)}([0,1])$ and $\phi_{\omega\oplus (1)}([0,1])$ are
	subintervals of $\phi_\omega([0,1])$ sharing the left and right boundary respectively. 
\end{lemma}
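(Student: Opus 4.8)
The plan is to exploit the fact that $\phi_0(x)=\alpha x$ and $\phi_1(x)=\alpha x+(1-\alpha)$ are affine maps with slope $\alpha$, so that every composition $\phi_\omega=\phi_{\omega^1}\circ\cdots\circ\phi_{\omega^{|\omega|}}$ is again affine. First I would record that the slope of $\phi_\omega$ is $\alpha^{|\omega|}$; this is a one-line induction on $|\omega|$, with the base case $|\omega|=0$ giving the identity (slope $\alpha^0=1$) and the inductive step multiplying the slope by $\alpha$ each time a further $\phi_i$ is prepended. Since an affine map of slope $\alpha^{|\omega|}$ sends an interval of length $L$ to an interval of length $\alpha^{|\omega|}L$, applying this to $[0,1]$ (of length $1$) yields that $\phi_\omega([0,1])$ is an interval of length $\alpha^{|\omega|}$, which is the first assertion.

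For the second assertion I would use the identity $\phi_{\omega\oplus(i)}=\phi_\omega\circ\phi_i$ together with the fact that, because $\alpha>0$, each $\phi_i$ and hence $\phi_\omega$ is \emph{strictly increasing}; thus $\phi_\omega$ restricts to an increasing affine bijection from $[0,1]$ onto $\phi_\omega([0,1])=[\phi_\omega(0),\phi_\omega(1)]$. One computes directly $\phi_0([0,1])=[0,\alpha]$ and $\phi_1([0,1])=[1-\alpha,1]$, both subintervals of $[0,1]$, the former containing the left endpoint $0$ and the latter the right endpoint $1$. Pushing these forward through the increasing map $\phi_\omega$ gives $\phi_{\omega\oplus(0)}([0,1])=[\phi_\omega(0),\phi_\omega(\alpha)]$, a subinterval of $\phi_\omega([0,1])$ sharing its left endpoint $\phi_\omega(0)$, and $\phi_{\omega\oplus(1)}([0,1])=[\phi_\omega(1-\alpha),\phi_\omega(1)]$, a subinterval sharing its right endpoint $\phi_\omega(1)$, as claimed.

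There is no real obstacle here: the statement is a routine self-similarity bookkeeping fact. The only point requiring a moment's care is the use of $\alpha>0$ to guarantee that $\phi_\omega$ is orientation-preserving, so that ``left stays left and right stays right'' — without this one could not conclude which endpoint each subinterval shares. Everything else is immediate from affineness.
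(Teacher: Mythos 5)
Your proposal is correct and follows essentially the same route as the paper: both arguments use that each $\phi_i$ is affine with slope $\alpha$ (so lengths scale by $\alpha^{|\omega|}$) and increasing, together with the identity $\phi_{\omega\oplus(i)}=\phi_\omega\circ\phi_i$, to locate the shared endpoints. The only cosmetic difference is that you establish containment by pushing $\phi_i([0,1])\subseteq[0,1]$ forward through the increasing map $\phi_\omega$, whereas the paper notes the shared endpoint and compares lengths; these are interchangeable.
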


\begin{proof}
Because the derivative of $\phi_i$ is constantly equal to $\alpha$ for $i = 1,2$ 
it follows that the length of $\phi_\omega([0,1])$ is $\alpha^{|\omega|}$. 
The maps $\phi_i$ are increasing and thus we can write
$\phi_\omega([0,1]) = [\phi_\omega(0),\phi_\omega(1)]$ and also $\phi_{\omega\oplus (0)}([0,1])=
[\phi_{\omega \oplus (0)}(0),\phi_{\omega \oplus (0)}(1)]=[\phi_{\omega}(0),\phi_{\omega \oplus (0)}(1)]$. 
Therefore the left boundaries of $\phi_\omega([0,1])$ and $\phi_{\omega \oplus (0)}([0,1])$ 
are equal. The length of the latter interval is $\alpha^{|\omega| + 1}$, which is less than the 
length of $\phi_\omega([0,1])$ and thus $\phi_{\omega \oplus (0)}([0,1])$
is indeed contained in $\phi_\omega([0,1])$. The stated property of
$\phi_{\omega \oplus (1)}([0,1])$ follows completely analogously.
\end{proof}

For two sets $A, B \subseteq \mathbb{R}$ we will let $A + B = \{a+b: a \in A, b \in B\}$.
A famous property of the Cantor ternary set is that $\mathcal{C}_{1/3} + \mathcal{C}_{1/3} = [0,2]$.
More generally one can show that
$\mathcal{C}_\alpha + \mathcal{C}_\alpha = [0,2]$ for all $\alpha \in [\frac{1}{3},1)$.
In \cite{MendesEtAl1994} an overview is given of the possible structures of
$\mathcal{C}_{\alpha_1} + \mathcal{C}_{\alpha_2}$
for pairs of $\alpha_1,\alpha_2 \in (0,1)$. Similar methods to those used 
in \cite{MendesEtAl1994} can be used to show the following.

\begin{lemma}
\label{lem: Near-AP linear}
	Let $\alpha \in [\frac{1}{3},1)$ and $\epsilon > 0$. Then there are 
	sequences $\omega_1, \omega_2, \omega_3 \in \Omega$
	such that for all triples $p_1, p_2, p_3$ with $p_i \in \phi_{\omega_i}([0,1])$
	\begin{equation}
	\label{eq: Near-AP}
		\left| \frac{p_2-p_1}{p_3 - p_2} - 1\right| < \epsilon.
	\end{equation}
\end{lemma}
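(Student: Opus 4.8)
The plan is to realise the three target sequences as disjoint sub-intervals of the self-similar construction whose union of images, after one splitting step, gives a covering pattern that forces the near-arithmetic-progression property. Recall that each $\phi_\omega([0,1])$ is an interval of length $\alpha^{|\omega|}$ and that $\phi_{\omega\oplus(0)}$ and $\phi_{\omega\oplus(1)}$ sit inside $\phi_\omega([0,1])$ flush with its left and right endpoint respectively. The key geometric input is that for $\alpha\in[\tfrac13,1)$ we have $2\alpha\ge 2/3 > 1-\alpha$; equivalently, the two children $\phi_{\omega\oplus(0)}([0,1])$ and $\phi_{\omega\oplus(1)}([0,1])$ overlap when $\alpha>1/2$ and, when $\alpha\in[1/3,1/2)$, the gap between them has length $1-2\alpha$ which is at most $\alpha$, so it can in turn be "filled" by descending one more level on a suitably chosen side. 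This is exactly the mechanism behind $\mathcal C_\alpha+\mathcal C_\alpha=[0,2]$ for $\alpha\in[\tfrac13,1)$, and the reference \cite{MendesEtAl1994} is cited precisely so that we may invoke these interval-covering facts.

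Concretely, I would argue as follows. Fix $\alpha\in[\tfrac13,1)$ and $\epsilon>0$. First choose a common depth $n$ large enough that $\alpha^{n}$ is tiny compared with $\epsilon$ times the separations we are about to set up; the point of going to depth $n$ is that once three intervals $I_1,I_2,I_3$ of length $\alpha^n$ are placed with their \emph{centres} $c_1<c_2<c_3$ satisfying $c_2-c_1=c_3-c_2=:\delta$ with $\delta$ bounded below by an absolute constant, then any choice of $p_i\in I_i$ gives $|(p_2-p_1)-(p_3-p_2)|\le 2\alpha^n$, whence $\bigl|\tfrac{p_2-p_1}{p_3-p_2}-1\bigr|\le \tfrac{2\alpha^n}{\delta-\alpha^n}<\epsilon$ for $n$ large. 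So it suffices to exhibit, for every large $n$, three admissible words $\omega_1,\omega_2,\omega_3$ of length $n$ whose intervals have (approximately) equally spaced centres. For this I would take $\omega_1=(0,0,\dots,0)$ (the leftmost interval, with left endpoint $0$), $\omega_3=(1,1,\dots,1)$ (the rightmost, with right endpoint $1$), and then use the covering property of the construction to locate an admissible word $\omega_2$ of length $n$ whose interval $\phi_{\omega_2}([0,1])$ is centred within $\alpha^n$ of the midpoint $1/2$: since $\bigcup_{|\omega|=n}\phi_\omega([0,1])\supseteq \mathcal C_\alpha$ and, for $\alpha\in[\tfrac13,1)$, the level-$n$ intervals already cover a $2\alpha^n$-neighbourhood of every point of $[0,1]$ that lies in $\mathcal C_\alpha$ — in particular they cover a neighbourhood of the fixed point $1/2$ of the symmetry $x\mapsto 1-x$, which lies in $\mathcal C_\alpha$ because $\mathcal C_\alpha$ is symmetric and (for $\alpha\ge 1/3$) contains its own midpoint's orbit — there is such an $\omega_2$. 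Then $c_1=\tfrac12\alpha^n$, $c_3=1-\tfrac12\alpha^n$, and $c_2=\tfrac12+O(\alpha^n)$, so $c_2-c_1$ and $c_3-c_2$ both equal $\tfrac12+O(\alpha^n)$, and $\delta=\tfrac12$ works as the absolute lower bound. Feeding this into the estimate of the previous sentence finishes the proof.

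I expect the main obstacle to be the honest justification of the claim that a level-$n$ interval of the construction is centred within $O(\alpha^n)$ of the midpoint $1/2$ — this is where the hypothesis $\alpha\ge 1/3$ is essential and where one genuinely needs the "$\mathcal C_\alpha+\mathcal C_\alpha=[0,2]$" circle of ideas rather than a soft argument. Equivalently one must show that, following the left child at some steps and the right child at others, the set of reachable left-endpoints at depth $n$ forms an $\alpha^n$-net of a set dense near $1/2$; the cleanest route is probably to observe that the left-endpoint of $\phi_\omega([0,1])$ for $\omega=(\omega^1,\dots,\omega^n)$ equals $(1-\alpha)\sum_{j:\,\omega^j=1}\alpha^{j-1}$, so the reachable endpoints are exactly the partial sums $(1-\alpha)\sum_{j\in S}\alpha^{j-1}$ over $S\subseteq\{1,\dots,n\}$, and then to quote the standard fact that for $\alpha\in[\tfrac13,\tfrac12]$ these partial sums are $\tfrac{(1-\alpha)\alpha^{n-1}}{1-\alpha}=\alpha^{n-1}$-dense in $[0,1-\alpha^n]$ (for $\alpha>\tfrac12$ they are even denser), which is exactly the one-dimensional covering lemma underlying \cite{MendesEtAl1994}. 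Modulo citing that lemma in the form needed, the rest is the elementary centre-spacing computation sketched above; I would be careful to state the $n$-dependence of $\epsilon$ explicitly and to note that $\omega_1,\omega_2,\omega_3$ are genuinely distinct (which is automatic once $n\ge 1$, since their interval locations differ).
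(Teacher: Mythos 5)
Your argument works for $\alpha\in[\tfrac12,1)$, where the level-$n$ intervals genuinely cover $[0,1]$ and one can indeed take $\omega_1=(0,\dots,0)$, $\omega_3=(1,\dots,1)$ and some $\omega_2$ whose interval contains $\tfrac12$; this matches the easy half of the paper's proof. But for $\alpha\in[\tfrac13,\tfrac12)$ the step you flag as the ``main obstacle'' is not merely unproved --- it is false. The first splitting already removes the middle gap $(\alpha,1-\alpha)$, which contains $\tfrac12$; every deeper interval is a subset of $[0,\alpha]\cup[1-\alpha,1]$, so no $\phi_\omega([0,1])$ ever comes within distance $\tfrac12-\alpha$ of the midpoint. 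In particular $\tfrac12\notin\mathcal C_\alpha$ (the set is symmetric about $\tfrac12$ but does not contain it), and the reachable left endpoints $(1-\alpha)\sum_{j\in S}\alpha^{j-1}$ are not $\alpha^{n-1}$-dense in $[0,1-\alpha^n]$: they avoid an interval of fixed length $1-2\alpha$ at every depth. The identity $\mathcal C_\alpha+\mathcal C_\alpha=[0,2]$ is a statement about the \emph{sumset}, not about $\mathcal C_\alpha$ itself, and cannot be used to place a single interval of the construction near $\tfrac12$.

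The fix --- and this is what the paper does --- is to stop anchoring $\omega_1$ and $\omega_3$ at the two extremes and let all three intervals move. One maintains the invariant that there exist $q_i\in\phi_{\omega_i}([0,1])$ with $q_1+q_3=2q_2$ (an exact arithmetic progression of points, rather than approximately equispaced centres), starting from the explicit seed $\omega_1=(0,0)$, $\omega_2=(0,1)$, $\omega_3=(1,0)$ with $q_1=0$, $q_2=(1-\alpha)/2$, $q_3=1-\alpha$. The refinement step is exactly where $\alpha\ge\tfrac13$ enters: writing $I_i^k=\phi_{\omega_i\oplus(k)}([0,1])$, the sumset $I_1+I_3$ is covered by the three sums $I_1^{k}+I_3^{k'}$ corresponding to $[0,2\alpha]\cup[1-\alpha,1+\alpha]\cup[2-2\alpha,2]=[0,2]$, so the progression survives one level of subdivision on each of the three words; iterating shrinks the intervals to length $\alpha^n$ while $q_3-q_2\ge 1-2\alpha$ stays bounded below, which yields \eqref{eq: Near-AP}. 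Your concluding ratio estimate is fine once such a triple is in hand, but without this simultaneous refinement (or some equivalent use of the sumset) the case $\alpha\in[\tfrac13,\tfrac12)$ --- the only case in which the hypothesis $\alpha\ge\tfrac13$ does any work --- is not covered.
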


\begin{proof}
First assume that $\alpha \in [\frac{1}{2},1)$. Then $\phi_0([0,1])\cup \phi_1([0,1]) = [0,1]$. 
It follows from Lemma~\ref{lem: Density in Arc} that for any $\delta > 0$ there are elements
$\omega_1,\omega_2$ and $\omega_3$
in $\Omega$ such that 
\[
	\phi_{\omega_1}\left([0,1]\right) \subseteq [0, \delta],\quad
	\phi_{\omega_2}\left([0,1]\right) \subseteq [1/2-\delta, 1/2 + \delta]\quad{\text{ and }}\quad
	\phi_{\omega_3}\left([0,1]\right) \subseteq [1-\delta, 1].
\]
By choosing $\delta$ small enough we can guarantee the inequality in (\ref{eq: Near-AP}). 

Assume now that $\alpha \in [\frac{1}{3},\frac{1}{2})$. We will first show that if there 
are $\omega_i \in \Omega$ with $|\omega_i| = n$ and $q_i \in \phi_{\omega_i}([0,1])$ 
for $i = 1,2,3$ such that $q_1 + q_3 = 2q_2$, then there are choices of indices $k_i \in \{0,1\}$ such 
that there exist $\tilde{q_i} \in \phi_{\omega_i \oplus (k_i)}([0,1])$ for which
$\tilde{q_1} + \tilde{q_3} = 2\tilde{q_2}$. Suppose that we are given such $\omega_i$ and $q_i$. 
Let $I_i = \phi_{\omega_i}([0,1])$ and $I_i^{k} = \phi_{\omega_i\oplus(k)}([0,1])$ 
for $i = 1,2,3$ and $k=0,1$.
We will show that
\begin{equation}
	\label{eq: sum of intervals}
	I_1 + I_3 = (I_1^0 + I_3^0) \cup (I_1^1 + I_3^0) \cup (I_1^0 + I_3^1).
\end{equation}
Let $a_1$ and $a_3$ be the left boundary of $I_1$ and $I_3$ respectively. 
Because $|\omega_1| = |\omega_3| = n$ it follows that
$I_1 = [a_1, a_1 + \alpha^n]$ and $I_3 = [a_3, a_3 + \alpha^n]$ and 
thus $I_1 + I_3 = [a_1 + a_3, a_1 + a_3 + 2\alpha^n]$,
which we can denote as $a_1 + a_3 + \alpha^n \cdot [0,2]$.
Now 
\begin{alignat*}{2}
	I_1^0 + I_3^0 	&= 
				\left(a_1 + \alpha^{n}\cdot [0,\alpha]\right) + \left(a_3 + \alpha^{n}\cdot [0,\alpha]\right) 
			&&= a_1 + a_3 + \alpha^n \cdot [0,2\alpha]\\
	I_1^1 + I_3^0 	&= 
				\left(a_1 + \alpha^{n}\cdot [1-\alpha,1]\right) + \left(a_3 + \alpha^{n}\cdot [0,\alpha]\right) 
			&&= a_1 + a_3 + \alpha^n \cdot [1-\alpha,1+\alpha]\\
	I_1^1 + I_3^1 	&= 
				\left(a_1 + \alpha^{n}\cdot [1-\alpha,1]\right) + \left(a_3 + \alpha^{n}\cdot [1-\alpha,1]\right) 
			&&= a_1 + a_3 + \alpha^n \cdot [2-2\alpha,2].
\end{alignat*}
Because $\alpha \in [\frac{1}{3},1)$ it follows that 
\[
[0,2] = [0,2\alpha] \cup [1-\alpha,1+\alpha] \cup [2-2\alpha,2],
\]
thus showing (\ref{eq: sum of intervals}). Because there are $q_i \in I_i$ 
such that $q_1 + q_3 = 2 q_2$ we know that $I_1 + I_3$ is not disjoint from $2I_2$. 
These two intervals have the same length and thus at least one of the boundary 
points of $2I_2$ lies in $I_1 + I_3$ therefore there is a $k_2 \in \{0,1\}$ 
such that $2I_2^{k_2}$ is not disjoint from $I_1 + I_3$ because
the intervals $I_2^{0}$ and $I_2^{1}$ contain the respective 
boundary points of $I_2$. This means that
$2I_2^{k_2}$ is not disjoint from $(I_1^0 + I_3^0) \cup (I_1^1 + I_3^0) \cup (I_1^0 + I_3^1)$ 
and thus there are also choices of $k_1, k_3 \in \{0,1\}$ such that $I_1^{k_1} + I_3^{k_3}$ 
is not disjoint from $2I_2^{k_2}$. It follows that there are 
$\tilde{q_i} \in I_i^{k_i}$ such that $\tilde{q_1} + \tilde{q_3} = 2\tilde{q_2}$.

Let $\omega_1 = (0,0)$, $\omega_2 = (0,1)$ and $\omega_3 = (1,0)$.
Note that $0 \in \phi_{\omega_1}([0,1])$ and
$1-\alpha \in \phi_{\omega_3}([0,1]) = [1-\alpha, 1-\alpha + \alpha^2]$. 
Furthermore it can be checked, using the fact that $\alpha \in [\frac{1}{3},\frac{1}{2})$, 
that $(1-\alpha)/2  \in \phi_{\omega_2}([0,1]) = [\alpha-\alpha^2,\alpha]$ and thus there are 
$q_i \in \phi_{\omega_i}([0,1])$ such that $q_1 + q_3 - 2q_2 = 0$. 
From the previous considerations it follows that there are $\tilde{\omega}_i \in \Omega$ 
of arbitrary length such that there are $\tilde{q}_i \in \phi_{\omega_i \oplus \tilde{\omega}_i}([0,1])$
for which $\tilde{q}_1 + \tilde{q}_3 - 2\tilde{q}_2 = 0$. 
See Figure~\ref{fig: cantor} for an illustration of the construction described in this proof.
By taking the length of $\tilde{\omega}_i$ large enough, the lengths
of the intervals can be made arbitrarily small and thus we can guarantee that 
\[
	\left|p_3 - p_2 \right| \cdot \left| \frac{p_2-p_1}{p_3 - p_2} - 1\right| 
		= \left|p_1 + p_3 - 2p_2 \right| < \epsilon \cdot (1 - 2\alpha)
\]
for all triples $p_i \in \phi_{\omega_i \oplus \tilde{\omega}_i}([0,1])$. 
Because $\phi_{\omega_i \oplus \tilde{\omega}_i}([0,1]) \subseteq 
\phi_{\omega_i}([0,1])$, we conclude that $p_3 - p_2$ is at least $1 - 2\alpha$. 
The inequality in (\ref{eq: Near-AP}) follows.
\end{proof}

\begin{figure}
  \centering
    \includegraphics[width=\textwidth]{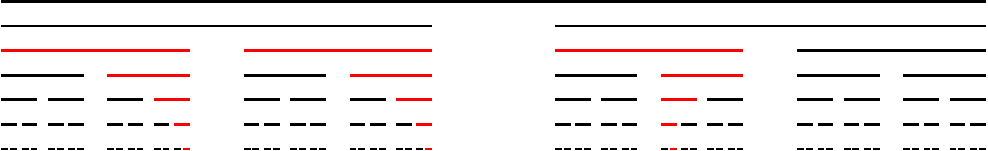}
    \caption{An illustration of the union of $\phi_\omega([0,1])$, where $\omega\in\Omega$ runs over all
    sequences of length $n$ for $n = 0,1, \dots, 6$ for $\alpha = 7/16$. At each level, starting at level two, 
    three red intervals are highlighted containing 
    elements $q_1, q_2$ and $q_3$ respectively such that $q_1 + q_3 = 2q_2$.}
    \label{fig: cantor}
\end{figure}

\begin{lemma}
	Let $\alpha \in [\frac{1}{3},1)$, $\epsilon > 0$ and $f_0,f_1$ 
	differentiable maps from $[0,1]$ to itself with fixed points $0$ and $1$ respectively.
	Then there is a constant $\delta > 0$ such that if $|f_i'(x) - \alpha| < \delta$ 
	for $i = 0,1$ and all $x \in [0,1]$ then there 
	are $\omega_1, \omega_2, \omega_3 \in \Omega$ such that for all 
	triples $p_1, p_2, p_3$ with $p_i \in f_{\omega_i}([0,1])$ it holds that
\begin{equation}
	\label{eq: Near-AP2}
	\left| \frac{p_2 - p_1}{p_3 - p_2} - 1\right| < \epsilon.
\end{equation}
\end{lemma}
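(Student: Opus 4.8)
The plan is to reduce this to its linear prototype, Lemma~\ref{lem: Near-AP linear}, by a perturbation argument. The point is that, since $f_0,f_1$ fix the endpoints $0,1$ and have derivative uniformly within $\delta$ of $\alpha$, they are uniformly $C^0$-close to the affine maps $\phi_0(x)=\alpha x$ and $\phi_1(x)=\alpha x+(1-\alpha)$, and this closeness propagates to \emph{all} finite compositions $f_\omega$ with a bound that does not depend on the length $|\omega|$.

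First I would record two elementary estimates. By the fundamental theorem of calculus together with $f_0(0)=0$ and $f_1(1)=1$, the hypothesis $|f_i'-\alpha|<\delta$ on $[0,1]$ gives $|f_i(x)-\phi_i(x)|\le\delta$ for all $x\in[0,1]$, $i\in\{0,1\}$. Next, a telescoping argument, replacing $f_{\omega^j}$ by $\phi_{\omega^j}$ one factor at a time and using that the outer block $\phi_{\omega^1}\circ\cdots\circ\phi_{\omega^{j-1}}$ is $\alpha^{j-1}$-Lipschitz while the inner block $f_{\omega^{j+1}}\circ\cdots\circ f_{\omega^n}$ maps $[0,1]$ into $[0,1]$, yields $\|f_\omega-\phi_\omega\|_\infty\le\delta\sum_{j=1}^{|\omega|}\alpha^{j-1}\le\frac{\delta}{1-\alpha}=:\eta$ for every $\omega\in\Omega$. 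Consequently, for any $p\in f_\omega([0,1])$, choosing $t\in[0,1]$ with $f_\omega(t)=p$ and setting $p':=\phi_\omega(t)\in\phi_\omega([0,1])$, we get $|p-p'|\le\eta$; that is, $f_\omega([0,1])$ lies within Hausdorff distance $\eta$ of $\phi_\omega([0,1])$.

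Now I would apply Lemma~\ref{lem: Near-AP linear} with $\epsilon/2$ in place of $\epsilon$ to get words $\omega_1,\omega_2,\omega_3\in\Omega$ (depending only on $\alpha,\epsilon$) such that $\big|\tfrac{p_2'-p_1'}{p_3'-p_2'}-1\big|<\epsilon/2$ for all $p_i'\in I_i:=\phi_{\omega_i}([0,1])$. Since this forces $p_3'-p_2'$ and $p_2'-p_1'$ to be nonzero on the compact connected set $I_1\times I_2\times I_3$, there is a common sign $s\in\{\pm1\}$ and a constant $c>0$ with $s(p_3'-p_2')\ge c$ and $s(p_2'-p_1')\ge c$ for all such triples (and these differences are $\le1$ in absolute value). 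Given $p_i\in f_{\omega_i}([0,1])$, take the associated $p_i'\in I_i$ with $|p_i-p_i'|\le\eta$; then $s(p_2-p_1)$ and $s(p_3-p_2)$ differ from $s(p_2'-p_1')$ and $s(p_3'-p_2')$ by at most $2\eta$ each, and once $2\eta<c$ a short computation bounds $\big|\tfrac{p_2-p_1}{p_3-p_2}-\tfrac{p_2'-p_1'}{p_3'-p_2'}\big|$ by $\tfrac{4\eta}{c(c-2\eta)}$. Finally I would pick $\delta$ small enough (so that $2\eta<c$ and $\tfrac{4\eta}{c(c-2\eta)}<\epsilon/2$, which depends only on $\alpha,\epsilon$ since $c$ and $\eta=\delta/(1-\alpha)$ do); the triangle inequality then gives $\big|\tfrac{p_2-p_1}{p_3-p_2}-1\big|<\epsilon$.

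The two opening estimates and the final ratio computation are routine; the one step that genuinely needs care, and which I expect to be the main obstacle, is extracting the uniform lower bound $c>0$ and the constant sign $s$ from Lemma~\ref{lem: Near-AP linear}, since this is exactly what keeps the perturbed denominators bounded away from $0$ and prevents a naive "everything is continuous" argument from breaking down. An alternative would be to read off directly from the construction in the proof of Lemma~\ref{lem: Near-AP linear} that $I_1<I_2<I_3$ with explicit gaps, but the compactness argument above is cleaner and avoids depending on those internals.
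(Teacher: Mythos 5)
Your proof is correct and takes essentially the same route as the paper's: approximate $f_\omega$ uniformly by $\phi_\omega$ (the paper proves $|f_\omega(x)-\phi_\omega(x)|\le|\omega|\cdot\delta$ by induction, which suffices since only three fixed words are used, while your telescoping gives the length-independent bound $\delta/(1-\alpha)$), invoke Lemma~\ref{lem: Near-AP linear} with $\epsilon/2$, and then stabilize the ratio under the perturbation --- the paper by enlarging the target intervals to open $I_i$ via continuity of $(p_1,p_2,p_3)\mapsto(p_2-p_1)/(p_3-p_2)$ away from $p_2=p_3$, you via an explicit compactness bound $c>0$ on numerator and denominator. The only cosmetic caveat is that your sign/lower-bound extraction implicitly needs $\epsilon/2<1$, which is harmless since one may assume $\epsilon<1$ without loss of generality.
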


\begin{proof}
Suppose that $|f_i'(x) - \alpha| < \delta$ for $i = 0,1$ and all $x \in [0,1]$. 
For any $x \in [0,1]$ we can write 
\[
	f_0(x) = \int_0^x f_0'(t) dt \quad\text{ and }\quad 
	f_1(x) = 1 - \int_x^1 f_1'(t) dt. 
\]
We show inductively, that for all $x \in [0,1]$ and $\omega \in \Omega$ we have 
$|f_\omega(x) - \phi_\omega(x)| \leq |\omega| \cdot \delta$. When $|\omega| = 0$ the 
statement is clear, so we suppose that $|\omega| > 0$. Assume that the first entry
of $\omega$ is a $0$ so we write $\omega = (0)\ \oplus\ \omega'$ for some
$\omega' \in \Omega$ with $|\omega| = |\omega'| + 1$. Let $x \in [0,1]$, we assume that 
we have shown that $|f_{\omega'}(x) - \phi_{\omega'}(x)| < \delta \cdot |\omega'|$. We
denote $f_{\omega'}(x)$ by $y$ and $\phi_{\omega'}(x)$ by $y + r$, where 
$|r| \leq \delta \cdot |\omega'|$. Now
\begin{align*}
	\left|f_\omega(x) - \phi_\omega(x)\right| 	
		&= \left|f_0(y) - \phi_0(y + r)\right|= \left|\int_0^y f_0'(t) dt - \alpha \cdot (y+r) \right| = \left|\int_0^y \left(f_0'(t) - \alpha\right) dt - \alpha r \right|\\ &\leq \int_0^y |f_0'(t)-\alpha| dt + \alpha |r| \leq y \delta + \alpha \delta |\omega'|< \delta (|\omega'| + 1) = \delta |\omega|.
\end{align*}
If the first entry of $\omega$ is a $1$ the calculation is analogous. 

Let $\omega_1, \omega_2, \omega_3 \in \Omega$ such that for all 
triples $p_1, p_2, p_3$ with $p_i \in \phi_{\omega_i}([0,1])$ it holds that $\big| \tfrac{p_2-p_1}{p_3 - p_2} - 1\big| < \epsilon/2$. 
These choices of $\omega_i$ exist by Lemma~\ref{lem: Near-AP linear}. 
For this inequality to hold it must be the case that
$\phi_{\omega_2}([0,1]) \cap \phi_{\omega_3}([0,1]) = \emptyset$ and thus, 
since the map $(p_1,p_2,p_3) \to (p_2-p_1)/(p_3-p_2)$ is continuous in all points 
where $p_2 \neq p_3$, we can find three open intervals $I_1,I_2,I_3$ 
with $\phi_{\omega_i}([0,1]) \subseteq I_i$ such that for all triples $q_i \in I_i$ 
we have
\[
	\left| \frac{q_2-q_1}{q_3 - q_2} - 1\right| < \epsilon.
\]
We showed that by making $\delta$ small enough we obtain bounds on the difference 
between $f_\omega(x)$ and $\phi_\omega(x)$ uniformly over all 
$x \in [0,1]$ and $\omega$ of bounded length. Therefore we can make $\delta$ 
sufficiently small such that $f_{\omega_i}([0,1]) \subset I_i$ for $i \in \{1,2,3\}$, 
which is enough to conclude the statement of the lemma.
\end{proof}

\begin{corollary}
\label{cor: Cantor on Circle}
	Let $\alpha \in [\frac{1}{3},1)$, $\epsilon > 0$. There is $\delta>0$ 
	such that the following holds for any closed circular arc $A \subseteq \mathbb{S}^1$ and two
	maps $f_0, f_1:A\rightarrow A$ with the respective endpoints of $A$ as 
	fixed points with the property that $||f_i'(z)|-\alpha| < \delta$ for all $z\in A$.
	
	There exist
	$\omega_1, \omega_2, \omega_3 \in \Omega$ such that for all triples
	$p_i$ with $p_i \in f_{\omega_i}(A)$ we have that $\Arc[p_1,p_2]$ and $\Arc[p_2,p_3]$ 
	are subsets of $A$ satisfying
\[
	\left|\frac{\ell(\Arc[p_1,p_2])}{\ell(\Arc[p_2,p_3])} - 1\right| < \epsilon.
\]
\end{corollary}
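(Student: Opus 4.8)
The plan is to pull everything back to the unit interval via a constant-speed parametrisation of $A$ and then invoke the preceding lemma verbatim. Write $z_0,z_1$ for the two endpoints of $A$; relabelling the two given maps if necessary, assume that $f_0$ fixes $z_0$ and $f_1$ fixes $z_1$, where $z_0$ is chosen to be the counterclockwise-initial endpoint of $A$. Let $\psi\colon[0,1]\to A$ be the diffeomorphism onto $A$ with $\psi(0)=z_0$, $\psi(1)=z_1$ that traverses $A$ counterclockwise at constant speed $|\psi'|\equiv\ell(A)$; then $\psi$ is orientation-preserving and $\ell(\psi(S))=\ell(A)\cdot|S|$ for every subinterval $S\subseteq[0,1]$. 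For the constant $\delta$ I will take the one produced by the preceding lemma for this $\alpha$ and $\epsilon$; inspecting that proof, $\delta$ depends only on $\alpha$ and $\epsilon$ (it is determined by the finitely many words $\omega_i$ supplied by Lemma~\ref{lem: Near-AP linear} and by the auxiliary open intervals $I_i$ around the sets $\phi_{\omega_i}([0,1])$, all of which are functions of $\alpha$ and $\epsilon$ alone), so it is legitimate to fix $\delta$ before $A,f_0,f_1$ are given. We may additionally assume $\delta<1-\alpha$ and $\epsilon<1$, since shrinking these only strengthens the conclusion.

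Now let $A,f_0,f_1$ be as in the statement with $||f_i'(z)|-\alpha|<\delta$ on $A$. Since $\delta<1-\alpha$ we have $0<|f_i'(z)|<1$ on $A$, so $f_i'$ is nowhere zero and, being a derivative (Darboux), of constant sign; hence each $f_i$ is strictly monotone, and a strictly monotone self-map of an arc that fixes an endpoint of that arc must be orientation-preserving (if $f_0$ reversed orientation then $f_0(z_1)$ would fall strictly before $z_0$ in counterclockwise order, hence outside $A$; similarly for $f_1$). Set $g_i:=\psi^{-1}\circ f_i\circ\psi$, a differentiable orientation-preserving self-map of $[0,1]$ with $g_0(0)=0$ and $g_1(1)=1$. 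Because $|\psi'|$ is constant, the chain rule for magnitudes gives $g_i'(x)=|f_i'(\psi(x))|$ for $x\in(0,1)$, and therefore $|g_i'(x)-\alpha|<\delta$. Applying the preceding lemma to $g_0,g_1$ yields words $\omega_1,\omega_2,\omega_3\in\Omega$ with $\big|\tfrac{q_2-q_1}{q_3-q_2}-1\big|<\epsilon$ for every triple $q_i\in g_{\omega_i}([0,1])$; moreover, by the way these words are built (cf.\ the proofs of Lemma~\ref{lem: Near-AP linear} and the preceding lemma, where the three intervals sit near $0$, strictly between $0$ and $1$, and near $1$, inside pairwise disjoint $I_i$), the intervals $g_{\omega_1}([0,1]),g_{\omega_2}([0,1]),g_{\omega_3}([0,1])$ are pairwise disjoint and occur in $[0,1]$ in this left-to-right order.

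It remains to transfer back. Telescoping $g_{\omega^j}=\psi^{-1}\circ f_{\omega^j}\circ\psi$ gives $g_\omega=\psi^{-1}\circ f_\omega\circ\psi$ for every $\omega\in\Omega$, hence $f_{\omega_i}(A)=\psi\big(g_{\omega_i}([0,1])\big)$. Given $p_i\in f_{\omega_i}(A)$, write $p_i=\psi(q_i)$ with $q_i\in g_{\omega_i}([0,1])$; by the ordering just noted $q_1<q_2<q_3$, so $\Arc{[p_1,p_2]}=\psi([q_1,q_2])$ and $\Arc{[p_2,p_3]}=\psi([q_2,q_3])$ are sub-arcs of $A$ with $\ell(\Arc{[p_1,p_2]})=\ell(A)(q_2-q_1)$ and $\ell(\Arc{[p_2,p_3]})=\ell(A)(q_3-q_2)$. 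Consequently $\tfrac{\ell(\Arc{[p_1,p_2]})}{\ell(\Arc{[p_2,p_3]})}=\tfrac{q_2-q_1}{q_3-q_2}$ lies within $\epsilon$ of $1$, which is what we wanted. I do not expect a genuine obstacle here: this is a change-of-coordinates argument, and the only points meriting care are the automatic orientation-preservation of the $f_i$, the fact that a constant-speed reparametrisation transfers derivative magnitudes exactly (both immediate), and the bookkeeping that the order of the $\omega_i$-intervals is inherited on $A$ (which is why $\psi$ must be chosen orientation-compatibly with the $f_0$/$f_1$ labelling).
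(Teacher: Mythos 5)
Your proof is correct and follows the intended route: the paper states this corollary without proof as an immediate consequence of the preceding interval lemma, obtained exactly as you do by conjugating with a constant-speed arc-length parametrisation of $A$. You also correctly identify and resolve the one genuine subtlety — that the $\delta$ produced in the lemma's proof depends only on $\alpha$ and $\epsilon$ (via the words from Lemma~\ref{lem: Near-AP linear} and the enclosing intervals $I_i$), so it can be fixed before $A$, $f_0$, $f_1$ are given — which the corollary's quantifier order implicitly requires.
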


We are now ready to prove the following lemma.
\begin{lemma}
\label{lem: cantor implementation}
	Let $d \in \mathbb{Z}_{\geq 2}$, $k \in \mathbb{Z}_{\geq 1}$,
	$b \in \big[\frac{k-1}{k+1},\frac{d-1}{d+1}\big]$ with $b \neq 0$ and $\lambda \in \mathbb{S}$. 
	Let $\xi \in \mathbb{S}\setminus\{-1\}$
	with $|f'_{3k}(\xi)| > 1$. Let $\{\xi_n\}_{n \geq 1}$ be a sequence in $\mathbb{S}$ 
	 converging to $\xi$ and not equal to $\xi$ such that for all positive integers $n$ there is a rooted 
	tree $T_n$ in $\Tc_{d+1}$, with root degree $m \leq d-2k$ implementing the field $\xi_n$. 
	Then at least one of the following is true.
	\begin{enumerate}
		\item
		\label{item: Case1 lemma}
		The set of fields implemented by rooted trees in $\Tc_{d+1}$ is dense in $\mathbb{S}$.
		\item
		\label{item: Case2 lemma}
		Given $\epsilon > 0$, there is a rooted tree in $\Tc_{d+1}$ with root degree at most $m+k$ that implements the field  
		$r \in \Arc{(\lambda_k,\overline{\lambda_k})}\setminus\{-1\}$ with
		$|f_{3k}'(r)| > |f'_{3k}(\xi)| - \epsilon$.
\end{enumerate}
\end{lemma}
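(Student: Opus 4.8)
The plan is to run a case analysis according to where $\xi$, and (when it exists) its attracting fixed point $R_k(\xi)$, sit relative to the ``chaotic'' arc $\Arc{(\lambda_k,\overline{\lambda_k})}$ of the degree-$k$ map, detecting the relevant case through the approximating sequence $\{\xi_n\}$. By the conjugation symmetry of the problem (cf.\ item~(\ref{enum: lem3}) of Lemma~\ref{lem: properties f}) we may assume $\Imag(\xi)\ge 0$, and we may assume $d\ge 2k$ since otherwise the hypothesis is vacuous. Since $|f_{3k}'|$ depends only on $\Real(z)$ by~\eqref{eq: Derivative} it is continuous and, on the closed upper half-circle, strictly increasing in $\Arg(z)$; hence whenever $\xi\in\Arc{[\overline{\lambda_k},\lambda_k]}$ (so $R_k(\xi)$ is defined) item~(\ref{enum: lem4}) of Lemma~\ref{lem: properties f} and Remark~\ref{rem:increase} give $\Arg(R_k(\xi))\in[\Arg(\xi),\pi)$, hence $|f_{3k}'(R_k(\xi))|\ge|f_{3k}'(\xi)|>1$, hence $|f_k'(R_k(\xi))|=\tfrac13|f_{3k}'(R_k(\xi))|>\tfrac13$ by~\eqref{eq: Derivative}.

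\emph{Producing item~(\ref{item: Case2 lemma}).} If infinitely many $\xi_n$ lie in $\Arc{(\lambda_k,\overline{\lambda_k})}\setminus\{-1\}$, then for such an $n$ large enough $|f_{3k}'(\xi_n)|>|f_{3k}'(\xi)|-\epsilon$ and $T_n$ is a tree in $\Tc_{d+1}$ of root degree $m\le m+k$ implementing $\xi_n$; this is item~(\ref{item: Case2 lemma}). Otherwise all but finitely many $\xi_n$ lie in $\Arc{[\overline{\lambda_k},\lambda_k]}$, and since $\xi_n\ne\xi$ only finitely many can equal an endpoint, so $\xi_n\in\Arc{(\overline{\lambda_k},\lambda_k)}$ for all large $n$; in particular $\lambda_k\ne 1$ (so $b>\tfrac{k-1}{k+1}$) and $\xi\in\Arc{[\overline{\lambda_k},\lambda_k]}$, so $R_k(\xi)$ and the bound above are available. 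If moreover $R_k(\xi)\in\Arc{(\lambda_k,\overline{\lambda_k})}$, then for $N$ large $R_k(\xi_N)$ is a strictly attracting fixed point of $f_{\xi_N,k}$ lying (by continuity of $R_k$, item~(\ref{enum: lem3})) in $\Arc{(\lambda_k,\overline{\lambda_k})}\setminus\{-1\}$ with $|f_{3k}'(R_k(\xi_N))|>|f_{3k}'(\xi)|$; item~(\ref{enum: lem4}) of Lemma~\ref{lem: properties f} gives $f_{\xi_N,k}^M(\xi_N)\to R_k(\xi_N)$, so for $M$ large $r:=f_{\xi_N,k}^M(\xi_N)$ lies in $\Arc{(\lambda_k,\overline{\lambda_k})}\setminus\{-1\}$ with $|f_{3k}'(r)|>|f_{3k}'(\xi)|-\epsilon$, and iterating Lemma~\ref{lem: Tree Building} realises $r$ by a tree in $\Tc_{d+1}$ of root degree $m+k$ (each application of $f_{\xi_N,k}$ keeps the root degree at $m+k\le d-k$ and the maximum degree at most $m+k+1\le d+1$). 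Again item~(\ref{item: Case2 lemma}).

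\emph{Producing item~(\ref{item: Case1 lemma}).} It remains to treat the case where all large $\xi_n$ lie in $\Arc{(\overline{\lambda_k},\lambda_k)}$ and $R_k(\xi)\in\Arc{(\overline{\lambda_k},\lambda_k)}$; here I claim the implemented fields are dense in $\mathbb S$. Put $\alpha:=|f_k'(R_k(\xi))|\in(\tfrac13,1)$, the upper bound holding because $\Arc{(\overline{\lambda_k},\lambda_k)}\subset\{z:|f_k'(z)|<1\}$ (by~\eqref{eq: Derivative} and the fact that the parabolic fixed point $R_k(\lambda_k)$ lies strictly past $\lambda_k$ in argument). Fix a small $\epsilon'>0$ to be chosen below, let $\delta>0$ be the constant of Corollary~\ref{cor: Cantor on Circle} for $\alpha,\epsilon'$, and shrink $\delta<\min\{1-\alpha,\alpha-\tfrac13\}$. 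Choose large $a\ne b$ with $\xi_a,\xi_b$ so close to $\xi$ that (after relabelling) the short closed arc $A_0:=\Arc{[R_k(\xi_a),R_k(\xi_b)]}$ is contained in $\Arc{(\overline{\lambda_k},\lambda_k)}$ and satisfies $||f_k'(z)|-\alpha|<\delta$ for all $z\in A_0$. Then $f_0:=f_{\xi_a,k}$ and $f_1:=f_{\xi_b,k}$ map $A_0$ into itself with its left, resp.\ right, endpoint as fixed point, so Corollary~\ref{cor: Cantor on Circle} yields $\omega_1,\omega_2,\omega_3\in\Omega$ (each $f_{\omega_i}$ a composition of $f_0,f_1$) such that every $q_i\in f_{\omega_i}(A_0)$ obeys $\Arc{[q_1,q_2]},\Arc{[q_2,q_3]}\subset A_0$ and $|\len{\Arc{[q_1,q_2]}}/\len{\Arc{[q_2,q_3]}}-1|<\epsilon'$. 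Using item~(\ref{enum: lem4}) of Lemma~\ref{lem: properties f}, for $M$ large the point $z_0:=f_{\xi_b,k}^M(\xi_b)$ lies in $A_0$ and is implemented by a tree in $\Tc_{d+1}$ of root degree $m+k$; setting $p_i:=f_{\omega_i}(z_0)$ (implemented, root degree $m+k$, and lying in $f_{\omega_i}(A_0)$) and relabelling so $\Arg(p_1)<\Arg(p_2)<\Arg(p_3)$, the $p_i$ satisfy the near-arithmetic-progression estimate. Since $p_i\in A_0\subset\Arc{(\overline{\lambda_k},\lambda_k)}$, the fixed points $R_k(p_i)$ are defined, increasing in argument, and (as $R_k$ is $C^1$ and $A_0$ is short) divide $A':=\Arc{[R_k(p_1),R_k(p_3)]}$ at $R_k(p_2)$ into pieces of relative length within $O(\epsilon')$ of $\tfrac12$; moreover $A'\subset\{z:|f_k'(z)|<1\}$ and, for $\epsilon'$ small, $\min_{z\in A'}|f_k'(z)|>\tfrac13$. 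Hence $f_{p_1,k},f_{p_2,k},f_{p_3,k}$ map $A'$ into itself fixing its left endpoint, an interior point near relative position $\tfrac12$, and its right endpoint, and a short computation with the relative lengths (the crux being $[0,\alpha]\cup[\tfrac12-\tfrac{\alpha}{2},\tfrac12+\tfrac{\alpha}{2}]\cup[1-\alpha,1]=[0,1]$ exactly when $\alpha\ge\tfrac13$) gives $f_{p_1,k}(A')\cup f_{p_2,k}(A')\cup f_{p_3,k}(A')=A'$. By Corollary~\ref{cor: Density in Arc} the orbit of a suitable implemented point of $A'$ (e.g.\ $f_{p_3,k}^{M'}(p_3)$ for $M'$ large) under the semigroup generated by $f_{p_1,k},f_{p_2,k},f_{p_3,k}$ is dense in $A'$ and consists of implemented fields. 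Finally, as $b\le\tfrac{d-1}{d+1}$, \eqref{eq: Derivative} gives $|f_d'(z)|>1$ on $\mathbb S\setminus\{1\}$, so some iterate $f_{\lambda,d}^{N}$ maps $A'$ onto $\mathbb S$; exactly as in the proof of Lemma~\ref{lem: density powers of 2}, applying $f_{\lambda,d}^N$ (via Lemma~\ref{lem: Tree Building}, the resulting trees remaining in $\Tc_{d+1}$) turns this dense family into implemented fields dense in $\mathbb S$, i.e.\ item~(\ref{item: Case1 lemma}).

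I expect the main obstacle to be this last case: the estimates of Section~\ref{sec:dependence} do not apply because $|f_k'|$ is only known to exceed $\tfrac13$ (not $\tfrac12$) near $R_k(\xi)$, so no one or two ``strongly contracting'' maps cover an arc; one must instead manufacture three implemented fields in approximate arithmetic progression, which is precisely the role of the Cantor-type Corollary~\ref{cor: Cantor on Circle}, and then transport this progression through $R_k$ while tracking root degrees. The genuinely degenerate configurations ($\xi=1$, or $\xi\in\{\lambda_k,\overline{\lambda_k}\}$, or $R_k(\xi)\in\{\lambda_k,\overline{\lambda_k}\}$) are handled by the same arguments after selecting the approximating indices $\xi_n$ so that the fixed points in play stay in the open arcs, and I would relegate them to routine remarks.
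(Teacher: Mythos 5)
Your proposal follows essentially the same route as the paper: the same trichotomy on where $\xi$ and $R_k(\xi)$ sit relative to $\Arc(\lambda_k,\overline{\lambda_k})$, the same orbit arguments (via Lemma~\ref{lem: Tree Building} and Remark~\ref{rem:increase}) to land in item~(2), and, in the remaining case, the same two-stage Cantor construction — Corollary~\ref{cor: Cantor on Circle} applied to $f_{\xi_a,k},f_{\xi_b,k}$ near $R_k(\xi)$ to manufacture three implemented fields $p_1,p_2,p_3$ in near-arithmetic progression, then a three-map covering of $\Arc[R_k(p_1),R_k(p_3)]$ using $\alpha>\tfrac13$, Corollary~\ref{cor: Density in Arc}, and the final blow-up by $f_{\lambda,d}^N$ — with the same root-degree bookkeeping ($m\to m+k\to m+2k\leq d$).

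The one place where your write-up falls short is the sub-case $R_k(\xi)\in\{\lambda_k,\overline{\lambda_k}\}$, which you dismiss as a routine perturbation. It is not handled "by the same arguments": your Cantor step needs $\alpha=|f_k'(R_k(\xi))|<1$ \emph{and} needs the second-generation fixed points $R_k(p_i)$ to lie in a region where $|f_k'|<1$, i.e.\ it needs $R_k(R_k(\xi))$ to be strictly attracting. When $R_k(\xi)=\lambda_k$ the relevant second fixed point is the parabolic point $R_k(\lambda_k)$, where $|f_k'|=1$, so the maps $h_i$ are no longer uniformly contracting on $A'$, and moreover the points $p_i$ accumulating on $\lambda_k$ need not stay inside $\Arc(\overline{\lambda_k},\lambda_k)$ where $R_k$ is defined. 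The paper treats this boundary sub-case by a different and in fact easier argument: the implemented fields $\zeta_n=f_{\xi_n,k}^N(\xi_n)$ accumulate on $\lambda_k$ (or its conjugate), so $|f_k'(R_k(\zeta_n))|\to 1$, whence $|f_{2k}'(R_k(\zeta_n))|>1$ for two suitable $\zeta_{n_1},\zeta_{n_2}$ in the same half-plane, and Lemma~\ref{lem: easy even case} already gives density in an arc with just two maps. You should insert this two-map argument explicitly rather than appeal to the three-map construction there; with that addition your proof is complete.
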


\begin{proof}
We distinguish the following three cases.
\begin{enumerate}[(i)]
\item
\label{item: Case1}
$\xi \in \Arc(\lambda_k,\overline{\lambda_k})$.
\item
\label{item: Case2}
$\xi \in \Arc[\overline{\lambda_k},\lambda_k]$ and $R_k(\xi) \in \Arc(\lambda_k,\overline{\lambda_k})$.
\item
\label{item: Case3}
 $\xi \in \Arc[\overline{\lambda_k},\lambda_k]$ and $R_k(\xi) \in \Arc[\overline{\lambda_k},\lambda_k]$.
\end{enumerate}
Suppose first we are in case~{(\ref{item: Case1})}. Then, since $\xi_n \to \xi$ and thus $f_{3k}'(\xi_n) \to f_{3k}'(\xi)$, given $\epsilon$,
there is an integer $n$ such that $\xi_n \in \Arc(\lambda_k,\overline{\lambda_k})\setminus\{-1\}$ and $|f_{3k}'(\xi_n)| > |f'_{3k}(\xi)| - \epsilon$.
The rooted tree $T_n$ satisfies the requirements of statement~{(\ref{item: Case2 lemma})} of the lemma.

To prove the lemma for cases (\ref{item: Case2}) and (\ref{item: Case3}) we define the following set
\[
	\mathcal{R} = \left\{f_{\xi_n,k}^N(\xi_n): n,N \geq 1 \right\}.
\]
By repeatedly applying Lemma~\ref{lem: Tree Building} we see that every element of $\mathcal{R}$ corresponds to 
the field implemented by a rooted tree in $\Tc_{d+1}$ whose root degree is at most $m + k$. The following limits follow from continuity
\[
	\lim_{N \to \infty} \lim_{n \to \infty} f_{\xi_n,k}^N(\xi_n) = \lim_{N \to \infty} f_{\xi,k}^N(\xi) = R_k(\xi).
\]
Therefore $R_k(\xi)$ is an accumulation of $\mathcal{R}$ and in fact there is a sequence $\{\zeta_n\}_{n \geq 1}$ of elements 
in $\mathcal{R}$ converging to $R_k(\xi)$ but not equal to $R_k(\xi)$. If we are in case~{(\ref{item: Case2})}, by Remark~\ref{rem:increase} we can take
$\zeta_n$ sufficiently close to $R_k(\xi)$ so that $\zeta_n \in \Arc{(\lambda_k,\overline{\lambda_k})}\setminus\{-1\}$. Since $|f_{3k}'(R_k(\xi))| > |f_{3k}'(\xi)|$ by Remark~\ref{rem:increase},  we can further ensure that $|f_{3k}'(\zeta_n)| > |f_{3k}'(\xi)|$. The corresponding tree with
field $\zeta_n$ satisfies the condition of statement~{(\ref{item: Case2 lemma})} of the lemma.

Suppose now we are in case~{(\ref{item: Case3})} and suppose first that $R_k(\xi) \in \{\overline{\lambda_k},\lambda_k\}$.
If a subsequence $(\zeta_n)$ converges to $R_k(\xi)$ along the arc $\Arc(\lambda_k,\overline{\lambda_k})$, 
we obtain a $\zeta_n \in \Arc(\lambda_k,\overline{\lambda_k})$ and by the same reasoning as in the previous case we can 
conclude that statement~{(\ref{item: Case2 lemma})} of the lemma holds. So we can assume that for large enough $n$
all $\zeta_n$ lie in $\Arc(\overline{\lambda_k},\lambda_k)$. In this case we find that for sufficiently high $n$
the elements $\zeta_n$ get arbitrarily close to either $\lambda_k$ or $\overline{\lambda_k}$ and thus $|f_{k}'(R_k(\zeta_n))|$
gets arbitrarily close to $1$. It follows that we can find $n_1$ and $n_2$ such that $\zeta_{n_1}$ and $\zeta_{n_2}$ lie
in the same half plane and such that $|f_{2k}'(R_k(\zeta_{n_i}))| > 1$ for $i = 1,2$. It follows then from Lemma~{\ref{lem: easy even case}}
that, if we let $g_0 = f_{\zeta_{n_1},k}$ and $g_1 = f_{\zeta_{n_2},k}$, the set
\[
\mathcal{R}_1 = \{g_{\omega}(1): \omega \in \Omega, |\omega| \geq 1\}
\] 
is dense in an arc $A \subseteq S$. By applying Lemma~\ref{lem: Tree Building} we observe that every $r \in \mathcal{R}_1$
corresponds to the field implemented by  a rooted tree in $\Tc_{d+1}$ with root degree at most $m + 2k \leq d$. Because 
the tree consisting of a single vertex implements the field $\lambda$ we can apply Lemma~\ref{lem: Tree Building} to see that
every element in the set 
\[
\mathcal{R}_2 = \{f_{\lambda,d}^n(r): r \in \mathcal{R}_1,n\geq 1\}
\]
corresponds to the field implemented by a rooted tree in $\Tc_{d+1}$ with root degree at most $d$.
Because $b$ is chosen such that $|f_d'(z)| > 1$ for 
all $z \in \mathbb{S}-\{1\}$ we find that $f_{\lambda,d}^N(A) = \mathbb{S}$ for a sufficiently large $N$ and thus $\mathcal{R}_2$ 
is dense in $\mathbb{S}$, which shows that in this case statement~{(\ref{item: Case1 lemma})} of the lemma holds.

Finally we assume that $R_{k}(\xi) \in \Arc(\overline{\lambda_k},\lambda_k)$. W.l.o.g. assume that $\xi$ lies
in the upper half-plane. Let $\alpha = |f_k'(R_k(\xi))|$.
It follows from the fact that $R_k(\xi) \in \Arc(\xi, \lambda_k)$ that $\alpha \in (1/3,1)$. Let $\epsilon_1,\epsilon_2>0$ be two reals whose value will be determined later. Let $\delta$ be the constant obtained from 
applying Corollary~\ref{cor: Cantor on Circle} to $\alpha$ and $\epsilon=\epsilon_1$. Now choose $n_1,n_2$ such that
$\xi_{n_1},\xi_{n_2}$ have the following properties.
\begin{enumerate}[(a)]
	\item
	$\xi_{n_1}$ and $\xi_{n_2}$ lie in the upper half-plane, $\Arg(\xi_{n_1}) < \Arg(\xi_{n_2})$, $\Arc{[R_{k}(\xi_{n_1}), R_{k}(\xi_{n_2})]} \subseteq \Arc{(1,\lambda_k)}$ and 
	$\Arg(R_k(R_k(\xi_{n_1}))) > \Arg(R_k(\xi))$.
	\item
	For all $z \in \Arc{[R_{k}(\xi_{n_1}), R_{k}(\xi_{n_2})]}$ we have $||f_k'(z)|-\alpha|<\delta$.
	\item
	\label{item: derivative bound}
	For all $z_1,z_2\in\Arc{[R_{k}(\xi_{n_1}), R_{k}(\xi_{n_2})]}$ we have $||R_k'(z_1)/R_k'(z_2)|-1|<\epsilon_2$.
\end{enumerate}
That it is possible to choose $n_1, n_2$ such that the first two properties hold follows from the fact that 
both $R_k$ and the derivative of $f_k$ are continuous on $\Arc{[\overline{\lambda_k},\lambda_k]}$. The existence
of $n_1,n_2$ satisfying the third property follows from the fact that the derivative of $z \mapsto R_k(z)$
is continuous and non-zero on $\Arc{(1,\lambda_k)}$. 

Let $g_0 = f_{\xi_{n_1},k}$ and $g_1 = f_{\xi_{n_2},k}$. Since $\xi_{n_1},\xi_{n_2}$ are implemented by rooted trees in $\Tc_{d+1}$ with root degrees at most $m$, we have  by Lemma~\ref{lem: Tree Building} that,  if $r$ is implemented by a rooted tree
in $\Tc_{d+1}$, then $g_i(r)$ is the field implemented by a tree in $\Tc_{d+1}$ and root degree $m+k\leq d$. 
Let $A = \Arc{[R_{k}(\xi_{n_1}), R_{k}(\xi_{n_2})]}$ and note that the maps $g_0,g_1$ have the respective endpoints of 
$A$ as fixed points. Furthermore $||g_i'(z)|-\alpha| < \delta$ for all $z \in A$ and thus it follows from
Corollary~\ref{cor: Cantor on Circle} that there is a triple $\omega_1,\omega_2,\omega_3 \in \Omega$ such that 
for all triples $p_i \in g_{\omega_i}(A)$ we have $\Arg(p_1) < \Arg(p_2) < \Arg(p_3)$ and 
\[
	\left|\frac{\ell(\Arc[p_1,p_2])}{\ell(\Arc[p_2,p_3])} - 1\right| < \epsilon_1.
\]
The orbit of $\xi_{n_2}$ under iteration of $g_1$ converges to $R_k(\xi_{n_2})$ approaching from an 
anti-clockwise direction and thus there is some number $N$ such that if we let $\omega_N$ be the 
constant $1$ sequence of length $N$ that $g_{\omega_N}(\xi_{n_2}) \in A$. For $i = 1,2,3$ we define 
$\zeta_i = g_{\omega_i \oplus \omega_N}(\xi_{n_2})$ and note that each $\zeta_i$ is contained in the interval $(\overline{\lambda_k},\lambda_k)$ and is implemented by a rooted tree in $\Tc_{d+1}$ with root degree $m+k$. Furthermore we have
$\Arg(\zeta_1) < \Arg(\zeta_2) < \Arg(\zeta_3)$ and
\begin{equation}
	\label{eq: zeta arcs}
	\left|\frac{\ell(\Arc[\zeta_1,\zeta_2])}{\ell(\Arc[\zeta_2,\zeta_3])} - 1\right| < \epsilon_1.
\end{equation}
Let $h_i = f_{\zeta_i,k}$. Analogously to above, if $r$ is implemented by a rooted tree in $\Tc_{d+1}$, then $h_i(r)$ is implemented by a rooted tree in $\Tc_{d+1}$ with root degree at most $m+2k\leq d$. Redefine $A = \Arc{[R_k(\zeta_1),R_k(\zeta_3)]}$. We will show that we 
can choose $\epsilon_1$ and $\epsilon_2$ sufficiently small such that $A = h_1(A) \cup h_2(A) \cup h_3(A)$.
To do this define $A_1 = \Arc{[R_k(\zeta_1),R_k(\zeta_2)]}$ and $A_2 = \Arc{[R_k(\zeta_2),R_k(\zeta_3)]}$.
It follows from the mean value theorem that there are $x_i \in \Arc{[\zeta_{i},\zeta_{i+1}]}$ such that
$\ell(A_i) = |R_k'(x_i)| \cdot \ell(\Arc{[\zeta_{i},\zeta_{i+1}]})$ for $i = 1,2$. Because both $x_1$ and 
$x_2$ lie in $\Arc{[R_{k}(\xi_{n_1}), R_{k}(\xi_{n_2})]}$ it follows from 
property (\ref{item: derivative bound}) above that we can write 
$|R_k'(x_1)/R_k'(x_2)| = 1 + r_2$ for some $r_2 \in \mathbb{R}$ with $|r_2| < \epsilon_2$.
We use the bound in (\ref{eq: zeta arcs}) to obtain the following inequality
\begin{alignat*}{2}
\left|\frac{\ell(A_1)}{\ell(A_2)}-1\right| 
&= \left|\frac{|R_k'(x_1)| \cdot \ell(\Arc{[\zeta_{1},\zeta_{2}]})}{|R_k'(x_2)| \cdot \ell(\Arc{[\zeta_{2},\zeta_{3}]})}-1\right|
= \left|(1+r_2)\frac{\ell(\Arc{[\zeta_{1},\zeta_{2}]})}{\ell(\Arc{[\zeta_{2},\zeta_{3}]})}-1\right| \\
&\leq \left|1+r_2\right| \cdot\left|\frac{\ell(\Arc{[\zeta_{1},\zeta_{2}]})}{\ell(\Arc{[\zeta_{2},\zeta_{3}]})}-1\right| + |r_2| 
< \left|1+r_2\right| \cdot \epsilon_1 + |r_2|\\ 
&\leq \epsilon_1 + \epsilon_2 + \epsilon_1 \cdot \epsilon_2.
\end{alignat*}
Let $\epsilon_3 = \epsilon_1 + \epsilon_2 + \epsilon_1 \cdot \epsilon_2$ and note that $\epsilon_3$ can be made 
arbitrarily small by choosing $\epsilon_1$ and $\epsilon_2$ sufficiently small. It follows that there is some 
$r_3 \in \mathbb{R}$ with $|r_3| < \epsilon_3$ such that $\ell(A_1) = (1+r_3)\cdot\ell(A_2)$. Because 
$\Arg(R_k(\zeta_1)) > \Arg(R_k(R_k(\xi_{n_1}))) > \Arg(R_k(\xi))$ we find that $1>|f_k'(z)| > \alpha$
for all $z \in A$ and thus $1>|h_i'(z)| > \alpha$ for all $z \in A$ and $i = 1,2,3$. It follows that 
\begin{align*}
\ell(h_1(A_1 \cup A_2)) + \ell(h_2(A_1)) &> \alpha\cdot(\ell(A_1) + \ell(A_2)) + \alpha \cdot \ell(A_1) = \alpha\cdot(2\ell(A_1) + \ell(A_2)) \\
& = \alpha\cdot\Big(2 + \frac{1}{1+r_3}\Big) \ell(A_1) = \alpha \cdot \frac{3+2r_3}{1+r_3} \cdot \ell(A_1).
\end{align*}
and 
\begin{align*}
\ell(h_2(A_2)) + \ell(h_3(A_1 \cup A_2)) &> \alpha \cdot \ell(A_2) + \alpha\cdot(\ell(A_1) + \ell(A_2))  = \alpha\cdot(\ell(A_1) + 2\ell(A_2)) \\
& = \alpha\cdot\left((1+r_3) + 2\right) \ell(A_2) = \alpha \cdot (3+r_3) \cdot \ell(A_2).
\end{align*}
Because $\alpha > 1/3$ we can choose $\epsilon_3$ sufficiently small such that 
\[
	\ell(h_1(A_1 \cup A_2)) + \ell(h_2(A_1)) > \ell(A_1) 
	\quad\text{ and }\quad
	\ell(h_2(A_2)) + \ell(h_3(A_1 \cup A_2)) > \ell(A_2).
\]
Because $h_1(A_1 \cup A_2)$ and $h_2(A_1)$ share the respective endpoints of
$A_1$ it follows that $A_1 \subseteq h_1(A_1 \cup A_2) \cup h_2(A_1)$. Similarly we 
find that $A_2 \subseteq h_3(A_1 \cup A_2) \cup h_2(A_2)$.
It follows that $A = h_1(A) \cup h_2(A) \cup h_3(A)$. Finally let $s = h_3^N(1)$, where 
we have taken $N$ sufficiently large such that $s \in A$, and consider
\[
	\mathcal{S} = \left\{(h_{i_1} \circ \cdots \circ h_{i_l})(s): l \in \mathbb{Z}_{\geq 1} \text{ and } i_1,\dots,i_l \in \{1,2,3\}\right\}.
\]
It follows from Corollary~\ref{cor: Density in Arc} that $\mathcal{S}$ is a dense subset of $A$. Every 
$r \in \mathcal{S}$ is implemented by a rooted tree in $\Tc_{d+1}$ with root degree $m+2k \leq d$. 
Finally we let 
\[
\mathcal{S}_2 = \{f_{\lambda,d}^n(r): r \in \mathcal{S},n\geq 1\}
\]
and we find, because $|f_d'(z)| > 1$ for all $z \in \mathbb{S}-\{1\}$, that $\mathcal{S}_2$ is dense in $\mathbb{S}$.
Every $r \in \mathcal{S}_2$ is implemented by a tree in $\Tc_{d+1}$.
This shows that in this case item~{(\ref{item: Case1 lemma})} of the lemma holds.
\end{proof}

\begin{lemma}
\label{lem: d-5 after lambda_3}
	Suppose $d \in \mathbb{Z}_{\geq 5}$,
	$b \in \big(0, \frac{d-1}{d+1}\big] \cap \mathbb{Q}$, $\lambda \in \SQ\setminus \{\pm 1\}$
	and $\xi\in \Lambda_3(b)\cap \SQ$ with $\xi\neq \pm1$.

	Suppose there is a rooted tree in $\Tc_{d+1}$ with root degree at most $d-5$ and field $\xi$. Then the set of fields implemented by  rooted 
	trees in $\Tc_{d+1}$ is dense in $\mathbb{S}$.
\end{lemma}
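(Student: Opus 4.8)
The plan is to use the hypothesis $\xi\in\Lambda_3(b)$ to manufacture an iterate of the degree-$3$ map with expanding derivative, push this down to the M\"obius map $f_{\cdot,1}$, and then appeal to irrational rotations. The first step is to apply Lemma~\ref{lem: tree beyond lambda_k} with $k=3$: the given tree implementing $\xi$ has root degree $m\leq d-5\leq d-3=d-k$ and $\xi\in\Lambda_3(b)\setminus\{-1\}$, so the lemma supplies a point $\sigma\in\mathbb{S}$ with $|f_3'(\sigma)|>1$ together with rooted trees in $\Tc_{d+1}$ of root degree at most $m+3\leq d-2$ implementing fields $\zeta_n$ with $\zeta_n\to\sigma$ and $\zeta_n\neq\sigma$.

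Next I would record the reduction: it suffices to produce a single rooted tree in $\Tc_{d+1}$ of root degree at most $d-1$ implementing some field $r\in\Arc(\lambda_1,\overline{\lambda_1})\setminus\{-1\}$. Indeed, every field of a tree lies in $\SQ$ (it is obtained from $\lambda\in\SQ$ by iterating the maps $f_{\cdot,k}$, which have rational coefficients and preserve $\SQ$), and $1\notin\Arc(\lambda_1,\overline{\lambda_1})$, so such an $r$ lies in $\SQ\setminus\{\pm1\}$; then $f_{r,1}$ is elliptic by Corollary~\ref{cor: mobius parameters} and hence conjugate to an irrational rotation by Lemma~\ref{lem: elliptic -> irrational rotation}, so the orbit $\{f_{r,1}^n(1)\}_{n\geq1}$ is dense in $\mathbb{S}$. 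Each point $f_{r,1}^n(1)$ is the field of a tree in $\Tc_{d+1}$: $f_{r,1}^1(1)=r$ is implemented by the tree above, and inductively $f_{r,1}^{n+1}(1)=f_{r,1}\big(f_{r,1}^n(1)\big)$ is implemented by attaching the tree for $f_{r,1}^n(1)$ to the root of a fresh copy of the $r$-tree (Lemma~\ref{lem: Tree Building} with $k=1$); since the $r$-tree has root degree at most $d-1$, all these trees stay in $\Tc_{d+1}$. Hence the fields of trees in $\Tc_{d+1}$ are dense in $\mathbb{S}$.

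To carry out the reduction I would split on whether $\sigma=-1$. If $\sigma=-1$, then since $\Imag(\lambda_1)>0$ the point $-1$ lies in the \emph{open} arc $\Arc(\lambda_1,\overline{\lambda_1})$, so for all sufficiently large $n$ we have $\zeta_n\in\Arc(\lambda_1,\overline{\lambda_1})\setminus\{-1\}$, and the tree implementing such a $\zeta_n$ (root degree $\leq d-2$) is as required. If instead $\sigma\neq-1$, I would apply Lemma~\ref{lem: cantor implementation} with $k=1$, $\xi=\sigma$ and $\xi_n=\zeta_n$: the hypotheses hold because $b\in\big(0,\tfrac{d-1}{d+1}\big]$, $|f_3'(\sigma)|>1$, $\zeta_n\to\sigma$ with $\zeta_n\neq\sigma$, and each $\zeta_n$ is implemented by a tree of root degree $\leq d-2=d-2k$. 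If conclusion~(1) of that lemma holds, we are done; otherwise conclusion~(2) with any fixed $\epsilon>0$ yields a tree in $\Tc_{d+1}$ of root degree at most $(d-2)+1=d-1$ implementing a field $r\in\Arc(\lambda_1,\overline{\lambda_1})\setminus\{-1\}$, which is exactly what the reduction needs.

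The point that needs the most care — and the reason the statement asks for root degree $\leq d-5$ — is the degree bookkeeping: the $+3$ from Lemma~\ref{lem: tree beyond lambda_k}, the $+1$ from conclusion~(2) of Lemma~\ref{lem: cantor implementation}, and the further $+1$ spent iterating the irrational rotation must all be absorbed while keeping every root degree at most $d$ and every vertex degree at most $d+1$; starting from $m\leq d-5$ these increments land exactly on the boundary. A minor subtlety is that Lemma~\ref{lem: cantor implementation} excludes $\xi=-1$, which is what forces the (straightforward) separate treatment of the case $\sigma=-1$.
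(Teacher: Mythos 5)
Your proposal is correct and follows essentially the same route as the paper's proof: Lemma~\ref{lem: tree beyond lambda_k} with $k=3$ to get an accumulation point $\sigma$ with $|f_3'(\sigma)|>1$ and fields of root degree at most $d-2$, then Lemma~\ref{lem: cantor implementation} with $k=1$, and finally the elliptic/irrational-rotation argument via Lemma~\ref{lem: elliptic -> irrational rotation}. Your explicit case split on $\sigma=-1$ (needed because Lemma~\ref{lem: cantor implementation} excludes $\xi=-1$) and the remark that all implemented fields lie in $\SQ$ are details the paper's proof leaves implicit, and both are handled correctly.
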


\begin{proof}
It follows from Lemma~\ref{lem: tree beyond lambda_k} that there is 
$\sigma \in \mathbb{S}$ with $|f_{3}'(\sigma)| > 1$ together with 
a sequence $\{\zeta_n\}_{n \geq 1}$
accumulating on $\sigma$ such that every $\zeta_n$ is the field implemented by a tree in $\Tc_{d+1}$  whose root degree is bounded by $(d-5)+3 = d-2$. We can 
now apply Lemma~\ref{lem: cantor implementation} with $k = 1$. It follows that either the set of fields implemented by  rooted 
	trees in $\Tc_{d+1}$ is dense in $\mathbb{S}$, or there is a tree in $\Tc_{d+1}$ with root degree at most $(d-2)+1 = d-1$ and field 
$\zeta \in \Arc{(\lambda_1,\overline{\lambda_1})}\setminus\{-1\}$. We conclude from 
Lemma~\ref{lem: elliptic -> irrational rotation} that $f_{\zeta,1}$ is conjugate to an irrational rotation 
and thus the orbit $\{f_{\zeta,1}^n(1)\}_{n \geq 1}$ is dense in $\mathbb{S}$. Every element in 
this orbit is implemented by  a rooted tree in $\Tc_{d+1}$ and thus we are done.
\end{proof}

\section{Proof of Lemma~\ref{lem: Many Cases Lemma}}
\label{sec: Proof of the main lemma}

We are now ready to prove Lemma~\ref{lem: Many Cases Lemma}, which we restate here for convenience. 
\begin{lemcases}
\statelemcases
\end{lemcases}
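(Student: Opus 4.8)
The statement to prove is Lemma~\ref{lem: Many Cases Lemma}: for $k,d\geq 2$ with $k\leq d$, $b\in(\tfrac{d-2}{d},\tfrac{d-1}{d+1}]\cap\mathbb{Q}$, $\lambda\in\SQ\setminus\{\pm1\}$, and given a rooted tree in $\Tc_{d+1}$ of root degree $\leq d-k$ implementing a field $\xi\neq 1$ with $|f_k'(\xi)|\geq 1$ and $\xi\in\Arc[\overline{\lambda_{\floor{k/2}}},\lambda_{\floor{k/2}}]$, the fields implemented by trees in $\Tc_{d+1}$ are dense in $\mathbb{S}$. Let me sketch how I'd assemble this from the machinery of Sections~3--4.

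\smallskip

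The plan is to case-split on the parity of $k$ and on the size of $k$, feeding $\xi$ (or an iterate of it) into one of the three workhorse results: Lemma~\ref{lem: easy odd case}, Lemma~\ref{lem: easy even case}, or Corollary~\ref{cor: remaining cases}, and then using the degree-$d$ map to blow the resulting arc-density up to all of $\mathbb{S}$ (this last step is the same in every case: since $b\leq\tfrac{d-1}{d+1}$, \eqref{eq: Derivative} gives $|f_d'(z)|>1$ on $\mathbb{S}\setminus\{1\}$, so $f_{\lambda,d}^N(A)=\mathbb{S}$ for large $N$; composing with $f_{\lambda,d}^N$ stays inside $\Tc_{d+1}$ by Lemma~\ref{lem: Tree Building} since the single vertex implements $\lambda$ with root degree $0$). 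The key point throughout is to track root degrees: the input tree has root degree $\leq d-k$, and each of the covering lemmas needs a semigroup of maps of degrees at most $k$ (or $\{k-2,k-1,k\}$, etc.), so after attaching copies the root degree stays $\leq d$.

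\smallskip

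\textbf{The main case: $k$ odd, $k=2\ell+1$.} Here $\floor{k/2}=\ell$ and $\xi\in\Arc[\overline{\lambda_\ell},\lambda_\ell]$. I would first check whether $|f_{2\ell+1}'(R_\ell(\xi))|\geq 1$; if $\xi$ is in the well-behaved regime of $f_{\xi,\ell}$ and $R_\ell(\xi)$ has a large enough derivative, Lemma~\ref{lem: easy odd case} (with its $k$ equal to our $\ell$, noting $b\geq\tfrac{d-2}{d}\geq\tfrac{\ell}{\ell+2}$ since $\ell\leq d-2$... one must verify this numerology) directly gives density in an arc. Otherwise one extracts, via Lemma~\ref{lem: expanding orbit}/Lemma~\ref{lem: tree beyond lambda_k}, an iterate of $1$ under $f_{\xi,\ell}$ with expanding derivative, landing either inside $\Arc(\lambda_\ell,\overline{\lambda_\ell})$ — then one recurses via Lemma~\ref{lem: d-5 after lambda_3} / Lemma~\ref{lem: density powers of 2} / Lemma~\ref{lem: cantor implementation} — or inside $\Arc[\overline{\lambda_\ell},\lambda_\ell]$ with $|f_\ell'(\cdot)|\geq 1$, which is exactly the hypothesis of the lemma again but with $k$ replaced by $\ell=\floor{k/2}$, so we recurse downward. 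The recursion bottoms out at $k\in\{2,3\}$ or when $k$ or $k+1$ is a power of two (handled by Lemma~\ref{lem: density powers of 2}) or $k=3$ (Lemma~\ref{lem: d-5 after lambda_3}); the genuinely leftover small degrees $d$ are precisely what the Cantor-style Lemma~\ref{lem: cantor implementation} is designed to mop up.

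\smallskip

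\textbf{The even case $k=2\ell$} is similar but uses Lemma~\ref{lem: easy even case}: from $\xi$ with $|f_{2\ell}'(\xi)|\geq 1$ in $\Arc[\overline{\lambda_\ell},\lambda_\ell]$, apply Lemma~\ref{lem: tree beyond lambda_k} to produce a sequence $\zeta_n\to\sigma$ with $|f_\ell'(\sigma)|>1$ of fields of trees of root degree $\leq d-\ell$; if some $\zeta_n$ escapes into $\Arc(\lambda_\ell,\overline{\lambda_\ell})$ recurse, otherwise pick two of them $r_1,r_2$ in the same half-plane with $|f_{2\ell}'(R_\ell(r_i))|>1$ (using Remark~\ref{rem:increase}) and invoke Lemma~\ref{lem: easy even case}. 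For the mid-size degrees where neither a power-of-two shortcut nor the easy odd/even split closes the argument, Corollary~\ref{cor: remaining cases} (requiring $m\geq 8$ or $9$ and $|f_{2m}'(\xi)|\geq 1$ or $|f_{2m+1}'(\xi)|\geq1$) supplies the missing covering via the four maps $f_{\xi,m-3},\dots,f_{\xi,m}$.

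\smallskip

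\textbf{Main obstacle.} The analysis itself is assembled from already-proven pieces; the real work — and where I expect to spend the most care — is the \emph{bookkeeping}: verifying for every residue of $k$ (odd vs.\ even, $k$ or $k\pm1$ a power of two, $k$ small, $k$ mid-range) that (a) the degree constraints on the covering lemma are met, i.e.\ the relevant $b\geq\tfrac{j-1}{j+1}$ or $b\geq\tfrac{j}{j+2}$ thresholds hold given $b>\tfrac{d-2}{d}$ and $j\leq k\leq d$, and (b) the root degree after attaching all the gadget copies never exceeds $d$. One subtlety is that when $\xi\in\Arc(\lambda_{\floor{k/2}},\overline{\lambda_{\floor{k/2}}})$ the hypothesis $|f_k'(\xi)|\geq 1$ does not by itself give an expanding orbit point for $f_{\xi,\floor{k/2}}$, so one must first route through Lemma~\ref{lem: tree beyond lambda_k} (legitimate since $\xi\in\Lambda_{\floor{k/2}}(b)$ there) to manufacture one; keeping the chain of invocations consistent — in particular re-establishing the invariant $\xi_{\mathrm{new}}\in\Arc[\overline{\lambda_{\floor{k'/2}}},\lambda_{\floor{k'/2}}]$ for the next value $k'=\floor{k/2}$ — is the delicate point. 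I would organize the proof as an explicit finite case tree on $k$, dispatching each leaf to exactly one of Lemmas~\ref{lem: easy odd case}, \ref{lem: easy even case}, \ref{lem: density powers of 2}, \ref{lem: d-5 after lambda_3}, \ref{lem: cantor implementation}, or Corollary~\ref{cor: remaining cases}, and then applying the uniform $f_{\lambda,d}^N$ blow-up at the end.
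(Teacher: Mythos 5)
Your plan assembles the right toolbox (Lemmas~\ref{lem: easy odd case}, \ref{lem: easy even case}, \ref{lem: density powers of 2}, \ref{lem: d-5 after lambda_3}, \ref{lem: cantor implementation}, Corollary~\ref{cor: remaining cases}, plus the final $f_{\lambda,d}^N$ blow-up), and this is indeed the skeleton of the paper's argument. But the proposal defers exactly the step that constitutes the proof, and the recursive scheme you sketch does not close as stated. First, the ``recurse with $k$ replaced by $\ell=\floor{k/2}$'' move is not available: the hypothesis for the smaller parameter requires the new field to lie in $\Arc{[\overline{\lambda_{\floor{\ell/2}}},\lambda_{\floor{\ell/2}}]}$ with $|f_\ell'(\cdot)|\geq 1$, whereas your construction only lands it in $\Arc{[\overline{\lambda_{\ell}},\lambda_{\ell}]}$ (and note that the within-lemma recursion is really the structure of the proof of Theorem~\ref{thm: Density of Ratios}, which invokes Lemma~\ref{lem: Many Cases Lemma} as a black box at each level --- it cannot also be the internal structure of the lemma without re-establishing the full hypothesis). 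Relatedly, you misplace where Lemma~\ref{lem: tree beyond lambda_k} is legitimate: under the hypothesis $\xi\in\Arc{[\overline{\lambda_{\floor{k/2}}},\lambda_{\floor{k/2}}]}$ we have $\xi\notin\Lambda_{\floor{k/2}}(b)$, so the escape route must go through a \emph{larger} degree $j$ for which $\xi\in\Lambda_j(b)$ (e.g.\ $j=\ell+1$ when $\xi\in\Arc{(\lambda_{\ell+1},\overline{\lambda_{\ell+1}})}$), and each such detour costs $j$ units of root degree.

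Second, the escape hatches each carry a degree budget ($d-(2^{k+1}-1)$ for Lemma~\ref{lem: density powers of 2}, $d-5$ for Lemma~\ref{lem: d-5 after lambda_3}, $d-2k$ for Lemma~\ref{lem: cantor implementation}), and the content of the lemma is the verification that \emph{every} $k$ can be routed to some hatch without exceeding the budget $d-k$ plus the detour costs. This is not routine bookkeeping: for $k\in\{10,13,14\}$ no hatch is reachable and none of the easy covering lemmas applies directly; one needs the dichotomy of Lemma~\ref{lem: Lower Bound three maps} (either three maps already cover, or a quantitative lower bound such as $|f_5'(R_5(\xi))|>\tfrac{43}{50}$ holds) followed by explicit derivative computations like $\tfrac{6}{5}\cdot\tfrac{43}{50}=\tfrac{129}{125}>1$ to re-enter Lemma~\ref{lem: easy even case} at a lower degree. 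Your sketch produces none of these estimates, and without them the cases $k\in\{5,6,9,10,11,12,13,14,17\}$ (the values not covered by the power-of-two arguments or by Corollary~\ref{cor: remaining cases}, which needs $m\geq 8$ or $9$) are simply not handled. So the proposal is a correct outline of the paper's strategy but has a genuine gap: the finite case tree you promise to organise is the proof, and it is not clear from the sketch that every leaf can actually be dispatched.
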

\begin{proof}
The proof consists of a careful case analysis. We give a seperate argument first for when $k$ is a power of two and for when
$k + 1$ is a power of two, then for each value of $k$ within the set $\{5,6,9,10,11,12,13,14,17\}$ and lastly we prove the statement for all other $k$. 

We remark that in some cases we will show that the set of fields implemented by rooted trees in $\Tc_{d+1}$ is dense in an arc $A$ of the circle. Since $b$ is  such 
that $|f_d'(z)|>1$ for all $z \in \mathbb{S}\setminus\{1\}$ (see \eqref{eq: Derivative} of Lemma~\ref{lem: properties f}), it follows that for all arcs $A$ 
there is an $N\geq 1$ such that $f_{\lambda,d}^N(A) = \mathbb{S}$. Density of fields in 
the whole unit circle therefore follows from density in $A$.

First suppose $k = 2^m$ is a power of two. In this case
$\xi \in \Arc{[\overline{\lambda_{2^{m-1}}},\lambda_{2^{m-1}}]}\setminus\{1\}$ is implemented by a rooted tree in $\Tc_{d+1}$  with root degree at most $d-2^m$ and with $|f_{2^m}'(\xi)| \geq 1$.
Let $\xi_2 = f_{\xi,1}(\xi)$. By item (\ref{enum: lem5}) of Lemma~\ref{lem: properties f}, we have $\xi\in \Arc{[\overline{\lambda_{1}},\lambda_{1}]}\setminus\{1\}$ and hence $\xi_2\neq \xi$  by item (\ref{enum: lem4}) of the same lemma. Moreover, by Lemma~\ref{lem: Tree Building}, $\xi_2$ is the field of a rooted tree in $\Tc_{d+1}$ with root degree at 
most $d-(2^m - 1)$. If $\xi_2\in  \Arc{(\lambda_{2^{m-1}},\overline{\lambda_{2^{m-1}}})}$, 
then the desired result follows from Lemma~\ref{lem: density powers of 2}. 
Otherwise $\xi,\xi_2 \in \Arc{[\overline{\lambda_{2^{m-1}}},\lambda_{2^{m-1}}]}$ 
and the result follows from applying Lemma~\ref{lem: easy even case} to these 
two parameters.

Now suppose $k+1$ is a power of two, so $k = 2^{m+1} - 1$ for $m\geq 1$. In this 
case $\xi \in \Arc{[\overline{\lambda_{2^{m}-1}},\lambda_{2^{m}-1}]}
\setminus\{1\}$ is  implemented by a rooted tree in $\Tc_{d+1}$ with root degree at most $d-(2^{m+1} - 1)$ and 
with $|f_{2^{m+1} - 1}'(\xi)| \geq 1$. If $\xi \in \Arc{(\lambda_{2^{m}},\overline{\lambda_{2^{m}}})}$ 
the result follows from Lemma~\ref{lem: density powers of 2}. Otherwise, if
$\xi \in \Arc{[\overline{\lambda_{2^m}},\lambda_{2^m}]}$, the result follows from
Lemma~\ref{lem: easy odd case}.

We now continue with the list of individual cases. \vskip 0.2cm
$\mathbf{k = 5:}$ In this case $\xi \in \Arc{[\overline{\lambda_2},\lambda_2]}\setminus\{1\}$ is the 
	field of a rooted tree with root degree at most $d-5$
	and with $|f_5'(\xi)| \geq 1$. If $\xi \in \Arc{(\lambda_3,\overline{\lambda_3})}$ the result 
	follows from Lemma~\ref{lem: d-5 after lambda_3}.
	Otherwise, if $\xi \in \Arc{[\overline{\lambda_3},\lambda_3]}$, the result follows from 
	Lemma~\ref{lem: easy odd case}.\vskip 0.15cm

$\mathbf{k = 6:}$ In this case $\xi \in \Arc{[\overline{\lambda_3},\lambda_3]}\setminus\{1\}$ is implemented by a rooted 
	tree in $\Tc_{d+1}$ with root degree at most $d-6$ and with $|f_6'(\xi)| \geq 1$. 
	Let $\xi_2 = f_{\xi,1}(\xi)$, which is the field of a rooted tree in $\Tc_{d+1}$ with root degree at most $d-5$.
	If
	$\xi_2\in \Arc(\lambda_3,\overline{\lambda_3})$ then the result follows from 
	Lemma~\ref{lem: d-5 after lambda_3}. Otherwise 
	$\xi,\xi_2 \in \Arc{[\overline{\lambda_{3}},\lambda_{3}]}$ and the result follows from 
	applying Lemma~\ref{lem: easy even case} to these two parameters. \vskip 0.15cm

$\mathbf{k = 9:}$ 	In this case $\xi \in \Arc{[\overline{\lambda_4},\lambda_4]}\setminus\{1\}$ is implemented by  a rooted tree in $\Tc_{d+1}$ with root degree at most $d-9$
	and with $|f_9'(\xi)| \geq 1$. Consider the orbit $\{f_{\xi,1}^n(\xi): n \geq 1\}$.
	The elements of this orbit are implemented by trees in $\Tc_{d+1}$ with root degree at most $d-8$ and they  
	accumulate on $R_1(\xi)$. Note that $|f_9'(R_1(\xi))| > 1$. 
	It follows from Lemma~\ref{lem: cantor implementation} that we either obtain the desired 
	density or we obtain a rooted tree with root degree at most $(d-8) + 3 = d-5$ that implements a field in $\Arc{(\lambda_3,\overline{\lambda_3})}$. In this latter case the result follows from 
	applying Lemma~\ref{lem: d-5 after lambda_3} to this tree. \vskip 0.15cm

$\mathbf{k = 10:}$ 	In this case $\xi \in \Arc{[\overline{\lambda_5},\lambda_5]}\setminus\{1\}$ is implemented by a 
	rooted tree in $\Tc_{d+1}$ with root degree at most $d-10$ and with $|f_{10}'(\xi)| \geq 1$.  Then it follows 
	from Lemma~\ref{lem: Lower Bound three maps} that either the orbit of $1$ under the action 
	of the semigroup generated by $f_{\xi,3},f_{\xi,4}$ and $f_{\xi,5}$ is dense in an arc 
	of $\mathbb{S}$, in which case the result follows. Or we can conclude that
	$|f_5'(R_5(\xi))| > \frac{43}{50}$. In that case we consider the orbit
	$\mathcal{R}=\{f_{\xi,5}^n(\xi):n \geq 1\}$. This orbit accumulates on $R_5(\xi)$ and every 
	element is implemented by a rooted tree with root degree at most $d-10+5 = d-5$. If 
	$R_5(\xi) \in \Arc{(\lambda_3,\overline{\lambda_3})}$ then there are also fields
	$\zeta \in \mathcal{R}$ with $\zeta \in \Arc{(\lambda_3,\overline{\lambda_3})}$. In that case 
	we can apply Lemma~\ref{lem: d-5 after lambda_3} to obtain density of the fields.
	Otherwise, if $R_5(\xi) \in \Arc{[\overline{\lambda_3},\lambda_3]}$, then we can 
	find $\zeta_1,\zeta_2 \in \mathcal{R}$ such that
	$\zeta_1,\zeta_2 \in \Arc{[\overline{\lambda_3},\lambda_3]}$ are distinct, lie in the same 
	half-plane and $|f_5'(\zeta_i)| > \frac{43}{50}$
	for $i = 1,2$. It follows that for both fields $\zeta_i$ we have
	\[
		|f_6'(R_3(\zeta_i))|  > |f_6'(\zeta_i)| 
		= \frac{6}{5} \cdot |f_5'(\zeta_i)| 
		> \frac{6}{5} \cdot \frac{43}{50} 
		= \frac{129}{125} > 1.
	\]
	Density of the fields now follows from applying 
	Lemma~\ref{lem: easy even case} to $\zeta_1$ and $\zeta_2$. \vskip 0.15cm
	
$\mathbf{k = 11:}$
	In this case $\xi \in \Arc{[\overline{\lambda_5},\lambda_5]}\setminus\{1\}$ is implemented by a rooted tree in $\Tc_{d+1}$ with root degree at most $d-11$
	and with $|f_{11}'(\xi)| \geq 1$. If $\xi \in \Arc{[\overline{\lambda_6},\lambda_6]}$ 
	the result follows from Lemma~\ref{lem: easy odd case}.
	Otherwise, if $\xi \in \Arc{(\lambda_6,\overline{\lambda_6})}$, we apply
	Lemma~\ref{lem: tree beyond lambda_k} to find a parameter 
	$\sigma\in\mathbb{S}$ with $|f_6'(\sigma)| >1$ together with a sequence
	of fields $\{\zeta_n\}_{n \geq 1}$ accumulating on $\sigma$ such that 
	every $\zeta_n$ is implemented by a rooted tree in $\Tc_{d+1}$ whose root degree is at most $d-11+6 = d-5$. If there is any 
	$\zeta_n \in \Arc{(\lambda_3,\overline{\lambda_3})}$ then density of the fields in the circle
	follows from Lemma~\ref{lem: d-5 after lambda_3}. Otherwise the sequence 
	accumulates on $\sigma$ from inside $\Arc{[\overline{\lambda_3},\lambda_3]}$ 
	and thus we can find $\zeta_{n_1},\zeta_{n_2} \in \Arc{[\overline{\lambda_3},\lambda_3]}$
	that are distinct, lie in the same half-plane and have the property that $|f_6'(\zeta_i)| > 1$
	for $i = 1,2$. The desired density now follows from applying 
	Lemma~\ref{lem: easy even case} to $\zeta_1$ and $\zeta_2$. \vskip 0.15cm

$\mathbf{k = 12:}$
	In this case $\xi \in \Arc{[\overline{\lambda_6},\lambda_6]}\setminus\{1\}$ 
	is implemented by a rooted tree in $\Tc_{d+1}$ with root degree at most $d-12$ and with $|f_{12}'(\xi)| \geq 1$. 
	This case can be done in a very similar way to the $k=9$ case. Consider the orbit
	$\{f_{\xi,1}^n(\xi): n \geq 1\}$. The elements of this orbit are fields of trees in $\Tc_{d+1}$ with root 
	degree at most $d-11$ and they accumulate on $R_1(\xi)$. Note that $|f_{12}'(R_1(\xi))| > 1$. 
	It follows from Lemma~\ref{lem: cantor implementation} that we either obtain the desired 
	density or we obtain a rooted tree with root degree at most $(d-11) + 4 = d-7$ and
	field in $\Arc{(\lambda_4,\overline{\lambda_4})}$. In this latter case the result follows 
	from applying Lemma~\ref{lem: density powers of 2}
	to this tree. \vskip 0.15cm

$\mathbf{k = 13:}$
	In this case $\xi \in \Arc{[\overline{\lambda_6},\lambda_6]}\setminus\{1\}$ is implemented by a rooted tree in $\Tc_{d+1}$ with root degree at most $d-13$
	and with $|f_{13}'(\xi)| \geq 1$.  Then it follows from Lemma~\ref{lem: Lower Bound three maps} 
	that either the orbit of $1$ under the action 
	of the semigroup generated by $f_{\xi,4},f_{\xi,5}$ and $f_{\xi,6}$ is dense in an arc of $\mathbb{S}$, 
	in which case the result follows.
	Or we can conclude that $|f_6'(R_6(\xi))| > \frac{10}{13}$. In that case we consider the 
	orbit $\mathcal{R}=\{f_{\xi,6}^n(\xi):n \geq 1\}$.
	This orbit accumulates on $R_6(\xi)$ and every element is implemented by a rooted tree in $\Tc_{d+1}$ with root 
	degree at most $d-13+6 = d-7$. If 
	$R_6(\xi) \in \Arc{(\lambda_4,\overline{\lambda_4})}$ then there is also a field $\zeta \in \mathcal{R}$ with $\zeta \in 
	\Arc{(\lambda_4,\overline{\lambda_4})}$. In that case we can apply 
	Lemma~\ref{lem: density powers of 2} to obtain density of the fields.
	Otherwise, if $R_6(\xi) \in \Arc{[\overline{\lambda_4},\lambda_4]}$, 
	then we can find $\zeta_1,\zeta_2 \in \mathcal{R}$ such that 
	$\zeta_1,\zeta_2 \in \Arc{[\overline{\lambda_4},\lambda_4]}$ are distinct, 
	lie in the same half-plane and $|f_6'(\zeta_i)| > \frac{10}{13}$
	for $i = 1,2$. It follows that for both fields $\zeta_i$ we have
	\[
		|f_8'(R_4(\zeta_i))|  > |f_8'(\zeta_i)| 
		= \frac{8}{6} \cdot |f_6'(\zeta_i)| 
		> \frac{8}{6} \cdot \frac{10}{13} 
		= \frac{40}{39}
		> 1.
	\]
	Density of the fields now follows from applying Lemma~\ref{lem: easy even case} 
	to $\zeta_1$ and $\zeta_2$. \vskip 0.15cm

$\mathbf{k = 14:}$
	In this case $\xi \in \Arc{[\overline{\lambda_7},\lambda_7]}\setminus\{1\}$ is 
	implemented by a rooted tree in $\Tc_{d+1}$ with root degree at most $d-14$
	and with $|f_{14}'(\xi)| \geq 1$.  Then it follows from Lemma~\ref{lem: Lower Bound three maps} 
	that either the orbit of $1$ under the action 
	of the semigroup generated by $f_{\xi,5},f_{\xi,6}$ and $f_{\xi,7}$ 
	is dense in an arc of $\mathbb{S}$, in which case the result follows.
	Or we can conclude that $|f_7'(R_7(\xi))| > \frac{89}{98}$. In that case 
	we consider the orbit $\mathcal{R}=\{f_{\xi,7}^n(\xi):n \geq 1\}$.
	This orbit accumulates on $R_7(\xi)$ and every element is implemented by a rooted tree in $\Tc_{d+1}$ with root degree at most $d-14+7 = d-7$. If 
	$R_7(\xi) \in \Arc{(\lambda_4,\overline{\lambda_4})}$ then there is also a field $\zeta \in \mathcal{R}$ with $\zeta \in \Arc{(\lambda_4,\overline{\lambda_4})}$. 
	In that case we can apply Lemma~\ref{lem: density powers of 2} to 
	obtain density of the fields. Otherwise, if $R_7(\xi) \in \Arc{[\overline{\lambda_4},\lambda_4]}$, 
	then we can find $\zeta_1,\zeta_2 \in \mathcal{R}$ such that 
	$\zeta_1,\zeta_2 \in \Arc{[\overline{\lambda_4},\lambda_4]}$ are distinct, lie in 
	the same half-plane and $|f_7'(\zeta_i)| > \frac{89}{98}$
	for $i = 1,2$. It follows that for both fields $\zeta_i$ we have
	\[
		|f_8'(R_4(\zeta_i))|  
		> |f_8'(\zeta_i)| 
		= \frac{8}{7} \cdot |f_7'(\zeta_i)| 
		> \frac{8}{7} \cdot \frac{89}{98} 
		= \frac{356}{343}
		> 1.
	\]
	Density of the fields now follows from applying Lemma~\ref{lem: easy even case} 
	to $\zeta_1$ and $\zeta_2$.\vskip 0.15cm

$\mathbf{k = 17:}$
	In this case $\xi \in \Arc{[\overline{\lambda_8},\lambda_8]}\setminus\{1\}$ is 
	implemented by a rooted tree in $\Tc_{d+1}$ with root degree at most $d-17$
	and with $|f_{17}'(\xi)| \geq 1$. If $\xi \in \Arc{[\overline{\lambda_9},\lambda_9]}$ 
	the result follows from Lemma~\ref{lem: easy odd case},
	therefore we assume that $\xi \in \Arc{(\lambda_9, \overline{\lambda_9})}$.
	We apply
	Lemma~\ref{lem: tree beyond lambda_k} to find a parameter 
	$\sigma\in\mathbb{S}$ with $|f_9'(\sigma)| >1$ together with a sequence
	of fields $\{\zeta_n\}_{n \geq 1}$ accumulating on $\sigma$ such that 
	every $\zeta_n$ is implemented by a rooted tree in $\Tc_{d+1}$ whose root degree is at most $d-17+9= d-8$.
	It follows from Lemma~\ref{lem: cantor implementation} that we either obtain the 
	required density of fields or there is a tree in $\Tc_{d+1}$ whose root degree is bounded by $d-5$ 
	with field inside $\Arc{(\lambda_3, \overline{\lambda_3})}$. In the latter case
	the result follows from Lemma~\ref{lem: d-5 after lambda_3}.\vskip 0.15cm
	
Finally we complete the proof for $k > 17$. In that case write $k = 2m$ if $k$ is even 
and $k = 2m+1$ if $k$ is odd. Note that $m \geq 9$. We are then given that
$\xi \in \Arc{[\overline{\lambda_m},\lambda_m]}\setminus\{1\}$ is implemented by a rooted tree in $\Tc_{d+1}$
with root degree at most $d-k$ and with $|f_{k}'(\xi)| \geq 1$. It follows from 
Corollary~\ref{cor: remaining cases} that the orbit of $1$ under the action of the semigroup 
generated by $f_{\xi,m-3},f_{\xi,m-2},f_{\xi,m-1}$ and $f_{\xi,m}$ is dense in an arc 
of $\mathbb{S}$ from which our desired conclusion follows. This finishes the proof of Lemma~\ref{lem: Many Cases Lemma}.
\end{proof}

\section{Fast implementation of fields}\label{sec:fast}
In this section, we bootstrap Theorem~\ref{thm: Density of Ratios} to obtain fast algorithms for implementing fields which will be important in our reductions. For a number $\alpha=p/q\in \mathbb{Q}$ with $\mathrm{gcd}(p,q)=1$, we use $\emph{size}(\alpha)$ to denote the total number of bits needed to represent $p,q$,  and we extend this to numbers in $\CQ$ by adding the sizes of the real and imaginary parts. For $\alpha_1,\hdots,\alpha_t\in \CQ$, we denote by $\size{\alpha_1,\hdots,\alpha_t}$ the total of the sizes of $\alpha_1,\hdots,\alpha_t$.

\newcommand{\statelemmainlemma}{Fix an integer $\Delta\geq 3$, a rational number $b\in (0,1)$ and $\lambda\in \SQ(\Delta-1,b)$. Then, there is an algorithm, which on input $\hat{\lambda}\in \SQ$ and rational $\epsilon>0$, returns in time $poly(\size{\hat{\lambda},\epsilon}))$ a rooted tree $T$ in $\Tc_\Delta$ with root degree $1$ that implements a field $\lambda'$ such that $|\lambda'-\hat{\lambda}|\leq \epsilon$.}

\begin{lemma}\label{lem:mainlemma}
\statelemmainlemma
\end{lemma}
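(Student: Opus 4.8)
The idea is to convert the qualitative density statement of Theorem~\ref{thm: Density of Ratios} into a quantitative, algorithmic one, using the contraction properties of the Möbius map $f_{\lambda,1}$ to drive a binary-search-style refinement. First I would observe that $\lambda\in\SQ(\Delta-1,b)$ means the set $\mathcal R$ of fields implemented by trees in $\Tc_\Delta$ with root degree $1$ is dense in $\mathbb S$; by the final paragraph of the proof of Theorem~\ref{thm: Density of Ratios}, this set contains $f_{\lambda,1}(\mathcal R')$ where $\mathcal R'$ is the set of fields implemented by trees in $\Tc_\Delta$ (with root degree up to $d=\Delta-1$), which is itself dense in $\mathbb S$. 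I would fix, once and for all (depending only on $\Delta,b,\lambda$, hence contributing only a constant to the running time), a finite collection of ``seed'' trees $T_1,\dots,T_s\in\Tc_\Delta$ whose root-degree-$1$ fields $\mu_1,\dots,\mu_s$ form an $\eta$-net of $\mathbb S$ for some small absolute $\eta>0$, together with enough structure to contract: concretely, since $f_{\lambda,1}$ maps $\mathbb S$ to $\mathbb S$ and is either an elliptic (irrational-rotation) or loxodromic/parabolic Möbius map on the circle, and since attaching an edge to a root realises $\zeta\mapsto f_{\lambda,1}(\zeta)$ (as in the last paragraph of the proof of Theorem~\ref{thm: Density of Ratios}), I want seed trees realising a family of maps $g_1,\dots,g_r\colon \mathbb S\to\mathbb S$, each of the form $\zeta\mapsto f_{\mu,1}(\zeta)$ for suitable implementable fields $\mu$, that are simultaneously \emph{contracting} (derivative modulus bounded by some $c<1$) on an arc and whose images \emph{cover} that arc; this is exactly the contracting/covering setup of Corollary~\ref{cor: Density in Arc}, and the quantitative version of that corollary's proof gives geometric shrinkage.

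The algorithm then proceeds as follows: on input $\hat\lambda$ and $\epsilon$, start from some implementable field $\zeta_0$ (e.g.\ the field $\lambda$ of the single-vertex tree, or one of the seeds), and repeatedly apply the rule of Corollary~\ref{cor: Density in Arc}: at each step we have a tree $T^{(n)}\in\Tc_\Delta$ implementing a field $\zeta_n$, and we choose one of the finitely many seed-map compositions so that applying it (i.e.\ grafting the corresponding seed trees, which by Lemma~\ref{lem: Tree Building} keeps us in $\Tc_\Delta$ and keeps the root degree at $1$) moves the current target arc to one of at most half (or rather a factor $c$) the length, centred closer to $\hat\lambda$. Because each seed map is $c$-contracting on the relevant arc and the images cover it, after $O(\log(1/\epsilon))$ steps the implemented field lies within $\epsilon$ of $\hat\lambda$; each step appends a bounded number of vertices, so the final tree $T$ has $O(\log(1/\epsilon))$ vertices and size $\mathrm{poly}(\size{\hat\lambda,\epsilon})$, and it is produced in polynomial time. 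One must be a little careful about root degrees along the way — the intermediate seed graftings may transiently raise the root degree above $1$ — so the clean way is to keep an ``internal'' tree $T'\in\Tc_{d+1}$ whose field we refine to within $\epsilon$ of $f_{\lambda,1}^{-1}(\hat\lambda)$ (using the density of $\mathcal R'$ and a covering family with root degree $\le d-1$, which exists by inspecting the proofs of Lemmas~\ref{lem: easy odd case}--\ref{lem: easy even case} and \ref{lem: cantor implementation} that underlie Theorem~\ref{thm: Density of Ratios}), and only at the very end attach a single edge to realise $\zeta\mapsto f_{\lambda,1}(\zeta)$, giving root degree exactly $1$; since $f_{\lambda,1}$ is Lipschitz on $\mathbb S$ this only costs a constant factor in $\epsilon$.

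The main obstacle is the quantitative upgrade of the covering argument: Theorem~\ref{thm: Density of Ratios} and its supporting lemmas are proved by a case analysis that in several branches only asserts \emph{existence} of covering/contracting families without a uniform contraction constant or an explicit bound on the number and size of the trees involved. To make the algorithm run in time $\mathrm{poly}(\size{\hat\lambda,\epsilon})$ I need: (i) a single absolute $c<1$ bounding the derivative moduli of all maps used, on a fixed arc $A$; (ii) the covering images to be arcs sharing endpoints with sub-arcs of $A$, so that the inverse-branch refinement of Corollary~\ref{cor: Density in Arc} applies verbatim with geometric rate; and (iii) that the map $f_{\lambda,d}$ (or iterated $f_{\lambda,1}$) used to ``spread'' density from an arc $A$ to all of $\mathbb S$ does so in a bounded number of iterations, which follows from $|f_d'(z)|>1$ being bounded away from $1$ outside a neighbourhood of $1$ together with a separate short argument near $z=1$. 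Establishing (i)–(iii) amounts to re-examining the constructions behind Theorem~\ref{thm: Density of Ratios} and extracting explicit constants; once that bookkeeping is done, the rest is a routine binary-search/interpolation argument. I would also remark that the dependence of all these constants on $\Delta,b,\lambda$ is irrelevant since those are fixed, so the polynomial in the running time is in $\size{\hat\lambda,\epsilon}$ only.
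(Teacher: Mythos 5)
Your overall architecture matches the paper's: use Theorem~\ref{thm: Density of Ratios} to seed a contracting/covering family of degree-one maps, spread an arc over all of $\mathbb{S}$ with the expanding map $f_{\lambda,d}$, binary-search down to precision $\epsilon$, and fix the root degree at the end by running the algorithm on $f_{\lambda,1}^{-1}(\hat\lambda)$ and attaching one edge. However, the step you flag as ``the main obstacle'' --- producing a family of implementable maps with a \emph{single explicit} contraction constant $c<1$ on a fixed arc whose images cover that arc --- is precisely the content of the lemma, and your proposed route for it does not work. You suggest extracting uniform constants by re-examining Lemmas~\ref{lem: easy odd case}, \ref{lem: easy even case} and \ref{lem: cantor implementation}; but those arguments (and Lemma~\ref{lem: tree beyond lambda_k} feeding into them) are intrinsically non-constructive: they rely on accumulation points of orbits and on the hyperbolicity of $f$ via an unspecified conformal metric, so they certify \emph{existence} of suitable trees without any bound on their size or any procedure to find them. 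A ``finite $\eta$-net of seed trees'' obtained this way is fixed, but you still have no quantitative control on where their fields sit relative to the covering structure, and no guarantee that the particular maps $\zeta\mapsto f_{\mu,1}(\zeta)$ you can implement are simultaneously contracting and covering on a common arc.

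The missing idea is where to place the seeds. The paper uses Theorem~\ref{thm: Density of Ratios} purely as a black box to find two fields $\xi_1,\xi_2$ in a thin arc $\Arc(\tilde\lambda,\lambda_1)$ just inside the parabolic parameter $\lambda_1$ of the $k=1$ map. There, by continuity of $\xi\mapsto |f_1'(R_1(\xi))|$ and the fact that the multiplier equals $1$ at $\lambda_1$, one gets $|f_1'(R_1(\xi_i))|\in(\tfrac12,1)$; combined with the explicit derivative formula \eqref{eq: Derivative} this forces $f_{\xi_1,1}$ and $f_{\xi_2,1}$ to contract the arc $I=\Arc(R_1(\xi_1),R_1(\xi_2))$ by a factor strictly between $\tfrac12$ and $1$, so their images automatically cover $I$ (each image shares one endpoint of $I$ and has length more than half of $\ell(I)$). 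All constants are then computable from $b$ and the two seed fields, with no reference to the internal case analysis. A second point you gloss over: it is not enough that the composed contraction maps some arc into the target $J$; you must exhibit an \emph{implementable} point landing in $J$. The paper does this by tracking the orbit of $\xi_1$ under $f_{\xi_1,1}$ (which converges geometrically to $R_1(\xi_1)\in K$) and, if needed, a bounded prefix of $f_{\xi_2,1}$-iterates, which is why the final parameter has the specific form \eqref{eq: final par}. Without the parabolic-point placement of the seeds and this explicit landing argument, your plan does not yet constitute a proof.
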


\begin{proof}[Proof of Lemma~\ref{lem:mainlemma}]
Let $d=\Delta-1$.  We start by setting up some parameters that will be useful.

Let $\lambda_1$ be as in Lemma~\ref{lem: properties f}.  As $\tilde{\lambda}$ approaches $\lambda_1$ from inside $\Arc{(1,\lambda_1)}$ we know
that $R_1(\tilde{\lambda})$ approaches $R_1(\lambda_1)$. Since $|f_1'(R_1(\lambda_1))| = 1$
there must be $\tilde{\lambda} \in \Arc{(1,\lambda_1)}$ such that $|f_1'(R_1(\xi))| \in (\frac{1}{2},1)$ for all $\xi \in \Arc{(\tilde{\lambda},\lambda_1)}$. 
By Theorem~\ref{thm: Density of Ratios}, there exist trees $T_1,T_2$ in $\Tc_{d+1}$ with root degree $1$ and fields $\xi_1,\xi_2 \in \Arc{(\tilde{\lambda},\lambda_1)} \cap \SQ$ with $\Arg(\xi_1) < \Arg(\xi_2)$.  Because the map 
$\xi \mapsto R_1(\xi)$ is orientation preserving with nonzero derivative
we have $\Arg(R_1(\xi_1)) < \Arg(R_1(\xi_2))$.
For $i\in \{1,2\}$, the fixed point $R_1(\xi_i)$  is a solution to the quadratic equation $\xi_i (z+b) = z (b z + 1)$, and hence we can approximate it with any desired rational precision $\tau>0$ in time $poly(\size{\tau})$.

Let $I = \Arc(R_1(\xi_1),R_1(\xi_2))$ and note that this arc is contained in the upper half-plane. 
We will show that the arc $I$ gets mapped onto $\mathbb{S}$ in a fixed number of applications of $f_{\lambda,d}$.
The idea of the algorithm is then to find a small enough neighborhood of a point in $I$ that gets mapped
close to the field that we are trying to (approximately) implement. Then we use that we are able to quickly and accurately 
approach any value inside $I$ using $f_{\xi_1,1}$ and $f_{\xi_2,1}$. This algorithm is very similar 
to the proof of Lemma~\ref{lem: Density in Arc}. 

We now show that $I$ gets mapped onto $\mathbb{S}$ in a fixed number of applications of $f_{\lambda,d}$.
We first consider the case that $b\in (0,\frac{d-1}{d+1}]$.
Let $C_1 = \lvert f_{\lambda,d}'(1) \rvert = d\frac{1-b}{1+b}$ and let
$C_2 = \lvert f_{\lambda,d}'(-1) \rvert = d\frac{1+b}{1-b}$. Note that $C_1$ and $C_2$ are both greater than 
one and that for any $z \in \mathbb{S}$ the inequality $C_1 \leq \lvert f_{\lambda,d}'(z) \rvert \leq C_2$
holds (cf. item~(\ref{enum: lem1}) of Lemma~\ref{lem: properties f}). This means that for any circular arc $J$ and integer $n$ we get
\begin{equation}
	\label{eq: Length Bound}
	C_1^n \cdot \ell(J) \leq \ell(f_{\lambda,d}^n(J)) \leq C_2^n \cdot \ell(J).
\end{equation}
From this, we deduce that $f_{\lambda,d}^N(I)=\mathbb{S}$, where $N = \big\lceil \tfrac{\log(2\pi/\ell(I))}{\log(C_1)} \big\rceil$.

Next, in case $b\in (\frac{d-1}{d+1},1)$, we recall the conformal metric $\mu$ from the proof of Lemma~\ref{lem: expanding orbit}. 
Let us denote the length of a circular arc $J$ with respect to this metric by $\text{length}(J)$ and denote $c=\text{length}(\mathbb{S})$.
Since there exists a constant $\kappa>1$ such that $f_{d,\lambda}$ is uniformly expanding on $\mathbb{S}$ with a factor $\kappa$ with respect to this metric, it follows that $f_{\lambda,d}^N(I)=\mathbb{S}$, where $N = \big\lceil \tfrac{\log(c/\text{length}(I))}{\log(\kappa)} \big\rceil$.
Note that the right-hand side of \eqref{eq: Length Bound} is also valid for $b\in (\frac{d-1}{d+1},1)$ (with $C_2$ defined in the same way).

Let  $ x_0, \dots, x_m $  be points  such that the clockwise arcs between $x_{i-1}$ and 
$x_i$ form a partition of $I$ with $x_0=R_1(\xi_1)$,  $x_m = R_1(\xi_2)$ and chosen so that $x_1,\hdots, x_{m-1}\in \SQ$ and  the length of an 
arc between two subsequent points is less than $2\pi/C_2^N$. In this way we ensure that these arcs
are not mapped onto the whole circle by $N$ applications of $f_{\lambda,d}$ and thus each arc 
is bijectively mapped to an arc on the unit circle by $f_{\lambda,d}^N$.

We now  describe an algorithm that, on input $\hat{\lambda} \in \SQ$ and rational $\epsilon >0$,
yields in $poly(\size{\hat{\lambda},\epsilon})$ a rooted tree $\hat{T}$ in $\Tc_{d+1}$ with $\mathcal{O}(\log(\epsilon^{-1}))$ vertices whose field has 
distance at most $\epsilon$ from  $\hat{\lambda}$; we will account later for the degree  of the root.  We assume for convenience that $\epsilon \ll \ell(I)$.

The first step of the algorithm is to find $i \in \{1, \dots, m\}$ such that 
$\hat{\lambda} \in \Arc{[f_{\lambda,d}^N(x_{i-1}),f_{\lambda,d}^N(x_{i})]}$.  We know that such an arc must exist 
because $I$ is mapped surjectively onto $\mathbb{S}$ by $f_{\lambda,d}^N$ and, since  $f_{\lambda,d}^N(z)$ is a rational function of $z$ with fixed degree, we can find $i$ in time $poly(\size{\hat{\lambda}})$.  Now we consider the bijective map
\[
	f_{\lambda,d}^N: \Arc{[x_{i-1},x_i]} \to \Arc{[f_{\lambda,d}^N(x_{i-1}),f_{\lambda,d}^N(x_{i})]}.
\] 
Analogously, with $n = \lceil \log_{3/2}(\ell{(\Arc{[x_{i-1},x_i]})} \cdot C_2^N / \epsilon) \rceil$ applications 
of $f_{\lambda,d}^N$, we can determine using binary search  in time $poly(\size{\hat{\lambda},\epsilon})$  an arc $J \subseteq \Arc{[x_{i-1},x_i]}$ with endpoints in $\SQ$ such that $\hat{\lambda} \in f_{\lambda,d}^N(J)$ and whose length satisfies
\[
	3^{-n} \cdot \ell(\Arc{[x_{i-1},x_i]})\leq \ell(J) \leq (2/3)^{n} \cdot \ell(\Arc{[x_{i-1},x_i]}) \leq \epsilon/ C_2^N
\]
 Note that the length of $J$ is bounded below 
by $C_3 \cdot \epsilon^5$, where $C_3$ is a constant independent of $\hat{\lambda}$ or~$\epsilon$. It follows from
(\ref{eq: Length Bound}) that $\ell{(f_{\lambda,d}^N(J))} \leq \epsilon$, which means 
that the arc $J$ is mapped by $f_{\lambda,d}^N$ to an arc of length at most $\epsilon$, that includes $\hat{\lambda}$.  We will next show how to construct in $poly(\size{\hat{\lambda},\epsilon})$ a rooted tree $T$ in $\Tc_{d+1}$ with $s=\mathcal{O}(\log(\epsilon^{-1}))$ vertices that implements a field $w\in J$. Then, using Lemma~\ref{lem: Tree Building},\footnote{\label{foot:size}Lemma~\ref{lem: Tree Building} describes how to construct a tree of size $s \cdot d +1$
with field $f_{\lambda,d}(z)$ from a tree of size $s$ and field $z$. Repeating this construction $N$ times yields the construction 
of $\hat{T}$ from $T$.}   we obtain a
rooted tree $\hat{T}$ with $ (d^N-1)/(d-1) +d^N  s $ vertices that implements the field $\lambda'=f_{\lambda,d}^N(w)$ with $|\lambda'-\hat{\lambda}|\leq \epsilon$.

To construct $T$, we first fix some constants. Let $C_4 = \lvert f_{1}'(R_1(\xi_1)) \rvert$ and $C_5 = \lvert f_{1}'(R_1(\xi_2)) \rvert$ and 
note that $C_4, C_5 \in (\frac{1}{2},1)$. We also have
$C_4 \leq |f_{1}'(z)| \leq C_5$ for all $z \in I$. It follows that
$f_{\xi_2,1}(I) = \Arc{[f_{\xi_2,1}(R_1(\xi_1)),R_1(\xi_2)]}$ is contained 
in $I$ and its length is strictly bigger than $\ell(I)/2$. Furthermore it follows that 
$f_{\xi_1,1}^{-1}(\Arc{[R_1(\xi_1),f_{\xi_2,1}(R_1(\xi_1))]}) = \Arc{[R_1(\xi_1),f_{\xi_1}^{-1}(f_{\xi_2}(R_1(\xi_1)))]}$ is strictly 
contained inside $I$. Let $J_0 = J$ and for $k\geq 0$, as long as $f_{\xi_2,1}(R_1(\xi_1)) \not \in J_k$, define
\[
	J_{k+1}
	=
	\begin{cases} 
      f^{-1}_{\xi_1,1}(J_k) & \text{if }J_k \subset \Arc{[R_1(\xi_1),f_{\xi_2,1}(R_1(\xi_1))]}\\
      f^{-1}_{\xi_2,1}(J_k) & \text{if }J_k \subset \Arc{[f_{\xi_2,1}(R_1(\xi_1))),R_1(\xi_2)]}.
   \end{cases}
\]
We have that $J_k \subseteq I$ for every $k$ and
$\ell(J_k) \geq C_5^{-k} \cdot \ell{(J_0)} \geq C_3 \cdot C_5^{-k} \cdot \epsilon^5$.
Because 
$C_5 < 1$, we deduce that there is $N_1 \geq 0$ such that $f_{\xi_2,1}(R_1(\xi_1)) \in J_{N_1}$ where $N_1$ is bounded above by
\[
	\Big \lceil \tfrac{\log(C_3 \cdot \epsilon^5 / \ell(I))}{\log(C_5)} \Big \rceil = \mathcal{O}(\log(\epsilon^{-1})).
\]
Let $i_1, \dots, i_{N_1}$ be the sequence of indices such that $f_{\xi_{i_k}}(J_{k}) = J_{k-1}$ and note that these can be computed in $poly(\size{\hat{\lambda},\epsilon})$ time.
Let $K = f_{\xi_2,1}^{-1}(J_{N_1})$. We see that $R_1(\xi_1) \in K$ and
\[
	\big(f_{\xi_{i_1},1}\circ\cdots\circ f_{\xi_{i_{N_1}},1} \circ f_{\xi_2,1}\big)(K) = J.
\]
Furthermore, because the maps $f_{\xi_i,1}^{-1}$ are expanding on $I$, we find $\ell(K) 
\geq \ell(J) \geq C_3 \cdot \epsilon^5$. This means that there is an arc of length at least 
$\frac{1}{2} \cdot C_3\cdot \epsilon^5$ extending from $R_1(\xi_1)$, going either clockwise or 
counterclockwise, contained in $K$. In the case that such a clockwise arc exists, i.e. 
$\Arc[R_1(\xi_1)\cdot e^{-i \frac{1}{2}C_3 \epsilon^5},R_1(\xi_1)]\subseteq K$, we 
see that, because $R_1$ is an attracting fixed point of $f_{\xi_1}$, there is some $N_2$, specified below, 
such that $f_{\xi_1,1}^{N_2}(\xi_1) \in K$. Using that for integers $n$ we have
\[
	\ell(\Arc{[f_{\xi_1,1}^{n}(\xi_1),R_1(\xi_1)]}) 
	= 
	\ell(f_{\xi_1,1}^{n}(\Arc{[\xi_1,R_1(\xi_1)]}))
	\leq
	C_4^n \cdot \ell(\Arc{[\xi_1,R_1(\xi_1)])}
	< C_4^n \cdot 2 \pi,
\]
we see that it suffices to take $N_2 = \big\lceil \tfrac{\log(C_3 \cdot \epsilon^5/(4 \pi))}{\log(C_4)}\big \rceil 
	= \mathcal{O}(\log(\epsilon^{-1}))$. In the case that such a clockwise arc does not exist, we find that a counterclockwise arc of length
$\frac{1}{2} \cdot C_3\cdot \epsilon^5$ is contained in $K$. Note that there is some integer $N_c$
independent of $\hat{\lambda}$ and $\epsilon$ such that $f_{\xi_2,1}^{N_c}(\xi_1) \in I$. The same analysis 
as above shows that then $(f_{\xi_1,1}^{N_2} \circ f_{\xi_2,1}^{N_c})(\xi_1) \in K$. We 
let $N_3$ be equal to zero if a clockwise arc of sufficient length is contained in $K$ and 
otherwise we let $N_3 = N_c$. 
We conclude that 
\begin{equation}
	\label{eq: final par}
	\big(f_{\lambda,d}^N \circ f_{\xi_{i_1},1}\circ\cdots\circ f_{\xi_{i_{N_1}},1} \circ f_{\xi_2,1} 
			\circ f_{\xi_1,1}^{N_2} \circ f_{\xi_2,1}^{N_3}\big)(\xi_1)
\end{equation}
has a distance at most $\epsilon$ from $\hat{\lambda}$. By repeatedly applying the constructions laid out in Lemma~\ref{lem: Tree Building} (cf. Footnote~\ref{foot:size}), we conclude that we can construct a tree $T$ in $\Tc_{d+1}$ whose field is given by the value in (\ref{eq: final par}) and with   $\mathcal{O}(\log(\epsilon^{-1}))$ vertices. 

This finishes the description of the algorithm, modulo that the root of the tree we constructed has degree  $d$. To obtain a rooted tree with root degree 1, we run
the algorithm described on input $f_{\lambda,1}^{-1}(\hat{\lambda})$ and $\epsilon \cdot \frac{d-1}{d+1}$ to obtain a rooted tree with 
root degree $d$ and field $\zeta$ with $|f_{\lambda,1}^{-1}(\hat{\lambda})- \zeta| < \epsilon \cdot \frac{d-1}{d+1}$.
Attaching one new vertex by an edge to this root yields a rooted tree with root degree $1$ and field $f_{\lambda,1}(\zeta)$ which satisfies, using Item~\ref{enum: lem1} of Lemma~\ref{lem: properties f}, that
\[|\hat{\lambda} - f_{\lambda,1}(\zeta)| \leq |f_{\lambda,1}^{-1}(\hat \lambda)- \zeta| \cdot \max_{z \in S} |f_{\lambda,1}'(z)|
< \epsilon \cdot \frac{d-1}{d+1} \cdot \frac{d+1}{d-1} = \epsilon,
\]
as wanted. This finishes the proof of Lemma~\ref{lem:mainlemma}.
\end{proof}

\section{Reduction}\label{sec:reduction}
In this section, we prove our inapproximability results.  Throughout this section, we use $\Gc_\Delta$ to denote the set of all graphs  with maximum degree at most $\Delta$. We start in Section~\ref{sec:prelims} with some preliminaries that will be used in our proofs, Section~\ref{sec:reductionstep} gives the main reduction, and we show how to use this in Section~\ref{sec:proofmain} to conclude the proof of Theorem~\ref{thm:main2}.
\subsection{Preliminaries}\label{sec:prelims}

We will  use the following lemma from \cite{PetersRegts2018}.
\begin{lemma}[\mbox{\cite{PetersRegts2018}}]\label{lem:zerofree}
Let $\Delta\geq 3$ be an integer and let $\lambda\in \SQ$ with $\lambda\neq -1$. Then, there exists $\eta=\eta(\Delta,\lambda)>1$ such that, for all $b\in (1/\eta,\eta)$, for all graphs $G\in \Gc_\Delta$, it holds that $Z_G(\lambda,b)\neq 0$.
\end{lemma}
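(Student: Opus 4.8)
The statement is a high-temperature zero-freeness result: at $b=1$ we have $Z_G(\lambda,1)=(1+\lambda)^{|V(G)|}\neq 0$, and the claim is that this survives, uniformly over $\Gc_\Delta$, for $b$ in a neighbourhood of $1$. The plan is to control the field ratios $R_{G,v}:=Z_{G,\pl v}(\lambda,b)/Z_{G,\mi v}(\lambda,b)$, reduce from general graphs to trees via the self-avoiding walk tree construction (cf.\ \cite{SST,LLY,PetersRegts2018}), and then run an invariant-region induction on the tree recursion.

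First I would reduce to a statement about ratios. Since $Z_G=Z_{G,\pl v}+Z_{G,\mi v}=Z_{G,\mi v}\,(1+R_{G,v})$ whenever $Z_{G,\mi v}\neq 0$, it suffices to find a closed disc $\mathcal U\subseteq\mathbb C$ centred at $\lambda$ with $0,-1\notin\mathcal U$, and an $\eta>1$, such that for all real $b$ with $1/\eta<b<\eta$, all $G\in\Gc_\Delta$ and all vertices $v$: $Z_{G,\mi v}(\lambda,b)\neq 0$ and $R_{G,v}(\lambda,b)\in\mathcal U$; indeed then $1+R_{G,v}\neq 0$, so $Z_G\neq 0$. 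By the self-avoiding walk tree construction, applicable to any $2$-spin system and in particular to the ferromagnetic Ising model, for each such $G$ and $v$ there is a rooted tree $T$ of maximum degree at most $\Delta$ with root $r$, in which every edge carries weight $b$, every internal vertex carries field $\lambda$, and every leaf is either free (field $\lambda$) or pinned to $\pl$ or $\mi$, such that $[Z_{G,\pl v}:Z_{G,\mi v}]=[Z_{T,\pl r}:Z_{T,\mi r}]$ in $\CC$. Thus it suffices to show that every such tree $T$ has $Z_{T,\mi r}\neq 0$ and $R_{T,r}\in\mathcal U$.

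For the tree statement I would induct on depth. By the computation in Lemma~\ref{lem: Tree Building}, a vertex with children subtrees $T_1,\dots,T_k$ ($k\le\Delta$) of ratios $z_1,\dots,z_k$ has ratio $R=\lambda\prod_{i=1}^k g_b(z_i)$, where $g_b(z)=\tfrac{z+b}{bz+1}$ (so $f_{\lambda,k}$ of \eqref{eq:flambdam} is the case $z_1=\dots=z_k$). At $b=1$ one has $g_1\equiv 1$ and $g_1(0)=g_1(\infty)=1$, so the recursion collapses to the constant map $R\equiv\lambda$; this is what makes the whole argument work. Fix $\mathcal U=\overline{D}(\lambda,\delta)$ with $0<\delta<\tfrac12\min(1,|1+\lambda|)$, so that $0,-1\notin\mathcal U$. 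By continuity and compactness, once $\eta$ is close enough to $1$, for every $b$ with $1/\eta<b<\eta$ the pole $-1/b$ of $g_b$ lies outside $\mathcal U$ and $\sup_{z\in\mathcal U\cup\{0,\infty\}}|g_b(z)-1|$ is small enough that $|\lambda\prod_{i=1}^{j}g_b(z_i)-\lambda|<\delta$ for every $j\le\Delta$ and every $z_1,\dots,z_j\in\mathcal U\cup\{0,\infty\}$. The base case holds because a free leaf has ratio $\lambda\in\mathcal U$ and a pinned leaf has ratio in $\{0,\infty\}$, which enters its parent's recursion only through $g_b$; the inductive step then gives $R_{T,r}\in\mathcal U$. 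Non-vanishing of $Z_{T,\mi r}$ follows at the same time: $Z_{T,\mi r}=\prod_{i=1}^k\big(b\,Z_{T_i,\pl r_i}+Z_{T_i,\mi r_i}\big)$, and each factor is nonzero since $Z_{T_i,\mi r_i}\neq 0$ and $b\,R_{T_i}+1\neq 0$ (as $-1/b\notin\mathcal U$) by induction, with the two pinned cases checked directly. Combining with the reduction, $[Z_{G,\pl v}:Z_{G,\mi v}]$ lies in $\mathcal U\subseteq\mathbb C$, so $Z_{G,\mi v}\neq 0$ and $R_{G,v}\in\mathcal U$, whence $Z_G(\lambda,b)\neq 0$.

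The step I expect to be the main obstacle is the reduction from graphs to trees: one must be sure that it certifies the non-vanishing of $Z_{G,\mi v}$ and not merely the value of the ratio, and one must accommodate pinned leaves (with ratios $0$ or $\infty$, outside $\mathcal U$). Both are handled cleanly by working projectively on $\CC$ — the tree recursion is then everywhere defined, the self-avoiding walk tree identity holds for the projective point without any non-vanishing hypothesis, and the invariant closed disc around $[\lambda:1]$ is chosen to miss $[1:0]$, which is exactly $Z_{G,\mi v}\neq 0$ — and by carrying "$Z_{T,\mi r}\neq 0$" inside the inductive statement on trees, noting that a pinned leaf only ever feeds the recursion through $g_b$, where $g_b(0)=b$ and $g_b(\infty)=1/b$ are both near $1$. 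The remaining estimates are routine continuity and compactness, with the parameters chosen in the order $\Delta$, then $\delta$ (depending on $\lambda$), then $\eta$ (depending on $\Delta,\delta,\lambda$).
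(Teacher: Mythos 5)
The paper does not prove this lemma at all: it is imported wholesale from \cite{PetersRegts2018}, so there is no internal proof to compare against. Your proposal is a self-contained proof, and its core is sound: it is essentially the Peters--Regts strategy (SAW-tree reduction plus an invariant region for the ratios $R_{G,v}$) specialised to the easy perturbative regime $b\approx 1$, where $g_b(z)=\tfrac{z+b}{bz+1}$ converges to the constant $1$ uniformly on any compact set avoiding $-1$, so a closed disc $\overline{D}(\lambda,\delta)$ with $\delta<\tfrac12\min(1,|1+\lambda|)$ is invariant under the recursion once $\eta$ is close enough to $1$. This buys a short, elementary argument that works for any $\lambda\neq -1$ (not just $\lambda$ on the unit circle) and on both sides of $b=1$, at the price of giving no quantitative control on $\eta$; the cited result in \cite{PetersRegts2018} comes from a finer analysis of the M\"obius dynamics that also yields the explicit arc $I(\theta_b)$ used elsewhere in the paper.

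The one step you should tighten is the passage from trees back to graphs. You assert that ``the self-avoiding walk tree identity holds for the projective point without any non-vanishing hypothesis,'' but if $Z_{G,\pl v}=Z_{G,\mi v}=0$ then $[Z_{G,\pl v}:Z_{G,\mi v}]$ is not a point of $\CC$ at all, so the projective identity cannot by itself rule out this degenerate case; working in $\CC$ fixes the tree recursion (pinned leaves, poles of $g_b$) but not the graph side. The standard repair is to run the induction on the number of \emph{unpinned} vertices of $G$: for every partially pinned $(G,\tau)$ with $G\in\Gc_\Delta$ one proves simultaneously that $Z_{G,\tau}\neq 0$ and that the ratio at any free vertex lies in $\mathcal U$. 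The base case (all vertices pinned) is a nonzero monomial; in the inductive step $Z_{G,\tau}=Z_{G,\tau\cup\{v\mapsto\mi\}}\,(1+R_{G,\tau,v})$, the first factor is nonzero by induction (one fewer free vertex), hence the ratio is a genuine element of $\CC$, the Weitz identity applies, and your tree analysis puts it in $\mathcal U$, so $1+R\neq 0$. This is exactly why the SAW tree must be set up with pinned boundary conditions, which your invariant-region argument already accommodates; with this rearrangement the proof is complete.
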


For a graph $G$ and vertices $u,v$ in $G$, let $Z_{G,\plm u,\plm v}(\lambda,b)$ denote the contribution to the partition function when $u,v$ are assigned the spins $\plm$, respectively. For a configuration $\sigma$ on $G$, we use $w_{G,\sigma}(\lambda,b)$ to denote the weight $\lambda^{|n_\pl(\sigma)|} b^{\delta(\sigma)}$ of $\sigma$. We will use the following observation.
\begin{lemma}\label{lem:conjugation}
Let $\lambda \in \mathbb{S}$ and $b\in \mathbb{R}$. Then, for an arbitrary graph $G=(V_G,E_G)$  and vertices $u,v$ of $G$ it holds that
\[Z_{G,\pl u,\pl v}(\lambda,b)=\lambda^{|V(G)|}\,\conj{Z_{G,\mi u,\mi v}(\lambda,b)},\quad Z_{G,\pl u,\mi v}(\lambda,b)=\lambda^{|V(G)|}\,\conj{Z_{G,\mi u,\pl v}(\lambda,b)}.\]
\end{lemma}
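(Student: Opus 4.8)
The plan is to prove Lemma~\ref{lem:conjugation} by a direct bijective argument on configurations, exploiting the spin-flip symmetry of the Ising model together with the fact that $\overline{\lambda}=1/\lambda$ for $\lambda\in\mathbb{S}$ and that $b$ is real.

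First I would set up the involution. Let $G=(V_G,E_G)$ and fix vertices $u,v$. For a configuration $\sigma:V_G\to\{\pl,\mi\}$ let $\sigma^*$ denote the configuration obtained by flipping every spin, i.e.\ $\sigma^*(x)=\mi$ iff $\sigma(x)=\pl$. This is a bijection on the set of all configurations, and it restricts to a bijection from $\{\sigma:\sigma(u)=\pl,\sigma(v)=\pl\}$ onto $\{\sigma:\sigma(u)=\mi,\sigma(v)=\mi\}$, and from $\{\sigma:\sigma(u)=\pl,\sigma(v)=\mi\}$ onto $\{\sigma:\sigma(u)=\mi,\sigma(v)=\pl\}$. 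The key weight computation is: flipping all spins does not change which edges are bichromatic, so $\delta(\sigma^*)=\delta(\sigma)$, while $|n_\pl(\sigma^*)|=|V_G|-|n_\pl(\sigma)|$. Hence
\[
w_{G,\sigma^*}(\lambda,b)=\lambda^{|V_G|-|n_\pl(\sigma)|}b^{\delta(\sigma)}=\lambda^{|V_G|}\,\lambda^{-|n_\pl(\sigma)|}b^{\delta(\sigma)}.
\]
Since $\lambda\in\mathbb{S}$ we have $\lambda^{-1}=\overline{\lambda}$, and since $b\in\mathbb{R}$ (so $b^{\delta(\sigma)}=\overline{b^{\delta(\sigma)}}$) and $|n_\pl(\sigma)|$ is a non-negative integer, we get $\lambda^{-|n_\pl(\sigma)|}b^{\delta(\sigma)}=\overline{\lambda^{|n_\pl(\sigma)|}b^{\delta(\sigma)}}=\overline{w_{G,\sigma}(\lambda,b)}$. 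Therefore $w_{G,\sigma^*}(\lambda,b)=\lambda^{|V_G|}\,\overline{w_{G,\sigma}(\lambda,b)}$.

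Then I would simply sum this identity over the relevant classes of configurations. Summing over all $\sigma$ with $\sigma(u)=\sigma(v)=\mi$ (so that $\sigma^*$ ranges over all configurations with $\sigma^*(u)=\sigma^*(v)=\pl$), and using that conjugation is additive, gives
\[
Z_{G,\pl u,\pl v}(\lambda,b)=\sum_{\sigma:\sigma(u)=\sigma(v)=\mi}w_{G,\sigma^*}(\lambda,b)=\lambda^{|V_G|}\sum_{\sigma:\sigma(u)=\sigma(v)=\mi}\overline{w_{G,\sigma}(\lambda,b)}=\lambda^{|V_G|}\,\overline{Z_{G,\mi u,\mi v}(\lambda,b)},
\]
which is the first claimed equality. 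The second equality follows identically, replacing the pair $(\mi,\mi)$ by $(\mi,\pl)$ and $(\pl,\pl)$ by $(\pl,\mi)$ throughout.

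There is essentially no obstacle here — the only points requiring (trivial) care are the bookkeeping that $\delta$ is flip-invariant and that $\lambda\in\mathbb{S}$ is exactly what converts the $\lambda^{-|n_\pl(\sigma)|}$ factor into a complex conjugate. I would present the argument compactly as above rather than belabouring the bijection.
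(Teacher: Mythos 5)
Your proof is correct and is essentially identical to the paper's: both use the global spin-flip involution $\sigma\mapsto\bar\sigma$, observe that $\delta$ is flip-invariant while $|n_\pl|$ complements to $|V_G|-|n_\pl|$, and use $\lambda^{-1}=\overline\lambda$ (with $b$ real) to turn the flipped weight into $\lambda^{|V_G|}\overline{w_{G,\sigma}}$ before summing over the relevant configurations.
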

\begin{proof}
For an assignment $\sigma:V_G\rightarrow \{\pl,\mi \}$, let $\bar{\sigma}:V_G\rightarrow \{\pl,\mi \}$ be the assignment obtained by interchanging the assignment of $\pl$'s with $\mi$'s. Then 
\[w_{G,\bar{\sigma}}(\lambda,b)=\lambda^{|n_{\pl}(\bar{\sigma})|}b^{\delta(\bar{\sigma})}=\lambda^{|V_G|-|n_{\pl}(\sigma)|}b^{\delta(\sigma)}=\lambda^{|V(G)|}\conj{w_{G,\sigma}(\lambda,b)}.\]
The result follows by summing over the relevant $\sigma$ for each of $Z_{G,\pl u,\pl v}(\lambda,b)$ and $Z_{G,\pl u,\mi v}(\lambda,b)$.
\end{proof}

The following lemma will be useful in general for handling rational points on the circle. Ideally, we would like to describe a number on $\mathbb{S}$ by a rational angle, but this may not correspond to a rational cartesian point which would complicate computations. However, rational points are dense on the circle and we can compute one arbitrarily close to a given angle, as follows. 
\begin{lemma}\label{lem:rationalangle}
Given a rational angle $\theta \in [0 ,2\pi)$ and $\epsilon \in (0,1)$, there exists a number $\hat{\theta}$ such that $|\theta - \hat{\theta}|< \epsilon$ and $\cos \hat{\theta}, \sin \hat{\theta} \in \mathbb{Q}$ are rational numbers of size at most $poly(\size{\theta,\epsilon})$. Furthermore, we can compute $\cos \hat{\theta}$ and $\sin \hat{\theta}$ in time $poly(\size{\theta,\epsilon})$.
\end{lemma}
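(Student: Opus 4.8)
The plan is to use the classical rational (Pythagorean) parametrization of the unit circle: every point of $\mathbb{S}$ with rational coordinates other than $(-1,0)$ is of the form $\big(\tfrac{1-t^2}{1+t^2},\tfrac{2t}{1+t^2}\big)$ for a unique $t\in\mathbb{Q}$, and the angle of this point is $2\arctan t$. Thus producing $\hat\theta$ amounts to finding a rational $t$ of small size with $t\approx\tan(\theta/2)$. The difficulty is that $\tan(\theta/2)$ blows up as $\theta\to\pi$, so its small-size rational approximations could be forced to be large; and we want to avoid computing $\pi$ at all. Both issues are resolved by first rotating $e^{i\theta}$ by a suitable power of $\mathrm i$ — which preserves rationality of the coordinates, since $\mathrm i\cdot(a+b\mathrm i)=-b+a\mathrm i$ — so as to bring the angle into the arc $(-\pi/3,\pi/3)$, on which $|\tan(\cdot/2)|<1$.

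Concretely I would proceed as follows. (1) Using the standard fact that $\cos$ and $\sin$ of a given rational can be approximated to within $2^{-n}$ in time $\mathrm{poly}(n+\size{\theta})$ (truncated Taylor series), compute approximations to $\cos\theta$ and $\sin\theta$ to within a small constant, and pick $k\in\{0,1,2,3\}$ such that $c_0:=\cos(\theta-k\pi/2)$, which equals one of $\pm\cos\theta,\pm\sin\theta$, satisfies $c_0\ge \tfrac12$; this is possible, with room to spare, because $\max(|\cos\theta|,|\sin\theta|)\ge 1/\sqrt2$, so no exact comparison is needed. Then $\theta-k\pi/2\in(-\pi/3,\pi/3)$. (2) By the half-angle identity, $t^*:=\tan\!\big(\tfrac{\theta-k\pi/2}{2}\big)=\dfrac{s_0}{1+c_0}$ with $s_0:=\sin(\theta-k\pi/2)\in\{\pm\cos\theta,\pm\sin\theta\}$; since $1+c_0\ge \tfrac32$ and $c_0,s_0$ are $\cos,\sin$ of the rational $\theta$ up to sign, $t^*$ can be approximated to any precision in polynomial time \emph{without ever evaluating $\pi$}, and $|t^*|<1$. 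Let $t\in\mathbb{Q}$ be $t^*$ rounded to $\lceil\log_2(4/\epsilon)\rceil$ binary places, so $|t-t^*|<\epsilon/2$ and $\size{t}=\mathcal O(\log(1/\epsilon))$. (3) Set $(c',s')=\big(\tfrac{1-t^2}{1+t^2},\tfrac{2t}{1+t^2}\big)\in\mathbb{Q}^2$ (size $\mathcal O(\log(1/\epsilon))$), and output $(\cos\hat\theta,\sin\hat\theta):=$ the coordinates of $\mathrm i^k(c'+\mathrm i s')$, which again are rational of the same size, together with $\hat\theta:=2\arctan t+k\pi/2$. (4) For the error bound: the angle of $(c',s')$ is $2\arctan t$, and using $2\arctan(\tan(x/2))=x$ for $|x|<\pi$ we get $\big|2\arctan t-(\theta-k\pi/2)\big|=\big|2\arctan t-2\arctan t^*\big|\le 2|t-t^*|<\epsilon$ (since $|\arctan'|\le 1$); hence $|\hat\theta-\theta|<\epsilon$, while $\cos\hat\theta,\sin\hat\theta\in\mathbb{Q}$ by construction. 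All steps run in time $\mathrm{poly}(\size{\theta,\epsilon})$.

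I do not expect a genuine obstacle here: the substance is (i) recalling the rational parametrization of $\mathbb{S}$ and the half-angle formula, (ii) the rotation-by-$\mathrm i^k$ trick, whose point is both to keep the parameter $t$ of logarithmic size and to express everything in terms of $\cos\theta,\sin\theta$ so that $\pi$ is never needed, and (iii) checking the routine computability and bit-size bounds. The only place that calls for a little care is choosing $k$ in a way that never requires deciding an exact inequality — handled by the slack between $1/\sqrt2$ and $1/2$ above — after which the remainder is bookkeeping of the error terms through $\arctan$ and the rotation.
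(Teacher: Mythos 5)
Your proposal is correct and follows essentially the same route as the paper's proof: parametrize the circle by the tangent half-angle, round $\tan(\cdot/2)$ to a rational $t$ of logarithmic size, recover $(\cos\hat\theta,\sin\hat\theta)=\big(\tfrac{1-t^2}{1+t^2},\tfrac{2t}{1+t^2}\big)$, and bound the angular error via the Lipschitz constant of $2\arctan$. Your rotation-by-$\mathrm{i}^k$ step is just a more explicitly algorithmic version of the paper's ``by symmetry, assume $\theta\in[0,\pi/4]$'' reduction, so the two arguments coincide in substance.
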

\begin{proof}
By symmetry, we may assume that $\theta \in [0, \pi/4]$. Given $\theta$, take a rational approximation $r$ of $\tan(\theta / 2)$ such that $|\tan( \theta / 2) - r| < \epsilon / 2$. We claim that $\hat{\theta}=2 \arctan(r)$ has the desired properties. 

Write $s,c,t$ respectively for $\sin \hat{\theta}, \cos \hat{\theta}, \tan \hat{\theta}$. Using the tan double angle formula we have
$s/c = t = 2r/(1-r^2)$. We also know that $s^2 + c^2 = 1$. Solving these simultaneously gives that $s = 2r/(1+r^2)$ and $c = (1 - r^2)/(1+r^2)$, which are both rational since $r$ is rational.

Also writing $f(x) = 2\arctan(x)$ for $x \in [0,1]$, note that $f'(x) = 2/(1+x^2) \in [1,2]$ for $x \in [0,1]$. Hence $|f(x) - f(y)| \leq 2|x-y|$ for $x,y \in [0,1]$ and so $|\theta - \theta'| < \epsilon$.

Finally we can compute $r$ in $poly(\size{\theta,\epsilon})$ using a series expansion of $\tan$ from which we can compute  $s$ and $c$ from the formulas above.
\end{proof}

Finally, we will use the following well-known lemma for continued-fraction approximation.
\begin{lemma}[{\cite[Corollary 6.3a]{Schrijver}}]\label{lem:diophantine}
There is a poly-time algorithm which, on input a rational number $\alpha$ and integer  $K\geq1$, decides whether there exists a rational number $p/q$ with $1 \leq q \leq K$ and  $|\alpha - (p/q)| < 1/2K^2$,  and if so, finds this (unique) rational number.
\end{lemma}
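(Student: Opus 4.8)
The plan is to derive this from the classical theory of continued fractions, following \cite{Schrijver}; I will sketch a self-contained argument. Write $\alpha = a/b$ in lowest terms. The algorithm first computes the finite continued fraction expansion $\alpha = [c_0; c_1, \dots, c_r]$ by running the Euclidean algorithm on $(a,b)$; since each step at least halves one of the two integers involved, this terminates after $O(\size{\alpha})$ steps and all intermediate integers have bit-size $O(\size{\alpha})$. From the partial quotients one then computes the convergents $p_k/q_k$ via the standard recurrences $p_k = c_k p_{k-1} + p_{k-2}$ and $q_k = c_k q_{k-1} + q_{k-2}$; there are $r+1 = O(\size{\alpha})$ of them, each of bit-size $O(\size{\alpha})$.

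The key number-theoretic input is Legendre's theorem: if $p/q$ is a rational in lowest terms with $q \ge 1$ and $|\alpha - p/q| < 1/(2q^2)$, then $p/q$ is one of the convergents $p_k/q_k$. Given this, the algorithm simply iterates over those convergents $p_k/q_k$ with $1 \le q_k \le K$ and tests whether $|\alpha - p_k/q_k| < 1/(2K^2)$; if some convergent passes the test it outputs that convergent, and otherwise it reports that no suitable rational exists. Each test is an exact comparison of rationals of polynomial bit-size, so the whole procedure runs in time $\mathrm{poly}(\size{\alpha}, \log K)$.

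For correctness I would argue as follows. Suppose a rational $p/q$ with $1 \le q \le K$ and $|\alpha - p/q| < 1/(2K^2)$ exists; putting it in lowest terms only decreases the denominator, so we may assume $\gcd(p,q) = 1$, and then $|\alpha - p/q| < 1/(2K^2) \le 1/(2q^2)$, so by Legendre's theorem $p/q$ is a convergent and is therefore found by the algorithm. Uniqueness follows from the usual mediant inequality: if $p/q \ne p'/q'$ both satisfy the stated bound with denominators at most $K$, then $\tfrac{1}{qq'} \le \left|\tfrac{p}{q} - \tfrac{p'}{q'}\right| \le \left|\tfrac{p}{q} - \alpha\right| + \left|\alpha - \tfrac{p'}{q'}\right| < \tfrac{1}{2K^2} + \tfrac{1}{2K^2} = \tfrac{1}{K^2} \le \tfrac{1}{qq'}$, a contradiction. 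Hence the convergent output by the algorithm is the unique rational with the required properties, and if no convergent passes the test then no such rational exists at all.

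There is essentially no hard obstacle here: the content is entirely classical, and the only point requiring (routine) care is the complexity bookkeeping, namely verifying that the Euclidean algorithm produces only polynomially many partial quotients and that neither the convergents nor the comparisons $|\alpha - p_k/q_k| < 1/(2K^2)$ involve integers of superpolynomial size. Alternatively — and this is the route we take — one simply invokes \cite[Corollary 6.3a]{Schrijver} verbatim.
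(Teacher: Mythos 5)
Your proposal is correct and matches the paper's treatment: the paper gives no proof of this lemma, simply citing \cite[Corollary 6.3a]{Schrijver}, which is exactly the route you ultimately take. Your supplementary continued-fraction sketch (Legendre's theorem plus the mediant/triangle-inequality uniqueness argument) is the standard and correct proof of that corollary.
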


\subsection{The reduction}\label{sec:reductionstep}
To prove Theorem~\ref{thm:main2}, we will show how to use a poly-time algorithm for $\FactorIsing{K}$ and $\ArgIsing{\rho}$ to compute exactly $Z_G(\lambda,\hat{b})$ on graphs $G$ of maximum degree three  for some appropriate value of $\hat{b}$ that we next specify.

Let $\eta=\eta(3,\lambda)>1$ be as in Lemma~\ref{lem:zerofree}, so that 
\begin{equation}\label{eq:nonzero1}
Z_G(\lambda,b')\neq 0 \mbox{ for all } b'\in (1/\eta,\eta) \mbox{ and } G\in \mathcal{G}_3.
\end{equation}
For $k=2,3,\hdots,$ let $P_k$ be the path with $k$ vertices whose endpoints are labeled $u_k,v_k$ and all vertex activities are equal to 1. Then, it is not hard to see that 
\begin{equation}\label{eq:r4f4123}
\left[\begin{array}{cc} Z_{P_k,\pl u_k,\pl v_k}(1,b)& Z_{P_k,\pl u_k,\mi v_k}(1,b) \\ Z_{P_k,\mi u_k,\pl v_k}(1,b)&Z_{P_k,\mi u_k,\mi v_k}(1,b)\end{array}\right]=\left[\begin{array}{cc} 1& b \\ b& 1\end{array}\right]^{k-1}.
\end{equation}
Clearly, for all $k$ it holds that 
\begin{equation}\label{eq:g4g4635335r}
Z_{P_k, +u_k,\pl v_k}(1,b)=Z_{P_k,\mi  u_k,\mi v_k}(1,b) \mbox{ and } \frac{Z_{P_k,\pl u_k,\mi v_k}(1,b)}{Z_{P_k,\mi u_k,\mi v_k}(1,b)}=\frac{Z_{P_k,\mi u_k,\pl v_k}(1,b)}{Z_{P_k,\mi u_k,\mi v_k}(1,b)}=:b_k.
\end{equation}
Moreover, using \eqref{eq:r4f4123}, we have that there exists $k$ such that 
\begin{equation}\label{eq:2wsx}
1/\eta<\hat{b}=b_k<\eta.
\end{equation}
By the choice of $k$ and \eqref{eq:nonzero1}, we conclude that 
\begin{equation}\label{eq:nonzero}
Z_G(\lambda,\hat{b})\neq 0 \mbox{ for all } G\in \mathcal{G}_3.
\end{equation}

The main step in the reduction is captured by  the following  lemma.
\begin{lemma}\label{lem:maincomp}
Let $\Delta\geq 3$ be an integer, $b \in(0,1)$ be a rational and let $\lambda\in \CQ(\Delta-1,b)$   Let $K=1.001$ and $\rho=\pi/40$.
Assume that a poly-time algorithm exists for either $\FactorIsing{K}$ or $\ArgIsing{\rho}$. Then, there exists a poly-time algorithm that on input a graph $G\in \mathcal{G}_3$ and an edge $e=\{u,v\}$ of $G$, outputs the value of the ratio 
\[R_{G,e}=\frac{\hat{b}^2 z_{\pl\pl}+ \hat{b}(z_{\pl\mi}+z_{\mi\pl})+z_{\mi\mi}}{\hat{b}^2z_{\mi\mi}+\hat{b}(z_{\pl\mi}+z_{\mi\pl})+z_{\pl\pl}}, \mbox{\, where \, } z_{\plm\plm}:=Z_{G\backslash e,\plm u,\plm v}(\lambda,\hat{b}).\]
The algorithm also outputs the value of the ratio $R_{G,e}'=z_{\mi\mi}/z_{\pl\pl}$, provided that $z_{\pl\pl}\neq 0$.
\end{lemma}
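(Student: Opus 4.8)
\textbf{Proof plan for Lemma~\ref{lem:maincomp}.}

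The plan is to use the field-implementation machinery of Lemma~\ref{lem:mainlemma} to attach a rooted tree gadget to the middle vertex of a subdivided edge, so that the partition function of the enlarged graph vanishes exactly when the implemented field hits a target value determined by $R_{G,e}$; a binary search over target fields, using the oracle to detect near-zeros, then pins down $R_{G,e}$. First I would set up the combinatorics: subdivide the edge $e=\{u,v\}$ of $G$ by inserting a path of the appropriate length $k$ (with the $P_k$ calculation of \eqref{eq:r4f4123}--\eqref{eq:g4g4635335r}, so that traversing the path changes the effective edge-interaction from $\lambda$-context to the zero-free value $\hat b$ as in \eqref{eq:2wsx}), and attach to an internal vertex $w$ of this path a rooted tree $T$ from $\Tc_\Delta$ of root degree $1$ implementing a field $\lambda'$ close to a prescribed $\hat\lambda\in\SQ$. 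Expanding $Z$ of the resulting graph $G'$ by conditioning on the spins of $u,v$, one gets $Z_{G'}=A(\lambda')\cdot z_{\pl\pl} + B(\lambda')\cdot(z_{\pl\mi}+z_{\mi\pl}) + C(\lambda')\cdot z_{\mi\mi}$ for explicit affine-in-$\lambda'$ (more precisely, linear-fractional after normalising) coefficients $A,B,C$ coming from the $2\times 2$ transfer matrices of the subdivided edge and the gadget; the key algebraic point is that $Z_{G'}=0$ is a single linear-fractional equation in $\lambda'$, whose unique solution $\lambda'=\lambda^*$ can be read off in terms of $z_{\plm\plm}$, and that $\lambda^*$ lands on the unit circle by Lemma~\ref{lem:conjugation} (the ratios $z_{\mi\mi}/z_{\pl\pl}$, etc., behave nicely under the spin-flip conjugation symmetry). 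Solving $Z_{G'}=0$ for the pair $\big(\tfrac{z_{\mi\mi}}{z_{\pl\pl}},\ \tfrac{z_{\pl\mi}+z_{\mi\pl}}{z_{\pl\pl}}\big)$ from two independent probe values of $\lambda'$ — or rather, recovering the single aggregate $R_{G,e}$ and, when $z_{\pl\pl}\neq 0$, the ratio $R_{G,e}'=z_{\mi\mi}/z_{\pl\pl}$ — is then linear algebra.

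Next I would run the approximation step. Since $Z_{G'}$ depends continuously (indeed holomorphically) on $\lambda'$, and since for $\lambda'$ ranging over $\SQ$ the gadget of Lemma~\ref{lem:mainlemma} can hit any point of $\mathbb{S}$ within error $\epsilon$ in time $\mathrm{poly}(\size{\hat\lambda,\epsilon})$, I would do a binary search on the argument of the target field $\hat\lambda\in\mathbb{S}$: for each candidate, build $G'$, call the oracle for $\FactorIsing{K}$ (resp.\ $\ArgIsing{\rho}$), and use the returned estimate of $|Z_{G'}|$ (resp.\ $\arg Z_{G'}$) to decide which side of the root $\lambda^*$ we are on. The point is that away from $\lambda^*$ the quantity $Z_{G'}$ is bounded away from $0$ in a quantitative way (its modulus is a rational function of $\lambda'$ of controlled degree and height, using zero-freeness \eqref{eq:nonzero} of $Z_G(\lambda,\hat b)$ to control denominators), so the multiplicative factor $K=1.001$ and additive error $\rho=\pi/40$ are small enough to distinguish ``on the root'' from ``off the root.'' After locating $\lambda^*$ to exponentially small precision, I would use the continued-fraction / simultaneous-Diophantine rounding of Lemma~\ref{lem:diophantine} to recover $\lambda^*$ exactly as an element of $\SQ$ (its denominator is polynomially bounded because $z_{\plm\plm}$ are integer-coefficient polynomials in the rational data $\lambda,\hat b$ evaluated on a graph of size $n$), and then back-solve the linear-fractional relation to output $R_{G,e}$ exactly; repeating with a second probe configuration (e.g.\ a different subdivision length or a symmetrised gadget) disentangles $R_{G,e}'$ when $z_{\pl\pl}\neq0$. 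Throughout, Lemma~\ref{lem:rationalangle} is used to pass between rational angles and rational Cartesian coordinates for the probe fields.

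The main obstacle I anticipate is the bookkeeping that makes the binary search provably correct with only a crude ($K=1.001$) multiplicative oracle: one must show that as the probe field $\lambda'$ crosses $\lambda^*$, the value $Z_{G'}$ genuinely changes in a detectable monotone fashion (in modulus or in argument), and that the ``resolution'' of the oracle — a factor $1.001$, an additive $\pi/40$ — is finer than the unavoidable gap between consecutive candidate values. This requires a lower bound on $|Z_{G'}(\lambda')|$ when $\lambda'$ is a fixed $\mathrm{poly}(n)$-distance from $\lambda^*$, which in turn rests on the zero-freeness statement \eqref{eq:nonzero} together with a degree/height bound on $Z_{G'}$ as a function of $\lambda'$; and it requires handling the gadget's approximation error $\epsilon$ (the implemented $\lambda'$ is only $\epsilon$-close to $\hat\lambda$) by taking $\epsilon$ exponentially small, which is affordable since Lemma~\ref{lem:mainlemma} is $\mathrm{poly}(\log(1/\epsilon))$. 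A secondary technical nuisance is that subdividing an edge and attaching a degree-$1$-rooted tree keeps us inside $\mathcal{G}_3$ (degree of $w$ becomes $3$), which is exactly why the gadget is required to have root degree $1$ and why the reduction targets $\Delta=3$ as the hard core case; I would flag this explicitly. Finally, one should double-check the degenerate branches — e.g.\ when $z_{\pl\pl}=0$, so $R_{G,e}'$ is not defined and only $R_{G,e}$ is output — and verify that $R_{G,e}$ is itself always well-defined, i.e.\ that its denominator $\hat b^2 z_{\mi\mi}+\hat b(z_{\pl\mi}+z_{\mi\pl})+z_{\pl\pl}$ is nonzero, which follows because it equals (a nonzero scalar times) $Z_{G''}(\lambda,\hat b)$ for a suitable graph $G''\in\mathcal{G}_3$ obtained from $G$ by a local modification at $e$, and hence is nonzero by \eqref{eq:nonzero}.
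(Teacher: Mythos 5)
Your plan follows essentially the same route as the paper's proof: subdivide $e$, replace edges by paths decorated with trees implementing the field $1$ so as to move the interaction to the zero-free value $\hat b$, attach a root-degree-$1$ probe tree at the middle vertex so that the partition function becomes $t\lambda'+r$ (affine in the probe field, vanishing exactly at $-r/t=-R_{G,e}$ up to conjugation symmetry), locate that root by binary search on its argument using the norm/argument oracle, and round to the exact rational via Lemma~\ref{lem:diophantine}. The one place you diverge is the recovery of $R'_{G,e}$: the paper does not reuse the same construction with a different subdivision length but instead builds a second graph in which trees implementing the field $-1$ are attached to two extra vertices flanking the probe, so that the affine coefficients become directly proportional to $z_{\pl\pl}$ and $z_{\mi\mi}$; your alternative (solving a $2\times2$ linear system from two aggregate ratios) is plausible but would need a non-degeneracy check that the paper's construction avoids.
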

\begin{remark}
As it will be shown in the proof of Lemma~\ref{lem:maincomp}, the ratio $R_{G,e}$ is well-defined for all graphs $G\in \Gc_3$ and edges $e$ in $G$ using the zero-free region in Lemma~\ref{lem:zerofree} and the choice of $\hat{b}$. It is harder to show that $R_{G,e}'$ is well-defined (we cannot use Lemma~\ref{lem:zerofree} directly) and hence the need for the assumption that $z_{\pl\pl}\neq 0$ in Lemma~\ref{lem:maincomp}.
\end{remark}
\begin{proof}
Suppose that $G=(V,E)$ with $n=|V|$ and $m=|E|$. Let 
\begin{equation}\label{eq:rtrptp}
\begin{aligned}
r&=\hat{b}^2 z_{\pl\pl}+ \hat{b}(z_{\pl\mi}+z_{\mi\pl})+z_{\mi\mi},\quad r'=(\hat{b}^2-1)^2z_{\mi\mi},\\
t&=\hat{b}^2z_{\mi\mi}+\hat{b}(z_{\pl\mi}+z_{\mi\pl})+z_{\pl\pl},\quad t'=(\hat{b}^2-1)^2z_{\pl \pl}.
\end{aligned}
\end{equation}
We first show that $r,t\neq 0$. Consider the graph $H=(V_H,E_H)$ obtained from $G$ by subdividing edge $e$, i.e., we remove edge $e=\{u,v\}$ and then add a new vertex $s$ which is connected to both $u,v$. Note that $H$ is obtained from $G\backslash e$ by adding the edges $\{s,u\}, \{s,v\}$, so it is not hard to see that
\[Z_{H}(\lambda,\hat{b})=\lambda t+r.\]
Note that $H$ is a graph of maximum degree $\Delta$ and we have $Z_{H}(\lambda,\hat{b})\neq 0$ from \eqref{eq:nonzero}. Moreover, from Lemma~\ref{lem:conjugation}, we have $r=\lambda^{n} \conj{t}$. Combining these, we obtain that $r,t\neq 0$. From assumption, we also have that $t'\neq 0$.

We will show how to compute the ratios $R_{\goal}=-\frac{r}{t}$ and $R_{\goal}'=-\frac{r'}{t'}$ (note that these are well-defined since $t,t'\neq 0$). By Lemma~\ref{lem:conjugation}, we have that $r=\lambda^{n} \conj{t}$ \mbox{ and } $r'=\lambda^{n} \conj{t'}$, so $R_{\goal},R_{\goal}'\in \SQ$. In fact, letting $p,p',p'',q$ be integers such that $\hat{b}=p/q$ and $\lambda=(p'+\im p'')/q$, then we have that $R_{\goal},R_{\goal}'\in \mathcal{R}\cap \SQ$, where 
\[\mathcal{R}=\Big\{\frac{P+\im Q}{P'+\im Q'}\mid\, P,Q, P',Q'\in \{-M,\hdots,0, \hdots, M\}\Big\}\mbox{ and }M:=2^{n} |p|^{m} (|p'|+|p''|)^{n}q^{m+n}.\]
Let $\epsilon=1/(10M)^{16}$. Note that for any two distinct numbers $z,z'\in \mathcal{R}$ it holds that $|z-z'|\geq 10\epsilon$, so if we manage to produce $\hat{R},\hat{R}'\in \SQ$ with $poly(n)$ size so that $|R_{\goal}-\hat{R}|\leq \epsilon$ and $|R_{\goal}'-\hat{R}'|\leq \epsilon$, we can in fact compute $R_{\goal}$ and $R_{\goal}'$ in time $poly(n,\size{\epsilon})=poly(n)$.\footnote{We give briefly the details for $R_{\goal}$, the details for $R_{\goal}'$ are similar. For $r \in \mathbb{N}$ let $\mathbb{Q}_r$ denote the set of rationals with denominator between $1$ and $r$. Since $R_{\goal}\in \mathcal{R}\cap\SQ$ and $\hat{R}\in \SQ$ we have that there exist  $\alpha,\beta\in \mathbb{Q}_{2M^2}$ and  $\hat{\alpha},\hat{\beta}\in \mathbb{Q}$ such that  $R_{\goal}=\alpha+\im \beta$ and $\hat{R}=\hat{\alpha}+\im \hat{\beta}$. From $|R_\goal-\hat{R}|\leq \epsilon$, we have $|\alpha-\hat{\alpha}|, |\beta-\hat{\beta}|\leq \epsilon$. By Lemma~\ref{lem:diophantine} (applied to $\hat{\alpha},\hat{\beta}$ and $K = 2M^2$), in poly(n) time, we can compute rationals  $\alpha',\beta'\in \mathbb{Q}_{2M^2}$ such that $|\hat{\alpha}-\alpha'|,|\hat{\beta}-\beta'|\leq 1/(8M^4)$ and hence $|{\alpha}-\alpha'|,|{\beta}-\beta'| \leq  \epsilon+1/(8M^4) \leq 1/(4M^4)$. Now, for distinct   $\gamma,\delta\in \mathbb{Q}_{2M^2}$ we have that $|\gamma-\delta|\geq 1/(2M^2)$, so it must be that $\alpha=\alpha'$ and $\beta=\beta'$, completing the  computation of $R_{\goal}$.} 

We first focus on how to compute $\hat{R}\in \SQ$ so that $|R_{\goal}-\hat{R}|\leq \epsilon$. At this point, it will be helpful to represent  complex numbers on the unit circle $\mathbb{S}$ with their arguments. Let $\theta_{\goal}=\Arg(R_{\goal})$ and $g(\theta):=t \emm^{\im \theta}+r$. Note that
\begin{equation}\label{eq:gnorm}
\begin{aligned}
|g(\theta)|&=|g(\theta)-g(\theta_{\goal})|=|t||\emm^{\im \theta}-\emm^{\im \theta_{\goal}}|=2|t|\big|
\sin((\theta-\theta_{\goal})/2)\big|,
\\
\Arg(g(\theta))&=(\theta-\theta_\goal)/2+\Arg(t) \mod 2\pi,
\end{aligned}
\end{equation}
the latter provided $\theta\neq \theta_\goal$.

We will compute in $poly(n)$ time a rational  $\hat{\theta}$ such that $|\hat{\theta}-\theta_{\goal}|\leq \epsilon/2$, yielding the desired $\hat{R}$ (via Lemma~\ref{lem:rationalangle}).

Let $\tau=1/500$ and $\kappa=\epsilon/10^3$. We will show that a poly-time algorithm  for $\FactorIsing{K}$ can be used to compute, for every rational  $\theta$, a positive number $\hat{g}_\theta$ in time $poly(n,\size{\theta})$  such that, whenever $|\theta-a|\geq \kappa$ for every $a\in \arg(R_{\goal})$, it holds that
\begin{equation}\label{eq:fptasgtheta}
(1-\tau)|g(\theta)|\leq \hat{g}_\theta\leq (1+\tau)|g(\theta)|.
\end{equation}
When $|\theta- a|\leq \kappa$ for some $a\in \arg(R_{\goal})$, there is no guarantee on the value of $\hat{g}_\theta$. Similarly, we will show that a poly-time algorithm  for $\ArgIsing{\rho}$ can be used to compute, for every rational  $\theta$, a positive number $\hat{a}_\theta$ in time $poly(n,\size{\theta})$  such that, whenever $|\theta-a|\geq \kappa$ for every $a\in \arg(R_{\goal})$, it holds that
\begin{equation}\label{eq:fptasgthetaarg}
|\Arg(g(\theta))-\hat{a}_\theta|\leq 2\rho=\pi/20.
\end{equation}
Using these, we compute the desired $\hat{\theta}$ via binary search following similar techniques as in \cite{GG,GJ,BGGS}, though in our case the details are a bit different because we have to work on the unit circle. For the norm, we will utilise that $|g(\theta)|$ is increasing in the interval $[\theta_{\goal},\theta_{\goal}+\pi]$ and decreasing in the interval $[\theta_\goal-\pi,\theta_\goal]$, whereas for the argument we will utilise that $\Arg(g(\theta))$ changes abruptly around $\theta_\goal$ (roughly by $\pi$). In particular, we proceed as follows.
\vskip 0.2cm
\noindent
{\bf Algorithm for $\FactorIsing{K}$ (Step 1):} We first find an interval of length $<2\pi/3$ with rational endpoints containing  $\theta_{\goal}$ in $poly(n)$ time. For  $j = 0, \ldots, 18$ let $\theta_{j} = j/3$,  $g_j = |g(\theta_j)|$ and $\hat{g}_j=\hat{g}_{\theta_j}$; note that the $\hat{g}_j$'s can be computed in $poly(n)$ time. For convenience, extend these definitions by setting  $\theta_{19h+j}=\theta_j+2h \pi$, $g_{19h+j} = g_j$  and $\hat{g}_{19h+j}=\hat{g}_j$ for every integer $h$ and $j = 0, \ldots, 18$. Note that for all $j$ we have that $1/3\geq  |\theta_{j+1}-\theta_{j}|\geq 1/4>\pi/15$.

Consider an index $j\in\{0,\hdots,18\}$ such that $\arg(R_{\goal})$ does not intersect with the intervals $[\theta_j-\kappa,\theta_{j+1}+\kappa]$ and $[\theta_j-\pi, \theta_{j+1}-\pi]$. Then, we have that 
\begin{equation}\label{eq:w1sww2333s}
(1-\tau)g_{j}\leq \hat{g}_{j}\leq (1+\tau)g_{j}, \quad (1-\tau)g_{j+1}\leq \hat{g}_{j+1}\leq (1+\tau)g_{j+1}.
\end{equation}
We claim that $g_{j+1} - g_j$ has the same sign as $\hat{g}_{j+1} - \hat{g}_j$. To see this, assume wlog $g_{j+1} - g_j > 0$, the other possibility follows in a similar way. Observe that we must have $\theta_j,\theta_{j+1}\in (\theta_{\goal},\theta_{\goal}+\pi)$, as $\theta\mapsto \sin(\theta/2-\theta_{\goal}/2)$ is increasing on $(\theta_{\goal},\theta_{\goal}+\pi)$  and so
\begin{equation}\label{eq:tb56566}
\begin{aligned}
g_{j+1} - g_j &= |g(\theta_{j+1})| - |g(\theta_j)|\geq 2|t|\min_{\phi \in [0, \pi/2 - \pi/30)]} [\sin(\phi + \pi / 30)-\sin\phi] \\
&\geq 2|t|\big[\sin(\pi/2)-\sin (\pi/2-\pi/30)\big] \geq |t|/100.
\end{aligned}
\end{equation}
On the other hand, if $\hat{g}_{j+1} - \hat{g}_j <0$, from \eqref{eq:w1sww2333s} we have $(1- \tau)g_{j+1} - (1+ \tau)g_j < 0$. This gives $g_{j+1} - g_j \leq \tau(g_{j+1} + g_j) \leq 2 \tau |t|$, a contradiction to the above.

Let $j^*$ be such that $\theta_{\goal} \in [\theta_{j^*}, \theta_{j^*+1})$. From \eqref{eq:gnorm}, the sequence $g_{j}$ is decreasing till $j^*$ and increasing after $j^*+1$. From the claim above, the sequence $\hat{g}_j$ must therefore be decreasing for indices $j$ in $[j^*-8,j^*-1]$ and increasing for indices $[j^*+2,j^*+9]$. 
Therefore, from the values of $\hat{g}_j$'s we can find $\hat{j}$ so that $\theta_{\goal}\in [\theta_{\hat{j}-3},\theta_{\hat{j}+3}]$. By enlarging slightly the interval $[\theta_{\hat{j}-3},\theta_{\hat{j}+3}]$, we obtain the desired interval of length $<2\pi/3$ with rational endpoints.

\vspace{0.2 cm}
\noindent
{\bf Algorithm for $\FactorIsing{K}$ (Step 2):} Given an interval $[\theta_1, \theta_2]$ with rational endpoints containing $\theta_{\goal}$ with $|\theta_1 - \theta_2|=\ell$ and $\ell\in (100\kappa,2\pi/3)$, we show how to find  in $poly(n,\size{\theta_1,\theta_2})$ time an interval with rational endpoints that is a factor of $1/2$ smaller in length and also contains $\theta_{\goal}$. The analysis will be similar to step 1.

For $j = 0, \ldots, 19$ define $\phi_j = \theta_1 + (\theta_2 - \theta_1)j /19$ and let $g_j = |g(\phi_j)|$ and $\hat{g}_j=\hat{g}_{\phi_j}$. Since $\theta_{\goal} \in [\theta_1, \theta_2]$ and $|\theta_1 - \theta_2|=\ell$, for any $\theta \in  [\theta_1, \theta_2]$ we have $|g(\theta)| \leq 2|t|\sin(\ell/2) \leq \ell |t|$. In particular we have $g_j \leq \ell |t|$ for all $j$.

Moreover, for an index $j$ such that $\theta_{\goal}\notin [\phi_j,\phi_{j+1}]$ we claim that $g_{j+1} - g_j$ has the same sign as $\hat{g}_{j+1} - \hat{g}_j$. To prove the claim assume $g_{j+1} - g_j \geq  0$, so $\phi_{j+1}, \phi_j \geq \theta_{\goal}$, the other possibility follows in a similar way. The derivative of $|g(\theta)|$ in the interval $[\theta_{\goal},\theta_{\goal}+\ell]$ is bounded below by $|t|\cos(\ell/2)\geq |t|/2$, so  by the mean value theorem we have that 
\begin{align*}
g_{j+1} - g_j \geq \frac{|t|}{2}(\phi_{j+1} - \phi_{j})\geq  |t|\ell/50.
\end{align*}
On the other hand if $\hat{g}_{j+1} - \hat{g}_j <0$ then as before we have $(1- \tau)g_{j+1} - (1+ \tau)g_j < 0$, which implies $g_{j+1} - g_j \leq \tau(g_{j+1} + g_j) \leq 2\tau\ell |t|$, a contradiction to above. This proves the claim.

Using the claim, we can conclude just as we did in step 1 and find an index $\hat{j}$ so that $\theta_{\goal}\in [\phi_{\hat{j}-3},\phi_{\hat{j}+3}]$, giving the desired interval.

\vspace{0.2 cm}
\noindent
{\bf Algorithm for $\ArgIsing{\rho}$:} Given a rational endpoint $\theta_1$ and a rational length $\ell\in (100\kappa,\tfrac{63}{10}]$ such that $\theta_{\goal}$ lies in the interval $[\theta_1, \theta_2]$ for some $\theta_2\leq \theta_1+\ell$,  we show how to find  in $poly(n,\size{\theta_1,\ell})$ time a  rational endpoint $\theta_1'$ and a rational length $\ell'$ such that $\ell'\leq \ell/4$ and $\theta_{\goal}\in [\theta_1', \theta_2']$ for some $\theta_2'\leq \theta_1'+\ell'$. 

For $j = 0, \ldots, 25$ define $\phi_j = \theta_1 + \ell j /26$ and let $a_j = \Arg(g(\phi_j))$, $\hat{a}_j=\hat{a}_{\phi_j}$. For convenience, extend these definitions by setting  $\phi_{26h+j}=\phi_j$, $a_{26h+j}= a_{j}$  and $\hat{a}_{26h+j}= \hat{a}_{j}$ for every integer $h$ and $j = 0, \hdots, 25$. For indices $j,j'$, let 
\[D_{j,j'}=\min\{|a_{j'} - a_j|,2\pi-|a_{j'} - a_j|\}\mbox{ and }\widehat{D}_{j,j'}=\min\{|\hat{a}_{j'} - \hat{a}_j|,2\pi-|\hat{a}_{j'} - \hat{a}_j|\}.\]
Consider an index $j$ such that $\theta_{\goal}\notin [\phi_j-\kappa,\phi_{j+1}+\kappa]$. Then, we have that  $D_{j,j+1}=|\phi_{j+1}-\phi_j|/2\leq \pi/10$ and hence $\widehat{D}_{j,j+1}\leq \pi/5$. On the other hand, for an index $j$ such that $\theta_{\goal}\in [\phi_{j},\phi_{j+1}]$ we have that $D_{j-1,j+1}=\pi-|\phi_{j+1}-\phi_{j-1}|/2\geq 4\pi/5$ and similarly $D_{j,j+2}\geq 4\pi/5$. Therefore, at least one of $\widehat{D}_{j-1,j+1}\geq 3\pi/5$, $\widehat{D}_{j,j+2}\geq 3\pi/5$ must hold.
Therefore, using the $\hat{a}_j$'s, we can find an index $\hat{j}$ so that $\theta_{\goal}\in [\phi_{\hat{j}-2},\phi_{\hat{j}+2}]$, giving the desired interval.

\vskip 0.2cm

By repeating the above, we conclude that, using a poly-time  algorithm for either the problem $\FactorIsing{K}$ or $\ArgIsing{\rho}$, we can compute in $poly(n)$ time a rational  $\hat{\theta}$ such that $|\hat{\theta}-\theta|\leq 400\kappa\leq \epsilon/2$, yielding the desired $\hat{R}$ (via Lemma~\ref{lem:rationalangle}). We thus focus on proving that for a rational  $\theta$ we can obtain in time $poly(n,\size{\theta})$ values $\hat{g}_\theta, \hat{a}_\theta$   satisfying \eqref{eq:fptasgtheta} and \eqref{eq:fptasgthetaarg}, respectively.

Let $\epsilon_2=\kappa \epsilon/10^{5}$, $\epsilon_1:=\epsilon_2/\big(2^{4n}(2\hat{b})^{2m}\big)$, $\epsilon_0=\epsilon_1/(k4^k)$. By Lemmas~\ref{lem:mainlemma} and~\ref{lem:rationalangle}, for a rational number $\phi$, we can construct in time $poly(n,\size{\phi})$ a rooted tree $T_\phi$ in $\Tc_{\Delta}$  with root $x_\phi$ that has degree 1 and implements a field $\lambda_\phi$ such that $|\lambda_\phi-\emm^{\im \phi}|\leq \epsilon_0$.  For convenience, let 
\begin{equation}\label{eq:Qphi}
Q_\phi^{\plm}:=Z_{T_\phi,\plm x_\phi}(\lambda,b) \mbox{ and note that } \Big|\frac{Q_\phi^{\pl}}{Q_\phi^{\mi}}-\emm^{\im \phi}\Big|\leq \epsilon_0.
\end{equation}
Let $T_\theta,T_{0}$ be the trees obtained for $\phi=\theta,0$ and note that that $T_\theta, T_0$ implement the vertex activities $\emm^{\im \theta}, 1$ respectively (with precision $\epsilon_0$).

Recall that $P_k$ is the path with $k$ vertices and endpoints $u_k,v_k$, we denote by $V_{P_k}$ the set of its vertices. Let $P_{k, T_{0}}$ be the tree obtained from $P_k$  by attaching $k-2$ disjoint copies of the graph $T_{0}$ to the internal vertices of the path, i.e., for $i=1,\hdots, k-2$, identify the root $x_{0}$ of the $i$-th copy of $T_{0}$ with the $i$-th internal vertex of the path. For convenience, let
\begin{equation}\label{eq:4g45g644abc12}
A_{\plm\plm}:=Z_{P_{k, T_{0}},\plm u_k,\plm v_k}(\lambda,b).
\end{equation}

Recall that $H=(V_H,E_H)$ denotes the graph obtained by subdividing edge $e$ of $G$. Let $H_\theta\in \mathcal{G}_\Delta$ be the graph obtained from $H$ by replacing every edge $\{x,y\}$ of $H$ by a distinct copy of $P_{k, T_{0}}$ (identifying $x$ with $u_k$ and $y$ with $v_k$) and attaching the tree $T_\theta$ on the vertex $s$ of $H$ (identifying $s$ with the root $x_\theta$).  Effectively, the construction of $H_{\theta}$ is so that the Ising model on $H_\theta$ with edge activities equal to $b$ and vertex activities equal to $\lambda$  corresponds to an Ising model on $H$ with edge activities equal to $\hat{b}$, and vertex activities equal to $\lambda$ apart from that of vertex $s$ which is set to $\emm^{\im \theta}$. In this latter model, the contribution to the partition function from configurations where $s$ is set to $\pl$ is given by $t$ and the contribution to the partition function from configurations where $s$ is set to $\mi$ is given by $r$, where $t,r$ are as in \eqref{eq:rtrptp}. Based on this, we will soon show that
\begin{equation}\label{eq:5655644ff4f34}
\Big|\frac{Z_{H_\theta}(\lambda,b)}{Q^{\mi}_\theta(A_{\pl\pl})^{m+1}}-g(\theta)\Big|\leq  \epsilon_2.
\end{equation}
From \eqref{eq:5655644ff4f34}, we obtain the desired approximations $\hat{g}_{\theta}, \hat{a}_\theta$ that satisfy \eqref{eq:fptasgtheta}, \eqref{eq:fptasgthetaarg} respectively,  as follows. 
First, observe that $|g(\theta)|\geq |t|\kappa/2\geq 10\epsilon_2/\tau$ since $|\theta-a|\geq \kappa$ for every $a\in \arg(R_{\goal})$. 
Second, $T_\theta$ and $P_{k, T_{0}}$ are trees of size $poly(n, \size{\theta})$, so we can compute $Q^{\mi}_\theta$ and $A_{\pl\pl}$   in time $poly(n, \size{\theta})$. Using a poly-time algorithm for $\FactorIsing{K}$, we can compute $\hat{Z}_{\theta}$ in time $poly(n,\size{\theta})$ which is within a factor of $1\pm \tau$ from $|Z_{H_\theta}(\lambda,b)|$,  thus yielding $\hat{g}_{\theta}=\frac{\hat{Z}_{\theta}}{|Q^{\mi}_\theta|\,|A_{\pl\pl}|^{m+1}}$ that satisfies \eqref{eq:fptasgtheta}. 
Similarly, using a poly-time algorithm for $\ArgIsing{\rho}$, we can compute $\hat{A}_{\theta}$ in time $poly(n,\size{\theta})$ which is within distance $\rho$ from $\mathrm{Arg}(Z_{H_\theta}(\lambda,b))$. 
Noting that the argument $\alpha$ of $\frac{Z_{H_\theta}(\lambda,b)}{Q^{\mi}_\theta(A_{\pl\pl})^{m+1}}-g(\theta)$ satisfies $\sin(\alpha)\leq \epsilon_2/g(\theta)$, from which it follows that $\alpha\leq \rho$. 
Hence $\hat{a}_{\theta}=\hat{A}_{\theta}-\Arg(Q^{\mi}_\theta)-(m+1)\Arg(A_{\pl\pl})$ $(\mathrm{mod}\ 2\pi)$ satisfies \eqref{eq:fptasgthetaarg}.  

It remains to prove \eqref{eq:5655644ff4f34}. We first claim that
\begin{equation}\label{eq:4g45g644abc}
\Big|\frac{A_{\plm\plm}}{(Q^{\mi}_{0})^{k-2}}-Z_{P_{k},\plm u_k,\plm v_k}(1,b)\Big|\leq \frac{\epsilon_1}{4}Z_{P_{k},\plm u_k,\plm v_k}(1,b).
\end{equation}
Indeed, for a fixed $\sigma:V_{P_k}\rightarrow \{\pl,\mi\}$, the aggregate contribution to $Z_{P_{k,T_{0}}}(1,b)$ from configurations on $P_{k,T_{0}}$ that agree with $\sigma$ on $V_{P_k}$ is $(Q^{\pl}_{0})^{n_{\pl}(\sigma)}(Q^{\mi}_{0})^{n_{\mi}(\sigma)}w_{P_k,\sigma}(1,b)$ where $n_{\plm}(\sigma)$ is the number of internal vertices in $P_k$ that have spin $\plm$ under $\sigma$, so \eqref{eq:4g45g644abc} follows from aggregating over the relevant $\sigma$ and observing that\footnote{Here, and in the follow-up estimates, we use that for complex numbers $c_1,\hdots,c_i$ and $d_1,\hdots,d_i$ it holds that $\big|\prod^{i}_{j=1}c_j-\prod^{i}_{j=1}d_j\big|\leq \sum^{i}_{j=1}|c_j-d_j|\prod^{j-1}_{j'=1}|c_j|\prod^{i}_{j'=j+1}|d_j|$.} $\big|\frac{(Q^{\pl}_{0})^{j}}{(Q^{\mi}_{0})^{j}}-1\big|\leq k\epsilon_0$ for all $j=0,\hdots,k$. 
From \eqref{eq:g4g4635335r} and \eqref{eq:4g45g644abc}, it follows that $A_{\plm,\plm}\neq 0$ and 
\begin{equation}\label{eq:4g6g6677712}
\Big|\frac{A_{\mi\pl}}{A_{\pl\pl}}-\hat{b}\Big|\leq \epsilon_1,\quad\Big|\frac{A_{\mi\mi}}{A_{\pl\pl}}-1\Big|\leq \epsilon_1, 
\end{equation}
Now, for $\sigma: V_H\rightarrow\{\pl,\mi\}$ with $\sigma(s)=\pl$, let $W^{\pl}_\sigma$ be the aggregate weight of configurations on $H_\theta$ that agree with $\sigma$ on  $V(H)$. Define analogously $W^{\mi}_\sigma$. Then, we have that
\[W^{\plm}_\sigma=Q^{\plm}_\theta(A_{\pl\pl})^{m_{\pl\pl}(\sigma)}(A_{\pl\mi})^{m_{\pl\mi}(\sigma)}(A_{\mi\mi})^{m_{\mi\mi}(\sigma)},\]
where $m_{\pl\pl}(\sigma),m_{\pl\mi}(\sigma),m_{\mi\mi}(\sigma)$ denote the number edges of $E_H$ whose endpoints are assigned $\pl\pl,\pl\mi,\mi\mi$, respectively. Since the total number of edges in $E_H$ is $m+1$, we obtain
\begin{equation}\label{eq:4g5rf4556}
\bigg|\frac{W^{\pl}_\sigma}{Q^{\mi}_\theta(A_{\pl\pl})^{m+1}}-\emm^{\im \theta}\, w_{H,\sigma}(\lambda,\hat{b})\bigg|\leq \epsilon_2/10^n, \quad \bigg|\frac{W^{\mi}_\sigma}{Q^{\mi}_\theta(A_{\pl\pl})^{m+1}}- w_{H,\sigma}(\lambda,\hat{b})\bigg|\leq \epsilon_2/10^n.
\end{equation}
Observe also that the quantities $t,r$, as defined in \eqref{eq:rtrptp}, are such that
\[t=\sum_{\sigma: V_H\rightarrow \{\pl,\mi\}; \sigma(s)=\pl}w_{H,\sigma}(\lambda,\hat{b}) \mbox{ and } r=\sum_{\sigma: V_H\rightarrow \{\pl,\mi\}; \sigma(s)=\mi}w_{H,\sigma}(\lambda,\hat{b}),\]
so summing \eqref{eq:4g5rf4556} over all $\sigma$ gives \eqref{eq:5655644ff4f34}. This finishes the proof of \eqref{eq:5655644ff4f34} and hence completes the computation of $R_{\goal}$ in $poly(n)$ time.

The computation of $R_{\goal}'$ is completely analogous, once we establish an analogue of \eqref{eq:5655644ff4f34}. In particular, let $H'$ be the graph obtained from $H$ by removing vertex $s$ and adding the vertices $u',v',s'$ and the edges $\{u,u'\},\{u',s'\},\{s',v'\},\{v',v\}$; note that $H'$ is obtained from $G$ by replacing the edge $e$ by a path with three vertices. We construct $H_\theta'$ from $H'$ as above, with a minor twist: we replace every edge $\{x,y\}$ of $H'$ with a distinct copy of $P_{k, T_{0}}$ (identifying $x$ with $u_k$ and $y$ with $v_k$), we attach the rooted tree $T_\theta$ on the vertex $s'$ of $H'$ (identifying $s'$ with the root $x_\theta$), and we attach two distinct copies of the rooted tree $T_\pi$ on the vertices $u',v'$ of $H'$ (identifying $u',v'$ with the corresponding roots $x_\pi$ in the two copies of $T_\pi$). Note the use of the tree $T_\pi$ in the construction of $H'$ which, analogously\footnote{Even though $\pi$ is irrational, it holds that $\emm^{\im \pi}=-1$ and we can therefore construct $T_\pi$ satisfying \eqref{eq:Qphi} for $\phi=\pi$ using Lemma~\ref{lem:mainlemma}.} to \eqref{eq:Qphi},  implements the field $\emm^{\im \pi}=-1$ (with precision $\epsilon_0$).  Effectively, the construction of $H_{\theta}'$ is so that the Ising model on $H_\theta'$ with edge activities equal to $b$ and vertex activities equal to $\lambda$ corresponds to an Ising model on $H'$ with edge activities equal to $\hat{b}$, and vertex activities equal to $\lambda$ apart from those of $u',s',v'$ which are set to $-1,\emm^{\im \theta},-1$, respectively. In this latter model, the contribution to the partition function from configurations where $s'$ is set to $\pl$ is given by $t'=(\hat{b}^2-1)^2z_{\pl\pl}$ and the contribution to the partition function from configurations where $s'$ is set to $\mi$ is given by $r'=(\hat{b}^2-1)^2z_{\mi\mi}$. Based on this, we obtain similarly to above, the following analogue of \eqref{eq:5655644ff4f34}:
\begin{equation}\label{eq:5655644ff4f34b}
\Big|\frac{Z_{H_\theta'}(\lambda,b)}{Q^{\mi}_\theta (Q^{\mi}_\pi)^2(A_{\pl\pl})^{m+2}}-\big(t'\emm^{\im \theta}+r'\big)\Big|\leq  \epsilon_2.
\end{equation}
Having \eqref{eq:5655644ff4f34b} at hand, the computation of $R_{\goal}'$ can be carried out using exactly the same procedure as for $R_{\goal}$. This finishes the proof of Lemma~\ref{lem:maincomp}.
\end{proof}

\subsection{Proof of our main theorem}\label{sec:proofmain}
We are now ready to finish the proof of Theorem~\ref{thm:main2}, which we restate here for convenience.
\begin{thmmaintwo}
\statethmmaintwo
\end{thmmaintwo}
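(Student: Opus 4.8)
The plan is to deduce Theorem~\ref{thm:main2} from the main reduction lemma, Lemma~\ref{lem:maincomp}, together with the known $\numP$-hardness of \emph{exactly} computing $Z_G(\lambda,\hat b)$ on graphs of maximum degree three (from \cite{KowCai16}, Theorem~1.1), and the density results Theorem~\ref{thm: Density of Ratios} (equivalently $\SQ(d,b)$) that guarantee the hypothesis $\lambda\in\CQ(\Delta-1,b)$ of Lemma~\ref{lem:maincomp} is met. Specifically, for part~(a): fix $\Delta\ge 3$, $b\in(0,\tfrac{\Delta-2}{\Delta}]$ rational, and $\lambda\in\SQ$ with $\lambda\neq\pm1$. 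By Theorem~\ref{thm: Density of Ratios}(a) (applied with $d=\Delta-1$, noting $b\le\tfrac{d-1}{d+1}$) we have $\lambda\in\SQ(d,b)=\SQ\setminus\{\pm1\}$, so the hypotheses of Lemma~\ref{lem:maincomp} hold. For part~(b): fix $b\in(\tfrac{\Delta-2}{\Delta},1)$ rational; Theorem~\ref{thm: Density of Ratios}(b) says $\SQ(d,b)$ is dense in $\mathbb{S}\setminus I(\theta_b)$, and for each such $\lambda$ we again invoke Lemma~\ref{lem:maincomp}. So in both cases it suffices to show: if a poly-time algorithm exists for either $\FactorIsing{K}$ or $\ArgIsing{\rho}$ (with $K=1.001$, $\rho=\pi/40$), then one can compute $Z_G(\lambda,\hat b)$ exactly in polynomial time for every $G\in\Gc_3$, contradicting $\numP$-hardness (unless $\polP=\numP$).

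The core of the argument is a peeling-of-edges induction. Given $G=(V,E)\in\Gc_3$ with $|V|=n$, $|E|=m$, we want $Z_G(\lambda,\hat b)$. If $E=\emptyset$ then $Z_G(\lambda,\hat b)=(1+\lambda)^n$, computable directly. Otherwise pick an edge $e=\{u,v\}$ and run the algorithm of Lemma~\ref{lem:maincomp} on $(G,e)$ to obtain the ratio $R_{G,e}$ and (when $z_{\pl\pl}\neq 0$) the ratio $R_{G,e}'=z_{\mi\mi}/z_{\pl\pl}$, where $z_{\plm\plm}=Z_{G\setminus e,\plm u,\plm v}(\lambda,\hat b)$. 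The key identity is that
\[
Z_G(\lambda,\hat b)=\hat b\,z_{\pl\pl}+z_{\pl\mi}+z_{\mi\pl}+\hat b\,z_{\mi\mi},
\qquad
Z_{G\setminus e}(\lambda,\hat b)=z_{\pl\pl}+z_{\pl\mi}+z_{\mi\pl}+z_{\mi\mi},
\]
and, using Lemma~\ref{lem:conjugation} on $G\setminus e$ (so $z_{\pl\pl}=\lambda^{n}\conj{z_{\mi\mi}}$, $z_{\pl\mi}=\lambda^{n}\conj{z_{\mi\pl}}$), one can express the quantity $R_{G,e}$ — which equals $-r/t$ for $t=\hat b^2 z_{\mi\mi}+\hat b(z_{\pl\mi}+z_{\mi\pl})+z_{\pl\pl}$ and $r=\lambda^n\conj t$ — in a way that, combined with $R_{G,e}'$, pins down the vector $(z_{\pl\pl},z_{\pl\mi}+z_{\mi\pl},z_{\mi\mi})$ up to an overall scalar, hence determines the ratio $Z_G(\lambda,\hat b)/Z_{G\setminus e}(\lambda,\hat b)$ exactly (here one uses $Z_{G\setminus e}(\lambda,\hat b)\neq 0$, which holds by \eqref{eq:nonzero} since $G\setminus e\in\Gc_3$). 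Recursively computing $Z_{G\setminus e}(\lambda,\hat b)$ (again nonzero by \eqref{eq:nonzero}) and multiplying by this ratio yields $Z_G(\lambda,\hat b)$. Since each step removes one edge, after $m$ steps we are done, and the total running time is $\mathrm{poly}(n)$; thus a poly-time algorithm for either oracle problem gives a poly-time algorithm for exact evaluation of $Z_G(\lambda,\hat b)$ on $\Gc_3$, which is $\numP$-hard by \cite{KowCai16}. This establishes $\numP$-hardness of $\FactorIsing{K}$ and $\ArgIsing{\rho}$, completing both parts.

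The main obstacle in turning this into a full proof is the bookkeeping needed to extract the single number $Z_G(\lambda,\hat b)/Z_{G\setminus e}(\lambda,\hat b)$ from the two ratios $R_{G,e},R_{G,e}'$ output by Lemma~\ref{lem:maincomp}, and in particular handling the degenerate case $z_{\pl\pl}=0$ where $R_{G,e}'$ is not guaranteed. One clean way around this: note $z_{\pl\pl}=\lambda^n\conj{z_{\mi\mi}}$, so $z_{\pl\pl}=0\iff z_{\mi\mi}=0$, and in that case $R_{G,e}$ simplifies to $-1$ (since then $r=\hat b(z_{\pl\mi}+z_{\mi\pl})$ and $t=\hat b(z_{\pl\mi}+z_{\mi\pl})$, giving $R_{G,e}=-1$ when $z_{\pl\mi}+z_{\mi\pl}\neq 0$, while if additionally $z_{\pl\mi}+z_{\mi\pl}=0$ then $Z_{G\setminus e}=0$, impossible); so detecting $R_{G,e}=-1$ flags the degenerate branch, and there one computes $Z_G/Z_{G\setminus e}=(z_{\pl\mi}+z_{\mi\pl})/(z_{\pl\mi}+z_{\mi\pl})\cdot$(appropriate combination) — in fact when $z_{\pl\pl}=z_{\mi\mi}=0$ we get $Z_G(\lambda,\hat b)=z_{\pl\mi}+z_{\mi\pl}=Z_{G\setminus e}(\lambda,\hat b)$, so the ratio is simply $1$. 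In the generic case $z_{\pl\pl}\neq 0$, write $\alpha=z_{\pl\mi}+z_{\mi\pl}$, normalize $z_{\pl\pl}=1$, so $z_{\mi\mi}=R_{G,e}'$, and $\alpha$ is recovered from $R_{G,e}=-(\hat b^2+\hat b\alpha+R_{G,e}')/(\hat b^2 R_{G,e}'+\hat b\alpha+1)$, a linear equation in $\alpha$ (solvable since the denominator of $R_{G,e}$, namely $t$, is nonzero and the coefficient of $\alpha$ in numerator minus $R_{G,e}\times$denominator is $\hat b(1+R_{G,e})\neq 0$ as $R_{G,e}\neq -1$ in this branch). Then $Z_G(\lambda,\hat b)/Z_{G\setminus e}(\lambda,\hat b)=(\hat b(1+R_{G,e}')+\alpha)/(1+R_{G,e}'+\alpha)$, with the denominator equal to $Z_{G\setminus e}(\lambda,\hat b)/z_{\pl\pl}\neq 0$. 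All arithmetic is over $\CQ$ with polynomially bounded bit-complexity (the sizes of $R_{G,e},R_{G,e}'$ are $\mathrm{poly}(n)$ by the lemma), so the recursion runs in polynomial time, and we conclude.
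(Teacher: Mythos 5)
Your overall strategy coincides with the paper's: reduce to the $\numP$-hard exact evaluation of $Z_G(\lambda,\hat b)$ on $\Gc_3$ \cite{KowCai16} by peeling edges and recovering each ratio $Z_G(\lambda,\hat b)/Z_{G\setminus e}(\lambda,\hat b)$ from the outputs of Lemma~\ref{lem:maincomp}, with Theorem~\ref{thm: Density of Ratios} supplying the hypothesis $\lambda\in\SQ(\Delta-1,b)$ in both parts. However, the extraction step has a genuine gap. You invoke Lemma~\ref{lem:maincomp} only once, on $(G,e)$, and claim that the value of $R_{G,e}$ flags the degenerate case $z_{\pl\pl}=0$. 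It does not: with $A=\hat b$, $B=C=1$ the condition ``numerator $=\pm$ denominator'' in $R_{G,e}$ is equivalent to $(\hat b^2-1)(z_{\pl\pl}-z_{\mi\mi})=0$, i.e.\ to $z_{\pl\pl}=z_{\mi\mi}$, which can happen with both nonzero (Lemma~\ref{lem:conjugation} only forces $z_{\pl\pl}=\lambda^{n}\conj{z_{\mi\mi}}$). In that case your linear solve for $\alpha=z_{\pl\mi}+z_{\mi\pl}$ also breaks down, since the coefficient $\hat b(1+R_{G,e})$ you rely on vanishes exactly there, and $R_{G,e}'=1$ carries no further information; so the pair $(R_{G,e},R_{G,e}')$ genuinely does not determine $\alpha/z_{\pl\pl}$, and your algorithm would either misclassify this case as degenerate or be unable to proceed. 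Note also that when $z_{\pl\pl}=0$ the lemma gives no guarantee on the value returned for $R_{G,e}'$, so it cannot be used for detection either. The paper closes this hole by a \emph{second} invocation of Lemma~\ref{lem:maincomp} on the graph $G'$ obtained by replacing $e$ with a three-edge path, which produces a second ratio with coefficients $(A',B',C')$ not proportional to $(A,B,C)$; the two relations together yield the zero-test \eqref{eq:criterion1}, the test \eqref{eq:criterion2} for $z_{\pl\mi}+z_{\mi\pl}=0$, and enough independent linear information to pin down $r^*$. Any correct completion of your argument needs such a second, inequivalent probe.

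A secondary but propagating error: with the paper's convention the weight is $b^{\delta(\sigma)}$ with $\delta$ the number of \emph{bichromatic} edges, so $Z_G(\lambda,\hat b)=z_{\pl\pl}+z_{\mi\mi}+\hat b(z_{\pl\mi}+z_{\mi\pl})$; you have the factor $\hat b$ attached to the monochromatic terms. Consequently your value of the ratio in the degenerate branch should be $\hat b$ rather than $1$, and your generic formula should read $(1+R_{G,e}'+\hat b\alpha)/(1+R_{G,e}'+\alpha)$. You also use $R_{G,e}=-r/t$ whereas the lemma outputs $r/t$; this is harmless but should be made consistent.
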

\begin{proof}[Proof of Theorem~\ref{thm:main2}]
Let $b\in (0,1)$ be a rational number and let $\ \lambda\in \SQ(\Delta-1,b)$.
By Theorem~\ref{thm: Density of Ratios} it suffices to show that $\FactorIsing{K}$ and $\ArgIsing{\rho}$ are $\numP$-hard.
To prove the $\numP$-hardness for these problems, we will show that, assuming a poly-time algorithm for either $\FactorIsing{K}$ or $\ArgIsing{\rho}$, on input of a graph $G\in\mathcal{G}_3$ we can compute $Z_{G}(\lambda,\hat{b})$ in poly-time, which is $\numP$-hard by \cite[Theorem 1.1]{KowCai16}.
In fact, it suffices to compute in poly-time, for an arbitrary edge $e$ of $G$, the ratio $\frac{Z_{G}(\lambda,\hat{b})}{Z_{G\backslash e}(\lambda,\hat{b})}$ since then we can compute $Z_{G}(\lambda,\hat{b})$ using a telescoping product over the edges of the graph $G$.

So fix an arbitrary edge $e=\{u,v\}$ of $G$ and let $z_{\plm\plm}:=Z_{G,\plm u,\plm v}(\lambda,\hat{b})$. The ratio $r^*:=\frac{Z_{G}(\lambda,\hat{b})}{Z_{G\backslash e}(\lambda,\hat{b})}$ is well-defined since, by the choice of  $\hat{b}$, we have $Z_{G\backslash e}(\lambda,\hat{b})\neq 0$ (cf. \eqref{eq:2wsx} and \eqref{eq:nonzero}). Moreover, we can express $r^*$ using the $z_{\plm\plm}$'s as follows:
\[r^*=\frac{z_{\pl\pl}+z_{\mi\mi}+\hat{b}(z_{\pl\mi}+z_{\mi\pl})}{z_{\pl\pl}+z_{\mi\mi}+z_{\pl\mi}+z_{\mi\pl}}.\]

We will compute $r^*$ using Lemma~\ref{lem:maincomp}. Namely, by Lemma~\ref{lem:maincomp}, we can compute in poly-time the value of the ratio
\begin{equation}\label{eq:r134223}
r=R_{G,e}=\frac{A^2z_{\pl\pl}+ AB(z_{\pl\mi}+z_{\mi\pl})+B^2z_{\mi\mi}}{A^2z_{\mi\mi}+AC(z_{\pl\mi}+z_{\mi\pl})+C^2z_{\pl\pl}}, \mbox{\, where \, }\begin{array}{l}A:=\hat{b}\\B:=1\\ C:=1\end{array}.
\end{equation}
Let $G'$ be the graph obtained from $G\setminus e$ by  adding two new vertices $u',v'$ and adding the edges $\{u,u'\},\{u',v'\},\{v',v\}$.  We next apply Lemma~\ref{lem:maincomp} to the graph $G'$ with the edge $e'=\{u',v'\}$. We first express $Z_{G'\backslash e',\plm u',\plm v'}(\lambda,\hat{b})$ in terms of the $z_{\plm\plm}$'s. We have 
\begin{align*}
Z_{G'\backslash e',\pl u',\pl v'}(\lambda,\hat{b})&=\lambda^2 \big(z_{\pl\pl}+\hat{b} (z_{\pl\mi}+z_{\mi\pl})+\hat{b}^2 z_{\mi\mi}\big),\\
Z_{G'\backslash e',\pl u',\mi v'}(\lambda,\hat{b})&= \lambda\big(\hat{b} z_{\pl\pl}+z_{\pl\mi}+\hat{b}^2 z_{\mi\pl}+\hat{b} z_{\mi\mi}\big),\\
Z_{G'\backslash e',\mi u',\pl v'}(\lambda,\hat{b})&= \lambda\big(\hat{b} z_{\pl\pl}+\hat{b}^2z_{\pl\mi}+ z_{\mi\pl}+\hat{b} z_{\mi\mi}\big),\\
Z_{G'\backslash e',\mi u',\mi v'}(\lambda,\hat{b})&= \hat{b}^2 z_{\pl\pl}+\hat{b} (z_{\pl\mi}+z_{\mi\pl})+ z_{\mi\mi}.
\end{align*} 
Then, by Lemma~\ref{lem:maincomp}, we can compute in poly-time the value of the ratio
\begin{equation}\label{eq:r134223b}
r'=R_{G',e'}=\frac{(A')^2z_{\pl\pl}+ A'B'(z_{\pl\mi}+z_{\mi\pl})+(B')^2z_{\mi\mi}}{(A')^2z_{\mi\mi}+A'C'(z_{\pl\mi}+z_{\mi\pl})+(C')^2z_{\pl\pl}}, \mbox{\, where }\begin{array}{l}A':=\hat{b}(\lambda+1)\\B':=1+\hat{b}^2\lambda\\ C':=\hat{b}^2+\lambda\end{array}.
\end{equation}

We are now in position to complete the computation of $r^*$. We first show how to decide in poly-time  whether $z_{\pl \pl}=0$. We claim that 
\begin{equation}\label{eq:criterion1}
z_{\pl \pl}=0 \Longleftrightarrow r=B/C\mbox{ and }  r'=B'/C'.
\end{equation}
Indeed, if $z_{\pl \pl}=0$, then $z_{\mi\mi}=0$ from Lemma~\ref{lem:conjugation}, and therefore from \eqref{eq:r134223}, \eqref{eq:r134223b} we have that
$r=B/C$ and $r'=B'/C'$. Conversely, using that $A^2\neq B C$ and $(A')^2\neq B' C'$, we have that
\[r=B/C \Longrightarrow Cz_{\pl\pl}=Bz_{\mi\mi},\qquad r'=B'/C' \Longrightarrow C'z_{\pl \pl}=B'z_{\mi \mi},\]
which together imply that $z_{\pl \pl}=0$. 

Using \eqref{eq:criterion1} we can decide in poly-time whether $z_{\pl \pl}=0$. If so, by Lemma~\ref{lem:conjugation}, we have $z_{\mi\mi}=0$ and hence $r^*=\hat{b}$.  So, assume $z_{\pl \pl}\neq 0$, and hence $z_{\mi\mi}\neq 0$ in what follows. We claim that 
\begin{equation}\label{eq:criterion2}
z_{\pl\mi}+z_{\mi\pl}=0 \Longleftrightarrow r=\frac{A^2z_{\pl\pl}+B^2z_{\mi\mi}}{A^2z_{\mi\mi}+C^2z_{\pl\pl}}, r'=\frac{(A')^2z_{\pl\pl}+(B')^2z_{\mi\mi}}{(A')^2z_{\mi\mi}+(C')^2z_{\pl\pl}}.
\end{equation}
The forward direction is again trivial. For the backward direction, we have
\begin{align*}
r&=\frac{A^2z_{\pl\pl}+B^2z_{\mi\mi}}{A^2z_{\mi\mi}+C^2z_{\pl\pl}} \Longrightarrow C z_{\pl\pl}=Bz_{\mi\mi} \mbox{ or }z_{\pl\mi}+z_{\mi\pl}=0,\\ 
r'&=\frac{(A')^2z_{\pl\pl}+(B')^2z_{\mi\mi}}{(A')^2z_{\mi\mi}+(C')^2z_{\pl\pl}} \Longrightarrow C'z_{\pl \pl}=B'z_{\mi \mi} \mbox{ or }z_{\pl\mi}+z_{\mi\pl}=0.
\end{align*}
Since $z_{\pl \pl},z_{\mi\mi}\neq 0$, we therefore obtain that $z_{\pl\mi}+z_{\mi\pl}=0$, proving \eqref{eq:criterion2}. 

Note, we can decide the right-hand of \eqref{eq:criterion2} in poly-time using the value of the ratio $r''=z_{\mi\mi}/z_{\pl\pl}$ from the second part of Lemma~\ref{lem:maincomp}. If it turns out that $z_{\pl\mi}+z_{\mi\pl}=0$, then $r^*=1$ and we are done.  Otherwise, we can use the values of $r$ and $r''$ to compute the ratios  $\frac{z_{\pl \pl}}{z_{\pl\mi}+z_{\mi\pl}}, \frac{z_{\mi\mi}}{z_{\pl\mi}+z_{\mi\pl}}$, which we can then use to compute $r^*$. 

This completes the computation of the ratio $r^*$, and therefore the proof of Theorem~\ref{thm:main2}.
\end{proof}

\section{Equivalence for $\lambda=-1$ with Approximately Counting Perfect Matchings}\label{sec:minusone}
In this section, we show that for $\lambda=-1$, the problem of approximating the partition of the ferromagnetic Ising model on graphs of maximum degree $\Delta$ is equivalent to the problem $\PM$, the problem of approximately counting perfect matchings  on general graphs. The proof follows the technique in \cite{Tutte08}, where the case of negative $b$ but $\lambda=1$ was considered; here however, we need to rework the relevant ingredients. 
The main such ingredient is the following ``high-temperature'' expansion formula for~$\lambda=-1$.
\begin{lemma}\label{lem:connection}
Let $\lambda=-1$ and $b\neq -1$ be an arbitrary number.  Then, for any graph $G=(V,E)$,
\[Z_G(\lambda,b)=(-2)^{|V|}\Big(\frac{1+b}{2}\Big)^{|E|}\sum_{S\subseteq E;\, S\, \mbox{{\small  odd } }} \Big(\frac{1-b}{1+b}\Big)^{|S|},\]
where the sum is over $S\subseteq E$ such that every vertex $v\in V$ has odd degree in the subgraph $(V,S)$.
\end{lemma}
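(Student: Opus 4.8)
\textbf{Proof plan for Lemma~\ref{lem:connection}.}

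The plan is to start from the standard change of variables that turns the Ising weight into a product over edges, and then expand. Concretely, for a configuration $\sigma:V\to\{\pl,\mi\}$ and an edge $\{x,y\}\in E$, the contribution of that edge to $b^{\delta(\sigma)}$ is $b$ if $\sigma(x)\neq\sigma(y)$ and $1$ otherwise; writing $\sigma(v)=\pm1$ (identifying $\pl$ with $+1$ and $\mi$ with $-1$), this is exactly $\tfrac{1+b}{2}+\tfrac{1-b}{2}\sigma(x)\sigma(y)$. Also $\lambda^{|n_\pl(\sigma)|}=(-1)^{|n_\pl(\sigma)|}$, and since $|n_\pl(\sigma)|$ counts the vertices with $\sigma(v)=+1$ we have $(-1)^{|n_\pl(\sigma)|}=\prod_{v\in V}\tfrac{1-\sigma(v)}{2}\cdot 2^{|V|}\cdot(\dots)$ — more cleanly, $(-1)^{|n_\pl(\sigma)|}=\prod_{v:\sigma(v)=+1}(-1)$, which I will rewrite below. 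So
\[
Z_G(-1,b)=\Big(\tfrac{1+b}{2}\Big)^{|E|}\sum_{\sigma}(-1)^{|n_\pl(\sigma)|}\prod_{\{x,y\}\in E}\Big(1+\tfrac{1-b}{1+b}\,\sigma(x)\sigma(y)\Big).
\]
Expanding the product over edges gives $\sum_{S\subseteq E}\big(\tfrac{1-b}{1+b}\big)^{|S|}\prod_{\{x,y\}\in S}\sigma(x)\sigma(y)=\sum_{S\subseteq E}\big(\tfrac{1-b}{1+b}\big)^{|S|}\prod_{v\in V}\sigma(v)^{\deg_S(v)}$, where $\deg_S(v)$ is the degree of $v$ in $(V,S)$. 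Interchanging the two sums, the key step is to evaluate $\sum_{\sigma}(-1)^{|n_\pl(\sigma)|}\prod_{v\in V}\sigma(v)^{\deg_S(v)}$ for a fixed $S$.

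The second step is this inner character sum. Since $\sigma$ ranges over all maps $V\to\{\pm1\}$, the sum factorises over vertices: it equals $\prod_{v\in V}\sum_{\sigma(v)\in\{\pm1\}}(-1)^{[\sigma(v)=+1]}\sigma(v)^{\deg_S(v)}$. For a single vertex $v$ with $s:=\deg_S(v)$, the factor is $(-1)\cdot 1^{s}+(+1)\cdot(-1)^{s}=-1+(-1)^{s}$, which is $-2$ if $s$ is odd and $0$ if $s$ is even. Hence the whole product is $0$ unless every vertex of $(V,S)$ has odd degree, in which case it is $(-2)^{|V|}$. Plugging this back in yields exactly
\[
Z_G(-1,b)=\Big(\tfrac{1+b}{2}\Big)^{|E|}(-2)^{|V|}\sum_{S\subseteq E:\ S\text{ odd}}\Big(\tfrac{1-b}{1+b}\Big)^{|S|},
\]
which is the claimed identity. (One should note the harmless edge case: if some $\sigma(v)$ has $\deg_S(v)=0$ the formula $-1+(-1)^0=0$ still holds, so "odd degree at every vertex" is forced; and the factorisation is valid because the Ising weight, after the edge substitution, is genuinely a polynomial in the $\sigma(v)$ with the $\pm1$ sum acting as a tensor product of single-variable sums.)

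I do not anticipate a genuine obstacle here — this is the classical high-temperature / low-temperature duality, and the only things to be careful about are (i) getting the normalisation constants $\big(\tfrac{1+b}{2}\big)^{|E|}$ and $(-2)^{|V|}$ right, which the single-vertex computation pins down, and (ii) justifying the interchange of the finite sums over $\sigma$ and over $S\subseteq E$, which is immediate since both index sets are finite. The requirement $b\neq -1$ is exactly what makes $\tfrac{1-b}{1+b}$ well-defined, so no issue there. If one wanted to avoid division by $1+b$ entirely one could instead write $\prod_{\{x,y\}\in E}\big(\tfrac{1+b}{2}+\tfrac{1-b}{2}\sigma(x)\sigma(y)\big)$ and expand directly, collecting $\big(\tfrac{1+b}{2}\big)^{|E|-|S|}\big(\tfrac{1-b}{2}\big)^{|S|}$; this gives the same statement after factoring out $\big(\tfrac{1+b}{2}\big)^{|E|}$, and is perhaps the cleanest route to present.
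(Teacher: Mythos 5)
Your proof is correct and follows essentially the same route as the paper: the same edge-weight substitution $b^{\mathbf{1}\{\sigma_u\neq\sigma_v\}}=\tfrac{1+b}{2}\bigl(1+\tfrac{1-b}{1+b}\sigma_u\sigma_v\bigr)$, expansion over $S\subseteq E$, and factorisation of the spin sum over vertices with the parity check. The only cosmetic difference is that the paper writes $(-1)^{n_+(\sigma)}$ as $\imm^{|V|}\prod_v \imm^{\sigma_v}$ before factorising, whereas you use the per-vertex indicator $(-1)^{[\sigma(v)=+1]}$ directly, which is equivalent.
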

\begin{proof}
Let $G=(V,E)$ be a graph. For a set $S\subseteq E$ and a vertex $v\in V$, we let $d_v(S)$ denote the degree of $v$ in the subgraph $(V,S)$. 

For the purposes of this proof, it will be convenient to view configurations of the Ising model on $G$ as vectors in $\{\pm1\}^{V}$. Now, for a configuration $\sigma\in \{\pm 1\}^{V}$ we use the notation $n_+(\sigma)$ to denote the number of vertices with spin $+1$.   Observe that $n_+(\sigma)=\tfrac{1}{2}\big(|V|+\sum_{v\in V}\sigma_v\big)$ and that for an edge $e=(u,v)$, we have $b^{\mathbf{1}\{\sigma_u\neq \sigma_v\}}=\tfrac{1+b}{2}\big(1+\tfrac{1-b}{1+b}\sigma_u\sigma_v\big)=\rho\big(1+\nu\sigma_u\sigma_v\big)$, where for convenience we set  $\rho:=\tfrac{1+b}{2}$ and $\nu:=\tfrac{1-b}{1+b}$. So, using that $\imm^2=-1$,
\begin{align*}
Z_G(\lambda,b)&=\rho^{|E|}\sum_{\sigma\in\{\pm 1\}^{V}}\lambda^{n_+(\sigma)}\prod_{e=(u,v)\in E}(1+\nu\sigma_u\sigma_v)=\rho^{|E|}\sum_{\sigma\rightarrow\{\pm 1\}^{V}}\lambda^{n_+(\sigma)}\sum_{S\subseteq E}\nu^{|S|}\prod_{v\in V}(\sigma_v)^{d_v(S)}\\
&=\imm^{|V|}\rho^{|E|}\sum_{S\subseteq E}\nu^{|S|}\sum_{\sigma\in\{\pm 1\}^{V}}\prod_{v\in V}\imm^{\sigma_v}(\sigma_v)^{d_v(S)}.
\end{align*}
The latter sum is equal to $\prod_{v\in V}\sum_{\sigma_v\in \{\pm1\}}\imm^{\sigma_v}(\sigma_v)^{d_v(S)}$, which equals $0$ if $d_v(S)$ is even, and $2\imm$ otherwise. Plugging this in the expression above, yields the lemma.
\end{proof}

Now, we are ready to show the main theorem for this section. For counting problems $A,B$ we use the notion of $\mathsf{AP}$-reductions, see \cite{DGGJ}. Roughly, we have that $A\leq_{\mathsf{AP}} B$ if an $\mathsf{FPRAS}$ for $B$ can be converted to an $\mathsf{FPRAS}$ for $A$, and $A\equiv_{\mathsf{AP}} B$ if both $A\leq_{\mathsf{AP}} B$ and $B\leq_{\mathsf{AP}} A$ hold.

\begin{theorem}\label{thm:lambdaminusone}
Let $\lambda=-1$ and $b\in (0,1)$ be a rational. Then, for any connected graph $G$, we have $Z_G(\lambda,b)>0$ if $G$ has an even number of vertices and $Z_G(\lambda,b)=0$, otherwise.  

Moreover, for all integers $\Delta\geq 3$, we have that $\FactorIsingb\equiv_{\mathsf{AP}} \PM$.
\end{theorem}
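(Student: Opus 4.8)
```latex
\noindent\textbf{Proof plan for Theorem~\ref{thm:lambdaminusone}.}
The plan is to extract everything from the high-temperature expansion in Lemma~\ref{lem:connection}.
First I would record the sign statement. Fix a connected graph $G=(V,E)$. By Lemma~\ref{lem:connection} with $\lambda=-1$,
\[
Z_G(-1,b)=(-2)^{|V|}\Big(\tfrac{1+b}{2}\Big)^{|E|}\sum_{\substack{S\subseteq E\\ S\text{ odd}}}\Big(\tfrac{1-b}{1+b}\Big)^{|S|}.
\]
Since $b\in(0,1)$, the prefactor $\big(\tfrac{1+b}{2}\big)^{|E|}$ is positive and $\tfrac{1-b}{1+b}\in(0,1)$, so each term $\big(\tfrac{1-b}{1+b}\big)^{|S|}$ is strictly positive; hence the sum is $\geq 0$, and it is strictly positive exactly when there exists at least one odd edge set $S$ (every vertex has odd degree in $(V,S)$). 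A standard parity fact, which I would state and prove in one line via $\sum_v d_v(S)=2|S|$, is that $G$ admits an odd edge set if and only if $|V|$ is even (necessity is the parity identity; for sufficiency on a connected graph take a spanning tree and build $S$ greedily from the leaves, or invoke that a connected graph with an even number of vertices has a perfect ``$T$-join'' for $T=V$). Combined with the fact that $(-2)^{|V|}$ has sign $(-1)^{|V|}$ and $|V|$ is even in the nonzero case, we get $Z_G(-1,b)>0$ when $|V|$ is even and $Z_G(-1,b)=0$ otherwise, proving the first claim. (For disconnected $G$ one multiplies over components, but the theorem statement is for connected $G$.)

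For the equivalence $\FactorIsingb\equiv_{\mathsf{AP}}\PM$, I would argue both directions through the odd-subgraph sum. Write $\nu=\tfrac{1-b}{1+b}\in(0,1)$ and let $N_G:=\sum_{S\subseteq E,\ S\text{ odd}}\nu^{|S|}$, so approximating $|Z_G(-1,b)|$ is (up to the explicit polynomially computable factor $2^{|V|}\big(\tfrac{1+b}{2}\big)^{|E|}$) the same as approximating $N_G$. The classical identification, as in \cite{Tutte08} and going back to Fisher's correspondence between the Ising model and dimers, is that the generating function of odd edge sets on $G$ equals a generating function of perfect matchings on an auxiliary graph. Concretely: build $\widehat G$ by attaching to each vertex $v$ of $G$ a small gadget with $d_G(v)$ ``ports'' so that a perfect matching of $\widehat G$ restricted to $E(G)$ is exactly an odd subgraph of $G$, with the weight bookkeeping arranged so that $N_G$ equals (a computable rational multiple of) a weighted perfect matching sum on $\widehat G$; then, since $\nu\in(0,1)$ is a fixed rational, a further standard blow-up replaces the weighted perfect matching count by an \emph{unweighted} perfect matching count on a polynomial-size graph $G'$ with $\#\mathrm{PM}(G')=N_G\cdot(\text{computable factor})$. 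This yields $\FactorIsingb\leq_{\mathsf{AP}}\PM$ for every fixed $\Delta\geq 3$ (the gadgets keep degrees bounded, but $\PM$ is on general graphs so bounded degree is not even needed on the target side). For the reverse direction I would take an arbitrary graph $H$ and realise $\#\mathrm{PM}(H)$ as an odd-subgraph sum at a suitable value of $\nu$ on a bounded-degree graph: using the complement/leaf trick one turns perfect matchings of $H$ into odd subgraphs, and one reduces the degree by replacing high-degree vertices with equality gadgets (paths/cycles of fresh vertices), checking that this preserves the weighted count up to a computable factor; choosing the number of subdivisions so the effective edge weight is our fixed $\nu=\tfrac{1-b}{1+b}$ — exactly the reparametrisation used for $\hat b$ in Section~\ref{sec:reductionstep}, cf. \eqref{eq:r4f4123}--\eqref{eq:g4g4635335r} — gives $\PM\leq_{\mathsf{AP}}\FactorIsingb$ with all degrees at most $\Delta$.

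I expect the main obstacle to be the second, ``$\PM\leq_{\mathsf{AP}}\FactorIsingb$'', direction, and specifically the bounded-degree requirement: one must simulate an arbitrary-degree perfect-matching instance by a degree-$\Delta$ Ising instance while (i) keeping the auxiliary graph of polynomial size, (ii) controlling the multiplicative bookkeeping factors so they remain computable and bounded, and (iii) pinning the effective edge parameter to the single fixed value $\nu=\tfrac{1-b}{1+b}$ coming from $b$. Degree reduction for matchings/odd-subgraph problems is known (this is the content of the construction in \cite{Tutte08}, which I would follow and adapt), so the work is in re-deriving the identity at $\lambda=-1$ from Lemma~\ref{lem:connection} rather than from the $\lambda=1$, negative-$b$ setting of \cite{Tutte08}, and in verifying that the $\mathsf{AP}$-reduction overhead (the explicit prefactors $2^{|V|}(\tfrac{1+b}{2})^{|E|}$ and the gadget multipliers) is polynomial-time computable and does not blow up the relative error. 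The forward direction $\FactorIsingb\leq_{\mathsf{AP}}\PM$ is comparatively routine once the odd-subgraph-to-matching gadget is in place. Finally I would note, as in Remark~\ref{rem:real}, that this equivalence together with \cite{Jerrum1986} shows $\FactorIsingb$ can be approximated with an $\NP$-oracle and hence is not $\numP$-hard unless $\numP=\NP$.
```
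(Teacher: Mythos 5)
Your sign argument and your direction $\FactorIsingb\leq_{\mathsf{AP}}\PM$ essentially coincide with the paper's proof: Lemma~\ref{lem:connection}, the parity criterion for the existence of an odd edge set, the Fisher-style vertex gadget turning odd subgraphs into perfect matchings, parallel-edge/subdivision gadgets to absorb the rational weight $\tfrac{1-b}{1+b}$, and path ``equality'' gadgets to reduce $\Delta\geq 4$ to $\Delta=3$.

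The direction $\PM\leq_{\mathsf{AP}}\FactorIsingb$, which you correctly flag as the hard one, has a genuine gap. There is no exact identity between $\#\mathrm{PM}(G)$ and the odd-subgraph sum $\sum_{S\ \mathrm{odd}}\nu^{|S|}$ at the fixed weight $\nu=\tfrac{1-b}{1+b}$: that sum ranges over \emph{all} odd subgraphs, of which perfect matchings are only the minimum-cardinality ones, so ``choosing the number of subdivisions so the effective edge weight is our fixed $\nu$'' cannot work — the target parameter $b$ (hence $\nu$) is fixed by the problem, and even if it weren't, the non-matching odd subgraphs would contaminate the count. What the paper does instead is the opposite reparametrisation: starting from a degree-$3$ instance $G$ of $\PM$ (via the known $\mathsf{AP}$-interreduction), it builds, \emph{inside} the Ising model at the given $(\lambda,b)=(-1,b)$, a gadget $P_k^*$ implementing an effective interaction $b_k$ with $1-b_k<(1-b)^{k-1}$, i.e.\ an effective weight $\nu_k=\tfrac{1-b_k}{1+b_k}$ that is \emph{exponentially small} in $k$. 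Then one uses the dominant-term estimate: every odd $S$ satisfies $|S|\geq n/2$ with equality iff $S$ is a perfect matching, and all others have $|S|\geq n/2+1$, so after factoring out $\nu_k^{n/2}$ the non-matching terms contribute at most $2^m\nu_k\leq\epsilon|\mathcal{M}|$ relative error for a suitable polynomial choice of $k$. This approximation step, not an exact correspondence, is what makes the $\mathsf{AP}$-reduction go through, and it is absent from your plan. A second omission: a bare path/subdivision gadget does not work here because every auxiliary vertex carries the external field $\lambda=-1$; the paper must attach a pendant vertex to each internal path vertex precisely so that the field contributions collapse to a uniform multiplicative constant $(1-b)^{k-2}$ (up to a sign absorbed into $b_k$). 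Without addressing these two points your reduction does not close.
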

\begin{proof}
The statement about the sign of $Z_G(\lambda,b)$ follows from Lemma~\ref{lem:connection}, and the fact that every connected graph with an even number of vertices has a spanning subgraph where every vertex has odd degree. We thus focus on proving the $\mathsf{AP}$-equivalence.\vskip 0.1cm

\noindent $\boldsymbol{\PM\leq_{\mathsf{AP}} \FactorIsingb}$. It is well-known that the problem of approximating the number of perfect matchings on general graphs is $\mathsf{AP}$-equivalent to the same problem on graphs of maximum degree 3, see, e.g., \cite[Lemma 28]{GLLZ}. So, let $G=(V_G,E_G)$ be a graph of maximum degree 3, with $n=|V_G|$ and $m=|E_G|$, and let $\mathcal{M}$ be the set of perfect matchings of $G$.  Since we can check whether a graph has a perfect matching in polynomial time, we may further assume that $|\mathcal{M}|>0$ and in particular that $n$ is even. Let $\epsilon\in (0,1)$ be the desired relative error that we want to approximate $|\mathcal{M}|$.

Analogously to \eqref{eq:r4f4123} and \eqref{eq:g4g4635335r}, for $k=1+2\lceil \tfrac{m^2+\ln(1/\epsilon)}{-\ln(1-b)}\rceil$,  let $P_k=(V_k,E_k)$ be the path with $k$ vertices whose endpoints are labeled $u_k,v_k$ and $P_k^*=(V_k^*,E_k^*)$ be the graph obtained from $P_k$ by attaching a vertex $z_i$ to the $i$-th internal vertex $w_i$ of $P_k$, for $i=1,\hdots,k-2$. Let $A_{\plm,\plm}:=Z_{P_k^*,\plm u_k,\plm v_k}(\lambda,b)$. Then,  it is not hard  to see that\footnote{Here, the key observation is that for a configuration $\tau:V_k\rightarrow \{\pl,\mi\}$, the aggregate weight of configurations $\sigma:V_k^*\rightarrow \{\pl,\mi\}$ with $\sigma_{V_k}=\tau$ is $(-1)^{\mathbf{1}\{\tau_{u_k}\neq \tau_{v_k}\}}(1-b)^{k-2}w_{P_k,\tau}(1,b)$. Indeed, if $\tau(w_i)=\pl$ then the contribution of the edge $(w_i,z_i)$ and the external field on $z_i$ is $b+\lambda=b-1$, whereas if $\tau(w_i)=\mi$ the contribution is $1+b\lambda=1-b$. This, combined with the factor $\lambda^{n_\pl(\tau)}$ coming from the external fields on $V_k$, gives the factor $(-1)^{\mathbf{1}\{\tau_{u_k}\neq \tau_{v_k}\}}(1-b)^{k-2}$ above; the remaining contribution is just the weight of $\tau$ on $P_k$ when the external field of all vertices on $P_k$ is equal to $1$.}
\begin{equation*}
\left[\begin{array}{cc} A_{\pl\pl}& -A_{\pl\mi} \\ -A_{\mi\pl}&A_{\mi\mi}\end{array}\right]=(1-b)^{k-2}\left[\begin{array}{cc} 1& b \\ b& 1\end{array}\right]^{k-1}
\end{equation*}
and so 
\begin{align*}
&A_{\pl\pl}= A_{\mi\mi}=\tfrac{1}{2}\big((1+b)^{k-1}+(1-b)^{k-1}\big)(1-b)^{k-2}, \text{ and }
\\
&A_{\pl\mi}= A_{\mi\pl}=\tfrac{1}{2}\big((1-b)^{k-1}-(1+b)^{k-1}\big)(1-b)^{k-2}.
\end{align*}
We next set
\begin{equation}\label{eq:g4g4635335rb}
b_k:= -\frac{A_{\pl\mi}}{A_{\pl\pl}}=-\frac{A_{\mi\pl}}{A_{\mi\mi}},\quad \mbox{and observe that}\quad  1-(1-b)^{k-1}<b_k<1.
\end{equation} 

Let $H=(V_H,E_H)$ be an instance of $\FactorIsingb$ obtained from $G$ by replacing every edge $e=(u,v)$ of $G$ with a distinct copy of $P_k^*$, identifying the endpoints $u,v$ with $u_k,v_k$, respectively. Then, we claim that
\begin{equation}\label{eq:GHAbl}
Z_H(\lambda,b)=(A_{\pl \pl})^{m}Z_G(\lambda,b_k).
\end{equation}
Indeed, for a configuration $\sigma:V_G\rightarrow \{\pl,\mi\}$, let $\Omega_{H,\sigma}=\{\sigma': V_H\rightarrow \{\pl,\mi\}\mid \sigma_{V_G}'=\sigma\}$ be the configurations on $H$ which agree with $\sigma$ on $V_G$, and $Z_{H,\sigma}(\lambda,b)$ be the contribution to $Z_{H}(\lambda,b)$ from configurations in $\Omega_{H,\sigma}$.   Then,  we have
\[Z_{H,\sigma}(\lambda,b)=\lambda^{|n_\pl(\sigma)|}\prod_{e=(u,v)\in E_G}(-1)^{\mathbf{1}_{\sigma_u\neq \sigma_v}}A_{\sigma_u\sigma_v}=(A_{\pl \pl})^{m} \lambda^{|n_\pl(\sigma)|} b_k^{\delta(\sigma)},\]
proving \eqref{eq:GHAbl}.  Note, from Lemma~\ref{lem:connection} we have that
\begin{equation}\label{eq:t65g6hh223}
Z_G(\lambda,b_k)=2^{n}\Big(\frac{1+b_k}{2}\Big)^{m}\sum_{S\subseteq E;\, S\, \mbox{{\small  odd } }} \Big(\frac{1-b_k}{1+b_k}\Big)^{|S|}.
\end{equation}
Perfect matchings in $G$ are in 1-1 correspondence with odd sets $S\subseteq E$ with $|S|= n/2$. Moreover, for any other odd set $S\subseteq E$ we have $|S|>n/2+1$, and hence, using also \eqref{eq:GHAbl}, we obtain
\[\bigg|\frac{Z_H(\lambda,b)}{(A_{\pl \pl})^{m}2^{n}\big(\frac{1+b_k}{2}\big)^{m}\big(\frac{1-b_k}{1+b_k}\big)^{n/2}}-|\mathcal{M}|\bigg|\leq 2^m\Big(\frac{1-b_k}{1+b_k}\Big)\leq \epsilon |\mathcal{M}|.\]
Using therefore an FPRAS for $\FactorIsingb$, we can approximate $Z_H(\lambda,b)$ within relative error $\epsilon$ in time $poly(n,1/\epsilon)$, and compute therefore $|\mathcal{M}|$ within relative error $\epsilon$, finishing the $\mathsf{AP}$-reduction.

\vskip 0.2cm
\noindent $\boldsymbol{\FactorIsingb\leq_{\mathsf{AP}}\PM}$. We first consider the case $\Delta=3$. Let $G=(V,E)$ be a graph of maximum degree $\Delta=3$ that is input to $\FactorIsingb$, and set $n=|V|$, $m=|E|$. We may assume that $n$ is even, since otherwise we can output 0 for the partition function.  By Lemma~\ref{lem:connection} we have that
\begin{equation}\label{eq:t65g6hh223b}
Z_G(\lambda,b)=2^{n}\Big(\frac{1+b}{2}\Big)^{m}\sum_{S\subseteq E;\, S\, \mbox{{\small  odd } }} \Big(\frac{1-b}{1+b}\Big)^{|S|}.
\end{equation}

To formulate this in terms of perfect matchings, we construct a graph $G'=(V',E')$ as follows, resembling the construction in~\cite{Fisher}. For $v\in V$, let $d_v$ be the degree of $v$ in $G$. For a vertex $v\in V $, if $d_v=3$, replace $v$ with a triangle of  vertices $T_v=\{v_1,v_2,v_3\}$; otherwise, keep $v$  in $G'$ as well and let for convenience $T_v=\{v\}$. For every edge $(u,v)\in E$,  add an edge in $G'$ between a vertex in $T_u$ and $T_w$ so that $G'$ has maximum degree~3; note that edges of $G$ that are not incident to degree-3 vertices belong to $G'$ as well. We call internal all  edges of $G'$ whose endpoints belong to some $T_v$ and external all other edges of $G'$.  Note that an edge $e$ of $G$ maps to an external edge $\mathrm{ex}(e)$ of $G'$ bijectively, under the natural mapping. We use $\mathrm{ex}(G')$ to denote  the external edges of $G'$.

For $v\in V$, observe that any perfect matching in $G'$ must contain exactly one external edge incident to a  vertex in  $T_v$ if $|T_v|=1$, and two  or three edges if $|T_v|=3$, either one internal and one external, or three external, respectively. Based on this, we have that  a perfect matching $M'$ in $G'$ maps bijectively to an odd subset  $S$ of $G$, by adding an edge $e$ of $G$ to $S$ iff $\mathrm{ex}(e)\in M'$. Therefore, with $\mathcal{M}'$ denoting the set of perfect matchings in $G'$, we can rewrite \eqref{eq:t65g6hh223b} as
\[Z_G(\lambda,b)=2^{n}\Big(\frac{1+b}{2}\Big)^{m}\sum_{M'\in \mathcal{M}'}\Big(\frac{1-b}{1+b}\Big)^{|M'\cap \mathrm{ex}(G')|}.\]

Let $n'=|V'|\leq 3n$ and $m'=|E'|$. Let $p,q$ be positive integers with $\mathrm{gcd}(p,q)=1$ such that $\frac{p}{q}=\frac{1-b}{1+b}$. Let $G''$ be the multigraph obtained from $G'$ by replacing every external edge $e=(u,v)$ with $p$ parallel edges connecting $u$ to a new vertex $w_e$, $q$ parallel edges connecting $w_e$ to a new vertex $z_e$, and an edge between $z_e$ and $v$; note, internal edges of $G'$ are left intact. Let $\mathcal{M}'$ and $\mathcal{M}''$ be the set of perfect  matchings of $G'$ and $G''$  Then, there is a one-to-many correspondence between perfect matchings $M'\in \mathcal{M}'$ in $G'$ and perfect matchings $M''\in \mathcal{M}''$, where an internal edge $e$ is matched in $M'$ iff $e$ is matched in $M''$, while an external edge $e=(u,v)$ is matched in $M'$ iff $(z_e,v)$ is matched in $M''$. Note that, for an external edge $e$ and a perfect matching $M''$ of $G''$,  if $(z_e,v)$ belongs to $M''$ then $u$ must be matched by one of the $p$ parallel edges connecting $u$ to $w_e$, whereas if $(z_e,v)$ does not belong to $M''$, $w_e$ and $z_e$ must be matched by one of the $q$ parallel edges connecting $u$ to $w_e$; it follows that 
\[|\mathcal{M}''|=\sum_{M\in \mathcal{M}'}p^{|M\cap \mathrm{ex}(G')|}q^{m-|M\cap \mathrm{ex}(G')|}.\]

 Finally, if we let $G'''$ be the graph obtained from $G''$ by replacing every edge of $G''$ with a path of length 3, we have that the set of perfect matchings $\mathcal{M}'''$ off $G'''$ is in 1-1 correspondence with $\mathcal{M''}$, and we see that $2^{n}\big(\frac{1+b}{2}\big)^{m}q^{m}|\mathcal{M}'''|$  equals  $Z_G(\lambda,b)$, completing the $\mathsf{AP}$-reduction for $\Delta=3$.

To handle the case $\Delta\geq 4$, it suffices to show that $\FactorIsingb\leq_{\mathsf{AP}} \FactorIsingc$ since $\mathsf{AP}$-reductions are transitive, see \cite{DGGJ}. Let $G=(V,E)$ be a graph of maximum degree $\Delta$, and set $n=|V|$. Let $V_{\leq 3}=\{v\in V\mid d_v\leq 3\}$ be the set of vertices in $G$ with degree $\leq 3$, and $V_{>3}$ be the set of the remaining vertices.

Construct a graph $G'=(V',E')$ from $G$ by replacing every vertex $v\in V$ with $d_v=t\geq 4$, with a path of  $2t-1$ vertices if $t$ is odd and of $2t-3$ vertices if $t$ is even. We partition the vertices on the path into two sets $T_v,T_v'$ according to their parity, so that the endpoints of the path belong to $T_v$; note that $|T_v|=t$ if $t$ is odd, while $|T_v|=t-1$ if $t$ is even. We keep vertices $v\in V_{\leq 3}$ in $G'$, and for such vertices, we let for convenience $T_v=\{v\}$. Then, for every edge $(u,v)\in E$, we  add an edge in $G'$ between a vertex in $T_u$ and $T_v$ so that, in the end, $G'$ has maximum degree~3 and, further, for vertices $v\in V_{>3}$ with $d_v$ even, exactly one endpoint of the path on $T_v\cup T_v'$ has degree 3 in $G'$ (and the other has degree two). As before, we call an edge in $G'$ internal if both of its endpoints lie within a set $T_v$ for some $v\in V$, and external otherwise.

The key observation is that the aggregate contribution to $Z_{G'}(\lambda,b)$ from configurations on $G'$ where, for some $v\in V$, the vertices in $T_v$ do not get the same spin is zero.\footnote{This follows by observing that for a path with two edges, the aggregate weight of configurations where the endpoints of the path have different spins is equal to 0 (using that $\lambda=-1$).} For a configuration $\sigma$ on $G$, let $\Omega_{G',\sigma}$ be the set of configurations on $G'$ such that all vertices in $T_v$ get the spin $\sigma_v$ and let $Z_{G',\sigma}(\lambda,b)$ be their aggregate contribution to $Z_{G'}(\lambda,b)$, so that, from the observation above, we have
\[Z_{G'}(\lambda,b)=\sum_{\sigma: V\rightarrow\{\pl,\mi\}} Z_{G',\sigma}(\lambda,b).\]
For a configuration $\sigma: V\rightarrow\{\pl,\mi\}$, external edges and the external fields on $V_{\leq 3}$ contribute to $Z_{G',\sigma}(\lambda,b)$ a factor of $\lambda^{|n_\pl(\sigma)\cap V_{\leq 3}|}b^{|\delta_G(\sigma)|}$. For $v\in V_{>3}$ with $\sigma_v=\pl$, the edges in $T_v\cup T_v'$ and the external fields on $T_v\cup T_v'$ contribute to $Z_{G',\sigma}(\lambda,b)$ a factor of $-(1-b^2)^{|T_v|}$, and a factor of $(1-b^2)^{|T_v|}$  if $\sigma_v=\mi$. It follows that  $Z_{G',\sigma}(\lambda,b)=(1-b^2)^{|T|}\lambda^{|n_\pl(\sigma)|}w_{G,\sigma}(\lambda,b)$ where $T=\cup_{v\in V;d_v\geq 4} |T_v|$. It follows that 
\[Z_{G'}(\lambda,b)=(1-b^2)^{|T|}Z_{G}(\lambda,b),\]
therefore completing the $\mathsf{AP}$-reduction, since by construction $G'$ is a graph of maximum degree~3.

This finishes the proof of Theorem~\ref{thm:lambdaminusone}.
\end{proof}

\bibliographystyle{plainurl}
\bibliography{\jobname}

\end{document}